\theoremstyle{plain}
\numberwithin{equation}{section}
\tikzset{dot/.style={circle,fill=black,inner sep=0pt,minimum size=4pt}}
\DeclarePairedDelimiter\ceil{\lceil}{\rceil}
\DeclarePairedDelimiter\floor{\lfloor}{\rfloor}
\newcommand{\EE}{\mathbb{E}} 
\newcommand{\eps}{\varepsilon} 
\DeclareMathOperator{\Var}{Var} 
\DeclareMathOperator{\Cov}{Cov} 
\DeclareMathOperator{\Bin}{Binom} 
\newcommand{\wh}{\widehat}  
\newcommand{\wt}{\widetilde} 
\newcommand{\SBM}{\mathrm{SBM}} 
\newcommand{\eff}{\mathrm{eff}} 
\newcommand{\aut}{\mathrm{aut}}
\newcommand{\PP}{\mathbb{P}}
\newcommand{\e}{\mathrm{e}}
\newcommand{\CSBM}{\mathrm{CSBM}}
\newcommand{\maj}{\mathrm{maj}}
\newcommand{\NN}{\mathrm{N}}
\newcommand{\indH}{\mathbf{1}_{\mathcal{H}}}
\newcommand{\calF}{\mathcal{F}}
\newcommand{\calH}{\mathcal{H}}
\newcommand{\calG}{\mathcal{G}}
\newcommand{\condH}{\mid \mathcal{H}}
\newcommand{\condsig}{\mid \bsigma_{*}}
\newcommand{\whbsigma}{\wh \bsigma}
\newcommand{\indicator}[1]{1_{\{#1\}}}
\newcommand{\bsigma}{\bm{\sigma}}
\newtheorem{theorem}{Theorem}
\numberwithin{theorem}{section}
\newtheorem*{theorem*}{Theorem}
\newtheorem{lemma}[theorem]{Lemma}
\newtheorem{proposition}[theorem]{Proposition}
\newtheorem{claim}[theorem]{Claim}
\newtheorem{corollary}[theorem]{Corollary}
\newtheorem{definition}[theorem]{Definition}
\newtheorem{remark}{Remark}
\title{Efficient Graph Matching for Correlated Stochastic Block Models}
\author{Shuwen Chai\thanks{Department of Computer Science, Northwestern University; \url{shuwenchai2027@u.northwestern.edu}}
\and
Mikl\'os Z. R\'acz\thanks{Department of Statistics and Data Science and Department of Computer Science, Northwestern University; \url{miklos.racz@northwestern.edu}}}
\begin{document}

\maketitle

\begin{abstract}
We study learning problems on correlated stochastic block models with two balanced communities. Our main result gives the first efficient algorithm for graph matching in this setting. In the most interesting regime where the average degree is logarithmic in the number of vertices, this algorithm correctly matches all but a vanishing fraction of vertices with high probability, whenever the edge correlation parameter $s$ satisfies $s^2 > \alpha \approx 0.338$, where $\alpha$ is Otter's tree-counting~constant. Moreover, we extend this to an efficient algorithm for exact graph matching~whenever this is information-theoretically possible, positively resolving an open problem of R\'acz and Sridhar (NeurIPS 2021). Our algorithm generalizes the recent breakthrough work of Mao, Wu, Xu, and Yu (STOC 2023), which is based on centered subgraph counts of a large family of trees termed chandeliers. A major technical challenge that we overcome is dealing with the additional estimation errors that are necessarily present due to the fact that, in relevant parameter regimes, the latent community partition cannot be exactly recovered from a single graph. As an application of our results, we give an efficient algorithm for exact community recovery using multiple correlated graphs in parameter regimes where it is information-theoretically impossible to do so using just a single graph. 
\end{abstract}

\section{Introduction}
\label{sec:intro}

The proliferation of network data has highlighted the ubiquity and importance of \emph{graph matching}~in machine learning, 
with applications in a variety of domains, including social networks~\cite{narayanan2009anonymizing,pedarsani2011privacy}, computational biology~\cite{singh2008global}, and computer vision ~\cite{conte2004thirty, ma2021image}. 
While the graph matching task---recovering the latent node alignment between two networks---is known to be NP-hard to solve or even approximate in general~\cite{makarychev2010maximum,o2014hardness}, in practice it is often possible to solve it well, such as in the works cited above. 
This has motivated an exciting recent line of work studying average-case graph matching~\cite{cullina2016improved,cullina2018exact,wu2022settling,cullina2020partial,ganassali2020tree,hall2023partial,ganassali2021impossibility,mossel2020seeded,barak2019nearly_efficient,ding2021efficient,fan2020spectral,mao2021random,mao2023exact,mao2022chandelier,ding2023matching,ding2023polynomial,ding2023low}, focusing on correlated Erd\H{o}s--R\'enyi random graphs~\cite{pedarsani2011privacy}. 
These papers culminated in recent breakthrough works which developed efficient graph matching algorithms in the constant noise regime~\cite{mao2023exact,mao2022chandelier}.

However, real-world networks are not modeled well by Erd\H{o}s--R\'enyi random graphs, which in turn has motivated a growing line of recent work studying graph matching beyond Erd\H{o}s--R\'enyi~\cite{bringmann2014heterogeneous,korula2014efficient,chiasserini2016social,onaran2016optimal,RS22,yu2021power,RS21,GRS22,wang2022geometric,RS23,ding2023efficiently,yang2023efficient,yang2023graph,racz2024harnessing,chen2024computational}. 
In particular, an important problem in this vein is to study graph matching in \emph{correlated stochastic block models (correlated SBMs)}~\cite{lyzinski2015spectral,onaran2016optimal,lyzinski2018information} (see Section~\ref{sec:model} for definitions), 
since community structure is prevalent in many networks and the community recovery problem 
is a fundamental inference task that is often a starting point for deeper analyses. 
Recent work of R\'acz and Sridhar~\cite{RS21} determined the fundamental information-theoretic limits for exact graph matching in correlated SBMs; 
however, the underlying algorithm used to achieve this limit is inefficient (that is, not polynomial time). 
R\'acz and Sridhar~\cite{RS21} posed the open problem of finding an \emph{efficient} algorithm for (exact) graph matching whenever this is information-theoretically feasible. 

Our main contribution positively resolves this open problem of R\'acz and Sridhar~\cite{RS21}, 
giving the first efficient algorithm for graph matching for correlated SBMs with two balanced communities, under a condition on the correlation strength that is conjectured to be necessary. 
Specifically, 
we give an efficient algorithm that, 
in the most interesting regime 
where the average degree is logarithmic in the number of vertices, 
achieves almost exact recovery of the latent matching, whenever the edge correlation parameter $s$ satisfies $s^2 > \alpha \approx 0.338$, where $\alpha$ is Otter's tree-counting~constant. 
Moreover, we extend this to an efficient algorithm for exact graph matching whenever this is information-theoretically possible. 
See Section~\ref{sec:graph_matching} and Theorem~\ref{theorem:graph_matching} for details.

In addition, our results on graph matching directly imply novel efficient algorithms and results for community recovery. 
Specifically, combining---in a black-box fashion---our (exact) graph matching algorithm with existing community recovery algorithms, 
we give an efficient algorithm for exact community recovery using multiple correlated graphs in parameter regimes where it is information-theoretically impossible to do so using just a single graph. 
See Section~\ref{sec:comm_recovery} and Theorem~\ref{theorem:comm_recovery} for details. 

Our algorithm generalizes the recent breakthrough work of Mao, Wu, Xu, and Yu~\cite{mao2022chandelier}, which is based on centered subgraph counts of a large family of trees termed chandeliers, to the setting of correlated SBMs. A major technical challenge that we overcome is dealing with the additional estimation errors that are necessarily present due to the fact that, in relevant parameter regimes, the latent community partition cannot be exactly recovered from a single graph, and thus the edge-indicator variables in the centered subgraph counts cannot be precisely centered. 
Our technical contributions highlight the interplay between graph matching and community recovery in ways that are complementary 
to the recent work of Gaudio, R\'acz, and Sridhar~\cite{GRS22}.

\subsection{Models and problems}
\label{sec:model}

In this section we describe the setting of the paper by introducing the stochastic block model (SBM), correlated SBMs, and the community recovery and graph matching tasks. 

The \textbf{stochastic block model (SBM)} is the canonical probabilistic generative model for a network with latent community structure. The SBM was first introduced by Holland, Laskey, and Leinhardt~\cite{holland1983sbm} and has been widely studied over the past decades~\cite{Abbe_survey}. In general, a SBM may consist of a number of communities, with distinct vertices connected randomly with a probability that depends on their community memberships.

In this work, we focus on the simplest setting of the balanced two-community SBM.
Given $n \in \mathbb{Z}_+$ and $p,q \in [0,1]$, we construct $G \sim \SBM(n,p,q)$ as follows. The graph $G$ has $n$ vertices, with vertex labels given by $V=[n]:=\{1,2,\ldots,n\}$. Let $\bsigma_{*}=\{\bsigma_{*}(i)\}_{i=1}^{n}$ be the vector of community labels, where each entry $\bsigma_{*}(i)\in \{-1,+1\}$ is drawn independently and uniformly at random. 
Then, given the community labels $\bsigma_{*}$, 
for any pair of vertices $i\neq j \in [n]$, edge $(i,j)$ is in $G$ with probability $p\mathbf{1}_{\{\bsigma_{*}(i)=\bsigma_{*}(j)\}} + q\mathbf{1}_{\{\bsigma_{*}(i)\neq \bsigma_{*}(j)\}}$. That is, two different vertices are connected with probability $p$ if they are from the same community and connected with probability $q$ otherwise.

\textbf{Correlated SBMs} are multiple SBMs where the corresponding edge variables are correlated~\cite{lyzinski2015spectral,onaran2016optimal,lyzinski2018information}. 
Specifically, 
we construct two correlated SBMs $(G_1,G_2)\sim \CSBM(n,p,q,s)$ using a natural subsampling procedure as follows. Let $G \sim \SBM(n,p,q)$ be a parent graph with community labels~$\bsigma_{*}$. Next, given $G$, we construct $G_1$ by random sampling of the edges: each edge of $G$ is included in $G_1$ with probability $s$, independently of everything else, and non-edges of $G$ remain non-edges of $G_1$. We then do the edge sampling independently again to obtain $G_{2}'$ in the same way. The child graphs $G_{1}$ and $G_{2}'$ inherit both the vertex labels (given by $[n]$) and the community labels $\bsigma_{*}$ from the parent graph $G$. 
Finally, let $\pi_{*}$ be a uniformly random permutation of $[n]:= \{1,2,\dots,n\}$ and generate $G_2$ by relabeling the vertices of $G_2'$ according to $\pi_{*}$ (e.g., vertex $i$ in $G_{2}'$ is relabeled as $\pi_{*}(i)$ in $G_{2}$). 
This last step reflects the fact that in practice often the correspondence between the two vertex sets is unknown. 
We denote the adjacency matrices of $G_1$ and $G_2$ as $A$ and $B$, respectively, and note that the community labels of the two graphs are $\bsigma_{*}^A:=\bsigma_{*}$ and $\bsigma_{*}^B:= \bsigma_{*} \circ \pi_{*}^{-1}$, respectively. 
See Figure~\ref{fig:csbm} for an illustration. 

Marginally, 
$G_{1}$ and $G_{2}$ are identically distributed SBMs: we have $G_{1}, G_{2} \sim \SBM(n, p s, q s)$. 
Moreover, $G_{1}$ and $G_{2}$ are \emph{correlated}. 
Specifically, for every pair of distinct vertices $\{i,j\}$, the edge-indicator random variables 
$A_{i,j}$ and $B_{\pi_{*}(i), \pi_{*}(j)}$ are correlated Bernoulli random variables. 
A simple calculation shows that 
if $\bsigma_{*}(i)= \bsigma_{*}(j)$, 
then the correlation coefficient of $A_{i,j}$ and $B_{\pi_{*}(i), \pi_{*}(j)}$ is equal to 
$\rho_{+} := s \frac{1-p}{1-ps}$, 
whereas if $\bsigma_{*}(i) \neq \bsigma_{*}(j)$, 
then this correlation coefficient is 
$\rho_{-} := s \frac{1-q}{1-qs}$. 
Our focus will be on the sparse setting where $p,q=o(1)$ (as $n \to \infty$), in which case both $\rho_{+}$ and $\rho_{-}$ are asymptotically $(1-o(1))s$, 
and hence we can regard $s$ as the \emph{edge correlation parameter}.

\begin{figure}[t!]
    \centering
    \includegraphics[width=\textwidth]{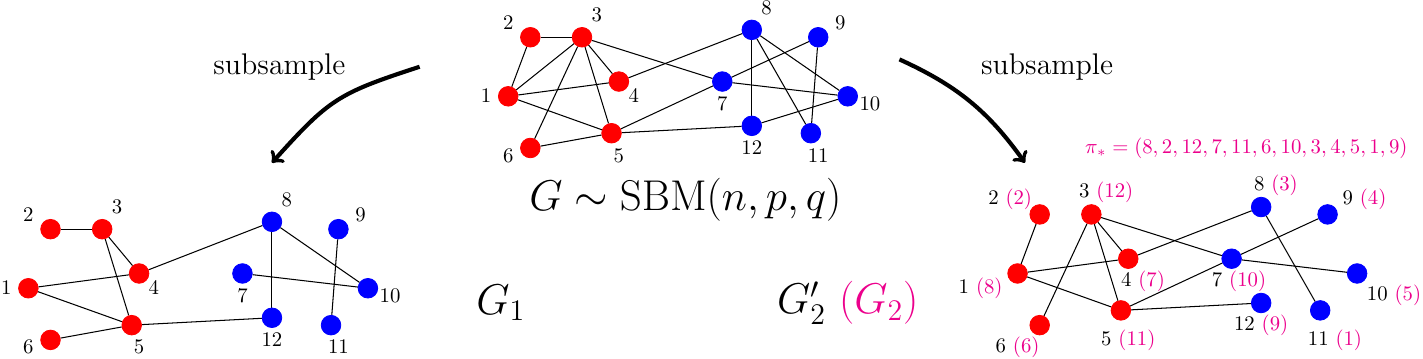}
    \caption{Schematic illustrating two-community correlated SBMs; see the text for details. (Figure reproduced from~\cite{RS21} with permission.)}
    \label{fig:csbm}
\end{figure}

\textbf{Community recovery.} 
The goal of community recovery is to recover the latent community labels $\bsigma_{*}$ given some (graph) data, 
such as a SBM $G$ or correlated SBMs $(G_{1}, G_{2})$. 
There are various notions of community recovery, 
depending on how close an estimate is to the ground truth~$\bsigma_{*}$. 
In this work, we focus on \textit{exact community recovery}, 
defined as follows: 
an estimator $\whbsigma$ achieves exact community recovery if $\lim_{n \rightarrow \infty} \PP(|\frac{1}{n}\sum_{i=1}^n \whbsigma(i) \bsigma_{*}(i)|=1)=1$. 
The absolute value is present in the previous expression since we can only hope to recover the community labels up to a global sign flip; 
in other words, our goal is to recover the partition of the graph into two communities. 
A slightly weaker notion, which also appears throughout our work, 
is \textit{almost exact community recovery}, 
which holds if 
$\lim_{n \rightarrow \infty} \PP(|\frac{1}{n}\sum_{i=1}^n \whbsigma(i) \bsigma_{*}(i)|=1-o(1))=1$; 
in other words, this notion tolerates a vanishing fraction of errors. 
Further weaker notions include partial recovery and weak recovery; 
since these are not the focus here, we refer to~\cite{Abbe_survey} for details.

Different parameter regimes give rise to different challenges and different notions of recovery become most relevant. 
In the constant average degree regime, that is, when $p=\frac{a}{n}$ and $q=\frac{b}{n}$ for some constants $a$ and $b$, 
it is impossible to recovery the communities exactly. Prior works~\cite{mossel2012sbm_and_reconstruction, mossel2018a_proof, massoulie2014weak_recovery} have characterized the information-theoretic threshold and developed efficient algorithms for partial recovery in this regime. 
On the other hand, if the vertices have polynomially growing degrees, that is, when $p=n^{-a+o(1)}$ and $q=n^{-b+o(1)}$ for some constants $a, b \in [0,1)$, 
then community recovery is easy as long as $\liminf_{n \to \infty} |p_{n}/q_{n} - 1| > 0$ (see~\cite{mossel2015consistency}).

In this work, we focus on the ``bottleneck regime'' of logarithmic average degree, which is the bare minimum for the graph to be connected, and which is when exact community recovery is most interesting. 
In most of the paper we assume that 
$p=a\frac{\log n}{n}$ and $q=b\frac{\log n}{n}$ 
for some positive constants~$a, b$. 
For an SBM with two balanced communities and these parameters, 
there is a sharp information-theoretic threshold for exact community recovery, which is given by 
$D_{+}(a,b) = 1$, 
where 
$D_{+}(a,b) := (\sqrt{a}-\sqrt{b})^{2} / 2$ 
(see~\cite{mossel2015consistency, abbe2015bisec_exact_recovery,abbe2015sbm_general,Abbe_survey}). 
This quantity is known as the \textit{Chernoff--Hellinger divergence} in the general $k$-community SBM setting with linear size  communities~\cite{abbe2015sbm_general} and simplifies to the above form in our setting. 
In other words, when $D_{+}(a,b)>1$, there exists an estimator $\whbsigma$ that is computable in polynomial-time and which achieves exact community recovery with high probability. 
On the other hand, when $D_{+}(a,b)<1$, 
exact community recovery is impossible, 
in the sense that for all estimators $\whbsigma$ 
we have that 
$\lim_{n \rightarrow \infty} \PP(|\frac{1}{n}\sum_{i=1}^n \whbsigma(i) \bsigma_{*}(i)|=1)=0$. 
Moreover, when it is possible to achieve exact community recovery on a single graph, several polynomial-time algorithms have been studied by previous works (e.g.,~\cite{mossel2015consistency, abbe2015bisec_exact_recovery}).

\textbf{Community recovery and graph matching.} 
What are the information-theoretic limits for 
exact community recovery 
given two correlated SBMs 
$(G_1,G_2)\sim \CSBM(n, a \frac{\log n}{n}, b \frac{\log n}{n}, s)$? 
This question was initiated and partially solved by R\'acz and Sridhar~\cite{RS21}, and subsequently fully solved by Gaudio, R\'acz, and Sridhar~\cite{GRS22}. 
Without yet going into the details, 
these works highlight the importance of \textit{graph matching}, 
that is, the task of recovering the latent matching $\pi_{*}$ given the two correlated graphs $(G_{1}, G_{2})$. 
In brief, 
when $\pi_{*}$ can be perfectly recovered from 
$(G_{1}, G_{2})$, 
then one can take the union graph 
$G_{1} \vee_{\pi_{*}} G_{2}$ 
of $G_{1}$ and $G_{2}$, 
which is also an SBM, but with a larger edge density, 
which makes community recovery easier. 
This shows that exact community recovery is possible from $(G_{1}, G_{2})$ even in parameter regimes where this is impossible from just a single graph~$G_{1}$ (see~\cite{RS21}).

\textbf{Graph matching.} 
Motivated by the above discussion, we now discuss average-case graph matching and notions of recovery. 
In general, suppose that $(G_{1},G_{2})$ are two correlated random graphs with $n$ vertices each and that $\pi_{*}$ is the underlying latent vertex matching. 
The goal of graph matching is to output an estimator $\widehat{\pi} = \widehat{\pi}(G_{1}, G_{2})$ that is close to $\pi_{*}$. 
There are various notions of recovery depending on how close $\widehat{\pi}$ is to $\pi_{*}$; 
the two most relevant notions are the following. 
We say that an estimator $\widehat{\pi}$ achieves 
\textit{exact graph matching} 
if 
$\lim_{n \to \infty} \mathbb{P}(\wh \pi=\pi_{*})=1$. 
We say that an estimator achieves 
\textit{almost exact graph matching} 
if 
with high probability 
there exists a subset $I \subseteq [n]$ 
with $|I| = (1-o(1))n$ 
such that $\wh \pi|_I=\pi_{*}|_I$, 
where $\pi|_{I}$ denotes the restriction of $\pi$ to $I$. 
In words, almost exact graph matching allows the estimator to make a vanishing fraction of errors.

The graph matching problem has been widely studied, with applications to computer vision~\cite{conte2004thirty, ma2021image}, computational biology~\cite{singh2008global}, and social networks~\cite{pedarsani2011privacy}. 
In particular, de-anonymizing social networks is possible with graph matching algorithms, which implies that anonymity is not equivalent to privacy~\cite{narayanan2009anonymizing}. 
That said, studying the limits of graph matching algorithms---including potential information-computation gaps, as we shall discuss---can help guide data regulators on when to take more actions with regards to data protection, in addition to anonymity.

As discussed in the introductory paragraphs, there has been a large body of recent work on average-case graph matching, 
both studying correlated Erd\H{o}s--R\'enyi random graphs~\cite{pedarsani2011privacy,cullina2016improved,cullina2018exact,wu2022settling,cullina2020partial,ganassali2020tree,hall2023partial,ganassali2021impossibility,mossel2020seeded,barak2019nearly_efficient,ding2021efficient,fan2020spectral,mao2021random,mao2023exact,mao2022chandelier,ding2023matching,ding2023polynomial,ding2023low} 
and more general models of correlated random graphs~\cite{bringmann2014heterogeneous,korula2014efficient,chiasserini2016social,onaran2016optimal,RS22,yu2021power,RS21,GRS22,wang2022geometric,RS23,ding2023efficiently,yang2023efficient,yang2023graph,racz2024harnessing,chen2024computational}. 
In particular, R\'acz and Sridhar~\cite{RS21} determined the fundamental information-theoretic limits for exact graph matching in correlated SBMs: 
in the logarithmic average degree regime discussed above, this threshold is $s^{2} \frac{a+b}{2} = 1$. 
However, this result is information-theoretic, and the authors posed the open problem of finding an \emph{efficient} algorithm for exact graph matching, whenever this is information-theoretically possible. 

Our main contribution positively resolves this open problem, 
giving the first efficient algorithm for graph matching for correlated SBMs with two balanced communities. 
Our algorithm generalizes the recent breakthrough work of Mao, Wu, Xu, and Yu~\cite{mao2022chandelier} that developed an efficient graph matching algorithm for correlated Erd\H{o}s--R\'enyi graphs. 
As an application, our results imply novel efficient algorithms and results for community recovery. 
We now turn to describing our results.

\subsection{Main results: graph matching}
\label{sec:graph_matching}

Our main theorem for graph matching on correlated SBMs is that there exists a polynomial-time algorithm that can achieve exact matching if $s^2 > \alpha$, where $\alpha$ is Otter's tree counting constant\footnote{This constant captures the base of the exponential growth of unlabeled rooted trees: the total number of unlabeled rooted trees with $N$ vertices is $(\alpha+o(1))^{-N}$.}~\cite{otter1948number}. 

\begin{theorem}\label{theorem:graph_matching}
    Fix constants $a \neq b > 0$ and $s \in [0,1]$.
    Let $(G_1, G_2)\sim \CSBM(n,a\frac{\log n}{n},b\frac{\log n}{n},s)$. 
    For any $\eps > 0$,
    if 
    $s^{2} \geq \alpha + \eps$, 
    then the following holds. 
    \begin{enumerate}[(a)]
    \item\label{thm:graph_matching_almost_exact} \textbf{(Almost exact matching)} There exists a polynomial-time algorithm that outputs a subset $I \in [n]$ and a mapping $\wh \pi:I \rightarrow [n]$ such that $\wh \pi = \pi_{*} |_{I}$ and $|I|=(1-o(1))n$ with high~probability.
    \item\label{thm:graph_matching_exact} \textbf{(Exact matching)} If, in addition, $s^2 (a+b)/2 > 1$, then there exists a polynomial-time algorithm that ouputs a mapping $\wh \pi$ such that $\lim_{n\rightarrow \infty}\PP(\wh \pi = \pi_{*})=1$.
    \end{enumerate}
\end{theorem}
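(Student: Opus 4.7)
The plan is to adapt the chandelier-based subgraph counting approach of Mao, Wu, Xu, and Yu~\cite{mao2022chandelier} to the correlated SBM setting, with the essential new ingredient being community-aware centering of the edge indicators. Since the true community labels $\bsigma_{*}^{A}$ and $\bsigma_{*}^{B}$ are not available, I would begin with a preprocessing step: run an efficient single-graph community recovery algorithm (e.g.,~\cite{mossel2015consistency}) on $G_1$ and $G_2$ separately to obtain estimates $\whbsigma^{A}$ and $\whbsigma^{B}$. In the logarithmic-degree regime with $a \neq b$, almost exact community recovery from a single SBM is always efficiently achievable, so these estimates agree with the truth on all but $o(n)$ vertices with high probability.

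For part~(a), for each pair $(i,j) \in V(G_1) \times V(G_2)$ I would define a similarity score $\Phi_{i,j}$ by summing, over all embeddings of a carefully chosen family of chandelier trees into $G_1$ rooted at $i$ and corresponding embeddings into $G_2$ rooted at $j$, a product of centered edge indicators. The centering uses the estimated labels: $A_{u,v}$ is replaced by $A_{u,v} - \wh{p}^{A}_{u,v}$, where $\wh{p}^{A}_{u,v} := p\,\mathbf{1}_{\{\whbsigma^{A}(u) = \whbsigma^{A}(v)\}} + q\,\mathbf{1}_{\{\whbsigma^{A}(u) \neq \whbsigma^{A}(v)\}}$, and analogously on the $G_2$ side. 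The chandelier family, following~\cite{mao2022chandelier}, is engineered so that Otter's constant $\alpha$ governs the signal-to-noise ratio: when $s^{2} > \alpha$, the expected score at the true pair exceeds the standard deviation at a false pair by a polynomial factor in $n$. Declaring matches whenever $\Phi_{i,j}$ is both above a suitable threshold and jointly maximal across row~$i$ and column~$j$ then yields an almost exact matching.

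The hard part will be controlling the residual bias due to incorrect centering at the $o(n)$ miscategorized vertices. Even though the fraction of miscentered edges is vanishing, a chandelier with $N$ edges passing through a miscentered vertex picks up an uncancelled mean of order $|p-q|$ per affected edge, which could a priori inflate the variance at false pairs. My plan is to decompose $\Phi_{i,j}$ into a ``correctly centered'' main term---for which the MWXY analysis extends essentially verbatim to the two-community setting, with edge variances merely depending on community labels (affecting constants but not rates)---and a ``miscentering residual'' indexed by embeddings that traverse at least one vertex where $\whbsigma^{A}$ or $\whbsigma^{B}$ errs. Because the set of miscategorized vertices has size $o(n)$ and is essentially a measurable function of $(G_1,G_2)$, a counting argument should show that each traversal through this set suppresses the relevant moment by a factor of $o(1)$, so the residual is a lower-order correction to both signal and noise. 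Generalizing the delicate moment computations of~\cite{mao2022chandelier} to handle two edge densities and this miscentering correction is where the bulk of the technical effort lies.

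For part~(b), I would boost the almost exact matching to exact matching via a neighborhood-voting step. Given the set~$I$ of correctly matched vertices from part~(a), for each remaining vertex $i \in [n] \setminus I$ I would select $\wh\pi(i)$ as the vertex $j \in V(G_2)$ maximizing the number of common neighbors with $i$ under the matching restricted to~$I$. The condition $s^{2}(a+b)/2 > 1$ ensures that the expected number of ``matched'' common neighbors of a true pair exceeds $\log n$, so standard concentration gives a per-vertex error probability of $o(1/n)$, and a union bound over the at most $o(n)$ unmatched vertices delivers exact matching with high probability.
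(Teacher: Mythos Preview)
Your overall strategy matches the paper's exactly: community recovery preprocessing, chandelier-based similarity scores with estimated-label centering, thresholding for almost exact matching, then neighborhood voting for exact matching. The paper's $V_{11}/V_{12}$ decomposition of the second moment is precisely your ``correctly centered main term'' versus ``miscentering residual'' split.

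However, there is a quantitative gap in your plan for the residual. You write that each traversal through the miscategorized set ``suppresses the relevant moment by a factor of $o(1)$.'' This is not enough. When edges can be $1$-decorated (appear in only one of $S_1,S_2,T_1,T_2$), the number of admissible union-graph structures grows by an extra factor that is polynomial in $n$, and a merely $o(1)$ suppression per miscentered edge cannot beat this. The paper needs, and proves, that the per-vertex misclassification probability is $n^{-sD_{+}(a,b)+o(1)}$ (inverse-polynomial, not just $o(1)$), and moreover that misclassification events across a set of $|S|=O(\log n)$ vertices are approximately independent (their Lemma~\ref{lemma:bad_vertices_joint_distribution}). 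This last point is nontrivial precisely because, as you note, the miscategorized set is a function of the graph and hence correlated with the chandelier embeddings.

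A second missing ingredient is structural: the paper introduces a maximum-degree parameter $D$ into the chandelier family (their $(L,M,K,R,D)$-chandeliers, versus the $(L,M,K,R)$-chandeliers of~\cite{mao2022chandelier}). The reason is that $z$ miscentered \emph{edges} only force at least $\lceil z/D \rceil$ misclassified \emph{vertices}, so the polynomial suppression you actually get is $n^{-\Theta(z/D)}$. Choosing $D = o(\log n / \log\log n)$ is what makes this beat the polylogarithmic combinatorial factors. Without the degree cap, a single misclassified high-degree vertex in a bulb could miscentre many edges while costing you only one factor of $n^{-sD_{+}(a,b)}$, and the residual bound would fail.
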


Several remarks are now in order about the tightness of the main result, an overview of the chandelier counting algorithm when $a=b$, and the main challenge of our analysis.

\textbf{Tightness.} This result is tight whenever $s^2>\alpha$, because $s^2(a+b)/2=1$ is the information-theoretic threshold of exact graph matching given two correlated SBMs $(G_1,G_2)$~\cite{RS21}. When $s^2<\alpha$, it is conjectured that an information-computation gap exists for the correlated Erd\H{o}s--R\'enyi graphs~\cite{mao2022chandelier}. 
Specifically, by assuming that $a=b$, it is information-theoretically possible to match correlated Erd\H{o}s--R\'enyi graphs exactly if $s^{2}a>1$~\cite{cullina2016improved,cullina2018exact,wu2022settling}. However, it is believed hard to find a polynomial-time algorithm to do this. In our model with SBMs, which is an extension from Erd\H{o}s--R\'enyi graphs, it is also likely hard to find a polynomial-time algorithm when $s^2 < \alpha$.

\textbf{Signed chandelier counts.} Our theorem extends from the main theorem in Mao et al.~\cite{mao2022chandelier}, which proposed a polynomial-time algorithm that matches the correlated Erd\H{o}s--R\'enyi graphs exactly. 
The algorithm has two main steps: 
First, construct signature vectors $\bm{s}_i$ and $\bm{t}_j$ for vertices $i\in [n]$ in $G_1$ and $j\in [n]$ in $G_2$ by the \textit{signed subgraph counts} of a specially designed graph class---termed \textit{Chandeliers}---and calculate the weighted inner product of pairs of signature vectors $\langle \bm{s}_i, \bm{t}_j \rangle$ and match vertices if the inner product value is large enough; 
Second, use a seeded graph matching algorithm to boost the almost exact graph matching algorithm to exact graph matching. 
It is natural to adapt this algorithm from correlated Erd\H{o}s--R\'enyi graphs to correlated SBMs but the details present non-trivial challenges, as we explain below.

\begin{figure}[!t]
\centering
\subfigure[s=0.6]{
    \includegraphics[width=4cm]{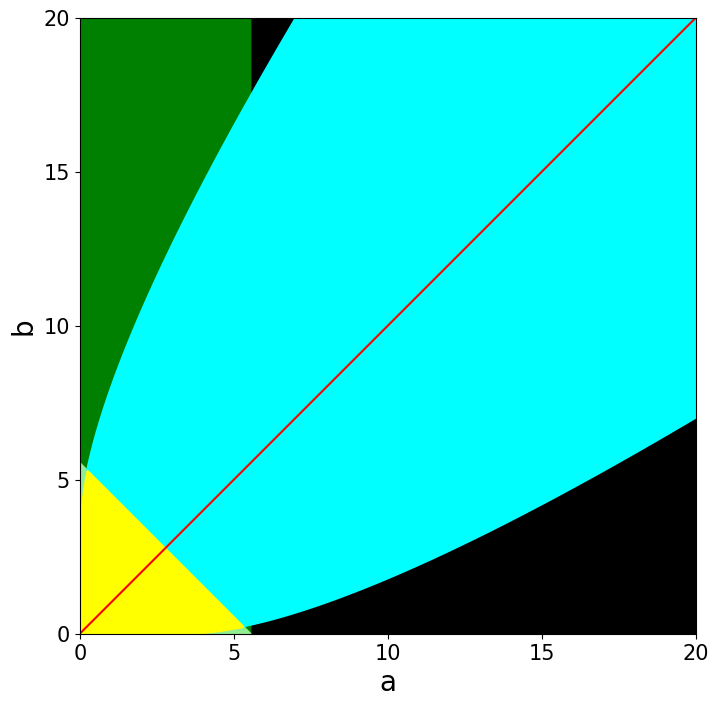}
}
\subfigure[s=0.6]{
    \includegraphics[width=4cm]{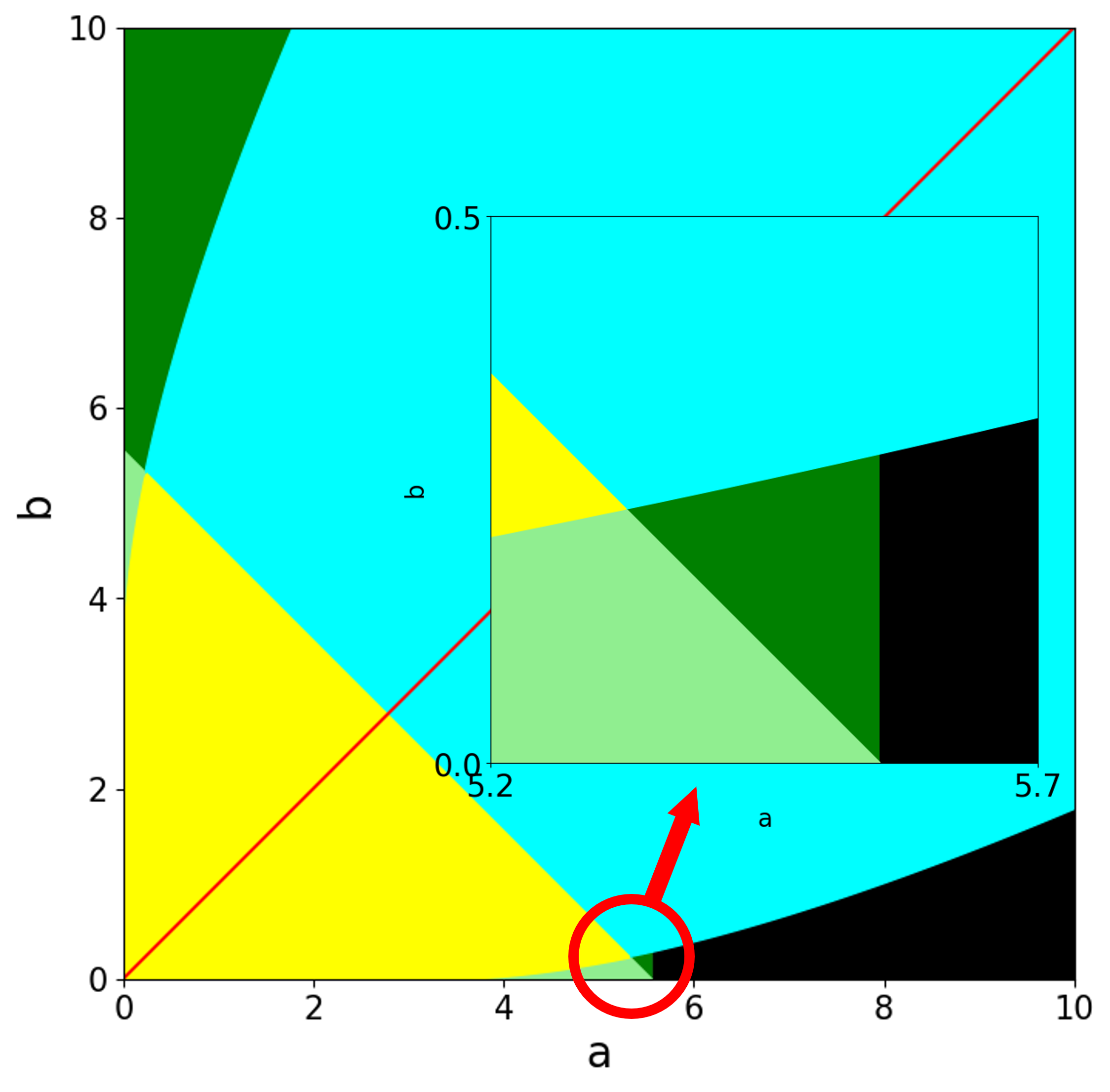}
}
\subfigure[s=0.8]{
    \includegraphics[width=4cm]{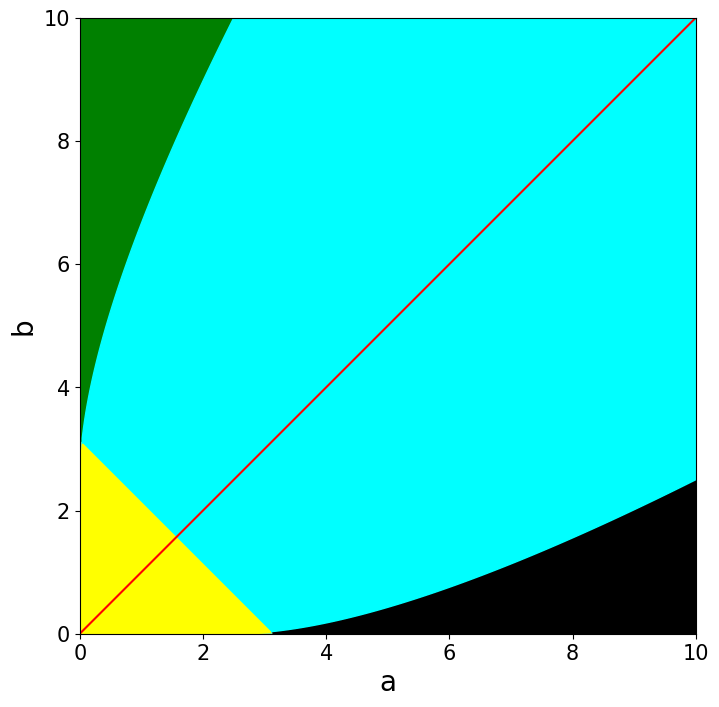}
}
\caption{Phase diagram for graph matching on $(G_1,G_2) \sim \CSBM(n, \frac{a\log n}{n}, \frac{b\log n}{n}, s)$. The red diagonal line depicts $a=b$, which is an Erd\H{o}s--R\'enyi graph. \textit{Black regions}: exact graph matching is possible and can be done efficiently for each community separately by applying the graph matching algorithm for correlated Erd\H{o}s--R\'enyi graphs; \textit{Green regions}: exact graph matching is possible and can be done efficiently; \textit{Light green regions}: exact graph matching is impossible, but almost exact graph matching is possible and can be done efficiently; \textit{Cyan regions}: exact graph matching is possible and can be done efficiently by first recovering the community labels almost exactly; \textit{Yellow regions}: exact graph matching is impossible but almost exact graph matching can be done efficiently by first recovering the community labels almost exactly.}
\label{fig:diagram_graph_matching}
\end{figure}

\textbf{Main challenge.} The main challenge on correlated SBMs is that signed subgraph counts is no longer a free lunch. Signed subgraph counts is counting the subgraphs on a centralized adjacency matrix, which is first proposed by Bubeck et al.~\cite{bubeck2016test_highd_geom} and later commonly used to control the variance of counting statistics.
The success of the chandelier counting method relies on the sufficient separation of the two inner product distributions of true and false vertex correspondence.
We want to find a way that keeps doing the adjacency matrices centralization possible.
Recall that we explore the graph matching motivated by community recovery. Interestingly, the solution to this centralization problem is now the other way around---using a rough community label estimate to help the graph matching. 
Our main technical contributions are first showing that when there are no error occurs in the community label estimate, the signed chandelier counting can be generalized to correlated SBMs and then show that when the exact community recovery is not possible, the errors introduced in the community label estimation, which is polynomial in $n$, are actually tolerable for the whole algorithm.

The analysis falls in two cases. If $s D_{+}(a,b) > 1$, then we can achieve exact community recovery on each of the graphs by applying the community recovery algorithm from \cite{mossel2015consistency}. 
In addition, if $s^2\frac{a}{2}>1$,
\footnote{For a $\SBM(n, \frac{a\log n}{n}, \frac{b\log n}{n})$, each community, conditioned on its size $N\approx n/2$, is an Erd\H{o}s--R\'enyi graph $\mathcal{G}(N,\frac{a\log n}{n}) \approx \mathcal{G}(N, \frac{a\log N}{2N})$.}
then it suffices to look at each community individually. This is easy and follows in a black-box fashion from~\cite{mao2022chandelier} (See Section~\ref{sec:discussions}). 
However, on the other side, $s^2\frac{a}{2}<1$, one still needs to use information the community information. 
Therefore, we need to go through the whole algorithm analysis again in this case. Note that the analysis would work for both regimes with no constraint on $s^2\frac{a}{2}<1$. We plot the black-box regime in \textit{black} and the non-black-box regime in \textit{green} in Figure~\ref{fig:diagram_graph_matching}.

The second case is even trickier. If $sD_{+}(a,b)<1$, by the same algorithm, we can only obtain almost exact correct community labels $\whbsigma_A$ and $\whbsigma_B$ on graph $G_1$ and $G_2$, respectively.
We perform adjacency matrix centralization based on $\whbsigma_A$ and $\whbsigma_B$ and show that the error introduced in this step is negligible in the sense that the inner-product scores remains sufficiently distinguishable between true pairs ($j=\pi_{*}(i)$) and fake pairs ($j\neq \pi_{*}(i)$).

\subsection{Application: community recovery}
\label{sec:comm_recovery}

Once matching up the vertices on two correlated graphs, we can combine the information of them onto a union graph and then immediately have an application on community recovery.
Our result for community detection is that there exists a polynomial-time algorithm for exact community recovery on correlated SBMs when the squared edge correlation parameter 
satisfies $s^{2} > \alpha$. 

\begin{theorem}
    Fix constants $a \neq b > 0$ and $s \in [0,1]$. Let $(G_1, G_2)\sim \CSBM(n,a\frac{\log n}{n},b\frac{\log n}{n},s)$. 
    For any $\eps > 0$,
    if 
    \[
        s^2 \geq \alpha+\eps,\quad 
        s^2(\frac{a+b}{2}) > 1,\quad
        \text{
        and} \quad
        (1-(1-s)^2)D_{+}(a,b) > 1,
    \]
    then, there exists an estimator $\whbsigma=\whbsigma(G_1,G_2)$ that can be computed in polynomial-time such that $\lim_{n \rightarrow \infty} \PP(|\frac{1}{n}\sum_{i=1}^n \whbsigma(i) \bsigma_{*}(i)|=1)=1$.
    \label{theorem:comm_recovery}
\end{theorem}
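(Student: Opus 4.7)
The plan is to use Theorem~\ref{theorem:graph_matching} as a black box to align the two graphs, pool them into a denser SBM, and then invoke an existing single-graph exact community recovery algorithm. The three hypotheses of the theorem correspond exactly to the three steps: $s^2 \geq \alpha + \eps$ and $s^2(a+b)/2 > 1$ will supply exact matching, while $(1-(1-s)^2)D_+(a,b) > 1$ will be the Chernoff--Hellinger threshold for the pooled graph.

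\textbf{Step 1 (exact graph matching).} Since $s^2 \geq \alpha + \eps$ and $s^2 (a+b)/2 > 1$, Theorem~\ref{theorem:graph_matching}\eqref{thm:graph_matching_exact} yields a polynomial-time estimator $\wh\pi = \wh\pi(G_1, G_2)$ with $\PP(\wh\pi = \pi_{*}) = 1 - o(1)$. Let $\calE_1 := \{\wh\pi = \pi_{*}\}$.

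\textbf{Step 2 (forming the union graph).} Using $\wh\pi$, construct the union graph $G_\lor := G_1 \vee_{\wh\pi} G_2$, whose adjacency matrix $C$ has entries $C_{ij} := A_{ij} \vee B_{\wh\pi(i), \wh\pi(j)}$. On the event $\calE_1$, $C_{ij}$ equals $A_{ij} \vee B'_{ij}$, where $B'$ is the adjacency matrix of the pre-permutation child graph $G_2'$. By construction of $\CSBM$, conditionally on the parent graph $G$ and on $\bsigma_{*}$, the variables $A_{ij}$ and $B'_{ij}$ are independent Bernoulli$(s)$ thinnings of $G_{ij}$, so $\PP(C_{ij} = 1 \condsig, G_{ij}=1) = 1 - (1-s)^2$, and $\PP(C_{ij} = 1 \condsig, G_{ij}=0) = 0$. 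Marginalizing over $G_{ij}$, on $\calE_1$ the union graph is distributed as $\SBM(n, p', q')$ with $p' = (1-(1-s)^2)\, a \tfrac{\log n}{n}$ and $q' = (1-(1-s)^2)\, b \tfrac{\log n}{n}$, and its latent community vector is still $\bsigma_{*}$.

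\textbf{Step 3 (exact community recovery on the union).} The Chernoff--Hellinger divergence of this pooled SBM is $D_+\bigl((1-(1-s)^2)a,\, (1-(1-s)^2)b\bigr) = (1-(1-s)^2) D_+(a,b)$, which by hypothesis exceeds $1$. Hence the standard single-graph exact community recovery results (e.g., \cite{abbe2015bisec_exact_recovery, mossel2015consistency}) supply a polynomial-time estimator $\whbsigma$ such that, if the input is distributed as $\SBM(n, p', q')$, then $\PP(|\tfrac{1}{n}\sum_i \whbsigma(i) \bsigma_{*}(i)| = 1) = 1 - o(1)$. Apply this estimator to $G_\lor$. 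On $\calE_1$ the input has the required SBM distribution, so a union bound over $\calE_1^c$ and the failure event of the recovery algorithm yields $\PP(|\tfrac{1}{n}\sum_i \whbsigma(i)\bsigma_{*}(i)| = 1) \to 1$, as desired. The entire pipeline runs in polynomial time because each of the three components does.

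\textbf{Main caveat.} The only subtle point is Step~2: one must verify that conditioning on $\calE_1$ (a high-probability event depending on the whole data) does not distort the distribution of $G_\lor$ in a way that breaks the assumptions of the community recovery algorithm. This is handled cleanly by not conditioning at all: one simply runs the algorithm, and the success probability bound follows from a union bound over the failure of the matching step and the failure of the recovery algorithm on a genuine $\SBM(n, p', q')$ input. Everything else is a routine composition of the ingredients already established.
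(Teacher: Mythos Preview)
Your proposal is correct and follows essentially the same approach as the paper: invoke Theorem~\ref{theorem:graph_matching}\eqref{thm:graph_matching_exact} to obtain $\wh\pi=\pi_*$ with high probability, form the union graph $G_1\vee_{\wh\pi}G_2$, and apply a standard efficient single-graph exact community recovery algorithm to the resulting $\SBM(n,(1-(1-s)^2)a\tfrac{\log n}{n},(1-(1-s)^2)b\tfrac{\log n}{n})$, which is above threshold by the third hypothesis. The paper phrases this as following Theorem~3.3 of~\cite{RS21} with the MAP matching replaced by the efficient matcher from Theorem~\ref{theorem:graph_matching}, and your handling of the conditioning issue via a union bound is exactly the right way to make this rigorous.
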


Theorem \ref{theorem:comm_recovery} is a direct application of our Theorem \ref{theorem:graph_matching}. The proof mainly follows the Theorem~3.3 in \cite{RS21}, which gives exact community recovery on the union graph of $G_1$ and $G_2$ regarding to the permutation $\wh \pi$, $G_1 \vee_{\wh \pi} G_2$. The key difference is that we substitute the maximum a posterior estimator used in the first step with the $\wh \pi(G_1,G_2)$ output by the algorithm used to prove Theorem~\ref{theorem:graph_matching}. Figure~\ref{fig:diagram_community_recovery} is a summary of the phase diagram for community recovery determined by this work along with previous works~\cite{RS21,GRS22}, focusing on the exact community recovery and efficiency.

\begin{figure}[!t]
\centering
\subfigure[s=0.6]{
    \includegraphics[width=4cm]{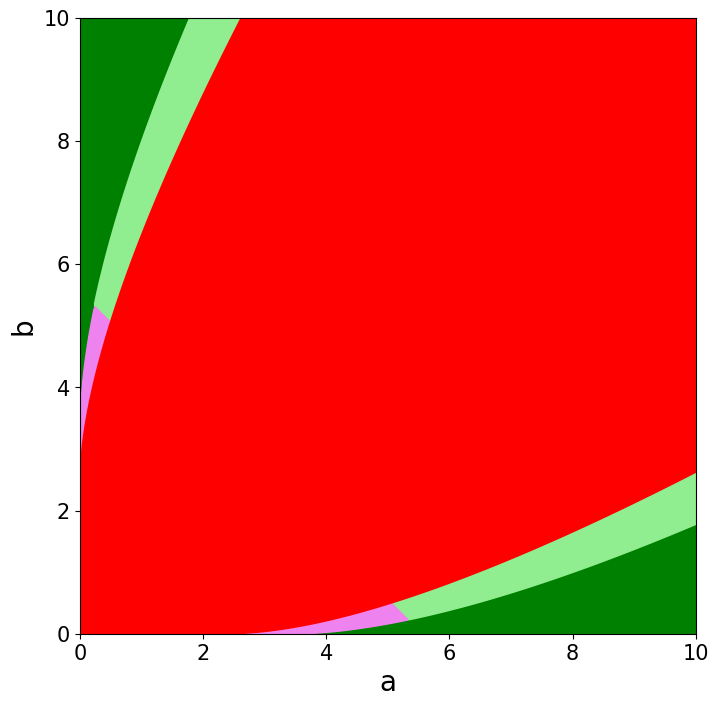}
}
\subfigure[s=0.7]{
    \includegraphics[width=4cm]{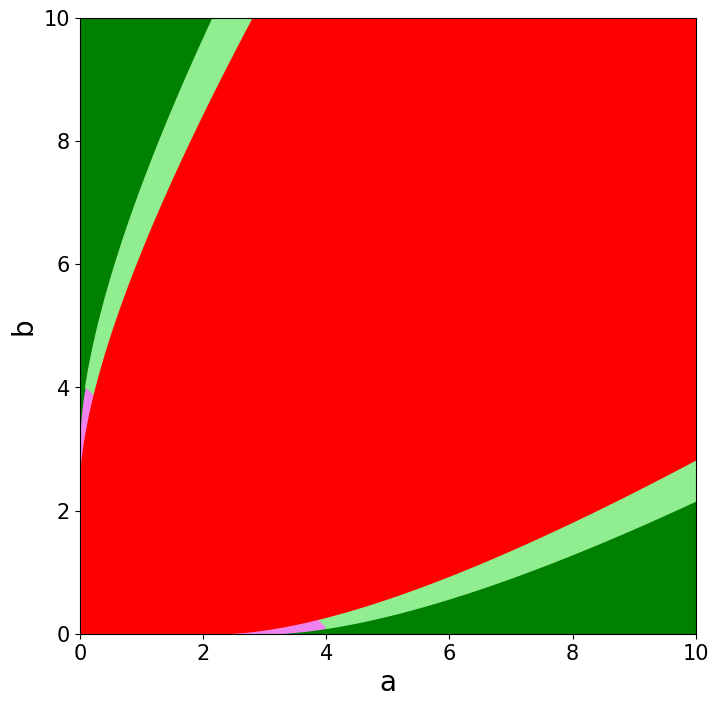}
}
\subfigure[s=0.8]{
    \includegraphics[width=4cm]{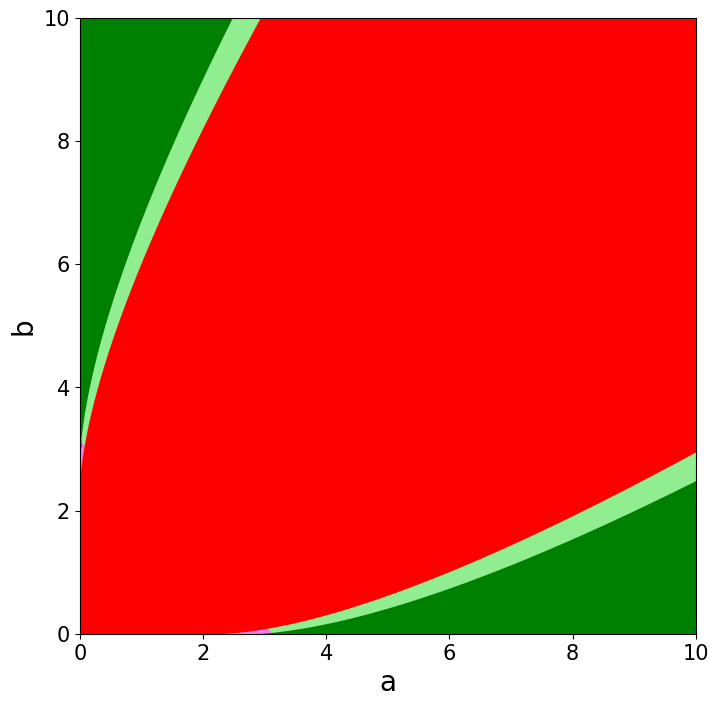}
}
\caption{Phase diagram for exact community recovery with fixed $s$ on correlated SBMs. \textit{Green regions}: exact community recovery is possible from $G_1$ alone and can be done efficiently; \textit{Lightgreen regions}: exact community recovery is possible from $(G_1, G_2)$ but impossible from $G_1$ alone, exact graph matching can be done efficiently and therefore exact community recovery can be done efficiently; \textit{Violet regions}: exact community recovery is impossible from $G_1$ alone, impossible from $(G_1,G_2)$ if $s^2(\frac{a+b}{2})+s(1-s)D_{+}(a,b)<1$ and possible if $s^2(\frac{a+b}{2})+s(1-s)D_{+}(a,b)>1$ \cite{GRS22}. It is unknown whether there exists an efficient algorithm for exact community recovery in this regime.}
\label{fig:diagram_community_recovery}
\end{figure}

\begin{remark}
    Consider a more general correlated SBMs with $K$ correlated graphs $(G_1,G_2,\ldots,G_K) \sim \CSBM(n, a\frac{\log n}{n}, b \frac{\log n}{n}, s, K)$. Theorem \ref{theorem:graph_matching} also implies an efficient algorithm for exact community recovery above the exact graph matching threshold for $K$ correlated SBMs when $D_{+}(a,b)> \frac{1}{1-(1-s)^K}$ (Theorem 3.6 in \cite{RS21}).
\end{remark}

\subsection{Related work}
\label{sec:related_work}

\paragraph{Graph Matching.} 
Correlated Erd\H{o}s--R\'enyi graphs were first studied in~\cite{pedarsani2011privacy} for social network de-anonymization. The information-theoretic threshold for partial matching was determined by~\cite{ganassali2021impossibility, hall2023partial} and the information-theoretic threshold for exact graph matching was determined by~\cite{cullina2016improved, cullina2018exact, wu2022settling}.

Mao et al.~\cite{mao2023exact} proposed the first efficient algorithm that achieves exact graph matching for correlated  Erd\H{o}s--R\'enyi graphs with average degree $(1+\eps)\log n\leq n q\leq n^{\frac{1}{\Theta(\log \log n)}}$ and constant noise. This algorithm only requires a constant edge correlation (sufficiently close to $1$) rather than converging to $1$, which represents a perfect correlation. Mao et al.~\cite{mao2022chandelier} followed up with an improved efficient algorithm that achieves exact graph matching for any correlation $\rho$ satisfying $\rho^2 > \alpha$ when $nq(q+\rho(1-q))\geq (1+\eps)\log n$. It is conjectured that for random graphs of logarithmic average degree, $\rho^2=\alpha$ is the computational threshold~\cite{mao2022chandelier}.
Muratori and Semerjian~\cite{muratori2024faster} added a small constant constraint on the maximum vertex degree of a chandelier to improve the runtime, at the expense of having a slightly larger constant $\wh \alpha$ as the minimum squared correlation requirement. 

In the denser regime where $p=n^{-a+o(1)}$, $a \in (0,1]$, Ding and Du~\cite{ding2023matching} established a sharp information-theoretic threshold for matching a positive fraction of vertices. Ding and Li~\cite{ding2023polynomial} also developed an efficient algorithm for exact graph matching whenever the edge correlation is non-vanishing, which goes beyond the Otter's tree counting constant.

Several recent works also go beyond correlated  Erd\H{o}s--R\'enyi graphs. 
Wang et al.~\cite{wang2023efficient_attributed_graph_align} studied the exact graph matching with additional attribute information on vanishing edge correlation.

Closely related to our work, Yang et al.~\cite{yang2023efficient} adopted the binary tree counting algorithm~\cite{mao2023exact} to give an efficient graph matching algorithm for correlated SBMs. 
However,~\cite{yang2023efficient} makes several significant assumptions (which we do not). 
For one, the algorithm in~\cite{yang2023efficient} assumes that the community labels are known. This is a strong assumption which may be unrealistic in practice; moreover, this precludes using graph matching as a tool for improved community recovery. 
In contrast, we do not assume that community labels are known; in fact, a significant part of our technical work is devoted to dealing with the errors arising from estimating the community labels. Moreover, our graph matching algorithm can be directly applied to improve community recovery, as discussed in Theorem~\ref{theorem:comm_recovery} and Section~\ref{sec:comm_recovery}. 
In addition,~\cite{yang2023efficient} makes strong assumptions on the parameters, assuming that (1) the average degree is at least $(\log n)^{1.1}$, (2) the SBM has at least $3$ communities, and (3) the correlation parameter satisfies $s > 1-\eps_{0}$ for some unspecified (small) $\eps_{0}$. 
In contrast, our results hold in the most interesting regime of logarithmic average degree and the most natural setting of two balanced communities; moreover, our assumption on $s$ is also weaker.

\paragraph{Community recovery with side information}

Beyond correlated SBMs, there are some other models utilizing side information, from multiple networks \cite{han2015consistent_est_multilayer,valles2016multilayer,lei2020consistent_com_detec,hu2024network_adj_covariates,yang2024fundamental_multiview,zhang2024consistent_com_detec_interdept}, additional covariates \cite{braun2024vec_sbm}, or both \cite{mayya2019covariate_info,ma2023contextual_multilayer}.
Multi-layer SBM is first mentioned in \cite{holland1983sbm}, which is generated as following: first, generate the community labels for all vertices and fix them for all layers; second, form edges on each layer based on the community labels. 
Typically, different layers in a multi-layer SBM are conditionally independent given the shared community labels.
In addition, several works \cite{mayya2019covariate_info,ma2023contextual_multilayer} also encode community membership correlated covariates onto each node.
Aside from the multi-layer SBM, Braun and Sugiyama \cite{braun2024vec_sbm} recently studied community detection on a novel variation of SBM whose edges are attached with vectorial covariates.

\subsection{Discussion and future work}
\label{sec:discussions}

Our main contribution in this paper is to give the first efficient algorithm for exact graph matching for correlated SBMs with two balanced communities, 
as well as a rigorous proof of its correctness (Theorem~\ref{theorem:graph_matching}). 
We also discuss novel applications to community recovery (Theorem~\ref{theorem:comm_recovery}). 
At the same time, our work raises many interesting questions for future research, which we discuss here. 

\textbf{Optimal runtime.} 
While our graph matching algorithm 
is efficient, 
it would be desirable to understand 
the optimal running time that can be achieved. Mao, Rudelson, and Tikhomirov~\cite{mao2023exact} gave an efficient algorithm for matching correlated Erd\H{o}s--R\'enyi 
graphs with runtime $n^{2+o(1)}$; the main drawback is that this algorithm requires the correlation parameter to satisfy $s > 1 - \eps_{0}$ for some unspecified (small) $\eps_{0} > 0$. Nonetheless, it would be interesting to 
generalize this algorithm
to correlated SBMs and the techniques developed in our work may be useful to do so. 
In very recent (and concurrent) work, Muratori and Semerjian~\cite{muratori2024faster} gave faster algorithms for matching correlated Erd\H{o}s--R\'enyi graphs by introducing a constraint on the maximum degree of a chandelier, at the expense of strengthening the condition $s^{2} > \alpha$ to $s^{2} > \widehat{\alpha}$ for some $\widehat{\alpha} > \alpha$. 
Exploring the connections between our work and theirs, and generalizing their ideas to correlated SBMs, 
are 
of interest. 

\textbf{Information-computation gap.} 
An important assumption throughout this work is that the correlation parameter satisfies $s^{2} > \alpha$. 
We believe that this is inherently necessary and that there is no efficient algorithm (in the logarithmic average degree regime) when $s^{2} < \alpha$. At the same time, exact graph matching is information-theoretically possible whenever $s^{2} (a+b) / 2 > 1$, so there is a conjectured information-computation gap.
This mirrors the conjecture in~\cite{mao2022chandelier} for Erd\H{o}s--R\'enyi graphs; see also~\cite{ding2023low} for the low-degree hardness results on the correlation detection and~\cite{chen2024computational} for the very recent low-degree hardness results on testing a pair of correlated stochastic block models against a pair of independent Erd\H{o}s--R\'enyi graphs.

\textbf{Efficient exact community recovery when exact graph matching is not possible.} 
Gaudio, R\'acz, and Sridhar~\cite{GRS22} determined the information-theoretic threshold for exact community recovery on correlated SBMs, 
in particular showing that there is a regime when this is possible even though (1) this is impossible with a single graph and also (2) exact graph matching is impossible. 
It remains unknown whether 
this can be done efficiently in this regime. 
We believe that this is possible, 
and our work is an important starting point for this question, 
yet additional ideas are needed to understand the subtle interplay between graph matching and community recovery in this regime. 


\textbf{Sparser and denser regimes.} 
Our work focuses on the most interesting regime where the average degree is logarithmic in $n$; it is worth understanding other regimes too. 
In particular, the chandelier counting algorithm by Mao et al.~\cite{mao2022chandelier} gives almost exact graph matching whenever the average degree diverges.
In our Theorem~\ref{theorem:graph_matching} 
we require that the average degree diverges logarithmically for the corresponding result, so that the error rate for community recovery estimate is polynomially small in~$n$. 
It would be interesting to overcome this technical barrier and extend the analysis to this sparser regime. 
Denser regimes are easier to understand. A close inspection of our analysis shows that it also works when the average degree diverges as a (small) polynomial in $n$; 
in even denser regimes, the community partition can be recovered exactly and efficiently whenever $\liminf_{n \to \infty} |p_{n}/q_{n} - 1| > 0$ (see~\cite{mossel2015consistency}) and then the graph matching algorithm in~\cite{mao2022chandelier} can be applied in a black-box fashion.

\textbf{General block models.} 
We focused here on the simplest case of SBMs with two balanced communities. It is of great interest to develop efficient graph matching algorithms in the general block model with $k$ communities, whenever this is possible. Recently, Yang and Chung~\cite{yang2023graph} determined the information-theoretic threshold for exact graph matching in the $k$-community symmetric SBM, extending the results of R\'acz and Sridhar~\cite{RS21}. 
We conjecture that substituting the community recovery algorithm used in our work with the degree-profiling algorithm by Abbe and Sandon~\cite{abbe2015sbm_general} 
gives an efficient algorithm for graph matching in this more general setting, assuming again that~$s^{2} > \alpha$.

\subsection{Organization}

The rest of this paper is organized as follows. In Section~\ref{sec:notation}, we give some notations used throughout. 
In Section~\ref{sec:theorem_proof}, we define the similarity score of a pair of vertices and give the formal proof of Theorem~\ref{theorem:graph_matching} based on Theorem~\ref{theorem:almost_exact_matching} (Almost exact graph matching), Theorem~\ref{theorem:efficient_almost_exact_matching} (Efficient algorithm for almost exact graph matching), and Theorem~\ref{theorem:seeded_graph_matching} (Exact graph matching by seeded graph matching). 
In Section~\ref{sec:preliminaries}, we talk about some preliminaries: tail bounds, nice events, a tree node assigning sub-problem, an automorphism inequality for trees, and the calculation for cross-moments. These results will be repeatedly used in the following sections. 
In Section~\ref{sec:theorem3}, Section~\ref{sec:theorem4}, and Section~\ref{sec:theorem5}, we present the proofs for Theorem~\ref{theorem:almost_exact_matching}, Theorem~\ref{theorem:efficient_almost_exact_matching}, and Theorem~\ref{theorem:seeded_graph_matching}, respectively. Section~\ref{sec:theorem4} and Section~\ref{sec:theorem5} are self-contained, while Section~\ref{sec:theorem3} contains several propositions whose proofs are deferred and which make up the remainder of the paper.

In Section~\ref{sec:theorem3}, we introduce six additional Propositions to show that under two different cases---namely, $sD_{+}(a,b)\geq 1$ and $sD_{+}(a.b)<1$---the mean and variance of the similarity score are properly controlled. 
Specifically, if $sD_{+}(a,b)\geq 1$, then Proposition~\ref{prop:mean_regimeI} gives the mean calculation of the similarity score, while  Proposition~\ref{prop:var_true_regimeI} and Proposition~\ref{prop:var_fake_regimeI} are about the variance calculation of the similarity score for true pairs and fake pairs of vertices.
If $sD_{+}(a,b)< 1$, then Proposition~\ref{prop:mean_regimeII} gives the mean calculation of the similarity score, while Proposition~\ref{prop:var_true_regimeII} and Proposition~\ref{prop:var_fake_regimeII} are about the variance calculation of the similarity score for true pairs and fake pairs of vertices.
The proofs of Proposition~\ref{prop:mean_regimeI}, Proposition~\ref{prop:var_true_regimeI}, Proposition~\ref{prop:var_fake_regimeI}, Proposition~\ref{prop:mean_regimeII}, Proposition~\ref{prop:var_true_regimeII}, and Proposition~\ref{prop:var_fake_regimeII} are presented in Section~\ref{sec:prop1}, Section~\ref{sec:prop2}, Section~\ref{sec:prop3}, Section~\ref{sec:prop4}, Section~\ref{sec:prop5}, and Section~\ref{sec:prop6}, respectively.

\subsection{Notation}
\label{sec:notation}
For any graph $G=(V,E)$, we denote $E(G)$ as the edge set and $V(G)$ as the vertex set. We let $e(G):=|E(G)|$ denote the number of edges of graph $G$ and $v(G):=|V(G)|$ denote the number of vertices in graph $G$. We define the excess of graph $G$ as $e(G)-v(G)$, the difference between the number of edges and the number of vertices of $G$.

Consider an arbitrary graph where vertices are equipped with two possible community labels $\{+1,-1\}$, we denote $V^+$ as the set of vertices with community label $+1$, $V^-$ as the set of vertices with community label $-1$, $\mathcal{N}(v)$ as the set vertices that are neighbors of $v$.

Let $\pi$ be a permutation on $[n]$, $(G_1,G_2)\sim \CSBM(n,p,q,s)$. Let $A$ (resp. $B$) be the adjacency matrix of $G_1$ (resp. $G_2$). Let $\overline{A}:=A - \EE[A]$ (resp. $\overline{B}$) be the centralized adjacency matrix. We further define the approximately centralized adjacency matrix with respect to community label estimate $\whbsigma$ as $\overline{A}^{\whbsigma_A}:=A-E_{A}$, where $E_{A}$ is an $n \times n$ matrix whose $(i,j)$-th entry is $p$ if $\whbsigma (i)=\whbsigma (j)$ and $q$ otherwise.\footnote{See Section~\ref{sec:prop4} for illustrations and more discussions on approximately centralized adjacency matrices.}

We denote $G_1 \vee_{\pi} G_2$ as the union graph with respect to $\pi$, such that~$(i,j) \in E(G_1 \vee_{\pi} G_2)$ if and only if~$(i,j) \in E(G_1)$ or $(\pi(i),\pi(j))\in E(G_2)$. We denote $G_1 \wedge_{\pi} G_2$ as the intersection graph with respect to $\pi$, such that $(i,j) \in E(G_1 \wedge_{\pi} G_2)$ if and only if $(i,j) \in E(G_1)$ and $(\pi(i),\pi(j))\in E(G_2)$.

We denote the variance of in-community edges $\sigma_{+}^2:=sp(1-sp)$ and the variance of cross-community edges as $\sigma_{-}^2:=sq(1-sq)$.
Denote the correlation for in-community edges and cross-community edges as $\rho_{+}$ and $\rho_{-}$, respectively.
In addition, we define $\rho:=\frac{\rho_{+}+\rho_{-}}{2}$. 
Consider the average degree being logarithmic in the number of vertices, then for some constants $a,b>0$, $p=\frac{a\log n}{n}, q=\frac{b\log n}{n}$, $\rho = (1+\Theta(\frac{\log n}{n}))\rho_{+} = (1+\Theta(\frac{\log n}{n}))\rho_{-}=(1+\Theta(\frac{\log n}{n}))s$.\footnote{For arbitrarily small constant $\eps>0$, there exists $n$ large enough, such that $\rho^2 \geq \alpha+\eps$ if and only if $s^2 \geq \alpha+\eps$.}

Throughout the paper, we use standard asymptotic notation $O(\cdot), \Omega(\cdot), \Theta(\cdot), o(\cdot), \omega(\cdot)$. Any limitation is for $n\rightarrow \infty$ without special explanations.
For real numbers $x,y$, we define $x\vee y:= \max\{x,y\}$ and $x\wedge y:= \min\{x,y\}$.
In this paper, $\log$ is natural logarithmic function (with base $e$). 
Further notation is introduced in the following section which details the algorithms.

\section{Proof overview and proof of Theorem~\ref{theorem:graph_matching}}
\label{sec:theorem_proof}

\subsection{Chandelier}

A line of works convert the graph matching problem from quadratic assumption to linear assignment by creating a signature vector $\bm{s}_i$ for each vertex $i\in [n]$, followed by calculating the similarity score $\Phi_{ij}=\langle \bm{s}_i^{(1)}, \bm{s}_j^{(2)} \rangle$ of all possible pairs of signatures on two graphs. 
Recently, Mao et al. \cite{mao2022chandelier} proposed a special tree family $\mathcal{T}$, chandelier, that shows a result of efficient graph matching under constant correlation.

\begin{definition} [$(L, M, K, R)$--chandelier\cite{mao2022chandelier}] An $(L, M, K, R)$-chandelier is a rooted tree with $L$ branches, each of which consists of a path with $M$ edges ($M$-wire), followed by a rooted tree with $K$ edges ($K$-bulb); the $K$-bulbs are non-isomorphic to each other and each of them has at most $R$ automorphisms.
\end{definition}

In this paper, we give an alternative definition of chandelier with five tuple. The first four parameters remain the same as $(L,M,K,R)$--chandelier. The last parameter $D$ stands for the maximum degree of vertices on this chandelier. We explain the necessity of controlling $D$ in the proof challenge.

\begin{definition} [$(L, M, K, R, D)$--chandelier]
    An $(L, M, K, R, D)$-chandelier is a rooted tree with $L$ branches, each of which consists of a path with $M$ edges ($M$-wire), followed by a rooted tree with $K$ edges ($K$-bulb); the $K$-bulbs are non-isomorphic to each other, each of them has at most $R$ automorphisms, and the degree of each vertex is at most $D$.
\end{definition}

For each chandelier $H$, let $\mathcal{K}(H)$ denote the set of bulbs of $H$. 
For a rooted tree $T$, let $\aut(H)$ denote the number of rooted automorphisms 
of $T$ throughout this paper. 
We abbreviate rooted automorphism as automorphism when it is clear that we are applying it to a chandelier.
The number of automorphisms of $H$ is determined by the automorphisms of its bulbs. Because all bulbs are non-isomorphic to each other, 
\begin{equation}
    \aut(H)=\prod_{\mathcal{B}\in \mathcal{K}(H)} \aut (\mathcal{B}).
\end{equation}

Let $\mathcal{T}$ denote the family of non-isomorphic $(L, M, K, R, D)$--chandelier. The family size of chandelier is $|\mathcal{T}|= \binom{|\mathcal{J}|}{L}$, where $\mathcal{J}\equiv\mathcal{J}(K,R,D)$ denotes the collection of unlabeled rooted trees with $K$ edges, at most $R$ automorphisms, and maximum degree $D$.

Otter \cite{otter1948number} showed that the number of unlabeled rooted trees with $K$ edges (and no constraint on the automorphisms and vertex degrees) is $|\mathcal{J}(K,\infty,\infty)|=(\alpha+o(1))^{-K}$, where $\alpha \approx 0.338$. We show that under proper choices of $R$ and $D$, we have $|\mathcal{J}(K,R,D)|=(\alpha+o(1))^{-K}$ through the following two Lemmas.

\begin{lemma} \label{lemma:unlabeled_rooted_deg}
    Let $K$ be the number of vertices on a unlabeled rooted tree, $C'>\frac{1}{\log(1/\alpha)}\approx 0.9227$, $D\geq C'\log K$. As $K \rightarrow \infty$,
    \begin{align}
        \frac{|\mathcal{J}(K, \infty, D)|}{|\mathcal{J}(K, \infty, \infty)|} = 1-o(1).
    \end{align}
\end{lemma}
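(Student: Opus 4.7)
The plan is to show that the number of rooted unlabeled trees (with $K$ vertices) that contain some vertex of degree exceeding $D$ is $o(\tau_K)$, where $\tau_K := |\mathcal{J}(K, \infty, \infty)|$. The approach combines Otter's asymptotic for the total count with a P\'olya-enumeration-based pointing argument to bound the count of bad trees.

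First, I would invoke Otter's classical result: $\tau_K \sim c_0 \alpha^{-K} K^{-3/2}$, obtained from singularity analysis of the P\'olya generating function $\wt{T}(z) = \sum_n t_n z^n$, which satisfies $\wt{T}(z) = z \exp\bigl(\sum_{k \ge 1} \wt{T}(z^k)/k\bigr)$, has radius of convergence $\alpha$, and exhibits a square-root branch point at $z = \alpha$ with the critical-point condition $\wt{T}(\alpha) = 1$.

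Next, I would set up a pointing / union-bound argument. For each rooted tree $T$ with a distinguished vertex $v$ of high degree, marking $v$ decomposes $T$ into the structure above $v$ (a rooted tree with a distinguished attachment slot for $v$) and the multiset of rooted subtrees hanging off $v$ as children. By a union bound over the at most $K$ candidate marked vertices, the number of bad trees is at most $K$ times the count of rooted trees in which a distinguished vertex has more than $D$ children, which in turn reduces (up to treating the root case and non-root case separately) to estimating the P\'olya count of trees whose root has more than $D$ children. Using P\'olya enumeration, trees whose root has exactly $d$ children are enumerated by $z \cdot [u^d] \exp\bigl(\sum_{k \ge 1} u^k \wt{T}(z^k)/k\bigr)$. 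For $d$ large, the dominant contribution comes from the $u \wt{T}(z)$ term in the exponential, giving the leading factor $\wt{T}(z)^d / d!$. Singularity analysis near $z = \alpha$---using $\wt{T}(\alpha) = 1$ and the expansion $\wt{T}(z) = 1 - c_1 \sqrt{\alpha - z} + O(\alpha - z)$---shows that the number of such trees is of order $\tau_K \cdot \alpha^{\Theta(d)} / \mathrm{poly}(d)$. Heuristically, each additional forced child of the marked vertex requires an extra subtree to fit into the fixed vertex budget~$K$, which incurs a multiplicative cost of order $\alpha$ in the ratio to $\tau_K$.

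Finally, combining the pointing factor of $K$ with the per-vertex bound yields a count of bad trees of the form $O(K \cdot \alpha^D \cdot \tau_K)$. Under the hypothesis $D \ge C' \log K$ with $C' > 1/\log(1/\alpha)$, we have $K \cdot \alpha^D \le K^{1 - C' \log(1/\alpha)} = o(1)$, and the lemma follows. The main technical obstacle will be carrying out the singularity analysis uniformly in $d = O(\log K)$, which lies comfortably in the regime $d = o(\sqrt{K})$ where the branch-point expansion of $\wt{T}(z)^d$ can be used without the saddle point deforming away from $z = \alpha$; a secondary subtlety is handling non-root marked vertices in the pointing argument, as their ``above'' context requires a separate but analogous estimate whose contribution is at most polynomial in $K$ and thus absorbed into the union-bound slack provided by the hypothesis on $C'$.
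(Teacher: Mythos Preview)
Your proposal is correct in outline but takes a genuinely different route from the paper. The paper's proof is a two-line citation argument: it invokes the asymptotic $\alpha_D = \alpha + C\alpha^{D} + o(\alpha^{D})$ for the radius of convergence of the generating function of unlabeled rooted trees with maximum degree at most $D$ (due to Goh and Schmutz), together with Otter's $|\mathcal{J}(K,\infty,D)|/|\mathcal{J}(K,\infty,\infty)| \asymp (\alpha/\alpha_D)^{K}$. From these, $(\alpha/\alpha_D)^{K} = (1+O(\alpha^{D}))^{-K} = 1 - o(1)$ whenever $K\alpha^{D} \to 0$, which is exactly your hypothesis $D \geq C'\log K$ with $C' > 1/\log(1/\alpha)$.

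Your approach instead bounds the bad trees directly via a pointing/union-bound argument and singularity analysis, arriving at the same decisive quantity $K\alpha^{D}$. This is more self-contained and explains \emph{why} high-degree vertices are rare (each forced child costs a factor $\approx \alpha$), at the price of having to carry out the singularity analysis uniformly in $d$ and to handle the non-root marked-vertex case. The paper's route is shorter but leans entirely on an external asymptotic; yours would be preferable if one wanted to avoid citing Goh--Schmutz, or to extract an explicit error term. One caution: in the pointing step for unlabeled structures, the ``factor of $K$'' is an upper bound on the number of (tree, marked-vertex) pairs per bad tree, not an equality, because of automorphisms---but that is the direction you need, so the argument goes through.
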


\begin{proof}
    Otter~\cite{otter1948number} characterized that
    $\frac{|\mathcal{J}(K,\infty,D)|}{|\mathcal{J}(K,\infty,\infty)|} \asymp \frac{\alpha_D^{-n}}{\alpha^{-n}}$, where $\alpha_D$ is the radius of convergence for the generating function of the number of unlabeled rooted trees whose maximum vertex degree less than or equal to $D$.
    Goh and
    Schumutz~\cite{goh1994unlabeled} (Theorem 7) showed the following property: as $D \rightarrow \infty$, for some constant $C>0$, $\alpha_D = \alpha + C\alpha^{D} + o(\alpha^{D})$.
    Immediately we can see that $\frac{|\mathcal{J}(K,\infty,D)|}{|\mathcal{J}(K,\infty,\infty)|}=(1+O(\alpha^{D}))^{-K}=1-o(1)$ if $K\alpha^{D}\rightarrow 0$. Let $C'>\frac{1}{\log(1/\alpha)}$, choosing $D\geq C'\log K$ satisfies $K\alpha^{D}\rightarrow 0$.
\end{proof}

\begin{lemma} \label{lemma:unlabeled_rooted_aut}
    Let $K$ be the number of vertices on a unlabeled rooted tree, $C$ be a constant and choose $R=\exp(CK)$. For sufficiently large $C$ and $K\rightarrow \infty$,
    \begin{align}
        \frac{|\mathcal{J}(K, R, \infty)|}{|\mathcal{J}(K, \infty, \infty)|} = 1-o(1).
    \end{align}
\end{lemma}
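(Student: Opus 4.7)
The plan is to recast the lemma as a tail bound on the automorphism count under the uniform law on $\mathcal{J}(K,\infty,\infty)$, and then to establish that tail bound via concentration for additive functionals on random Polya trees.

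Concretely, the claim $|\mathcal{J}(K,R,\infty)|/|\mathcal{J}(K,\infty,\infty)| = 1-o(1)$ with $R=\exp(CK)$ is equivalent to $\mathbb{P}_T(\aut(T) > e^{CK}) = o(1)$ for $T$ uniform on $\mathcal{J}(K,\infty,\infty)$, and this is what I would prove. The starting point is the standard recursive description of rooted-tree automorphisms: if at each vertex $v$ the children subtree-isomorphism classes occur with multiplicities $\{m_i(v)\}_i$ (so $d(v)=\sum_i m_i(v)$ is the number of children of $v$), then $\aut(T)=\prod_v\prod_i m_i(v)!$. Taking logarithms and using $\log(m!)\le m\log m$ together with $\prod_i m_i(v)! \le d(v)!$ yields
\[
\log \aut(T) \;\le\; \sum_{v\in V(T)} \log d(v)! \;\le\; \sum_{v\in V(T)} d(v)\log d(v),
\]
which is an additive functional of $T$.

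To control this functional I would invoke the structure theory of uniform random Polya trees. Singularity analysis of the Polya equation $J(z) = z\exp\sum_{k\ge 1} J(z^k)/k$ gives a coupling between a random $K$-vertex Polya tree and a conditioned Galton--Watson tree with offspring law $\nu$ having all exponential moments finite; under this coupling, additive statistics of the form $\sum_v f(d(v))$ enjoy Chernoff-type concentration around $K\cdot\mathbb{E}_\nu f(\xi)$. Applied to $f(d)=d\log d$ this yields $\sum_v d(v)\log d(v) = (\mu+o(1))K$ with probability $1-o(1)$, where $\mu=\mathbb{E}_\nu[\xi\log\xi]<\infty$ by the exponential tail of $\nu$. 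Choosing any constant $C>\mu$ then gives the desired $\mathbb{P}(\aut(T) > e^{CK})=o(1)$, completing the argument.

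The main obstacle is the quantitative concentration. Mere convergence in probability of the empirical degree distribution is too weak, since it only provides $o(K)$ fluctuations; what is needed is an exponential tail bound of the form $\mathbb{P}(\sum_v d(v)\log d(v) > CK) \le e^{-c(C)K}$ valid for each fixed $C>\mu$. Establishing this in the Polya-tree setting requires some care, as the Galton--Watson coupling is not entirely canonical---it must respect subtree-isomorphism constraints---but it should ultimately reduce to standard Cram\'er-type tail bounds for sums of i.i.d.\ variables with finite exponential moments (via the size-biased decomposition of the coupling). This is the step where I would invest the bulk of the technical effort; the rest of the argument is largely bookkeeping once the tail bound is in hand.
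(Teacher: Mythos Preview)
Your approach is correct in spirit but takes a substantially different route from the paper. The paper's proof is a one-line citation: Olsson and Wagner established a central limit theorem $\frac{1}{\sqrt{K}}(\log\aut(H_K)-\mu K)\Rightarrow\mathcal{N}(0,\sigma^2)$ for $H_K$ uniform on $\mathcal{J}(K,\infty,\infty)$ with $\mu\approx 0.137$, from which $\mathbb{P}(\aut(H_K)>e^{CK})\to 0$ for any $C>\mu$ is immediate. You instead propose to rebuild a weak-law statement from scratch by bounding $\log\aut(T)\le\sum_v d(v)\log d(v)$ and then proving concentration of this additive functional on random P\'olya trees. That would work and is self-contained, but it is essentially re-deriving (a coarser version of) the cited theorem---Olsson--Wagner's CLT is itself proved via the additive-functional machinery for P\'olya trees---and it yields a larger threshold constant than the sharp $\mu\approx 0.137$.

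Two small technical remarks on your sketch. First, uniform unlabeled rooted trees are \emph{not} conditioned Galton--Watson trees in the standard simply-generated sense (precisely because of the automorphism weighting), so the ``coupling'' you allude to is not the naive one; the relevant structural results (Drmota--Gittenberger, Panagiotou--Stufler, Wagner) do exist but require more care than your phrasing suggests. Second, you over-engineer the concentration requirement: the lemma only asks for $\mathbb{P}(\aut(T)>e^{CK})=o(1)$, so a law of large numbers $\frac{1}{K}\sum_v d(v)\log d(v)\to\mu'$ in probability already suffices for any $C>\mu'$; exponential tail bounds are unnecessary. Your worry that weak convergence of the empirical degree distribution alone is insufficient is legitimate (since $d\log d$ is unbounded), but the remedy is a uniform-integrability or moment argument, not a large-deviations estimate.
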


\begin{proof}
    Olsson and Wagner \cite{olsson2022automorphisms} (Theorem 2) showed a central limit theorem result for the number of automorphism on unlabeled rooted trees: $\frac{1}{\sqrt{K}}(\log \aut(H_K)-\mu K)\rightarrow \mathcal{N}(0, \sigma^2)$ as $K\rightarrow \infty$, where $H_K$ is a uniform random unlabeled rooted tree with $K$ edges and $\mu \approx 0.137, \sigma^2 \approx 0.197$. This implies that for some constant $C>\mu$ and $R=\exp(CK)$, $\aut(H_K)<R$ with high probability.
\end{proof}

Putting together Lemma~\ref{lemma:unlabeled_rooted_deg} and Lemma~\ref{lemma:unlabeled_rooted_aut}, 
and choosing $R$ and $D$ as specified, 
we have that 
$|\mathcal{J}(K,R,D)|=(1-o(1))|\mathcal{J}(K,\infty,\infty)|=(\alpha+o(1))^{-K}$. 
Let $\beta$ denote a universal constant such that $|\mathcal{J}| \leq \beta^{K}$. We take take $\beta=\alpha^{-1}$.

\begin{figure}[t]
\centering
\includegraphics[width=0.28\textwidth]{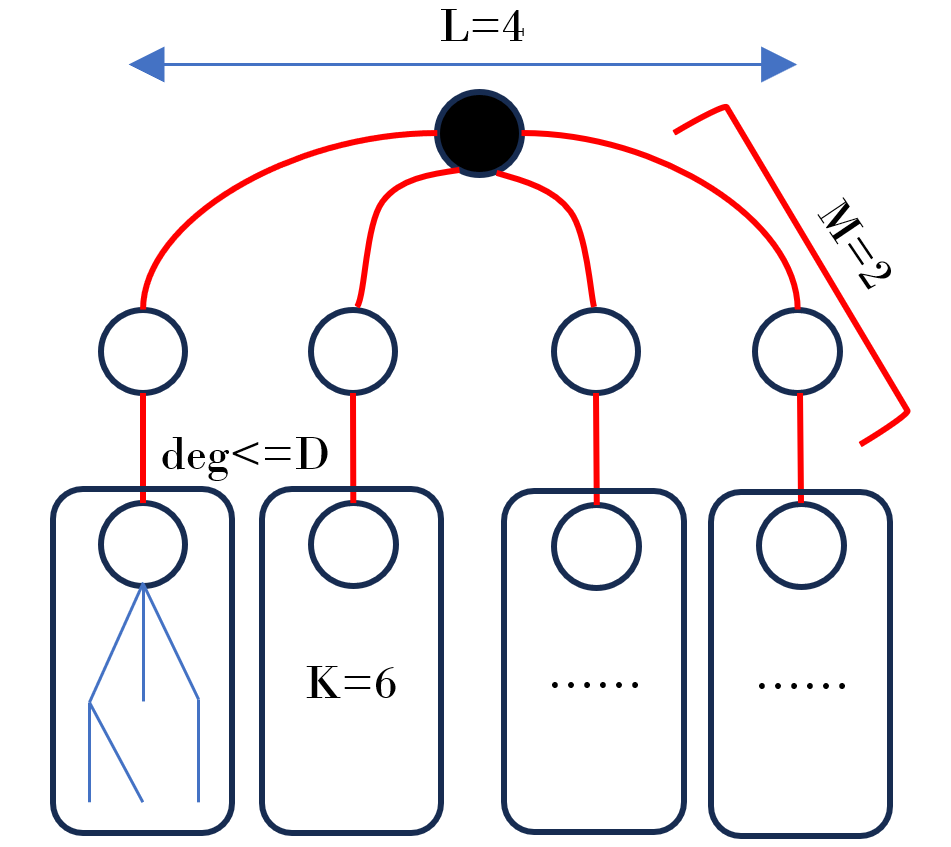}
\caption{A chandelier.}
\label{fig:chandelier}
\end{figure}

\subsection{Algorithm overview}

Given $(G_1,G_2)\sim \CSBM(n,p,q,s)$. Our algorithm contains mainly three steps. Firstly, we apply the algorithm by Mossel, Neeman, and Sly~\cite{mossel2015consistency} to obtain almost exact community label estimates for each single graph. Secondly, we calculate the signed chandelier counting \cite{mao2022chandelier} based similarity score to give an almost exact graph matching. Lastly, we boost the almost exact matching to exact matching by extending the seeded graph matching algorithm~\cite{mao2022chandelier} on Erd\H{o}s--R\'enyi graphs to stochastic block models.

\textbf{Almost exact community recovery.}
Obtaining community label estimates for both graph $G_1$ and $G_2$ is the first step of our algorithm. This is necessary for centralizing the adjacency matrices for the signed subgraph counts afterwards.
We expect the following properties from the community recovery algorithm:
\begin{enumerate}[(a)]
    \item gives almost exact recovery (down to the information-theoretic threshold, which is $s^2(a+b)=1$ in the correlated SBMs with two balanced communities~\cite{RS21});
    \item gives an error rate for each vertex of inverse-polynomial;
    \item gives error rates on different vertices that are approximately independent.
\end{enumerate}

The community recovery algorithm in~\cite{mossel2015consistency} (described as Algorithm~\ref{alg:community_recovery_almost_exact}) has been shown with property (a) by~\cite{mossel2015consistency} and property (b) by~\cite{GRS22} with an error rate of $n^{-sD_+(a,b)}$. In this paper, we show that property (c) is satisfied (See Lemma~\ref{lemma:bad_vertices_joint_distribution}).

\textbf{Subgraph counts.}
For an arbitrary weighted adjacency matrix $M$ of some adjacency matrix $A$, vertex $i \in [n]$, and a rooted graph $H$, we define the \textit{weighted subgraph counts} on~$M$ as
\begin{equation}\label{eq:definition_subgraph_count}
    W_{i,H}(M):= \sum_{S(i) \cong H} M_S, \text{ where } M_S:=\prod_{e\in E(S)} M_e,
\end{equation}
and $S(i)$ enumerates subgraphs of the complete graph $K_n$, rooted at $i$, that are isomorphic to $H$.

When $M$ is the adjacency matrix itself, $W_{i,H}(M)$ is the usual subgraph count, representing the number of subgraphs rooted at $i$ in $M$ that are isomorphic to $H$.
When $M$ is the centralized adjacency matrix $\overline{A}:=A-\EE[A]$, we call $W_{i,H}(M)$ a signed subgraph count following~\cite{bubeck2016test_highd_geom}.
However, we do not have access to $\EE[A]$ in many cases.
Specifically for SBM, we can estimate $\EE[A]$ through estimating the community labels. 
We define the \emph{approximately centralized adjacency matrix} regarding to community label estimate $\whbsigma$, denoted as $\overline A^{\whbsigma}$, entry-wise as $\overline A^{\whbsigma}_{i,j}=A_{i,j}-p\mathbf{1}_{\whbsigma[i]=\whbsigma[j]}-q\mathbf{1}_{\whbsigma[i] \neq \whbsigma[j]}$.
Using $\overline{A}^{\whbsigma}$ in~\eqref{eq:definition_subgraph_count} yields the weighted subgraph counts for approximately centralized adjacency matrix. We also refer to this as a signed subgraph count, though errors may exist.

Given a family $\calH$ of non-isomorphic rooted graphs, we define the \textit{subgraph count signature} of vertex $i$ as
\begin{equation}\label{eq:definition_subgraph_count_signature}
    W_i^{\calH}(M):=(W_{i,H}(M))_{H \in \calH}.
\end{equation}

\textbf{Similarity score.} Given a pair of correlated SBMs ($G_1,G_2$),
we define the similarity score between vertex $i$ on graph $G_1$ and vertex $j$ on graph $G_2$ as a weighted inner product between two signatures:
\begin{equation}\label{eq:definition_similarity_score}
    \Phi_{ij} := \langle W_i^{\mathcal{T}}(\overline{A}), W_j^{\mathcal{T}}(\overline{B}) \rangle
    := \sum_{H \in \mathcal{T}} \aut(H) W_{i,H}(\overline{A}) W_{j,H}(\overline{B}),
\end{equation}
where $\mathcal{T}$ is the family of chandelier.

When we do not have access to the centralized adjacency matrices $\overline{A}$ and $\overline{B}$, we use community label estimates $\whbsigma_A$ and $\whbsigma_B$ for $G_1$ and $G_2$ correspondingly. 
We define the similarity score with a slightly different notation:
\begin{equation}\label{eq:definition_similarity_score_est}
    \Phi_{ij}^{\whbsigma} := \langle W_i^{\mathcal{T}}(\overline{A}^{\whbsigma_A}), W_j^{\mathcal{T}}(\overline{B}^{\whbsigma_B}) \rangle
    = \sum_{H \in \mathcal{T}} \aut(H) W_{i,H}(\overline{A}^{\whbsigma_A}) W_{j,H}(\overline{B}^{\whbsigma_B}).
\end{equation}

\textbf{Almost exact graph matching.} The first part in the analysis is to show that by calculating this similarity score, with an appropriate thresholding strategy, we can match up $(1-o(1))n$ vertices correctly (Theorem~\ref{theorem:almost_exact_matching}). The high-level idea is to show that the similarity score distributions are well-separated between true pairs and fake pairs.
We expect the similarity score having the following properties, under event $\calH$:
\begin{itemize}
    \item For true pairs $j=\pi(i): $
    \begin{equation}\label{eq:moments_goal_true_pair}
        \EE[\Phi_{i\pi_{*}(i)}^{\whbsigma}\indH]>0, \quad \Var[\Phi_{ij}^{\whbsigma}\indH]=o\left(\EE[\Phi_{i\pi_{*}(i)}^{\whbsigma} \indH]^2\right),
    \end{equation}
    \item For fake pairs $j\neq \pi(i): $
    \begin{equation}\label{eq:moments_goal_fake_pair}
        \EE[\Phi_{ij}^{\whbsigma}\indH]=o\left(\EE[\Phi_{i\pi_{*}(i)}^{\whbsigma}\indH]\right), \quad \Var[\Phi_{ij}^{\whbsigma}\indH]=o\left(\frac{\EE[\Phi_{i\pi_{*}(i)}^{\whbsigma} \indH]^2}{n^2}\right).
    \end{equation}
\end{itemize}

Precisely forming bounds for these moments constitutes the main bulk of the paper.
Aside from proving the desired properties of the first and second order moments, we follow the color coding-based similarity score estimation idea from~\cite{mao2022chandelier} to analyze an efficient algorithm. The basic idea is to color the vertices of SBMs using $N+1$ colors uniformly at random. Then, we only do signed counts on vertex sets that are colorful with $N+1$ colors. We show that this is an unbiased estimator and only potentially increase the variance by an additional constant factor in Section~\ref{sec:theorem4}. The result is stated formally as in Theorem~\ref{theorem:efficient_almost_exact_matching}.

\begin{algorithm}[ht!]
\caption{Almost Exact Graph Matching for CSBM}
\label{alg:GMCSBM_almost_exact_labels}
\begin{flushleft}
    \textbf{Input:} {Adjacency matrices $A$ and $B$ for $(G_1,G_2)\sim \CSBM(n,a\frac{\log n}{n},b\frac{\log n}{n},s)$, a constant $c$, and mean value $\mu$.} \\
    \textbf{Output:} {A mapping $\wh \pi: I\rightarrow [n]$.}
\end{flushleft}
\begin{algorithmic}[1]
\STATE Run community recovery Algorithm~\ref{alg:community_recovery_almost_exact}~\cite{mossel2015consistency} on $A$ and $B$ separately and get label vector $\whbsigma_A, \whbsigma_B$. 
\STATE For each pair of vertex $i$ in $\overline{A}^{\whbsigma_A}$ and vertex $j$ in $\overline{B}^{\whbsigma_B}$, compute their similarity score as 
in~\cite{mao2022chandelier}:
\begin{align}
    \Phi_{ij}^{\whbsigma} =  \langle W_i^{\mathcal{T}}(\overline{A}^{\whbsigma_A}), W_j^{\mathcal{T}}(\overline{B}^{\whbsigma_B}) \rangle
    = \sum_{H \in \mathcal{T}} \aut(H) W_{i,H}(\overline{A}^{\whbsigma_A}) W_{j,H}(\overline{B}^{\whbsigma_B}).
\end{align}
\STATE Let $\tau=c\mu$, output $I:=\{i| i\in [n], \exists j\in [n], \text{s.t. } \Phi_{ij}^{\mathcal{T}} \geq \tau, \text{ and } \forall k\in [n]\setminus \{j\}, \Phi_{ik}^{\mathcal{T}} < \tau \}$.
\end{algorithmic}
\label{alg:chandelier_scoring}
\end{algorithm}

\textbf{Proof challenge.} In regime $sD_{+}(a,b) > 1$, the probability of existing one vertex being classified incorrectly is vanishing. 
Therefore, with high probability, $\overline{A}^{\whbsigma_A}=\overline{A}$. 
If $sD_+(a,b)<1$, then the recovered $\whbsigma$ contains errors (polynomial in $n$), which will cause some edges being centralized incorrectly and thereby affect the moments calculation.
For example, let $i,j\in [n]$ be two vertices on $G_1$ who has the same community label $\bsigma_{*}(i)=\bsigma_{*}(j)$. If only one of $i,j$ is labeled incorrectly by Algorithm~\ref{alg:community_recovery_almost_exact}, then the expectation of $\overline{A}^{\whbsigma_A}_{i,j}$ conditioned on $\bsigma_{*}$ and $\whbsigma$ is $p-q$. 

This phenomenon poses a challenge to the algorithm analysis. We highlight some key points in the context of second moment calculation. 
\begin{align*}
    \nonumber \Var[\Phi_{ij}^{\whbsigma}]
    &=\sum_{H,I\in \mathcal{T}} \aut(H)\aut(I) \sum_{S_1(i),S_2(j)\cong H} \sum_{T_1(i),T_2(j)\cong I} 
    \bigg( \EE_{\bsigma_{*}}\left[ \EE[\overline{A}_{S_1}^{\whbsigma_A} \overline{B}_{S_2}^{\whbsigma_B} \overline{A}_{T_1}^{\whbsigma_A} \overline{B}_{T_2}^{\whbsigma_B} \condsig] \right] \\
    &\quad -\EE_{\bsigma_{*}}\left[ \EE[\overline{A}_{S_1}^{\whbsigma_A} \overline{B}_{S_2}^{\whbsigma_B} \condsig ]\right]  \EE_{\bsigma_{*}}\left[ \EE[\overline{A}_{T_1}^{\whbsigma_A} \overline{B}_{T_2}^{\whbsigma_B} \condsig ] \right] \bigg).
\end{align*}
Let us define the union graph $U:=S_1\cup S_2\cup T_1\cup T_1$\footnote{Note that $S_1$ and $T_1$ are rooted at $i$, while $S_2$ and $T_2$ are rooted at $j$. 
For $e=(u,v) \in E(S_1)$, we say it occurs in $T_1$ if $e \in E(T_1)$ and it occurs in $S_2$ (resp. $T_2$) if $(\bsigma_{*}(u), \bsigma_{*}(v)) \in E(S_2)$ (resp. $T_2$).}.
If $sD_+(a,b)>1$, we view $\overline{A}^{\whbsigma}$ and $\overline{B}^{\whbsigma}$ as $\overline{A}$ and $\overline{B}$, respectively.
$\EE[\overline{A}_{S_1}\overline{B}_{S_2}\overline{A}_{T_1}\overline{A}_{T_2} \condsig] \neq 0$ only if there exists no edge $e\in E(U)$ such that it occurs only once among $(S_1,S_2,T_1,T_2)$. 
This is because different edges are independent and centralized conditioned on $\bsigma_{*}$.
Without loss of generality, we assume there exists an edge $e \in E(A)$ occur only on $S_1$,
thus $\EE[\overline{A}_{e} \condsig] = 0$ and also $\EE[\overline{A}_{S_1}\overline{B}_{S_2}\overline{A}_{T_1}\overline{A}_{T_2} \condsig] = 0$.

However, in regime $sD_{+}(a,b) < 1$, $\EE [\overline{A}_{S_1}^{\whbsigma_A} \overline{B}_{S_2}^{\whbsigma_B} \overline{A}_{T_1}^{\whbsigma_A} \overline{B}_{T_2}^{\whbsigma_B} \condsig] \neq 0$ even when edges are not occurring multiple times as we have the expectation of $\overline{A}^{\whbsigma_A}_e$ can be non-zero when conditioning on $\bsigma_{*}$ and an estimate $\whbsigma_A$ that disagrees with $\bsigma_{*}$ on the edge type (i.e., for $e=(u,v), \bsigma_{*}(u)\bsigma_{*}(v) \neq \whbsigma_A(u) \whbsigma_A(v)$).
This not only causes this cross-moments calculation being more complicated, but significantly increasing the possibility of the combinations between $S_1,S_2,T_1$,and $T_2$. The maximum vertex in the union graph grows from $2N$ to $4N$, squaring up the trivial bound on the number of subgraphs on the complete graph $K_n$ that is isomorphic to $U$.

The most important high-level idea to properly bound the moments is: the cross-moment conditioning on a specific $\whbsigma=(\whbsigma_A, \whbsigma_B)$ would be non-trivial if and only if \textit{all edges occurring only once are centralized incorrectly}. This is because, conditioning on $\whbsigma$ satisfying the above property, the expectation of $\overline{A}_{e}$ takes either $p-q$ or $q-p$ for all $e\in E(U)$ that occurs only once.
Assuming there are $z$ edges occurring once, we show that the probability that $\whbsigma$ satisfying this property is no greater than $n^{-\frac{z(sD_+(a,b)-\eps\log(a/b)/2)}{D}}$ for any $\eps > 0$.
Intuitively, this is saying that to incorrectly centralize $z$ edges at the same time, we expect the Algorithm~\ref{alg:community_recovery_almost_exact} to label at least $\ceil{\frac{z}{D}}$ vertices incorrectly.
It turns out that we require $n^{-\frac{z(sD_+(a,b)-\eps\log(a/b)/2)}{D}}$ to be $o(\frac{1}{\log^C n})$ for some positive constant $C$ so that~\eqref{eq:moments_goal_true_pair} and~\eqref{eq:moments_goal_fake_pair} are satisfied.

\begin{theorem}
    Fix $a\neq b > 0$ and $s\in [0,1]$. Let $p=a\frac{\log n}{n},q=b\frac{\log n}{n}$ and $(G_1,G_2)\sim \CSBM(n,p,q,s)$. 
    For any $\eps > 0$,
    suppose 
    $s^{2} \geq \alpha + \eps.$
    There exists positive constants $C_1,C_2,C_3,C_4$, $C_5 > 0$ such that the following holds.
    Pick $K,M,L,N,D$ as
    \begin{equation}\label{eq:LKMRD_simplifed_assumption}
        L=\frac{C_1}{\eps}, \quad K = C_2 \log n, 
        \quad M = \frac{C_3 K}{\log (ns(p\wedge q))}, \quad R = \exp(C_4 K),
        \quad D = C_5 \frac{\log n}{(\log \log n)^2}.
    \end{equation}
    Pick an arbitrary $c \in (0,1)$ and set $\mu=|\mathcal{T}|n^{N}\rho^{N}\sigma_{\eff}^{2N}$, where $\sigma_{\eff}^2:=(\frac{\sigma_{+}^2+\sigma_{-}^2}{2})$. Then, Algorithm \ref{alg:chandelier_scoring} outputs a set $I$ with size $(1-o(1))n$ and a mapping $\wh \pi$ such that $\wh \pi|_I = \wh \pi_{*}|_I$ with high probability.
    \label{theorem:almost_exact_matching}
\end{theorem}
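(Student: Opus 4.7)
The plan is to follow the standard \emph{mean plus variance plus Chebyshev plus union bound} template that has become canonical since the subgraph-counting works of Mao, Wu, Xu, and Yu. Concretely, I will show that the similarity score $\Phi_{ij}^{\whbsigma}$ concentrates around $\mu = |\mathcal{T}|\, n^{N}\rho^{N}\sigma_{\eff}^{2N}$ for true pairs $j=\pi_{*}(i)$ and around a value that is $o(\mu)$ for fake pairs, so that thresholding at $\tau = c\mu$ separates them. The proof conditions on a \emph{nice event} $\mathcal{H}$ (the same event used later in Section~\ref{sec:preliminaries}) that bundles together: (i) the usual degree-regularity of an SBM in the logarithmic regime, (ii) near balance of the two communities, and (iii) the output of Algorithm~\ref{alg:community_recovery_almost_exact} being almost exactly correct on both $G_1$ and $G_2$ with inverse-polynomial per-vertex error and approximately independent error indicators across vertices (the last part is recorded as the joint-distribution lemma cited by the paper). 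Standard concentration gives $\PP(\mathcal{H}) = 1 - o(1)$, so all subsequent bounds may be taken conditional on $\mathcal{H}$.

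The core technical content is the first- and second-moment control of $\Phi_{ij}^{\whbsigma}\indH$, which I will split along the dichotomy $sD_{+}(a,b) \geq 1$ versus $sD_{+}(a,b) < 1$ (regimes~I and~II in the paper). In regime~I, with overwhelming probability $\whbsigma_A = \bsigma_{*}$ and $\whbsigma_B = \bsigma_{*} \circ \pi_{*}^{-1}$, so $\overline{A}^{\whbsigma_A} = \overline{A}$ and $\overline{B}^{\whbsigma_B} = \overline{B}$; the calculation reduces to chandelier cross-moments in a genuinely centered SBM, and I will appeal to Propositions~\ref{prop:mean_regimeI}, \ref{prop:var_true_regimeI}, and~\ref{prop:var_fake_regimeI} to get $\EE[\Phi_{i\pi_{*}(i)}^{\whbsigma}\indH] = (1-o(1))\mu$ together with the variance bounds~\eqref{eq:moments_goal_true_pair} and~\eqref{eq:moments_goal_fake_pair}. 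In regime~II the community estimate has $n^{1-sD_{+}(a,b)+o(1)}$ errors, so the entries of $\overline{A}^{\whbsigma_A}$ on edges incident to a misclassified vertex are biased by $\pm(p-q)$ rather than mean zero; here I apply Propositions~\ref{prop:mean_regimeII}, \ref{prop:var_true_regimeII}, and~\ref{prop:var_fake_regimeII} after the same expansion. The mean is handled by showing the biased-edge contribution is lower order than the clean contribution; the variance is expanded over quadruples $(S_1,S_2,T_1,T_2)$ of chandelier copies and grouped by the union graph $U$.

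The main obstacle, precisely as flagged in the paper's ``proof challenge'' paragraph, is controlling the variance in regime~II. The standard trick that an edge appearing only once in $U$ forces a zero cross-moment no longer works: a non-zero contribution now only needs every once-appearing edge to be \emph{centered incorrectly} under $\whbsigma$. For a configuration with $z$ such edges the contribution is suppressed by the probability that Algorithm~\ref{alg:community_recovery_almost_exact} misclassifies the endpoint sets of all of them simultaneously. Using the per-vertex error $n^{-sD_{+}(a,b)}$ together with the approximate independence across vertices, and using the degree cap $D$ to force at least $\lceil z/D\rceil$ vertices to be misclassified, this probability is $n^{-z(sD_{+}(a,b)-\eps\log(a/b)/2)/D}$. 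The choice $D = C_{5}\log n /(\log\log n)^{2}$ is then exactly tight enough to drive this suppression down to $o(1/\log^{C} n)$ while keeping the chandelier family large ($|\mathcal{J}(K,R,D)|=(\alpha+o(1))^{-K}$ by Lemmas~\ref{lemma:unlabeled_rooted_deg} and~\ref{lemma:unlabeled_rooted_aut}), which is precisely why the statement is tied to $s^{2} \geq \alpha + \eps$. Summing over union graphs $U$ and chandelier labellings and balancing this suppression against the number of configurations yields~\eqref{eq:moments_goal_true_pair} and~\eqref{eq:moments_goal_fake_pair}.

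With the moment bounds in hand, the rest is short. By Chebyshev, for any true pair $\PP(\Phi_{i\pi_{*}(i)}^{\whbsigma} \leq c\mu \mid \mathcal{H}) = o(1)$, so by Markov the number of roots $i$ whose true-pair score fails to cross $\tau$ is $o(n)$ w.h.p. For fake pairs, Chebyshev and~\eqref{eq:moments_goal_fake_pair} give $\PP(\Phi_{ij}^{\whbsigma} \geq c\mu \mid \mathcal{H}) = o(1/n^{2})$, and a union bound over the $n(n-1)$ ordered fake pairs shows that w.h.p.\ no fake pair ever crosses $\tau$. Taking $I$ to be the set of $i$ for which exactly one $j$ satisfies $\Phi_{ij}^{\whbsigma} \geq \tau$, we therefore get $|I| = (1-o(1))n$ with $\wh\pi|_{I} = \pi_{*}|_{I}$, completing the proof.
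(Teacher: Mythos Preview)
Your proposal is correct and follows essentially the same approach as the paper: invoke Propositions~\ref{prop:mean_regimeI}--\ref{prop:var_fake_regimeI} in regime~I and Propositions~\ref{prop:mean_regimeII}--\ref{prop:var_fake_regimeII} in regime~II to establish~\eqref{eq:moments_goal_true_pair} and~\eqref{eq:moments_goal_fake_pair}, then apply Chebyshev plus a union bound to rule out all fake pairs and Chebyshev plus Markov to show the true-pair failure set has size $o(n)$. One cosmetic discrepancy: in the paper $\mathcal{H}$ is \emph{only} the balanced-communities event $\{|V^{\pm}| \in [n/2 - n^{3/4}, n/2 + n^{3/4}]\}$, while degree regularity ($\mathcal{F}$) and the community-recovery error control are absorbed directly into the moment propositions rather than bundled into a conditioning event---but this does not affect the argument.
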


Algorithm~\ref{alg:chandelier_scoring} takes quasi-polynomial time. Algorithm~\ref{alg:efficient_chandelier_scoring} in Section~\ref{sec:theorem4} computes an approximated score in polynomial time and satisfies the following Theorem.

\begin{theorem}
    Theorem \ref{theorem:almost_exact_matching} continues to hold with $\wt{\Phi}_{ij}$ in place of $\Phi_{ij}$. Moreover, $\{\wt{\Phi}_{ij}\}_{i,j\in [n]}$ can be computed in $O(n^C)$ for some constant $C \equiv C(\eps)$ depending only on $\eps$, where $\eps$ is from~\eqref{eq:LKMRD_simplifed_assumption}.
    \label{theorem:efficient_almost_exact_matching}
\end{theorem}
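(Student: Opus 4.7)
The plan is to construct $\wt\Phi_{ij}^{\whbsigma}$ as a color-coded proxy for $\Phi_{ij}^{\whbsigma}$, following the approach of Mao-Wu-Xu-Yu~\cite{mao2022chandelier} that adapts the Alon-Yuster-Zwick technique to subgraph counting. All chandeliers in $\mathcal T$ have the same vertex count, call it $N+1$. Draw a coloring $c:[n]\to[N+1]$ uniformly at random, independent of the graphs, and call a subgraph $S\subseteq K_n$ \emph{colorful} if $c|_{V(S)}$ is injective. Define
\[
\wt W_{i,H}(M) := \frac{(N+1)^{N+1}}{(N+1)!}\sum_{\substack{S(i)\cong H\\ S\ \text{colorful}}}M_S,\quad
\wt\Phi_{ij}^{\whbsigma} := \sum_{H\in\mathcal T}\aut(H)\,\wt W_{i,H}(\overline A^{\whbsigma_A})\,\wt W_{j,H}(\overline B^{\whbsigma_B}).
\]
Since any fixed embedding of $H$ on $N+1$ vertices is colorful with probability exactly $(N+1)!/(N+1)^{N+1}$, the normalization makes $\wt W_{i,H}$ an unbiased estimator of $W_{i,H}$, and therefore $\EE[\wt\Phi_{ij}^{\whbsigma}\indH]=\EE[\Phi_{ij}^{\whbsigma}\indH]$, so the mean parts of~\eqref{eq:moments_goal_true_pair} and~\eqref{eq:moments_goal_fake_pair} transfer for free.

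The main obstacle will be controlling the variance of $\wt\Phi_{ij}^{\whbsigma}\indH$. Expanding the square and averaging over $c$, each quadruple $(S_1,S_2,T_1,T_2)$ that appears in the non-colorful variance computation of Section~\ref{sec:theorem3} now carries an additional combinatorial factor coming from the probability that both unions $S_1\cup T_1$ and $S_2\cup T_2$ are colorful. A direct counting argument shows that this inflation is bounded by $e^{O(N)}$ uniformly in the quadruple, regardless of how many vertices $S_1$ and $T_1$ (resp.\ $S_2$ and $T_2$) share. Under~\eqref{eq:LKMRD_simplifed_assumption} we have $N = L(M+K) = O(\log n)$, so $e^{O(N)} = n^{O(1)}$, and by choosing the constants $C_1,\ldots,C_5$ a touch larger than in Theorem~\ref{theorem:almost_exact_matching} this inflation is absorbed into the polynomial slack that was already present in~\eqref{eq:moments_goal_true_pair} and~\eqref{eq:moments_goal_fake_pair}. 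Thus the Chebyshev-plus-threshold argument of Theorem~\ref{theorem:almost_exact_matching} still outputs a $\wh\pi$ satisfying $\wh\pi|_I = \pi_{*}|_I$ with $|I|=(1-o(1))n$ with high probability.

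For the running time, the point of color coding is that the $L$ branches of a chandelier, together with its root, partition its vertex set, so in any colorful embedding the $L$ branches use pairwise disjoint color subsets. This lets me replace enumeration over the super-polynomial family $\mathcal T$ with dynamic programming indexed by color subsets. Concretely, for each color set $C\subseteq[N+1]$ of size $M+K+1$ and each pair $(u,v)\in[n]^2$, I would compute the sum over colorful (wire$+$bulb) branch-pairs rooted at $(u,v)$ using exactly the colors $C$, building it up edge-by-edge along the $M$-wire and then subtree-by-subtree in the $K$-bulb via a standard subset DP; I then combine these branch sums into the score at each $(i,j)$ through an elementary-symmetric-polynomial over $L$-tuples of disjoint color subsets, implemented as another subset DP. The number of DP cells is $n^2\cdot 2^{N+1}\cdot \poly(N)$ and each cell costs $\poly(n,N)$; since $2^{N+1} = n^{O(1)}$ under~\eqref{eq:LKMRD_simplifed_assumption}, the overall runtime is $n^{C(\eps)}$ for some constant $C(\eps)$ depending only on $\eps$, as claimed.
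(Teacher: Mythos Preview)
Your proposal has a genuine gap in the variance step. With a \emph{single} random coloring, the variance of $\wt\Phi_{ij}^{\whbsigma}$ is not just inflated by a harmless $e^{O(N)}$ factor that can be absorbed into the constants---it is inflated beyond repair. Concretely, consider the contribution from quadruples with $H=I$, $S_1=T_1$, $S_2=T_2$ (these are present in the second moment but not in the mean). The color factor on such a term is
\[
\frac{1}{r^4}\,\EE\bigl[\chi(S_1)^2\bigr]\,\EE\bigl[\chi(S_2)^2\bigr]=\frac{r^2}{r^4}=\frac{1}{r^2}\approx e^{2N},
\]
so this block alone contributes $\gtrsim \mu^2/(r^2\,|\mathcal T|\,\rho^{2N})$ to $\Var[\wt\Phi_{ij}^{\whbsigma}\indH]$. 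For the fake-pair bound~\eqref{eq:moments_goal_fake_pair} you would need $r^2|\mathcal T|\rho^{2N}\gg n^2$, but
\[
r^2|\mathcal T|\rho^{2N}\ \asymp\ e^{-2(K+M)L}\,(\rho^2/\alpha)^{KL}\,\rho^{2ML}/L!\ =\ \bigl(\rho^2/(\alpha e^2)\bigr)^{KL}\,(\rho^2/e^2)^{ML}/L!,
\]
and since $\rho^2<1<e^2\alpha$ always, the base $\rho^2/(\alpha e^2)<1$ and this quantity tends to $0$, not to infinity. Increasing $C_1,\dots,C_5$ only makes $N$ larger and the situation worse; there is no ``polynomial slack'' to absorb $e^{2N}$. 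The same blow-up already kills the true-pair bound~\eqref{eq:moments_goal_true_pair}.

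The paper fixes this not by tuning constants but by \emph{averaging} over $t=\lceil 1/r\rceil\approx e^{N}$ independent colorings (separately for $A$ and for $B$). Lemma~\ref{lemma:color_coding_var} then shows $\Var[(\wt\Phi_{ij}-\Phi_{ij})\indH]\le 3\Gamma_{ij}$, i.e.\ the variance is inflated only by a constant factor, because the cross terms between different colorings have covariance factor $r^{2+\mathbf 1_{c\ne e}+\mathbf 1_{d\ne f}}-r^4$ and the $1/t$ averaging exactly cancels the $1/r$ blow-up (Lemma~\ref{lemma:color_coding_cov_auxiliary}). Since $t=e^{O(N)}=n^{O(1)}$, the total runtime remains $n^{C(\eps)}$. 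Your DP sketch for evaluating a single-coloring score is in the right spirit, but you should also note that handling the $\aut(H)$ weights and the non-isomorphic-bulb constraint across branches is not a plain elementary-symmetric-polynomial computation; the paper defers this to the counting machinery of~\cite{mao2021counting}.
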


\textbf{Exact graph matching.} The final step of the algorithm is boosting the almost exact matching to a exact matching. The key idea is exploring the number of common neighbors for two unmatched vertices with regard to the current matching. 

Denote $\NN_{\pi}(i,j)$ as the number of common neighbors of $i$ and $j$ under correspondence $\pi$. In another word, $\NN_{\pi}(i,j)$ is the number of vertex $v \in I$ such that $v$ is a neighbor of $i$ in $A$ and $\pi(v)$ is a neighbor of $j$ in $B$. 
The high-level idea is that if $i$ and $j$ form a true correspondence, then for a correct partial matching $\wt{\pi}$ on $(1-o(1))n$ vertices, with high probability, $\NN_{\wt{\pi}}(i,j) \gtrsim \frac{p^2+q^2}{2}s^2(n + 2n^{\frac{3}{4}})$ under the nice event $\calH$. Therefore, we match up $i$ and $j$ if they have more common neighbors than this threshold. In addition, we can show that all the remaining vertices will be matched up with high probability.
Formally, define $h(x)=x\log x - x + 1$, we summarize the algorithm as Algorithm~\ref{alg:seeded_graph_matching} and the guarantee as Theorem~\ref{theorem:seeded_graph_matching}.

\begin{theorem}
    Fix $a\neq b > 0$ and $s\in [0,1]$. Let $p=a\frac{\log n}{n},q=b\frac{\log n}{n}$ and $(G_1,G_2)\sim \CSBM(n,p,q,s)$. Suppose 
    \begin{equation*}
        s^2 (\frac{a+b}{2}) \geq 1+\eps
        \quad \text{and} \quad
        s^2 \geq \alpha + \eps,
    \end{equation*}
    for some $\eps > 0$.
    Let $\gamma$ be the unique solution in $(1,\infty)$ to $h(\gamma)=\frac{3\log n}{(n-2)pqs^2}$.
    Then, the seeded matching Algorithm \ref{alg:seeded_graph_matching} with input $\wh \pi$ and an index set $I \subset [n], |I|=(1-o(1))n \geq (1-\eps/16)n$ such that $\wh \pi|_I=\pi_{*}$ outputs an exact matching $\wt{\pi}=\pi_{*}$ in $O(n^3 (p+q)^2)$ time with probability $1-o(1)$.
    \label{theorem:seeded_graph_matching}
\end{theorem}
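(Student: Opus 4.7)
I would analyze the natural seeded algorithm that, given the correct partial matching $\wh\pi|_I=\pi_*|_I$, computes the common-neighbor statistic
\[
\NN_{\wh\pi}(i,j) := \sum_{v \in I} A_{i,v}\, B_{\wh\pi(v),j} = \sum_{v \in I} A_{i,v}\, B_{\pi_*(v),j}
\]
for every unmatched pair $(i,j) \in ([n]\setminus I) \times ([n]\setminus \wh\pi(I))$, and declares $\wt\pi(i)=j$ when $\NN_{\wh\pi}(i,j)$ exceeds a threshold $\tau$ calibrated through $\gamma$. The critical structural fact is that, conditional on $\bsigma_*$, the summands above are independent across $v \in I$: each one lives on a pair of parent-graph edges $(i,v)$ and $(v,\pi_*^{-1}(j))$, and different choices of $v \in I$ yield disjoint pairs of parent edges (using $i,\pi_*^{-1}(j) \notin I$). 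Hence scalar multiplicative Chernoff bounds apply directly to each $\NN_{\wh\pi}(i,j)$.

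Next I would compute the two means. For a true pair $j = \pi_*(i)$, the variables $A_{i,v}$ and $B_{\pi_*(v),\pi_*(i)}$ come from independently subsampling the same parent edge, so their joint success probability is $s^2 p$ or $s^2 q$ according to whether $\bsigma_*(i) = \bsigma_*(v)$. With $|I| \geq (1-\eps/16)n$ and balanced communities inside $I$ (a high-probability event), this gives
\[
\mu_T := \mathbb{E}\bigl[\NN_{\wh\pi}(i,\pi_*(i)) \mid \bsigma_*\bigr] \geq (1-o(1))\, n \cdot \frac{(p+q) s^2}{2} \geq \bigl(1+\tfrac{\eps}{2}\bigr)\log n,
\]
using the assumption $s^2(a+b)/2 \geq 1+\eps$. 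For a false pair $j \neq \pi_*(i)$, the two Bernoullis involve disjoint parent edges and are therefore independent, giving a mean $\mu_F$ that is at most $(1+o(1))(n-2)(p^2+q^2)s^2/2$ when $\bsigma_*(i)=\bsigma_*(\pi_*^{-1}(j))$ and at most $(n-2)pq s^2$ otherwise; in both cases $\mu_F = O(\log^2 n / n)$. The lower Chernoff tail then yields $\Pr(\NN_{\wh\pi}(i,\pi_*(i)) < (1-\eps/4)\mu_T) \leq \exp(-\mu_T\, h(1-\eps/4)) = n^{-\Omega(1)}$, and the upper Chernoff tail yields
\[
\Pr\bigl(\NN_{\wh\pi}(i,j) \geq \gamma\, \mu_F^{(\mathrm{dc})}\bigr) \leq \exp\bigl(-\mu_F^{(\mathrm{dc})}\, h(\gamma)\bigr) = \exp(-3\log n) = n^{-3},
\]
where $\mu_F^{(\mathrm{dc})} := (n-2)pq s^2$ and the second equality is the defining equation for $\gamma$. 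A short computation gives $\gamma = O(n/\log^2 n)$ and hence $\tau := \gamma\, \mu_F^{(\mathrm{dc})} = O(1) \ll (1-\eps/4)\mu_T$; re-applying Chernoff at the same $\tau$ with the larger $\mu_F^{(\mathrm{sc})} = (n-2)(p^2+q^2)s^2/2$ again yields $\exp(-\mu_F^{(\mathrm{sc})} h(\tau/\mu_F^{(\mathrm{sc})})) = n^{-3+o(1)}$, so the single threshold $\tau$ controls both community configurations.

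Finally, a union bound over the at most $n^2$ ordered unmatched pairs shows that with probability $1-o(1)$ every true pair exceeds $\tau$ and every false pair lies below $\tau$, so the threshold rule reconstructs $\pi_*$ on $[n]\setminus I$ and $\wt\pi = \pi_*$. The running-time claim comes from the standard sparse implementation: iterate over $v \in I$ and, for every edge $\{i,v\}$ of $G_1$ with $i \notin I$ and every edge $\{\pi_*(v),j\}$ of $G_2$ with $j \notin \wh\pi(I)$, increment $\NN_{\wh\pi}(i,j)$; the total work is $\sum_{v \in I} \deg_A(v)\deg_B(\pi_*(v)) = O(|I|\cdot (n(p+q))^2) = O(n^3(p+q)^2)$ with high probability by degree concentration. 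The main obstacle I anticipate is matching the $\mu_F$ used to define $\gamma$ with the worst-case false-pair configuration, so that one threshold and one value of $\gamma$ simultaneously dominate the same- and different-community false pairs; verifying this requires that the pessimistic $(p^2+q^2)/2$ case still yields $n^{-3}$ at threshold $\gamma\, pq s^2 n$, which is exactly what the $pq$ (rather than $(p+q)^2/4$) in the defining equation for $\gamma$ was calibrated to buy.
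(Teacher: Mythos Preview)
Your fake-pair analysis is essentially the paper's: both bound $\NN_{\pi_*}(u,v) \geq \NN_{\wt\pi}(u,v)$ via multiplicative Chernoff, get $n^{-3}$ per pair, and union-bound. The true-pair side, however, has a real gap.

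You apply a Chernoff lower tail to $\NN_{\wh\pi}(i,\pi_*(i)) = \sum_{v \in I} A_{i,v} B_{\pi_*(v),\pi_*(i)}$, treating $I$ as a fixed index set. But the theorem is stated for \emph{any} input $I$ with $|I|\geq(1-\eps/16)n$ and $\wh\pi|_I=\pi_*|_I$; in particular $I$ may depend on $(G_1,G_2)$. Conditioning on such an $I$ destroys the independence you need, and in fact an adversarial $I$ can place every intersection-graph neighbor of a given vertex $i$ into $I^c$ (since $\deg_{G_1\wedge_{\pi_*}G_2}(i)=O(\log n)\ll \eps n/16$), forcing $\NN_{\wh\pi}(i,\pi_*(i))=0$. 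So your one-shot threshold rule provably fails on some admissible inputs, and no per-vertex lower-tail bound can rescue it. (A secondary, fixable issue: even for fixed $I$, your stated bound $\exp(-\mu_T h(1-\eps/4))=n^{-\Theta(\eps^2)}$ does not survive a union bound over up to $\eps n/16$ unmatched true pairs.)

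The paper's proof avoids this by exploiting that Algorithm~\ref{alg:seeded_graph_matching} is \emph{iterative}: the seed set $J$ grows, and $\NN_{\wt\pi}$ is computed against the current $J$. The completeness step is not a per-vertex Chernoff bound but a global edge-count argument. Lemma~6.1 shows that in $G_1\wedge_{\pi_*}G_2\sim\SBM(n,s^2p,s^2q)$, with high probability \emph{uniformly over every} $J$ with $|J^c|\leq \eps n/16$, one has $e_{G_1\wedge_{\pi_*}G_2}(J,J^c)\geq \eta\,|J|\,|J^c|\,s^2(p\wedge q)$. If the algorithm ever halts with $J\neq[n]$, then every remaining $i\in J^c$ has $\NN_{\wt\pi}(i,\pi_*(i))$ below the threshold, whence $e_{G_1\wedge_{\pi_*}G_2}(J,J^c)<|J^c|\cdot\Theta(1)$; this contradicts the lemma, which forces $e\geq |J^c|\cdot\Theta(\log n)$. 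The uniformity over $J$ is precisely what absorbs both the arbitrary initial seed $I$ and the adaptive growth of $J$, and is the missing ingredient in your approach.
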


\begin{algorithm}[ht!]
\caption{Seeded Graph Matching \cite{mao2022chandelier}}
\label{alg:GM_exact_EFF_CSBM}
\begin{flushleft}
    \textbf{Input:} {Adjacency matrices $A$ and $B$ for $(G_1,G_2)\sim \CSBM(n,p,q,s),p=a\frac{\log n}{n},q=b\frac{\log n}{n}$ for some $a>0,b>0$. A mapping $\wh \pi:I \rightarrow [n]$ with $|I|=(1-o(1))n$, parameters $p,q,s$, and $\gamma \in (1,\infty)$ such that $h(\gamma)=\frac{3\log n}{(n-2)pqs^2}$.}
\end{flushleft}

\begin{algorithmic}[1]
    \STATE Let $J=I$, and $\wt{\pi}=\wh{\pi}$.
    \WHILE{there exists $i \notin J$ and $j \notin \wt{\pi}(J)$ such that $\NN_{\wt{\pi}}(i,j) \geq \gamma \frac{p^2+q^2}{2}s^2(n + 2n^{\frac{3}{4}})$}
        \STATE Add $i$ to $J$ and let $\wt{\pi}(i)=j$.
    \ENDWHILE
\end{algorithmic}

\begin{flushleft}
    \textbf{Output:} {$\wt{\pi}$.} 
\end{flushleft}
\label{alg:seeded_graph_matching}
\end{algorithm}

\subsection{Putting things together: proof of Theorem~\ref{theorem:graph_matching}}

The proof of Theorem~\ref{theorem:graph_matching} follows from Theorem~\ref{theorem:efficient_almost_exact_matching} and Theorem~\ref{theorem:seeded_graph_matching}.

\begin{proof}[Proof of Theorem~\ref{theorem:graph_matching}]
    From Theorem~\ref{theorem:efficient_almost_exact_matching}, we know that for $(G_1,G_2)\sim \CSBM(n, a\frac{\log n}{n}, b\frac{\log n}{n}, s)$, $a\neq b$, if $s^2 \geq \alpha+\eps$ for some $\eps > 0$ then there we can match $(1-o(1))n$ vertices correctly and efficiently with high probability.

    Take the returned $\wh \pi$ from Algorithm~\ref{alg:efficient_chandelier_scoring} as input of Algorithm~\ref{alg:seeded_graph_matching}, then Theorem~\ref{theorem:seeded_graph_matching} guarantees that the final output $\wt \pi=\pi_{*}$ with probability $1-o(1)$. This completes the proof of Theorem~\ref{theorem:graph_matching}.
\end{proof}

\section{Preliminaries}
\label{sec:preliminaries}

\subsection{Tail bounds}

\begin{lemma} [Chernoff Bound, Theorem 2.1 of~\cite{janson2011random}] \label{pre:chernoff_bound}
    Let $X\sim \Bin(n,p)$ be a binomial random variable. Then, for all $t\geq 0$,
    \begin{align*}
        \PP(X\geq \EE X + t) &\leq \exp\left(-\frac{t^2}{2(\EE X+t/3)}\right),\\
        \PP(X\leq \EE X - t) &\leq \exp\left(-\frac{t^2}{2\EE X}\right).
    \end{align*}
\end{lemma}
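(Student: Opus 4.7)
The plan is to deploy the classical Chernoff (exponential Markov) argument, tuned to give the Bennett/Bernstein shape of the upper bound. For the upper tail, I would fix $\lambda > 0$ and start from $\PP(X \geq \EE X + t) \leq e^{-\lambda(\EE X + t)} \EE[e^{\lambda X}]$. Since $X \sim \Bin(n,p)$ is a sum of $n$ i.i.d.\ Bernoulli$(p)$'s, the moment generating function factorises as $\EE[e^{\lambda X}] = (1 - p + p e^{\lambda})^n$, and the elementary inequality $\log(1 + p(e^{\lambda} - 1)) \leq p(e^{\lambda} - 1)$ produces the centred bound $\EE[e^{\lambda(X - \EE X)}] \leq \exp\bigl(\EE X \cdot \psi(\lambda)\bigr)$, where $\psi(\lambda) := e^{\lambda} - 1 - \lambda$.

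Next I would extract the Bernstein shape $\exp\bigl(-t^2/(2(\EE X + t/3))\bigr)$ by controlling the cumulant $\psi$ via the Taylor estimate $\psi(\lambda) \leq \tfrac{\lambda^2/2}{1 - \lambda/3}$ for $\lambda \in [0, 3)$. This follows from $\psi(\lambda) = \sum_{k \geq 2} \lambda^k/k!$ together with the bound $k! \geq 2 \cdot 3^{k-2}$ for $k \geq 2$ and summing a geometric series. Substituting gives $\PP(X \geq \EE X + t) \leq \exp\bigl(-\lambda t + \EE X \cdot \tfrac{\lambda^2/2}{1 - \lambda/3}\bigr)$, and the choice $\lambda = t/(\EE X + t/3)$ (engineered so that $\lambda/(1 - \lambda/3) = t/\EE X$) makes the exponent collapse to $-t^2/\bigl(2(\EE X + t/3)\bigr)$ after routine algebra.

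For the lower tail, the same recipe with $\lambda < 0$ yields the cleaner form $\exp(-t^2/(2\EE X))$ without any $\lambda/3$ correction. On $(-\infty, 0]$ the stronger estimate $\psi(\lambda) \leq \lambda^2/2$ holds, essentially because $e^{\lambda} \leq 1 + \lambda + \lambda^2/2$ for $\lambda \leq 0$ (the higher-order terms alternate and decrease in magnitude once $|\lambda| \leq 1$, and monotonicity handles the rest). Optimising at $\lambda = -t/\EE X$ then produces the stated exponent directly.

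No serious obstacle is expected; the only genuinely non-routine ingredient is the Taylor/geometric bound on $\psi$, and the rest is one-parameter optimisation. Since this inequality is cited verbatim from Janson, {\L}uczak, and Ruci\'nski, the cleanest write-up is to state it and reference their Theorem~2.1, but the argument above is the standard self-contained derivation that matches their statement exactly.
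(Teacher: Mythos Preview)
Your derivation is correct, and you have already identified the key point: the paper does not prove this lemma at all---it simply states it with a citation to Theorem~2.1 of Janson, {\L}uczak, and Ruci\'nski. Your self-contained Bennett/Bernstein argument via the cumulant bound $\psi(\lambda) \leq (\lambda^2/2)/(1-\lambda/3)$ is the standard route and matches the cited statement, so there is nothing to compare against in the paper itself.
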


\begin{lemma} [Multiplicative Chernoff Bound, Theorem~4.4 and Theorem~4.5 of~\cite{EU2005}] \label{pre:chernoff_bound_multiplicative}
    Let $X\sim \Bin(n,p)$ be a binomial random variable, denote $\mu=np$ as the mean. Let $h(x)=x\log x-x+1$. Then, for all $\gamma \in (1,\infty)$,
    \begin{equation*}
        \PP(X\geq \gamma \mu) \leq \exp(-\mu h(\gamma)).
    \end{equation*}
    For $\gamma \in (0,1)$,
    \begin{equation*}
        \PP(X\leq \gamma \mu) \leq \exp(-\mu h(\gamma)).
    \end{equation*}
\end{lemma}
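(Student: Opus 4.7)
The plan is to apply the classical Chernoff (exponential Markov) method, which is essentially textbook for this bound. First I would write $X = \sum_{i=1}^{n} X_{i}$ with $X_{i}$ i.i.d.\ $\Bernoulli(p)$ and use independence together with $1+x \leq e^{x}$ to bound the moment generating function as $\EE[e^{\lambda X}] = (1-p+p e^{\lambda})^{n} \leq \exp(\mu(e^{\lambda}-1))$ for every real $\lambda$.

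For the upper tail with $\gamma \in (1,\infty)$, I would apply Markov's inequality to $e^{\lambda X}$ with $\lambda > 0$, obtaining
\begin{equation*}
\PP(X \geq \gamma \mu) \leq \exp\bigl(\mu(e^{\lambda}-1) - \lambda \gamma \mu\bigr),
\end{equation*}
and then optimize the right-hand side over $\lambda > 0$. Differentiating the exponent in $\lambda$ and setting the derivative to zero gives $e^{\lambda^{*}} = \gamma$, i.e.\ $\lambda^{*} = \log \gamma$, which is strictly positive precisely because $\gamma > 1$. Substituting yields exponent $\mu(\gamma-1) - \gamma\mu\log\gamma = -\mu(\gamma\log\gamma - \gamma + 1) = -\mu h(\gamma)$, matching the claim.

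For the lower tail with $\gamma \in (0,1)$, the argument is symmetric: I would apply Markov's inequality to $e^{-\lambda X}$ with $\lambda > 0$, use the analogous MGF bound $\EE[e^{-\lambda X}] \leq \exp(\mu(e^{-\lambda}-1))$, and optimize the resulting exponent. The optimizer is now $\lambda^{*} = -\log \gamma$, which is strictly positive exactly because $\gamma < 1$. Substituting again gives exponent $-\mu h(\gamma)$, completing the lower-tail bound.

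I do not expect any real obstacle here: both bounds reduce to computing the Legendre transform of the Poisson log-MGF $\lambda \mapsto e^{\lambda}-1$, and this transform is precisely $h(\gamma) = \gamma \log \gamma - \gamma + 1$. The one place to be careful is checking that in each regime the optimizing $\lambda^{*}$ has the sign demanded by Markov's inequality (positive for the $e^{\lambda X}$ inequality, and such that $-\lambda < 0$ for the $e^{-\lambda X}$ inequality); this is exactly why the statement splits at $\gamma = 1$, and both cases work out cleanly.
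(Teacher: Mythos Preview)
Your proof is correct and is the standard textbook argument. The paper does not actually prove this lemma; it is stated as a known result with a citation to~\cite{EU2005}, so there is no proof in the paper to compare against.
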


\begin{lemma}
\cite{mossel2015consistency,GRS22} \label{pre:binomial_difference}
    Suppose that $a < b$. Let $Y \sim \Bin(m_{+}, \frac{a \log n}{n})$ and $Z \sim \Bin(m_{-}, \frac{b \log n}{n})$ be independent. If $m_{+}=(1+o(1))\frac{n}{2}$, $m_{-}=(1+o(1))\frac{n}{2}$, then,
    \begin{equation*}
        \PP(Y<Z) = n^{-D_{+}(a,b)+o(1)}.
    \end{equation*}
    For any $\eps > 0$,
    \begin{equation*}
        \PP(Y-Z \leq \eps \log n) \leq n^{-(D_{+}(a,b)-\frac{\eps \log(a/b)}{2})+o(1)}.
    \end{equation*}
\end{lemma}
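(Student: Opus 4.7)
The plan is to apply a standard Chernoff--Cram\'er large deviations argument to the difference $Y-Z$. By independence the moment generating function factorizes as $\EE[e^{t(Z-Y)}]=\EE[e^{tZ}]\cdot\EE[e^{-tY}]$, and in the sparse regime $m_{\pm}=(1+o(1))n/2$, $p,q=\Theta(\log n/n)$ a Taylor expansion of $\log(1+x)$ yields
\[
\log\EE\bigl[e^{t(Z-Y)}\bigr] \;=\; \tfrac{\log n}{2}\bigl(b(e^{t}-1)+a(e^{-t}-1)\bigr)(1+o(1)) \quad \text{uniformly for bounded } t.
\]

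First I would carry out the Chernoff upper bound for the first claim. Markov's inequality on $e^{t(Z-Y)}$ with $t>0$ gives $\PP(Y<Z)\le\EE[e^{t(Z-Y)}]$; optimizing the exponent produces the saddle point $t^{*}=\tfrac12\log(a/b)$, at which $e^{t^{*}}=\sqrt{a/b}$ and the rate collapses via $2\sqrt{ab}-a-b=-(\sqrt a-\sqrt b)^{2}$ to $-2D_{+}(a,b)$. This delivers $\PP(Y<Z)\le n^{-D_{+}(a,b)+o(1)}$. The second claim follows from the same Markov step applied to $e^{t(Z-Y+\eps\log n)}$: the shift produces an extra factor $n^{t\eps}$, which at $t^{*}$ equals $n^{(\eps/2)\log(a/b)}$ and combines with the rate above to give exactly the exponent $D_{+}(a,b)-\tfrac{\eps}{2}\log(a/b)$ claimed.

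For the matching lower bound in the first claim (needed for the equality), I would use exponential tilting at $t^{*}$: define a tilted measure $\wt{\PP}$ under which $Y,Z$ remain independent binomials with success probabilities shifted so that $\wt{\EE}[Z-Y]\approx 0$. Under $\wt{\PP}$ a local CLT / Berry--Esseen estimate for sums of independent Bernoullis gives $\wt{\PP}(Y<Z)=\Theta(1/\sqrt{\log n})$, while the Radon--Nikodym derivative $d\PP/d\wt{\PP}$ is concentrated around $e^{-D_{+}(a,b)\log n}$ on this event. Undoing the tilt then yields $\PP(Y<Z)\ge n^{-D_{+}(a,b)-o(1)}$, with the $\sqrt{\log n}$ factor absorbed into $n^{o(1)}$.

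The main obstacle will be the lower bound step: one must carefully control both the $(1+o(1))$ slack from $m_{\pm}=(1+o(1))n/2$ and the local CLT error term so that the leading-order exponent $D_{+}(a,b)$ is preserved and only $n^{o(1)}$ corrections appear. The two upper bounds, by contrast, reduce to routine Chernoff calculations, and as an alternative to optimizing the joint MGF one could apply Lemma~\ref{pre:chernoff_bound} to $Y$ and $Z$ separately with the threshold placed at the Cram\'er saddle point. Since this lemma already appears in essentially this form in~\cite{mossel2015consistency}, and the $\eps\log n$ refinement is supplied by~\cite{GRS22}, in practice I would cite those sources rather than reproduce the full tilting argument, and only verify that the $o(1)$ slack from $m_{\pm}=(1+o(1))n/2$ transfers cleanly to the present setup.
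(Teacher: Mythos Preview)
Your proposal is correct and matches the paper, which gives no proof at all but simply cites Lemma~8 of~\cite{abbe2020entrywise} and Lemma~3.3 of~\cite{GRS22}; your Chernoff--tilting sketch is the standard argument behind those references, and your final recommendation to cite rather than reproduce is exactly what the paper does. One small caveat worth flagging: your saddle point $t^{*}=\tfrac12\log(a/b)$ is positive only when $a>b$, so the Markov step as written implicitly requires $a>b$; the hypothesis ``$a<b$'' in the lemma statement appears to be a typo (the usage elsewhere in the paper is in the regime where $Y-Z$ is typically positive).
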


This result has been proved in Lemma 8 in \cite{abbe2020entrywise} and Lemma 3.3 in \cite{GRS22}.

\subsection{Nice events}

When $(G_1,G_2)\sim \CSBM(n, \frac{a\log n}{n}, \frac{b\log n}{n}, s)$, there are some events that happen with high probability and our following analysis intuitively relies on the happening of these nice events.

\begin{itemize}
    \item \textbf{(Balanced Communities)} We denote 
    $\calH:=\{\frac{n}{2}-n^{\frac{3}{4}} \leq|V^+|,|V^-|\leq \frac{n}{2}+n^{\frac{3}{4}}\}$. We observe that $|V^+|\sim \Bin(n,\frac{1}{2})$ and $|V^-|=n-|V^+|$. By Chernoff bound,
        \begin{align}
            \nonumber \PP(\calH^c) &= \PP(|V^+|\geq \frac{n}{2} + n^{\frac{3}{4}}) - \PP(|V^+|\leq \frac{n}{2} - n^{\frac{3}{4}}) 
            \leq \frac{1}{e^{(1-o(1))\sqrt{n}}}.
        \end{align}
    \item \textbf{(Reasonable Large Neighborhood)} Let $\gamma=\max(a,b)$, we also denote that 
    $$\calF=\{\forall v \in [n], |\mathcal{N}(v)| \leq 100 \max(1,\gamma) \log^3 n\}.$$
\end{itemize}

\begin{lemma} \label{lemma:bound_large_neighborhood}
    As $n\rightarrow \infty$, we have
    \begin{equation*}
        \PP(\calF)\geq 1-n^{-O(\log^2 n)}.
    \end{equation*}
\end{lemma}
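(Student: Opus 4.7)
The plan is to reduce to a standard tail bound for a binomial random variable and then take a union bound over all vertices (and both graphs, if the event $\calF$ is meant to hold for both). First, I would fix a vertex $v$ in, say, $G_1$. Since $G_1 \sim \SBM(n, ps, qs)$ marginally, conditional on the community labels $\bsigma_{*}$ the degree $|\mathcal{N}(v)|$ is a sum of independent Bernoullis with individual success probability at most $\max(ps, qs) \le \gamma \log n / n$. Hence $|\mathcal{N}(v)|$ is stochastically dominated by $X \sim \Bin(n-1, \gamma \log n / n)$, which has mean $\mu \le \gamma \log n$.

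Next, I would apply the multiplicative Chernoff bound (Lemma~\ref{pre:chernoff_bound_multiplicative}) with threshold $t := 100 \max(1, \gamma) \log^3 n$ and ratio $\gamma' := t/\mu$. Note that $\gamma' \ge 100 \log^2 n / \max(1, 1/\gamma) \cdot \tfrac{\max(1,\gamma)}{\gamma}$; in either case $\gamma \ge 1$ or $\gamma < 1$, one easily checks $\gamma' \ge 100 \log^2 n$ for $n$ large, so $\gamma' \to \infty$. Using that $h(\gamma') = \gamma' \log \gamma' - \gamma' + 1 = (1-o(1))\gamma' \log \gamma'$ for $\gamma' \to \infty$, we obtain
\begin{equation*}
    \mu\, h(\gamma') \;\ge\; (1-o(1))\, t \log(t/\mu) \;\ge\; (1-o(1)) \cdot 100 \max(1,\gamma)\log^3 n \cdot \log(100 \log^2 n),
\end{equation*}
which is $\Omega(\log^3 n \cdot \log \log n)$, and in particular at least $C \log^3 n$ for any fixed constant $C$ once $n$ is large. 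Thus $\PP(|\mathcal{N}(v)| \ge t) \le \exp(-C \log^3 n) = n^{-C \log^2 n}$.

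Finally, I would take a union bound over the $n$ vertices (and over $G_1$ and $G_2$, which only costs a factor of $2$) to conclude
\begin{equation*}
    \PP(\calF^c) \;\le\; 2n \cdot n^{-C \log^2 n} \;=\; n^{-O(\log^2 n)},
\end{equation*}
as desired. The only mild subtlety is handling the two regimes $\gamma \ge 1$ and $\gamma < 1$ uniformly via the $\max(1, \gamma)$ factor in the definition of $\calF$, so that the Chernoff ratio $t/\mu$ always grows at least as fast as $\log^2 n$; beyond that, the argument is a routine tail-bound computation and I would not expect any real obstacle.
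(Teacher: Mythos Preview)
Your proposal is correct and follows essentially the same approach as the paper: stochastic domination of $|\mathcal{N}(v)|$ by a binomial with mean $\gamma\log n$, a Chernoff-type tail bound giving a $n^{-\Omega(\log^2 n)}$ failure probability per vertex, then a union bound. The only cosmetic difference is that the paper invokes the additive Chernoff/Bernstein form (Lemma~\ref{pre:chernoff_bound}) rather than the multiplicative form you use; both yield the required $n^{-O(\log^2 n)}$ bound.
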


\begin{proof}
    Let $X\sim \Bin(n, \gamma \frac{\log n}{n})$. Fix $i\in [n]$, conditioned on any $\bsigma_{*}$, $|\mathcal{N}(i)|$ is stochastically dominated by $X$. Therefore,
    \begin{align*}
        \PP(|\mathcal{N}(i)| \geq \gamma \log^3 n \condsig) &\leq \PP(X \geq \gamma \log^3 n) 
        \leq \exp\left( -\frac{(\gamma \log^3 n)^2}{2\gamma \log n + 2/3\gamma \log^3 n}\right) \\
        &\leq \exp(-\gamma \log^3 n) = n^{-\gamma \log^2 n},
    \end{align*}
    where the second line uses Bernstein's inequality and the third line holds for any $n$ such that $\log^2 n > 6$. 
    From an union bound, we have
    \begin{equation*}
        \PP(\calF) = \EE[\PP(\calF | \bsigma_{*})] \geq 1 - n^{-O(\log^2 n)}. \qedhere
    \end{equation*}
\end{proof}

\subsection{Community recovery}
We make a slight change on the choice of the partition number of the community detection algorithm proposed by Mossel, Neeman, and Sly~\cite{mossel2015consistency}. This algorithm gives almost exact recovery with $n^{1-sD_{+}(a,b)+\eps|\log(a/b)|}$ vertices labeled incorrectly. After community recovery, we need to match the two communities in $G_1$ and $G_2$ by applying the community matching Algorithm \ref{alg:community_recovery_almost_exact}.

\begin{algorithm}[ht!]
\caption{Almost-exact Community Recovery~\cite{mossel2015consistency}}
\begin{flushleft}
    \textbf{Input:} {Adjacency matrix $A$ on $n$ vertices; parameters $a, b, \eps > 0$.} \\
    \textbf{Output:} {A community label estimate $\whbsigma \in \{-1,+1\}^n$.}
\end{flushleft}
\begin{algorithmic}[1]
\STATE Choose a positive integer $m$ satisfying 
$(\log(\eps m (2\max(a,b)\log^2 n)^{-1})-1)\eps/2>1$. Initialize two empty sets, $W_{+}$ and $W_{-}$.
\STATE Using the spectral method of \cite{abbe2020entrywise} to find a community partition of $[n]$, denoted as $(U_{+},U_{-})$.
\STATE Partition $[n]$ into $\{U_1,\ldots,U_m\}$ uniformly at random.
\FOR{$i \in [m]$}
    \STATE Using the spectral method of \cite{abbe2020entrywise} to find a community partition $(U_{i,+}, U_{i,-})$ of $G\{[n]\setminus U_i\}$. If $|U_{i,+}\Delta U_{+}| \geq n/2$, then swap $U_{i,+}$ and $U_{i,-}$.
    \STATE For $v\in U_i$, insert $v$ into $W_{+}$ or $W_{-}$ according to its neighborhood majority (resp., minority) in $U_{i,+} \cup U_{i,-}$ if $a > b$ (resp. $a < b$).
\ENDFOR
\STATE For $i\in W_{+}$, set $\whbsigma(i) = 1$, and for $i \in W_{-}$, set $\whbsigma(i) = -1$. Return $\whbsigma$.
\end{algorithmic}
\label{alg:community_recovery_almost_exact}
\end{algorithm}

Consider $G\sim \SBM(n, \frac{a\log n}{n}, \frac{b\log n}{n})$. Define $\gamma = \max\{a,b\}$. For any vertex $v\in [n]$, we define the signed neighbor counts of $v$ in $G$ as 
\begin{equation*}
    \maj_G(v)=\bsigma_{*}(v) \sum_{u\in \mathcal{N}(v)} \bsigma_{*}(u).
\end{equation*}
For any $\eps>0$, define a set of vertices:
\begin{equation*}
    I_{\eps}(G):=\{v \in [n]: \maj_G(v) \leq \eps \log n \text{ or } |\mathcal{N}(v)|\geq \gamma \log^3 n \}.
\end{equation*}
Previous results (Lemma 5.1~\cite{GRS22}, Proposition 4.3~\cite{mossel2015consistency}) have shown that Algorithm~\ref{alg:community_recovery_almost_exact}'s correctness on $[n]\setminus I_{\eps}(G)$ with a different choice of $m$ and the lower bound of $|\mathcal{N}(v)|$ in the bad vertices set $I_{\eps}(G)$. 
In our work, we first demonstrates that Algorithm~\ref{alg:community_recovery_almost_exact} on input $(G,a,b,\eps)$ correctly classifies all vertices in $[n]\setminus I_{\eps}(G)$.

\begin{lemma}
    Algorithm~\ref{alg:community_recovery_almost_exact} on input $(n,a,b,s)$ classifies all vertices in $[n]\setminus I_{\eps}(G)$ correctly with high probability.
\end{lemma}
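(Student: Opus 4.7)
Fix $v \in [n]\setminus I_\eps(G)$ and let $i$ be the index with $v\in U_i$. The plan is to compare what line~6 actually computes --- namely $M := \sum_{u\in \mathcal{N}(v)\cap(U_{i,+}\cup U_{i,-})}\sigma'(u)$, where $\sigma'\in\{\pm 1\}^{[n]\setminus U_i}$ is the labeling induced by the post-swap pair $(U_{i,+},U_{i,-})$ --- with the idealized signed count $S:=\bsigma_{*}(v)\,\maj_G(v)$, and to show that $|M-S|<|S|$ with high probability. Since $v\notin I_\eps(G)$ gives $|\maj_G(v)|>\eps\log n$ and hence $|S|>\eps\log n$, the sign of $M$ (negated when $a<b$) will then agree with $\bsigma_{*}(v)$, and a union bound over $v\in[n]\setminus I_\eps(G)$ finishes the proof. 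The standard observation that the swap in step~5 correctly aligns $(U_{i,+},U_{i,-})$ with $(U_+,U_-)$ follows from both spectral outputs having error sets of size $\ll n/2$; I treat $\sigma'$ as already aligned with $\bsigma_{*}$.

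Let $E_i\subseteq [n]\setminus U_i$ denote the vertices where $\sigma'$ and $\bsigma_{*}$ disagree. Decomposing,
\begin{equation*}
M - S \;=\; -\!\!\!\sum_{u\in \mathcal{N}(v)\cap U_i}\!\!\!\bsigma_{*}(u) \;+\; \sum_{u\in \mathcal{N}(v)\cap E_i}\bigl(\sigma'(u)-\bsigma_{*}(u)\bigr),
\end{equation*}
so $|M-S|\leq |\mathcal{N}(v)\cap U_i| + 2|\mathcal{N}(v)\cap E_i|$. It suffices to bound each term by $\eps\log n/4$, uniformly over $v$, with high probability.

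For the first term, conditional on $v\in U_i$ the uniform random partition places each other vertex in $U_i$ independently with probability at most $2/m$, so $|\mathcal{N}(v)\cap U_i|$ is stochastically dominated by $\Bin(|\mathcal{N}(v)|, 2/m)$. Because $v\notin I_\eps(G)$ forces $|\mathcal{N}(v)|\leq \gamma\log^3 n$, and the choice of $m$ in line~1 gives $m \geq 2\gamma\log^2 n \cdot e^{1+2/\eps}/\eps$, the mean is at most $\eps\log n / e^{1+2/\eps}$, so a Chernoff bound plus a union bound over $v\in[n]$ controls this term. For the second term, the key is a conditional-independence decoupling: given $\bsigma_{*}$, the spectral output on $G\{[n]\setminus U_i\}$ depends only on edges \emph{within} $[n]\setminus U_i$, whereas $\mathcal{N}(v)\cap([n]\setminus U_i)$ depends only on edges \emph{from} $v$ \emph{into} $[n]\setminus U_i$; these two edge sets are disjoint and therefore conditionally independent. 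Standard consistency results for spectral community recovery in the logarithmic-degree regime (e.g.~\cite{abbe2020entrywise}) give $|E_i|\leq n^{1-\delta}$ with high probability for some constant $\delta=\delta(a,b)>0$. Conditional on $(\bsigma_{*},E_i)$, $|\mathcal{N}(v)\cap E_i|$ is stochastically dominated by $\Bin(|E_i|,\gamma\log n/n)$, whose mean is $O(\log n/n^\delta)=o(1)$; applying a Chernoff tail at level, say, $\log^{1/2} n$ and then a union bound over $v$ yields $|\mathcal{N}(v)\cap E_i|=o(\eps\log n)$ uniformly in $v$.

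Combining the two bounds delivers $|M-S|<|S|$ uniformly over $v\in [n]\setminus I_\eps(G)$ with high probability, so the sign-of-$M$ (or its negation) rule in line~6 labels every such $v$ correctly. The central obstacle is the second step: the \emph{global} spectral error bound $|E_i|\leq n^{1-\delta}$ is, on its own, far too crude for a per-vertex statement, since a polylog-sized neighborhood could in principle intersect $E_i$ in polylog-many points. The entire reason for the split-into-$m$-parts structure of Algorithm~\ref{alg:community_recovery_almost_exact} is precisely to enable this decoupling --- making spectral errors independent of $v$'s incident edges (given $\bsigma_{*}$) --- so that the global bound converts into a per-vertex bound that beats the $\eps\log n$ signal. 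Once that conversion is carried out, the remaining calculations are routine concentration.
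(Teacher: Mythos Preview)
Your approach is correct and essentially the same as the paper's, which defers to Proposition~5.1 of~\cite{GRS22} (the Mossel--Neeman--Sly argument) and remarks on the two adjustments needed here: the enlarged degree cap $\gamma\log^3 n$ (handled by the line-1 condition on $m$, exactly as in your first-term Chernoff) and the union bound over the now non-constant number $m$ of spectral runs (still fine because the spectral failure probability from~\cite{abbe2020entrywise} is polynomially small). One minor quantitative slip: the condition on $m$ in line~1 is calibrated so that the Chernoff tail for $|\mathcal{N}(v)\cap U_i|$ beats $1/n$ at threshold $\tfrac{\eps}{2}\log n$, not $\tfrac{\eps}{4}\log n$ --- but since your second term is already $o(\eps\log n)$, the split $|M-S|<\tfrac{\eps}{2}\log n+o(\eps\log n)<\eps\log n$ works just as well.
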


The proof directly follows from the proof of Proposition $5.1$ in~\cite{GRS22}, with two remarks.
First, although the maximum size of neighbors $|\mathcal{N}(i)|$ we consider here is enlarged from $100\max\{1,\gamma\}\log n$ to $\gamma \log^3 n$, we adjust the condition of $m$ accordingly such that the tail bound holds.
Secondly, we need to justify that with high probability, all partitions done by the spectral method are still almost exactly correct under the new choice of $m$, which is no longer a constant independent of $n$. Theorem~$3.2$ of~\cite{abbe2020entrywise} showed that the vanilla spectral method achieved optimal error rate, in the sense that $\EE[\frac{1}{n}\sum_{i=1}^{n}\indicator{\bsigma_{*}(i)\neq \whbsigma(i)}]\leq n^{-(1+o(1))D_+(a,b)}$. This implies that with probability $1-n^{-(1+o(1))D_+(a,b)}$, the spectral method labels all but $o(n)$ vertices correctly.
Furthermore, for all $i \in [n]$, $U_{i,+}$ matches with $V^{+} \setminus U_i$ and $U_{i,-}$ matches with $V^{-} \setminus U_i$ on all but $o(n)$ vertices after step (5) with high probability.

In this work, we further determine the probability of a set of vertices being in the set $I_{\eps}(G)$, which is a generalization of the result on the $\PP(v \in I_\eps)$ for an arbitrary $v \in [n]$ (Lemma $5.3$ of~\cite{GRS22}).

\begin{lemma} \label{lemma:bad_vertices_joint_distribution}
    Given a random graph $G \sim \SBM(n, \frac{sa\log n}{n}, \frac{sb\log n}{n})$ and a fixed subgraph induced by vertex set $S\in [n]$. 

    If $|S|=O(\log n)$, then for any $\eps>0, \delta > 0$,
    \begin{equation*}
        \PP(\{\forall i \in S, i\in I_{\eps}\} \cap \calH) = O(n^{-|S|(sD_{+}(a,b)-\eps (1+\delta)|\log(a/b)|)} + n^{-\eps\delta(1-o(1)) \log n}).
    \end{equation*}
    If $|S|=o(\log n)$, then for any $\eps>0, \delta > 0$,
    \begin{equation*}
        \PP(\{\forall i \in S, i\in I_{\eps}\} \cap \calH) = O(n^{-|S|(sD_{+}(a,b)-\eps (1+\delta)|\log(a/b)|)}).
    \end{equation*}
\end{lemma}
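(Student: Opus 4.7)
The plan is to reduce the joint event to a (nearly) independent product over $v \in S$ by decomposing each $\maj_G(v)$ into contributions from neighbors outside $S$, which are conditionally independent across $v$, and contributions from neighbors inside $S$, which are bounded by $|S|-1$ in absolute value and can be absorbed into slack.

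First, I would split $\{v \in I_\eps\} \subseteq \{\maj_G(v) \leq \eps \log n\} \cup \{|\mathcal{N}(v)| \geq \gamma \log^3 n\}$. The estimate used in the proof of Lemma~\ref{lemma:bound_large_neighborhood} shows that the large-neighborhood event has probability $n^{-\Omega(\log^2 n)}$ per vertex, which is dominated by either of the claimed bounds after a union bound over $S$. So it suffices to control $\PP\bigl(\calH \cap \bigcap_{v\in S}\{\maj_G(v) \leq \eps \log n\}\bigr)$.

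Next, I would write $\maj_G(v) = \maj_G^{\mathrm{ext}}(v) + \maj_G^{\mathrm{int}}(v)$, where $\maj_G^{\mathrm{ext}}(v)$ sums over neighbors of $v$ in $[n]\setminus S$ and $|\maj_G^{\mathrm{int}}(v)| \leq d_S(v) := |\mathcal{N}(v)\cap S|$. Define the good event $E := \{d_S(v) \leq \eps\delta \log n \text{ for every } v\in S\}$. On $E$, the bound $\maj_G(v) \leq \eps \log n$ forces $\maj_G^{\mathrm{ext}}(v) \leq \eps(1+\delta)\log n$. The key observation is that, conditional on $\bsigma_*$, the variables $\{\maj_G^{\mathrm{ext}}(v)\}_{v\in S}$ depend on pairwise disjoint edge sets (namely the edges from each $v \in S$ to $[n]\setminus S$) and are therefore mutually independent. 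Applying Lemma~\ref{pre:binomial_difference} with parameters $(sa,sb)$, using $D_+(sa,sb) = sD_+(a,b)$ and $|V^\pm \setminus S| = (1+o(1))n/2$ under $\calH$, and then multiplying the per-vertex bounds, I would obtain the main term $n^{-|S|(sD_+(a,b) - \eps(1+\delta)|\log(a/b)|)}$; the factor of $1/2$ and the $o(1)$ correction from Lemma~\ref{pre:binomial_difference} can be absorbed into the $\delta$ slack for $n$ large, since $\eps,\delta,|\log(a/b)|$ are all fixed positive constants.

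Finally, I would bound $\PP(E^c)$. For each $v \in S$, $d_S(v)$ is stochastically dominated by $\Bin(|S|-1,\, s\max(a,b)\log n/n)$. When $|S| = o(\log n)$, we have $d_S(v) \leq |S|-1 \leq \eps\delta \log n$ deterministically for $n$ large, so $E$ holds and no extra term appears. When $|S| = O(\log n)$, the elementary bound $\PP(d_S(v)\geq k) \leq \binom{|S|-1}{k}(s\max(a,b)\log n/n)^k$ at $k = \eps\delta\log n$ evaluates to $(O(\log^2 n)/n)^{\eps\delta\log n} = n^{-(1-o(1))\eps\delta\log n}$, and a union bound over the $n^{o(1)}$ vertices in $S$ yields the extra term $n^{-\eps\delta(1-o(1))\log n}$. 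The main obstacle is bookkeeping: the internal-neighbor contribution $d_S(v)$, the $1/2$ factor in Lemma~\ref{pre:binomial_difference}, and its $o(1)$ correction must all fit under a single $\delta$ budget, and the conditioning (on $\bsigma_*$ together with $\calH$) must be set up carefully so that the per-vertex binomial-difference tail applies with $m_\pm = (1+o(1))n/2$ simultaneously for every $v \in S$.
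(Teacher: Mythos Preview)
Your proposal is correct and follows essentially the same approach as the paper's proof: both decompose $\maj_G(v)$ into external (neighbors in $[n]\setminus S$) and internal (neighbors in $S$) contributions, introduce the good event that all internal degrees are at most $\eps\delta\log n$, exploit the conditional independence of the external parts given $\bsigma_*$ to multiply the per-vertex bounds from Lemma~\ref{pre:binomial_difference}, and handle the large-neighborhood part of $I_\eps$ via Lemma~\ref{lemma:bound_large_neighborhood}. Your bookkeeping of the $1/2$ and $o(1)$ corrections is, if anything, more explicit than the paper's.
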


\begin{proof}
    The main idea is considering the intersection of the interested event $\{\forall i \in S, i\in I_{\eps}\} \cap \calH$ with $\calG=\{\forall i \in S, |\mathcal{N}_S(i)| \leq \eps\delta \log n\}$, where $\mathcal{N}_S(v)$ denotes the set of neighbors of $v$ restricted on the vertex set $S$.
    \begin{equation}
        \PP(\{\forall i \in S, i\in I_{\eps}\} \cap \calH)
        \leq \PP(\{\forall i \in S, i\in I_{\eps}\} \cap \calH \cap \calF \cap \calG) + \PP(\calG^c) + \PP(\calF^c).
        \label{joint_distribution_step1}
    \end{equation}
    
    Firstly, we give the upper bound of $\PP(\calG^c \condsig)\indH$. Assume that $|S|\geq \eps\delta \log n$ without loss of generality. By using an union bound,
    \begin{align}
        \nonumber \PP(\calG) 
        &\leq |S| \times \EE[\PP(|\mathcal{N}_S(i)|>\eps\delta \log n \condsig)] \\
        \nonumber & = |S| \times \sum_{k=\delta \log n}^{|S|} \binom{|S|}{\eps\delta \log n} (\frac{a \log n}{n})^{\eps\delta \log n} (1-\frac{b\log n}{n})^{|S|-\eps\delta \log n} \\
        &= n^{\frac{\log \log n}{\log n}}(\frac{Ca \log^2 n}{n})^{\eps\delta \log n} = O(n^{-\eps\delta(1-o(1)) \log n}),
        \label{bound_large_neighborhood_S}
    \end{align}
    where the second equation holds from the assumption of $|S|=O(\log n)$.
    
    Secondly, we study this event $E=\{\forall i \in S, i\in I_{\eps}\} \cap \calH \cap \calF$. 
    \begin{align*}
        \PP(E) &= \EE[\PP(\{\forall v \in S, \maj_G(v) \leq \eps \log n\} \condsig) \indH] \\
        &= \EE[\PP(\{\forall v \in S, \maj_{G[[n]\setminus S]}(v) + \maj_{G[S]}(v) \leq \eps \log n\} \condsig) \indH] \\
        &\leq \EE[\PP(\{\forall v \in S, \maj_{G[[n]\setminus S]}(v) \leq \eps(1+\delta) \log n\} \condsig) \indH].
    \end{align*}
    For any $v \in S$, $\maj_{G[[n]\setminus S]}(v)$ is the difference of two independent binomial random variables $Y_v$ and $Z_v$, where $Y_v \sim \Bin(|V^{\bsigma_{*}(v)}_{G[[n]\setminus S]}|, sa\frac{\log n}{n}), Z_v \sim \Bin(|V^{-\bsigma_{*}(v)}_{G[[n]\setminus S]}|, sb\frac{\log n}{n})$. 
    With $\calH$ happening, we have $|V^{\bsigma_{*}(v)}_{G[[n]\setminus S]}|=(1-o(1))\frac{n}{2}, |V^{-\bsigma_{*}(v)}_{G[[n]\setminus S]}|=(1-o(1))\frac{n}{2}$. 
    Since $Y_v$ and $Z_v$ do not take into account $v\in S$, they are independent for all $v\in S$.
    Therefore,
    \begin{align}
        \nonumber \PP(E)
        &= \EE[ \Pi_{v\in S} \PP(\{Y_v - Z_v \leq \eps(1+\delta) \log n\} \condsig) \indH] \\
        \nonumber &\leq (n^{-(sD_{+}(a,b)-\frac{\eps(1+\delta) |\log(a/b)|}{2})+o(1)})^{|S|} \\
        &\leq n^{-|S|(sD_{+}(a,b)-\eps (1+\delta)|\log(a/b)|)},
        \label{bound_intersection_events}
    \end{align}    
    where the second line holds by Lemma~\ref{pre:binomial_difference} and the last line holds for sufficiently large $n$.
    We conclude with by  Lemma~\ref{lemma:bound_large_neighborhood}, Inequality~\eqref{bound_large_neighborhood_S}, and Inequality~\eqref{bound_intersection_events} into Inequality~\eqref{joint_distribution_step1}.
\end{proof}

\begin{remark}\label{remark:delta_eps_notation}
    For arbitrary $\eps>0$, we can find $\eps'>0$ and $\delta>0$ such that $\eps'(1+\delta)=\eps$.
    For the sake of convenience, we also denote $D_{+}(a,b,s,\eps)$ as $sD_{+}(a,b)-\eps|\log(a/b)|$ and 
    this is equivalent as the $(\eps,\delta)$-parameterization in Lemma~\ref{lemma:bad_vertices_joint_distribution}. 
    We mainly use the $sD_+(a,b,s,\eps)$ notation in the analysis throughout this paper.
\end{remark}

\subsection{Tree node assigning}

Before getting to the first moment calculation of the similarity score,
we introduce a sub-problem, 
named \emph{in-community edge counting for node assignment on trees}. 

Assume that we have a random graph $G$ with $n$ vertices, which are labeled by a community label vector $\bsigma$. We are also given a rooted tree $T(i)$ with $N$ vertices other than the root, where $i$ specifies the root node of $T$ on $G$. Planting $T(i)$ onto $G$ has at least $\binom{n}{N}$ possible positions. 
If $G$ is a stochastic block model, different planted positions of $T$ would contain different numbers of in-community edges. We are interested in the distribution of the number of in-community edges when planting $T(i)$ onto $G$ uniformly at random.

\begin{lemma} \label{lemma:choose_incomm_edges}
    Let $K_{n}$ be the complete graph of a stochastic block model $G$ on $n$ vertices. 
    Let $\bsigma_{*}=\{\bsigma_{*}(i)\}_{i=1}^n,\bsigma_{*}(i)\in \{-1,+1\}$ drawn independently and uniformly at random as the vector of community labels.
    Let $v \in [n]$ be an arbitrary node of $K_n$. 
    Let $T$ be an arbitrary rooted tree with $N$ vertices other than the root.  
    Consider a uniformly random injective function $\tau:V(T) \rightarrow V(K_n)$ such that root $r(T)$ is mapped to $v$ in $V(K_n)$.
    Define $X$ as the random variable representing the number of in-community edges in the tree $T$ under random $\tau$.
    Then, 
    \begin{equation*}
        \text{Bin}(N, \frac{1}{2}-2n^{-\frac{1}{4}}) \preccurlyeq X \condH \preccurlyeq \text{Bin}(N, \frac{1}{2}+2n^{-\frac{1}{4}}).
    \end{equation*}
\end{lemma}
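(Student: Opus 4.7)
The plan is to reveal the mapping $\tau$ one vertex at a time in breadth-first order from the root, and to couple the resulting indicators with independent Bernoullis at each step. Order the non-root vertices of $T$ as $u_1, \dots, u_N$ in BFS order, so that each $u_i$ has a unique parent $p(u_i) = u_{j(i)}$ with $j(i) < i$. For each $i$, set $X_i := \mathbf{1}\{\bsigma_{*}(\tau(u_i)) = \bsigma_{*}(\tau(p(u_i)))\}$, so that $X = \sum_{i=1}^{N} X_i$. The joint law of $\tau$ (given $\tau(r(T)) = v$) can be generated sequentially by drawing $\tau(u_i)$ uniformly at random from $[n] \setminus \{\tau(u_0), \dots, \tau(u_{i-1})\}$ for $i = 1, \dots, N$.

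Next, I would verify that, on the event $\calH$, the one-step conditional probabilities are pinned near $1/2$. Let $\calF_{i-1}$ denote the $\sigma$-algebra generated by $\bsigma_{*}$ together with $\tau(u_0), \dots, \tau(u_{i-1})$; set $c := \bsigma_{*}(\tau(p(u_i)))$, and let $k_c$ be the number of $j < i$ with $\bsigma_{*}(\tau(u_j)) = c$. Since $\tau(u_i)$ is uniform on the remaining $n-i$ vertices,
$$\PP\bigl(X_i = 1 \mid \calF_{i-1}\bigr) = \frac{|V^{c}| - k_c}{n - i}.$$
On $\calH$ we have $|V^{c}| \in [n/2 - n^{3/4}, n/2 + n^{3/4}]$; together with $0 \leq k_c \leq i \leq N$ and the fact that $N = o(n^{3/4})$ in the regime of interest, an elementary application of $(1-x)^{-1} \leq 1 + 2x$ for $x \in [0, 1/2]$ yields, for all sufficiently large $n$,
$$\tfrac{1}{2} - 2n^{-1/4} \;\leq\; \PP\bigl(X_i = 1 \mid \calF_{i-1}\bigr) \;\leq\; \tfrac{1}{2} + 2n^{-1/4},$$
pointwise on $\calH$, uniformly over $i \leq N$ and over the history.

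Finally, I would conclude via the standard inverse-CDF coupling. Let $U_1, \dots, U_N$ be i.i.d.\ $\mathrm{Uniform}([0,1])$ random variables, independent of $(\bsigma_{*}, \tau)$. Define
$$Y_i^{-} := \mathbf{1}\{U_i \leq \tfrac{1}{2} - 2n^{-1/4}\}, \qquad Y_i^{+} := \mathbf{1}\{U_i \leq \tfrac{1}{2} + 2n^{-1/4}\},$$
and realize each $X_i$ as $\mathbf{1}\{U_i \leq \PP(X_i = 1 \mid \calF_{i-1})\}$ via inverse-CDF coupling, which preserves the conditional law of $X$. The two-sided conditional-probability bound then forces $Y_i^{-} \leq X_i \leq Y_i^{+}$ pointwise on $\calH$. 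Summing, $\sum_i Y_i^{-} \leq X \leq \sum_i Y_i^{+}$ on $\calH$, while $\sum_i Y_i^{\pm} \sim \Bin(N, \tfrac{1}{2} \pm 2n^{-1/4})$ are independent of $\calH$, which yields the claimed stochastic domination of $X \condH$. The only real technical point is the uniform bound on the conditional probabilities; once the polylogarithmic scale of $N$ is used to dominate the $n^{3/4}$ balance fluctuations against the $n^{-1/4}$ slack, everything else is routine.
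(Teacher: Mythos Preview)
Your proof is correct and follows essentially the same approach as the paper: bound the per-edge probability of being in-community near $1/2$ under $\calH$, then sum. Your sequential BFS revealing and inverse-CDF coupling make rigorous the passage from per-edge bounds to the Binomial sandwich, a step the paper states only tersely (it asserts that marginal domination of each $X_{(u,v)}$ implies domination of the sum without addressing the dependence between edge indicators sharing a vertex); your version is the more careful one.
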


\begin{proof}
    There are $N$ edges on the tree $T$. For any $(u,v)\in V(K_n) \times V(K_n)$, we denote $X_{(u,v)}$ as the indicator random variable of event $\mathcal{E}=$\{$u$ and $v$ are from the same community\}. Since $\frac{1}{2}-2n^{-\frac{1}{4}} \leq \PP(\mathcal{E}\condH)\leq \frac{1}{2}+2n^{-\frac{1}{4}}$, $X_{(u,v)}\condH$ stochastically dominates $\text{Bernoulli}(\frac{1}{2}-2n^{-\frac{1}{4}})$ and is stochastically dominated by $\text{Bernoulli}(\frac{1}{2}+2n^{-\frac{1}{4}})$.
    Thus, the summation over all edges on tree $T$, $X\condH=\sum_{(u,v)\in E(T)} X_{(u,v)}\condH$, stochastically dominates $\Bin(n, \frac{1}{2}-2n^{-\frac{1}{4}})$ and is stochastically dominated by $\Bin(n, \frac{1}{2}+2n^{-\frac{1}{4}})$.
\end{proof}

\begin{remark} \label{remark:probability_from_binomial_dominance}
    If $N=o(n^{\frac{1}{4}})$,
    Lemma \ref{lemma:choose_incomm_edges} implies that with some non-negative integer $N_1 \leq N$, $\PP(\{X=N_1\}\cap \calH)=(1+o(1))\binom{N}{N_1}(\frac{1}{2})^{N}$. Because $\PP(\{X=N_1\}\cap \calH)=(1-o(1))\PP(X=N_1\condH)$, it suffices to show both the upper and lower bound on $\PP(X=N_1\condH)$.
    \begin{multline*}        
        \PP(X=N_1\condH)=\PP(X\geq N_1\condH)- \PP(X\geq N_1+1\condH) \\
        \begin{aligned}
        &\leq \sum_{t\geq N_1} \binom{N}{t} (\frac{1}{2}+2n^{-\frac{1}{4}})^t (\frac{1}{2}-2n^{-\frac{1}{4}})^{N-t} -
        \sum_{t\geq N_1+1} \binom{N}{t} (\frac{1}{2}-2n^{-\frac{1}{4}})^t (\frac{1}{2}+2n^{-\frac{1}{4}})^{N-t} \\
        &\leq (1+o(1))\binom{N}{N_1}(\frac{1}{2})^{N} + \sum_{t\geq (N-N_1)\vee (N_1+1)} \binom{N}{t} (\frac{1}{2})^N f(n,N,t),
        \end{aligned}
    \end{multline*}
    where $f(n,N,t)=(1+4n^{-\frac{1}{4}})^t(1-4n^{-\frac{1}{4}})^{N-t}-(1-4n^{-\frac{1}{4}})^t (1+4n^{-\frac{1}{4}})^{N-t}$. 
    The first inequality holds because of the stochastic dominance in Lemma~\ref{lemma:choose_incomm_edges}. 
    The second inequality holds because $\binom{N}{N-t}f(n,N,N-t)=\binom{N}{t}f(n,N,t)$ and thus cancels out every term in the summation indexed from $t=N_1+1$ to $t=N-(N_1+1)$. 
    For $t\geq (N-N_1) \vee (N_1+1)$, we know that $\binom{N}{t} < \binom{N}{N_1}$. 
    Also, $f(n,N,t) \leq (1+4n^{-\frac{1}{4}})^N-(1-4n^{-\frac{1}{4}})^{N} < O(n^{-\frac{1}{4}}N)=o(\frac{1}{N})$ from our assumption on $N$. 
    Therefore, summing over $t\geq (N-N_1)\wedge(N_1+1)$, we have
    \begin{equation*}
        \PP(X=N_1\condH) \leq (1+o(1))\binom{N}{N_1}(\frac{1}{2})^{N}.
    \end{equation*}
    From the other direction of stochastic dominance, we have $\PP(X=N_1\condH)=\PP(X\geq N_1\condH)- \PP(X\geq N_1+1\condH) \geq (1-o(1))\binom{N}{N_1}(\frac{1}{2})^{N}$.
\end{remark}

\begin{lemma} \label{lemma:choose_incomm_edges_extended}
    Let $K_{n}$ be the complete graph of a stochastic block model $G$ on $n$ vertices.    
    Let $\bsigma_{*}=\{\bsigma_{*}(i)\}_{i=1}^n,\bsigma_{*}(i)\in \{-1,+1\}$ drawn independently and uniformly at random as the vector of community labels.
    Let $v \in [n]$ be an arbitrary node of $G$. 
    Let $\{T\}_{i=1}^t$ be a sequence of arbitrary rooted tree with $N$ vertices other than the root. 
    Consider a uniformly random injective function $\tau:V(T) \rightarrow V(K_n)$ such that root $r(T_1)$ of the first tree is mapped to $v$ in $V(K_n)$. Also, the root of $i$-th tree is mapped to $\tau(r(T_i))=\tau(u)$ for a fixed vertex on previous trees $u\in \cup_{t=1}^{i-1}V(T_t)$ or a fixed vertex on $K_n$.
    Define $X$ as the random variable representing the number of in-community edges in all the trees under random $\tau$.
    Then,
    \begin{equation*}
        \text{Bin}(N, \frac{1}{2}-2n^{-\frac{1}{4}}) \preccurlyeq X\condH \preccurlyeq \text{Bin}(N, \frac{1}{2}+2n^{-\frac{1}{4}}).
    \end{equation*}
\end{lemma}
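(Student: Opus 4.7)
The plan is to follow the edge-by-edge exposure strategy used for Lemma~\ref{lemma:choose_incomm_edges} and argue that passing from a single tree to a sequence of trees introduces no essentially new difficulty. First I would observe that, since each $T_i$ is a tree, its number of edges equals its number of non-root vertices, so the total number of edges across the entire sequence equals the total number $N$ of non-root vertices. Because $\tau$ is injective on $\bigcup_{i} V(T_i)$, distinct tree-edges map to distinct pairs of vertices in $K_n$, and the indicators $X_e := \mathbf{1}\{\bsigma_{*}(\tau(u)) = \bsigma_{*}(\tau(v))\}$ associated with the $N$ tree-edges $e = (u,v)$ are well defined, with $X = \sum_e X_e$.

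Next I would expose $\tau$ and $\bsigma_{*}$ in a single ``tree-growing'' order. Since each root $r(T_i)$ for $i \geq 2$ is identified with either a previously placed vertex or a prescribed vertex of $K_n$, the combined structure is a forest admitting a vertex ordering in which every non-root vertex $v$ comes after its parent $u$ in its tree. In this order, when I reveal $\tau(v)$, its image is uniform over the vertices of $K_n$ not yet used, and whether the parent edge is in-community is determined by whether $\bsigma_{*}(\tau(v))$ matches the already-exposed $\bsigma_{*}(\tau(u))$.

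The core probabilistic estimate, which must hold uniformly along the exposure, is that under $\calH$ and conditional on the history, the probability that $\tau(v)$ shares the community of $\tau(u)$ equals
\[
    \frac{|V^{\bsigma_{*}(\tau(u))}| - |V'^{\bsigma_{*}(\tau(u))}|}{n - |V'|},
\]
where $V' \subset V(K_n)$ is the set of already-placed images. Because $\calH$ forces $|V^{\pm}| \in [n/2 - n^{3/4},\, n/2 + n^{3/4}]$ and $|V'| \leq N = o(n^{3/4})$ in the relevant regime, this ratio lies in $[\tfrac12 - 2n^{-1/4},\, \tfrac12 + 2n^{-1/4}]$. A standard edge-by-edge coupling performed in the exposure order then produces, on an enlarged probability space, independent $Y_e \sim \Bernoulli(\tfrac12 + 2n^{-1/4})$ and $Z_e \sim \Bernoulli(\tfrac12 - 2n^{-1/4})$ with $Z_e \leq X_e \leq Y_e$ almost surely, yielding the claimed two-sided stochastic dominance.

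The main obstacle is really only a bookkeeping one: a root $r(T_i)$ for $i \geq 2$ may coincide with a vertex already placed in the structure, but this is benign because such a root is never the ``child'' endpoint of an edge being revealed, so it never appears as the new vertex $v$ in the per-step estimate above; it merely fixes the parent endpoint for edges of $T_i$ that are revealed later. Combined with the observation that distinct tree-edges always map to distinct edges of $K_n$, this makes the argument go through essentially verbatim from the proof of Lemma~\ref{lemma:choose_incomm_edges}.
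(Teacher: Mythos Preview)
Your argument is correct. The edge-by-edge exposure with explicit conditional-probability bounds and a sequential coupling is exactly the right way to justify the two-sided Binomial domination, and your handling of the roots $r(T_i)$ for $i\ge 2$ (they only ever appear as already-placed parent endpoints, never as the freshly drawn child) is the key bookkeeping observation.

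The paper takes a slightly different, more modular route: it decomposes $X=\sum_i X_i$ tree-by-tree, invokes Lemma~\ref{lemma:choose_incomm_edges} as a black box to get $\text{Bin}(N_{T_i},\tfrac12-2n^{-1/4})\preccurlyeq X_i\mid\calH\preccurlyeq\text{Bin}(N_{T_i},\tfrac12+2n^{-1/4})$ for each tree, and then simply sums these dominations. This is shorter, but it leaves implicit the very coupling step you spell out---marginal stochastic domination of the summands does not automatically give domination of the sum when the $X_i$ share vertices through the identified roots. Your direct approach makes that step honest by bounding the \emph{conditional} probability at each exposure, which is what actually licenses the sequential coupling; the paper's proof relies on the reader supplying this. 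So your version is less modular but more self-contained, while the paper's version is terser at the cost of sweeping the dependence under the rug.
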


\begin{proof}
    Define $X_i$ as the random variable for number of in-community edges on Tree $T_i$ with random embedding $\tau
    $. The following holds immediately from Lemma \ref{lemma:choose_incomm_edges},
    \[
        \text{Bin}(N_{T_i}, \frac{1}{2}-2n^{-\frac{1}{4}}) 
        \preccurlyeq X_i\condH \preccurlyeq
        \text{Bin}(N_{T_i}, \frac{1}{2}+2n^{-\frac{1}{4}}).    
    \]
    Since $X\condH=\sum_{i}X_i\condH$,
    \[
        \text{Bin}(N, \frac{1}{2}-2n^{-\frac{1}{4}}) \preccurlyeq X\condH \preccurlyeq \text{Bin}(N, \frac{1}{2}+2n^{-\frac{1}{4}}). \qedhere 
    \]
\end{proof}

To present the following Lemma~\ref{lemma:incomm_edges_independence}, we need to introduce a few more definitions: \textit{decorated union graph} and \textit{decorated edges}, which will be explained in more details in the context of chandelier in Section~\ref{sec:prop2}.

\textbf{Decorated Union Graph.} Let $S_1,S_2,T_1,T_2$ be four rooted graphs. The union graph is defined as $U:=S_1 \cup S_2 \cup T_1 \cup T_2$. We define the decorated union graph as a two-tuple $\dot{U}:=(U,D_{U})$, which is $U$ associating with a decoration set. For each edge,
\begin{equation*}
    D_U(e) = 
    \begin{cases}
        \text{The subset of } \{S_1,S_2,T_1,T_2\} \text{ where } e \text{occurs,} & \text{if } e\in E(U), \\
        \emptyset, & \text{otherwise}.
    \end{cases}
\end{equation*}
We call an edge $e \in E(U)$ is $t$-decorated if $|D_U(e)|=t$ for $t\in \{0,1,2,3,4\}$. 
Decorated union graph $\dot{U}$ has one-to-one correspondence with $(S_1,S_2,T_1,T_2)$ as we can uniquely determine $\dot{U}$ given $(S_1,S_2,T_1,T_2)$ and uniquely recover $(S_1,S_2,T_1,T_2)$ given $\dot{U}$.

\begin{lemma} [Asymptotic independence of the counts of $i$-decorated in-community edges] \label{lemma:incomm_edges_independence}
    Let $K_{n}$ be the complete graph of a stochastic block model $G$ on $n$ vertices.  
    Consider a connected decorated union graph $\dot{U}_P$ with $d_1$ $1$-decorated edges, $d_2$ $2$-decorated edges, $d_3$ $3$-decorated edges, and $d_4$ $4$-decorated edges, rooted at $v$ on the complete graph $K_n$. 
    Consider a uniformly random injective function $\tau:V(T) \rightarrow V(K_n)$ such that root $r(T)$ is mapped to $v$ in $V(K_n)$.
    Define $X^{(i)}$ as the random variable representing the number of $i$-decorated in-community edges in $\dot{U}_P$. 
    Assume that $|V(\dot{U}_P)|=O(\log n)$ and $\dot{U}_P$ has excess $k$. If $k=-1$, then,
    \begin{align*}
        &\PP(X^{(1)}=M_1,X^{(2)}=M_2,X^{(3)}=M_3,X^{(4)}=M_4\condH)\\
        &\qquad=(1\pm o(1))\PP(X^{(1)}=M_1\condH)\PP(X^{(2)}=M_2\condH)\PP(X^{(3)}=M_3\condH)\PP(X^{(4)}=M_4\condH) \\
        &\qquad=(1+o(1))\binom{d_1}{M_1}\binom{d_2}{M_2}\binom{d_3}{M_3}\binom{d_4}{M_4}\frac{1}{2^{d_1+d_2+d_3+d_4}}.
    \end{align*}
\end{lemma}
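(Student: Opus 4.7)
The plan is to exploit the fact that the hypothesis ``connected with excess $k = -1$'' forces $\dot{U}_P$ to be a tree on $N+1$ vertices, where $N := d_1 + d_2 + d_3 + d_4$. I would embed the tree one vertex at a time starting from the root and show that, conditional on $\calH$, the in-community indicators of the $N$ tree edges are asymptotically i.i.d.\ $\mathrm{Bernoulli}(1/2)$ variables, with a uniform error that vanishes because $N = O(\log n) \ll n^{1/4}$. Independence across decoration classes is then immediate, since the decoration $|D_U(e)|$ is a combinatorial feature of $\dot{U}_P$ and does not depend on the random embedding $\tau$.

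First I would fix a BFS (or DFS) traversal of $\dot{U}_P$ starting from the root, producing an ordering $v_0, v_1, \ldots, v_N$ such that each $v_j$ with $j \geq 1$ has a unique parent $\mathrm{par}(v_j) \in \{v_0, \ldots, v_{j-1}\}$ and the edges $e_j := (v_j, \mathrm{par}(v_j))$ enumerate $E(\dot{U}_P)$. Setting
\begin{equation*}
  Y_j := \mathbf{1}\{\bsigma_{*}(\tau(v_j)) = \bsigma_{*}(\tau(\mathrm{par}(v_j)))\},
\end{equation*}
the counts decompose as $X^{(t)} = \sum_{j:\,|D_U(e_j)| = t} Y_j$ for each $t \in \{1,2,3,4\}$.

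Next I would compute the joint distribution of $(Y_1, \ldots, Y_N)$ by the chain rule. Given $\tau(v_0), \ldots, \tau(v_{j-1})$, the placement $\tau(v_j)$ is uniform on the $n - j$ unused vertices, and under $\calH$ each community still contributes $(n/2)(1 + O(n^{-1/4}))$ of them, since $|V^{\pm}| = n/2 \pm n^{3/4}$ and at most $j \leq N = O(\log n)$ vertices have already been placed. Hence
\begin{equation*}
  \PP\bigl(Y_j = 1 \,\big|\, \tau(v_0),\ldots,\tau(v_{j-1}), \calH\bigr) = \frac{1}{2} + O(n^{-1/4})
\end{equation*}
uniformly in the history. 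Iterating yields, uniformly in $(y_1,\ldots,y_N) \in \{0,1\}^N$,
\begin{equation*}
  \PP(Y_1 = y_1, \ldots, Y_N = y_N \condH) = \bigl(1 + O(N n^{-1/4})\bigr) (1/2)^N = (1+o(1)) (1/2)^N.
\end{equation*}
Since there are exactly $\prod_{t=1}^4 \binom{d_t}{M_t}$ binary strings $(y_1,\ldots,y_N)$ producing prescribed decoration-class counts $(M_1,M_2,M_3,M_4)$, summing gives the claimed product form. The same calculation applied marginally (or via Remark~\ref{remark:probability_from_binomial_dominance}) gives $\PP(X^{(t)} = M_t \condH) = (1+o(1)) \binom{d_t}{M_t} (1/2)^{d_t}$, so the joint equals $(1 \pm o(1))$ times the product of marginals.

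The main obstacle is ensuring that the conditional-probability estimate is uniform over the history $\tau(v_0),\ldots,\tau(v_{j-1})$ and that multiplying $N$ factors of the form $1 + O(n^{-1/4})$ still contributes only a $1+o(1)$ factor. Both points rest on the hypothesis $|V(\dot{U}_P)| = O(\log n)$: it bounds the depletion of each community by $O(\log n)$ vertices and keeps the cumulative chain-rule error $O(N n^{-1/4})$ vanishing. No extra subtlety arises from the decorations themselves, since they are combinatorial labels attached to edges and are unaffected by the random embedding $\tau$.
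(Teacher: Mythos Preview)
Your proposal is correct and follows the same underlying idea as the paper---exploit the tree structure to reveal edges sequentially, so that each in-community indicator is $\mathrm{Bernoulli}(1/2)+O(n^{-1/4})$ regardless of history---but organizes the argument differently. The paper first groups the edges of $\dot{U}_P$ into maximal monochromatic (same-decoration) subtrees, applies the stochastic-dominance Lemma~\ref{lemma:choose_incomm_edges_extended} to each decoration class separately, argues that the dominance persists after conditioning on the other $X^{(i')}$, and then invokes Remark~\ref{remark:probability_from_binomial_dominance} to convert the binomial sandwich into a point-probability estimate. You instead work directly at the level of individual edge indicators $Y_j$, compute their joint law via the chain rule, and sum over the $\prod_t \binom{d_t}{M_t}$ configurations. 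Your route is more elementary in that it bypasses the stochastic-dominance machinery and the subtree decomposition entirely; the paper's route has the advantage of reusing lemmas already established for other purposes. The only point worth tightening in your write-up is that the estimate $\PP(Y_j=1\mid\cdot)=\tfrac12+O(n^{-1/4})$ should be stated as holding uniformly over every fixed $\bsigma_*$ satisfying $\calH$ and every realization of $\tau(v_0),\ldots,\tau(v_{j-1})$, so that the tower property cleanly transfers it to the coarser conditioning on $Y_1,\ldots,Y_{j-1}$.
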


\begin{proof}
    From the assumption $k=-1$ we have $\dot{U}_P$ is a tree.
    We can decompose $U_P$ to a sequence of trees as follows: Traverse $U_P$ in BFS order and include each maximal connected component with all edges $i$-decorated as a subtree.

    In addition, we break this sequence of tree into four sequences based on the decoration number: $\{T^{(1)}_j\}_{j=1}^{c}, \{T^{(2)}_j\}_{j=1}^{d}$, $\{T^{(3)}_j\}_{j=1}^{e}$, $\{T^{(4)}_j\}_{j=1}^{f}$. In the random mapping, the root of each tree should be mapped to either $v$ or a non-root vertex on the other tree. Let $X^{(i)}_j$ be the random variable for the number of in-community edges for the $j$-th tree of $i$-decoration. $X^{(i)}_j\condH$ satisfies the following stochastic dominance:
    \begin{equation*}
        \text{Bin}(|V(T_j^{(i)})|, \frac{1}{2}-2n^{-\frac{1}{4}}) 
        \preccurlyeq X_j^{(i)}\condH \preccurlyeq
        \text{Bin}(|V(T_j^{(i)})|, \frac{1}{2}+2n^{-\frac{1}{4}}).
    \end{equation*}
    Thus, their summation satisfies the following stochastic dominance:
    \begin{equation*}
        \text{Bin}(d_i, \frac{1}{2}-2n^{-\frac{1}{4}}) 
        \preccurlyeq X^{(i)}\condH \preccurlyeq
        \text{Bin}(d_i, \frac{1}{2}+2n^{-\frac{1}{4}}), i=1,2,3,4.
    \end{equation*}
    As each of the series of trees occupy $O(\log n)$ vertices, every $X^{(i)}$ conditioned on other $X^{(i')},i'\neq i$ still satisfies the stochastic dominance:
    \begin{equation*}
        \text{Bin}(d_i, \frac{1}{2}-2n^{-\frac{1}{4}}) 
        \preccurlyeq X^{(i)}|X^{(i')},\calH \preccurlyeq
        \text{Bin}(d_i, \frac{1}{2}+2n^{-\frac{1}{4}}).
    \end{equation*}    
    Specifically, following Remark~\ref{remark:probability_from_binomial_dominance},
    $\PP(X^{(4)}=M_4\condH)=(1+o(1))\binom{d_4}{M_4}\frac{1}{2^{d_4}}$, 
    $\PP(X^{(3)}=M_3|X^{(4)}=M_4,\calH)=(1+o(1))\binom{d_3}{M_3}\frac{1}{2^{d_3}}$,
    $\PP(X^{(2)}=M_2|X^{(3)}=M_3,X^{(4)}=M_4,\calH)=(1+o(1))\binom{d_2}{M_2}\frac{1}{2^{d_2}}$,
    and $\PP(X^{(1)}=M_1|X^{(2)}=M_2,X^{(3)}=M_3,X^{(4)}=M_4,\calH)=(1+o(1))\binom{d_1}{M_1}\frac{1}{2^{d_1}}$. Collectively, these form the asymptotic independence of $X^{(i)}$:
    \begin{multline*}
        \PP(X^{(1)}=M_1,X^{(2)}=M_2,X^{(3)}=M_3,X^{(4)}=M_4\condH) \\
        =(1+o(1))\binom{d_1}{M_1}\binom{d_2}{M_2}\binom{d_3}{M_3}\binom{d_4}{M_4}\frac{1}{2^{d_1+d_2+d_3+d_4}}. \qedhere 
    \end{multline*}
\end{proof}

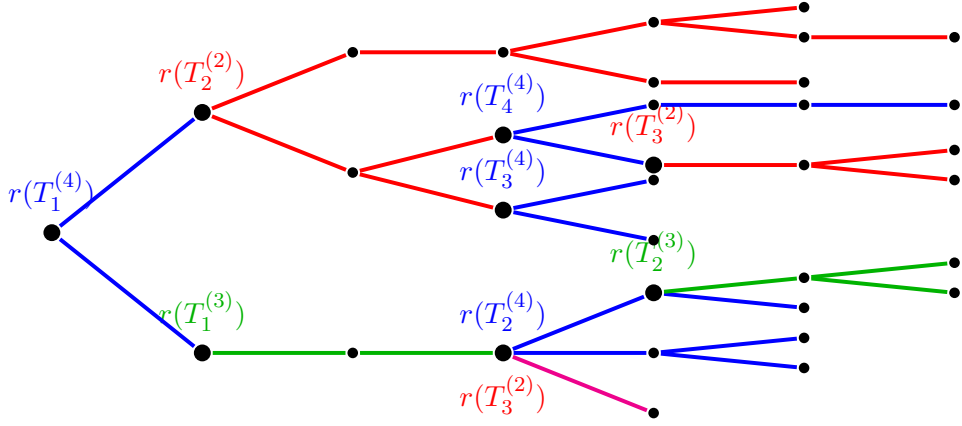
\begin{figure}
\centering
\begin{tikzpicture}[grow=right, level distance=2.0cm, line width=1.5pt,
  level 1/.style={sibling distance=32mm},
  level 2/.style={sibling distance=16mm},
  level 3/.style={sibling distance=10mm},
  level 4/.style={sibling distance=8mm},
  level 5/.style={sibling distance=4mm}]

  \node[dot,label=above:$\color{blue} r(T^{(4)}_1)$] {s}
    child {node[dot,label=above:$\color{green!70!black} r(T^{(3)}_1)$] {s} edge from parent [draw=blue]
        child {node[dot] {} edge from parent [draw=green!70!black] 
            child {node[dot,label=above:$\color{blue} r(T^{(4)}_2)$,label=below:$\color{red} r(T^{(2)}_3)$] {s} edge from parent [draw=green!70!black]
                child {node[dot] {} edge from parent [draw=magenta]}
                child {node[dot] {} edge from parent [draw=blue]
                    child {node[dot] {} edge from parent [draw=blue]}
                    child {node[dot] {} edge from parent [draw=blue]}}
                child {node[dot,label=above:$\color{green!70!black} r(T^{(3)}_2)$] {s} edge from parent [draw=blue]
                    child {node[dot] {} edge from parent [draw=blue]}
                    child {node[dot] {} edge from parent [draw=green!70!black]
                        child {node[dot] {} edge from parent [draw=green!70!black]}
                        child {node[dot] {} edge from parent [draw=green!70!black]}}}}}}
    child {node[dot,label=above:$\color{red} r(T^{(2)}_2)$] {s} edge from parent [draw=blue]
        child {node[dot] {} edge from parent [draw=red]
            child {node[dot,label=above:$\color{blue} r(T^{(4)}_3)$] {s} edge from parent [draw=red]
                child {node[dot] {} edge from parent [draw=blue]}
                child {node[dot] {} edge from parent [draw=blue]}}
            child {node[dot,label=above:$\color{blue} r(T^{(4)}_4)$] {s} edge from parent [draw=red]
                child {node[dot,label=above:$\color{red} r(T^{(2)}_3)$] {s} edge from parent [draw=blue]
                    child {node[dot] {} edge from parent [draw=red]
                        child {node[dot] {} edge from parent [draw=red]}
                        child {node[dot] {} edge from parent [draw=red]}}}
                child {node[dot] {} edge from parent [draw=blue]
                    child {node[dot] {} edge from parent [draw=blue]
                        child {node[dot] {} edge from parent [draw=blue]}}}}}
        child {node[dot] {} edge from parent [draw=red]
            child {node[dot] {} edge from parent [draw=red]
                child {node[dot] {} edge from parent [draw=red]
                    child {node[dot] {} edge from parent [draw=red]}}
                child {node[dot] {} edge from parent [draw=red]
                    child {node[dot] {} edge from parent [draw=red]
                        child {node[dot] {} edge from parent [draw=red]}}
                    child {node[dot] {} edge from parent [draw=red]}}}}}
    ;
\end{tikzpicture} 
\caption{Decomposition of a decorated tree into three sequences of trees. Edges that are 2, 3, $4$-decorated are painted as red, green, and blue color correspondingly. Roots of each subtree is marked by larger node and annotated as $r(T^{(i)}_j)$, where $i$ is the decoration counts and $j$ is the order in its sequence.} \label{fig:3-color-union-tree}
\end{figure}

Lemma~\ref{lemma:incomm_edges_independence} discusses the case of $k=-1$. We do not expect the same property holds for $k\geq 0$ but we have an auxiliary result as in the following Corollary~\ref{corollary:incomm_edges_independence}.

\begin{corollary} \label{corollary:incomm_edges_independence}
    \textbf{(If $k\geq 0$, then $\dot{U}_P$ contains cycles.)}
    By definition, $\dot{U}_P$ can be decomposed into a tree $T_P$ with $v_P$ vertices other than the root and additional $k+1$ distinct edges $E_{k+1}=\{(u_j,v_j)\}_{i=j}^{k+1}$ fixed. By edges' decorations, $E_{k+1}=E^{(1)}\cup E^{(2)}\cup E^{(3)}\cup E^{(4)}$. Let $X^{(i)}$ still be the random variable representing the number of $i$-decorated in-community edges on $\dot{U}_N$. 
    If $k\geq 0$, let $X^{(i,a)}$ be the counts of $i$-decorated in-community edges on $T_N$ and $X^{(i,b)}$ be the counts of $i$-decorated in-community edges on $\{e_i\}_{i=1}^{k+1}$.
    From construction, we know $X^{(i)}=X^{(i,a)}+X^{(i,b)}$. Correspondingly, there are $A_i$ $i$-decorated edges on tree, $B_i=|E^{(i)}|$, and $d_i=A_i+B_i$, which are all fixed from $\dot{U}_P$.

    Then,
    $\PP(X^{(1,a)}=a_1, X^{(2,a)}=a_2, X^{(3,a)}=a_3, X^{(4,a)}=a_4, X^{(2,b)}=b_2, X^{(1,b)}=b_1, X^{(3,b)}=b_3, X^{(4,b)}=b_4 \condH) \leq (1+o(1)) \binom{A_1}{a_1} \binom{A_2}{a_2} \binom{A_3}{a_3} \binom{A_4}{a_4} \frac{1}{2^{A_1+A_2+A_3+A_4}}$.
\end{corollary}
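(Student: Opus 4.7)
\textbf{Proof plan for Corollary~\ref{corollary:incomm_edges_independence}.} The key observation is that the joint event on all eight random variables $(X^{(i,a)},X^{(i,b)})_{i=1}^{4}$ is more restrictive than the joint event on the four tree-only variables $(X^{(i,a)})_{i=1}^{4}$. Therefore, by monotonicity of probability,
\begin{equation*}
\PP\!\left(\bigcap_{i=1}^{4}\{X^{(i,a)}=a_i\}\cap \bigcap_{i=1}^{4}\{X^{(i,b)}=b_i\}\,\bigg|\,\calH\right) \leq \PP\!\left(\bigcap_{i=1}^{4}\{X^{(i,a)}=a_i\}\,\bigg|\,\calH\right).
\end{equation*}
Thus the entire task reduces to bounding the right-hand side, which only concerns the spanning tree $T_P$ of $\dot U_P$ with its inherited decorations.

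Now I would invoke Lemma~\ref{lemma:incomm_edges_independence} directly with the decorated tree $T_P$ playing the role of the (tree-case, $k=-1$) decorated union graph. The hypotheses are met: $T_P$ is connected, $|V(T_P)|=|V(\dot U_P)|=O(\log n)$, and the random embedding $\tau$ restricted to $V(T_P)=V(\dot U_P)$ is still a uniformly random injective map into $V(K_n)$ sending the root to $v$ (since we do not condition on the positions of the extra $k+1$ edges when we only track the tree-edge counts). By Lemma~\ref{lemma:incomm_edges_independence},
\begin{equation*}
\PP\!\left(\bigcap_{i=1}^{4}\{X^{(i,a)}=a_i\}\,\bigg|\,\calH\right) = (1+o(1))\binom{A_1}{a_1}\binom{A_2}{a_2}\binom{A_3}{a_3}\binom{A_4}{a_4}\frac{1}{2^{A_1+A_2+A_3+A_4}},
\end{equation*}
which, combined with the inequality above, yields the claim.

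The proof is essentially a one-line reduction, and I do not anticipate a real obstacle; the only point that deserves explicit mention is that the restriction of a uniform injective embedding of $V(\dot U_P)$ to the tree $T_P$ (which shares the same vertex set) is again a uniform injective embedding, so Lemma~\ref{lemma:incomm_edges_independence} applies verbatim. The trade-off in the statement is that, in exchange for tolerating cycles ($k\ge 0$), we lose control over the $B_i$ extra edges and only retain the binomial-style bound coming from the $A_i$ tree edges of each decoration class; this is precisely what makes the clean upper bound possible.
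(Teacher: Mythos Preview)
Your proposal is correct and matches the paper's own treatment, which simply states that the result ``follows straightforwardly from the proof of Lemma~\ref{lemma:incomm_edges_independence}.'' Your explicit monotonicity step (dropping the $X^{(i,b)}$ constraints) followed by a black-box application of Lemma~\ref{lemma:incomm_edges_independence} to the spanning tree $T_P$ (which shares the vertex set of $\dot U_P$, so the random embedding carries over verbatim) is exactly the intended reduction.
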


The proof follows straightforwardly from the proof of Lemma~\ref{lemma:incomm_edges_independence}.

\subsection{Inequalities concerning automorphisms}
In this subsection, we study how adding edges can change the number of rooted automorphisms of a rooted tree. 
With a slight abuse of notation, we denote by $\aut(T)$ the number of rooted automorphisms (i.e., automorphisms which fix the root) of a rooted tree $T$. 
We start by quantifying the effect of adding an additional edge.

\begin{lemma} \label{lemma:automorphism_bound_one_edge}
    Let $T=(V,E,r)$ be a rooted tree with root $r$. 
    Let $T' := \left(V\cup \{u\}, E \cup \{(u,v)\}, r \right)$ be the rooted tree obtained from $T$ by adding a vertex $u$ and the edge $(u,v)$, where $v \in V$. The following bounds hold:
    \begin{equation}\label{eq:one_edge_perturbation}
        \aut(T) \times \frac{1}{|V|-1} 
        \leq \aut(T') 
        \leq \aut(T) \times |V|,
    \end{equation}
    where $|V|$ is the number of vertices in $T$.
\end{lemma}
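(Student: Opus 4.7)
The plan is to apply the orbit--stabilizer theorem to the action of $\mathrm{Aut}(T')$ on the new leaf $u$. Writing $L = \mathrm{Orb}_{\mathrm{Aut}(T')}(u)$ and letting $\mathrm{Aut}(T')_u$ denote the stabilizer, my first step is to identify the stabilizer: any rooted automorphism $\phi$ of $T'$ fixing $u$ must also fix $u$'s unique neighbor $v$, so restricting $\phi$ to $V$ yields a rooted automorphism of $T$ that fixes $v$; conversely, every rooted automorphism of $T$ fixing $v$ extends uniquely to $T'$ by sending $u\mapsto u$. This gives a bijection between $\mathrm{Aut}(T')_u$ and the subgroup $\mathrm{Aut}(T)_v$ of rooted automorphisms of $T$ that fix $v$.

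Applying orbit--stabilizer in both trees then produces the identity
\begin{equation*}
    \aut(T') \;=\; |L|\cdot|\mathrm{Aut}(T')_u| \;=\; |L|\cdot|\mathrm{Aut}(T)_v| \;=\; \frac{|L|\cdot\aut(T)}{|\mathrm{Orb}_{\mathrm{Aut}(T)}(v)|}.
\end{equation*}
From here both inequalities in~\eqref{eq:one_edge_perturbation} reduce to elementary bounds on the two orbit sizes, using only the fact that the root $r$ is fixed by every rooted automorphism.

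For the upper bound I would observe that $L\subseteq V(T')\setminus\{r\}$ since $u\neq r$ and $r$ is fixed, hence $|L|\leq |V(T')|-1=|V|$; combined with the trivial bound $|\mathrm{Orb}_{\mathrm{Aut}(T)}(v)|\geq 1$, this gives $\aut(T')\leq |V|\cdot\aut(T)$. For the lower bound I would instead bound the denominator: if $v\neq r$ then $\mathrm{Orb}_{\mathrm{Aut}(T)}(v)\subseteq V\setminus\{r\}$ by the same reasoning, while if $v=r$ the orbit is the singleton $\{r\}$. In either case $|\mathrm{Orb}_{\mathrm{Aut}(T)}(v)|\leq |V|-1$ whenever $|V|\geq 2$, and combined with the trivial $|L|\geq 1$ this yields $\aut(T')\geq \aut(T)/(|V|-1)$.

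No part of this is technically hard; the only subtlety is that the stated lower bound is meaningful only when $|V|\geq 2$, so I would treat the case $v=r$ separately to verify that the orbit-size bound $|V|-1$ still applies even though the orbit of $r$ in $T$ has size $1$. The entire argument takes only a handful of lines once the stabilizer identification is in place.
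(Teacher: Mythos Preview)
Your proof is correct and takes a genuinely different route from the paper's. The paper argues structurally: it partitions $V$ into equivalence classes of siblings with isomorphic subtrees, uses the product formula $\aut(T)=\prod_i |V_i|!$, and then tracks, level by level along the path from $u$ to $r$, which of these classes can grow or shrink when the new leaf is attached. This requires a case analysis on the depth of $u$ and some care in bounding the resulting products of class sizes.

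Your orbit--stabilizer argument sidesteps all of that. The identification $\mathrm{Aut}(T')_u\cong\mathrm{Aut}(T)_v$ immediately gives the exact identity
\[
\aut(T')=\frac{|\mathrm{Orb}_{\mathrm{Aut}(T')}(u)|}{|\mathrm{Orb}_{\mathrm{Aut}(T)}(v)|}\cdot\aut(T),
\]
from which both inequalities follow from the trivial orbit bounds $1\le|\mathrm{Orb}_{\mathrm{Aut}(T')}(u)|\le |V|$ and $1\le|\mathrm{Orb}_{\mathrm{Aut}(T)}(v)|\le |V|-1$. This is shorter and conceptually cleaner; it also makes transparent exactly when the bounds are tight (orbit of $u$ has size $|V|$, resp.\ orbit of $v$ has size $|V|-1$), which the paper establishes separately via explicit star examples. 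The paper's approach, on the other hand, gives somewhat more structural information about \emph{which} orbits change and how, though that extra detail is not needed for the lemma as stated or for its application in Corollary~\ref{corollary:automorphism_inequality}. Your remark that the lower bound is only meaningful for $|V|\ge 2$ is accurate and worth keeping.
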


\begin{proof}
    Since $T$ is a rooted tree, there is a natural notion of a \emph{parent} vertex for every vertex other than the root. 
    Namely, the parent of a vertex $v \neq r$ is the neighbor of $v$ which is closest to the root $r$. 
    We also define the \emph{subtree rooted at} $v$ to be the subtree of $T$ induced by all vertices whose shortest path to $r$ goes through $v$, rooted at $v$. 

    We partition $V$ into equivalence classes according to the following rule: 
    $v_{1}$ and $v_{2}$ are in the same equivalence class if and only if they share the same parent 
    and the subtrees rooted at $v_{1}$ and $v_{2}$ are isomorphic. 
    We denote the resulting partition as $\{V_{i}\}_{i \in \mathcal{I}}$ 
    and refer to the equivalence classes as orbits. 
    Note that the root $r$ is always in a single-element orbit 
    and thus the size of each orbit is at most $|V|-1$.

    Observe that a permutation of the vertices is a rooted automorphism precisely when it maps each vertex to a vertex in its orbit. Thus we have that 
    \begin{align} \label{eq:aut_general_form}
        \aut(T)=\prod_{i\in \mathcal{I}} |V_i|! .
    \end{align}
    
    Now consider $T'$, which adds a new vertex $u$ to $T$ with an edge connecting $u$ to a vertex in~$T$. 
    By~\eqref{eq:aut_general_form}, in order to understand $\aut(T')$, we need to understand how the orbits and their sizes change due to the addition of the new vertex and edge. 
    The new vertex $u$ will either join an existing orbit or form its own one. 
    The parent of $u$ may change orbits, so might the parent of its parent, etc. 
    In other words, the vertices on the path from $u$ to the root $r$ might change their orbit, but vertices not on this path will not.
    In the following, we argue iteratively based on the depth of $u$ in $T'$ (i.e., its distance from the root). 
    When considering the upper bound, we will ignore the possible size decrease of orbits. When considering the lower bound, we will ignore the possible size increase of orbits.

    Assume first that the new vertex $u$ is attached to the root $r$. If there are no leaves except for $u$ connecting to $r$, then $u$ forms a new orbit $V_{|\mathcal{I}|+1}$ whose size is $1$, which does not change the number of rooted automorphisms. Otherwise, without loss of generality, assume that orbit $V_1$ is the set that contains all leaves connected to $r$. Then $u$ will join this orbit, so the set of orbits of $T'$ is given by 
    $V_{1} \cup \{u\}$ and $\{V_{i}\}_{i=2}^{|\mathcal{I}|}$. 
    Thus, we have that 
    $\aut(T') = \aut(T) (|V_{1}|+1)$, 
    so in particular 
    \[
    \aut(T) 
    \leq \aut(T')
    = \aut(T)(|V_1|+1) 
    \leq \aut(T) \times |V|.
    \]

    Now suppose that $u$ has depth $2$ in $T'$, 
    and let $v^{(1)}$ denote the parent of $u$ in $T'$. 
    Without loss of generality, assume that $v^{(1)} \in V_{1}$. 
    There are again two cases depending on whether or not there are leaves attached to $v^{(1)}$ in $T$. 
    Suppose first that there are not any leaves attached to $v^{(1)}$ in $T$. 
    Then $u$ has its own orbit (of size $1$) in $T'$.  
    The orbit of $v^{(1)}$ changes from $V_{1}$ in $T$ to either a new orbit or some existing orbit $V_{j_{1}}$ in $T'$ (for some $2 \leq j_{1} \leq |\mathcal{I}|$). 
    In the former case we have that $\aut(T') = \aut(T)/|V_{1}|$, while in the latter case 
    we have that 
    \[
    \aut(T') = \aut(T) \times \frac{|V_{j_{1}}|+1}{|V_{1}|}.
    \] 
    The desired inequalities thus follow since each orbit has size at most $|V|-1$. 
    Now suppose that there are leaves attached to $v^{(1)}$ in $T$ 
    and let $V_{2}$ denote the equivalence class of these vertices. 
    Then, $u$ joins the orbit $V_{2}$ in $T'$.  
    The orbit of $v^{(1)}$ again changes from $V_{1}$ in $T$ to either a new orbit or some existing orbit $V_{j_{1}}$ in $T'$ (for some $3 \leq j_{1} \leq |\mathcal{I}|$). 
    In the former case we have that 
    $\aut(T') = \aut(T) \times (|V_{2}|+1)/|V_{1}|$,
    while in the latter case we have that 
    \[
    \aut(T') = \aut(T) \times \frac{(|V_{2}|+1)(|V_{j_{1}}|+1)}{|V_{1}|}.
    \]
    The lower bound follows since $|V_{1}| \leq |V|-1$. 
    For the upper bound in the latter case, note that by the definition of orbits, in $T'$ there are $|V_{j_{1}}|+1$ nodes at depth $1$ who each have $|V_{2}|+1$ children that are leaves. This implies that $T'$ has at least $(|V_{2}|+1)(|V_{j_{1}}|+1)$ non-root vertices, so 
    $(|V_{2}|+1)(|V_{j_{1}}|+1) \leq (|V|+1)-1 = |V|$. 
 
    The general case when $u$ has depth $\ell$ in $T'$ is analogous. 
    Let $v^{(\ell-1)}$ denote the parent of $u$ in $T'$, let $v^{(\ell-2)}$ denote the parent of $v^{(\ell-1)}$, etc. 
    Without loss of generality, let $V_{i}$ denote the orbit of $v^{(i)}$ in $T$, for $i \in [\ell-1]$. 
    Suppose that $v^{(\ell-1)}$ has children that are leaves in $T$ (the other case, when it does not, is similar and simpler), and let $V_{\ell}$ denote the equivalence class of these vertices. 
    Then, using similar observations as above, we obtain the following upper and lower bounds on $\aut(T')$: 
    \[
    \aut(T) \times \frac{1}{\prod_{i=1}^{\ell-1} |V_{i}|} 
    \leq \aut(T') 
    \leq \aut(T) \times \left(|V_{\ell}| + 1 \right) \times \prod_{i=1}^{\ell-1} \left( |V_{j_{i}}| + 1 \right).
    \]
    Here, for every $i \in [\ell-1]$, 
    either $j_{i} \in [\ell+1, |\mathcal{I}|]$ 
    (which corresponds to $v^{(i)}$ changing from $V_i$ in $T$ to some existing orbit $V_{j_{i}}$ in $T'$) 
    or $|V_{j_{i}}| = 0$ 
    (which corresponds to $v^{(i)}$ changing from $V_i$ in $T$ to a new orbit in $T'$). 
    To conclude the lower bound, observe (using the definition of orbits) that $T$ contains a subtree consisting of the root and $\prod_{i=1}^{\ell-1} |V_{i}|$ additional vertices, 
    so $\prod_{i=1}^{\ell-1} |V_{i}| \leq |V| - 1$. 
    For the upper bound, observe similarly that 
    $T'$ contains a subtree consisting of the root and 
    $\left(|V_{\ell}| + 1 \right) \times \prod_{i=1}^{\ell-1} \left( |V_{j_{i}}| + 1 \right)$ additional vertices, 
    so $\left(|V_{\ell}| + 1 \right) \times \prod_{i=1}^{\ell-1} \left( |V_{j_{i}}| + 1 \right) \leq |V|$. 
\end{proof}

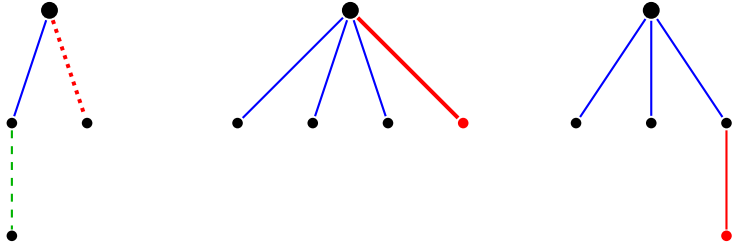
\begin{figure}
\centering
\begin{tikzpicture}[grow=down, level distance=1.5cm, line width=1.5pt,
  level 1/.style={sibling distance=10mm},
  level 2/.style={sibling distance=10mm},
  level 3/.style={sibling distance=8mm}]

  \node[dot] {s}
    child {node[dot] {} edge from parent [draw=blue, thick]
        child {node[dot] {} edge from parent [draw=green!70!black, dashed]}}
    child {node[dot] {} edge from parent [draw=red, dotted]};

  \begin{scope}[xshift=4cm]
  \node[dot] {s}
    child {node[dot] {} edge from parent [draw=blue, thick]}
    child {node[dot] {} edge from parent [draw=blue, thick]}
    child {node[dot] {} edge from parent [draw=blue, thick]}
    child {node[dot, color=red] {} edge from parent [draw=red]};
  \end{scope}

  \begin{scope}[xshift=8cm]
  \node[dot] {s}
    child {node[dot] {} edge from parent [draw=blue, thick]}
    child {node[dot] {} edge from parent [draw=blue, thick]}
    child {node[dot] {} edge from parent [draw=blue, thick] 
        child {node[dot, color=red] {} edge from parent [draw=red]}};
  \end{scope}

\end{tikzpicture} 
\caption{\textit{Left}: The simplest example of two rooted trees $T_{1}$ and $T_{2}$ satisfying $\sqrt{\aut(T_1)\aut(T_2)} > \aut(T_1\cup T_2)$; here $T_{1}$ is induced by the blue and green edges, $T_{2}$ is induced by the blue and red edges, and both trees are rooted at the vertex at the top. 
\textit{Middle}: Example that shows that the upper bound in Lemma~\ref{lemma:automorphism_bound_one_edge} is tight, where blue lines and black vertices represent $T$, and the red vertex is additionally added with an edge attaching it to the root. 
\textit{Right}: Example that shows that the lower bound in Lemma~\ref{lemma:automorphism_bound_one_edge} is tight.} \label{fig:-aut-ineq-counter}
\end{figure}

\textbf{Lemma~\ref{lemma:automorphism_bound_one_edge} is tight.} 
Let $T$ be a star rooted at its center (i.e., a tree where all vertices except the root are connected to the root). 
Then all permutations of the vertices that fix the root are rooted automorphisms, so $\aut(T) = (|V(T)|-1)!$.  
Now let $T'$ be a tree obtained from $T$ by adding an additional child to the root (see Figure~\ref{fig:-aut-ineq-counter}, tree in the middle). 
Then 
$\aut(T') = (|V(T)|)! = \aut(T) \times |V(T)|$. 
This shows that the upper bound in Lemma~\ref{lemma:automorphism_bound_one_edge} is tight. 
Now let $T''$ be a tree obtained from $T$ by adding a child to one of the leaves of $T$ (see Figure~\ref{fig:-aut-ineq-counter}, tree on the right). 
The rooted automorphisms of $T''$ are precisely the permutations of the vertices that permute the neighbors of the root which are leaves, 
so $\aut(T'') = (|V(T)| - 2)! = \aut(T) / (|V(T)| - 1)$. 
This shows that the lower bound in Lemma~\ref{lemma:automorphism_bound_one_edge} is tight.

From Lemma \ref{lemma:automorphism_bound_one_edge}, we can derive the following 
corollary 
by an iterative argument.

\begin{corollary}\label{corollary:automorphism_inequality}
    Let $T_1=(V_1,E_1,r)$ and $T_2=(V_2,E_2,r)$ be two trees rooted at the same vertex~$r$. 
    Define the union $T_{1} \cup T_{2} := ( V_{1} \cup V_{2}, E_{1} \cup E_{2}, r)$ by taking the union of the two vertex sets (both of which contain $r$) and the union of the two edge sets, with multiple edges ignored (i.e., if an edge appears in both $E_{1}$ and $E_{2}$, then it appears in $E_{1} \cup E_{2}$ exactly once). 
    Suppose that $T_{1} \cup T_{2}$ is also a tree. Let $d := | E_{1} \triangle E_{2}|$ denote the size of the symmetric difference of the edge sets. Then 
    \begin{equation} \label{eq:aut_ineq}
        \sqrt{\aut(T_1)\aut(T_2)} 
        \leq \aut(T_1\cup T_2) \times (2\max\{|V_1|, |V_2|\})^{d}.
    \end{equation}
\end{corollary}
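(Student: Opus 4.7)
The plan is to apply Lemma~\ref{lemma:automorphism_bound_one_edge} iteratively to build $T_1 \cup T_2$ from $T_1$ one edge at a time, and then symmetrically from $T_2$; multiplying the two resulting lower bounds on $\aut(T_1 \cup T_2)$ and taking a square root will yield the claim.

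First I would write $E_1 \triangle E_2 = (E_2 \setminus E_1) \sqcup (E_1 \setminus E_2)$ and set $d_1 := |E_2 \setminus E_1|$, $d_2 := |E_1 \setminus E_2|$, so that $d = d_1 + d_2$. Since $T_1 \cup T_2$ is a tree containing $T_1$ as a subgraph, contracting $T_1$ to a single super-vertex inside $T_1 \cup T_2$ leaves a tree whose vertex set is $\{*\} \cup (V_2 \setminus V_1)$ and whose edge set is exactly $E_2 \setminus E_1$. A BFS in this contracted tree from $*$ produces an ordering $e_1, \dots, e_{d_1}$ of $E_2 \setminus E_1$ such that, appending them to $T_1$ in this order, each intermediate graph $T^{(i)}$ is again a rooted tree (each $e_i$ contributes exactly one new vertex rather than closing a cycle).

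Next, applying Lemma~\ref{lemma:automorphism_bound_one_edge} at each step and using $|V(T^{(i)})| - 1 \le |V(T_1 \cup T_2)|$ yields
\[
\aut(T_1 \cup T_2) \;\geq\; \aut(T_1) \bigg/ \prod_{i=0}^{d_1 - 1}\bigl( |V(T^{(i)})| - 1 \bigr) \;\geq\; \aut(T_1) \big/ |V(T_1 \cup T_2)|^{d_1}.
\]
Running the symmetric construction starting from $T_2$ gives the analogous bound $\aut(T_1 \cup T_2) \geq \aut(T_2)/|V(T_1 \cup T_2)|^{d_2}$. Multiplying the two bounds and taking square roots,
\[
\aut(T_1 \cup T_2) \;\geq\; \frac{\sqrt{\aut(T_1)\aut(T_2)}}{|V(T_1 \cup T_2)|^{d/2}}.
\]
The stated inequality~\eqref{eq:aut_ineq} then follows from $|V(T_1 \cup T_2)| \leq |V_1| + |V_2| \leq 2\max\{|V_1|,|V_2|\}$ and $d/2 \leq d$. (In fact this gives the strictly sharper exponent $d/2$, which is harmless.)

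The main obstacle, to the extent that there is one, is justifying that the intermediate graphs $T^{(i)}$ can indeed be arranged to be trees, so that Lemma~\ref{lemma:automorphism_bound_one_edge} is applicable at every step; this is exactly what the BFS ordering above buys. Everything else is routine bookkeeping. A minor care point is the degenerate case $|V(T^{(i)})| = 1$ in the first step (when $T_1$ is just the root); here the bound from Lemma~\ref{lemma:automorphism_bound_one_edge} is vacuous but $\aut(T^{(1)}) = \aut(T^{(0)}) = 1$ trivially, and one can simply replace the denominator factor by $\max\{1, |V(T^{(i)})|-1\}$ throughout to avoid division by zero.
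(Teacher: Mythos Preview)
Your proposal is correct and follows essentially the same approach as the paper: both build $T_1\cup T_2$ from $T_i$ by iteratively appending the edges of $E_{3-i}\setminus E_i$ and applying the lower bound of Lemma~\ref{lemma:automorphism_bound_one_edge} at each step, then combine the two resulting inequalities via the geometric mean. Your bookkeeping is a touch cleaner and, as you note, yields the sharper exponent $d/2$; the paper instead bounds the longer of the two products by $(2\max\{|V_1|,|V_2|\})^{d}$ using $|\,|V_1|-|V_2|\,|\le d$.
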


\begin{proof}
    By the assumptions on $T_{1}$ and $T_{2}$, 
    the union $T_{1} \cup T_{2}$ has $(|V_1|+|V_2|+d)/2 - 1$ edges. 
    There are two natural ways that we can think of $T_{1} \cup T_{2}$. 
    First, we can start from $T_{1}$ and add $(|V_2|-|V_1|+d)/2$ new vertices---those that are in $V_{2}$ but not in $V_{1}$---one at a time, together with a new edge for each new added vertex, connecting it to an existing vertex, to obtain $T_{1} \cup T_{2}$. 
    With this perspective, applying the lower bound in~\eqref{eq:one_edge_perturbation} from Lemma~\ref{lemma:automorphism_bound_one_edge} across each of the $(|V_2|-|V_1|+d)/2$ steps, we obtain that 
    \begin{equation}\label{eq:aut_bound_tree_union}
    \aut(T_{1}) 
    \leq 
    \aut(T_{1} \cup T_{2}) \prod_{k = |V_{1}| - 1}^{(|V_{1}|+|V_{2}| + d)/2 - 2} k.
    \end{equation}
    On the other hand, we can equally well start from $T_{2}$ and add $(|V_{1}|-|V_{2}| + d)/2$ new edges and vertices to obtain~$T_{1} \cup T_{2}$. 
    Thus, analogously,~\eqref{eq:aut_bound_tree_union} also holds with $\aut(T_{1})$ on the left hand side replaced with~$\aut(T_{2})$, and $|V_1|-1$ on the right hand side (the minimum value of $k$ in the product) replaced with $|V_2|-1$.
    
    Combining these two inequalities, we obtain that 
    \[
    \sqrt{\aut(T_1)\aut(T_2)} 
    \leq \aut(T_1\cup T_2) \times \prod_{k = \min\{|V_{1}|, |V_{2}|\} - 1}^{(|V_{1}|+|V_{2}| + d)/2 - 2} k.
    \]
    Since $d \leq 2(\max\{|V_{1}, V_{2}|\}-1)$, it follows that 
    $(|V_{1}|+|V_{2}| + d)/2 - 2 \leq 2\max\{|V_{1}|, |V_{2}|\}$, 
    so all factors in the product above are at most $2\max\{|V_{1}|, |V_{2}|\}$. 
    Note also that $d \geq \max\{|V_{1}, V_{2}|\} - \min\{|V_{1}, V_{2}|\}$, 
    so the number of factors in the product above is 
    $(\max\{|V_{1}, V_{2}|\} - \min\{|V_{1}, V_{2}|\} + d)/2 \leq d$. 
    Putting these observations together we see that~\eqref{eq:aut_ineq} holds. 
\end{proof}

\begin{remark}
    The leftmost example in Figure~\ref{fig:-aut-ineq-counter} provides an example showing that the right hand side of~\eqref{eq:aut_ineq} would 
    no longer be an upper bound without the factor $(2\max\{|V_{1}|, |V_{2}|\})^{d}$.
\end{remark}

\subsection{Auxiliary result: bounds on the cross-moments}
\label{sec:cross_moments}

In this section, we summarize the upper bound on cross-moments. In Lemma~\ref{lemma:cross_moments_regimeI}, we work on the regime $sD_{+}(a,b) > 1$ where we can recover the community label exactly first. In Lemma~\ref{lemma:cross_moments_regimeII}, we work on the regime $sD_{+}(a,b) < 1$ where we we have a inverse polynomial fraction of vertices being labeled incorrectly.

\begin{lemma} \label{lemma:cross_moments_regimeI}
    Let $(G_1,G_2)\sim \CSBM(n, p, q, s), p=\frac{a\log n}{n}, q=\frac{b\log n}{n}$. Denote $\overline{A}, \overline{B}$ as the centralized adjacency matrices of $G_1$ and $G_2$ correspondingly. Let $e:=(u,v)\in [n]\times [n]$ be an arbitrary edge. Define the edge type indicator $c(e)=+$ if $e$ is an in-community edge and $c(e)=-$ if $e$ is a cross-community edge under $\bsigma_{*}$. The following holds for $0\leq \ell, m\leq 2, 2\leq \ell+m\leq 4$, and any $\bsigma_{*}$,
    \begin{equation*}
        \beta_{\ell,m}(e):=\sigma_{c(e)}^{-(\ell+m)} \EE[\overline{A}_{e}^{\ell} \overline{B}_{e}^{m} \condsig] \leq 
        \begin{cases}
            1 \quad &(\ell,m)= (2,0), (0,2) \\
            (1+\Theta(\frac{\log n}{n}))\rho \quad &(\ell,m)= (1,1) \\
            \frac{1}{\sqrt{s(p\wedge q)}} \quad &(\ell,m)= (2,1), (1,2) \\
            \frac{1}{s(p\wedge q)} \quad &(\ell,m) = (2,2).
        \end{cases}
    \end{equation*}
\end{lemma}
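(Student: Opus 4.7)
The plan is to compute each conditional cross-moment $\EE[\overline{A}_e^\ell \overline{B}_e^m \condsig]$ in closed form as a polynomial in $\mu := sp_e$, the correlation $\rho_{c(e)}$, and the variance $\sigma_{c(e)}^2$, then normalize by $\sigma_{c(e)}^{\ell+m}$ and bound using $\mu = o(1)$ and $s \leq 1$. I write $p_e = p$ if $c(e) = +$ and $p_e = q$ if $c(e) = -$, so that conditionally on $\bsigma_{*}$ the pair $(A_e, B_e)$ consists of two $\text{Bernoulli}(\mu)$ random variables with common variance $\sigma_{c(e)}^2 = \mu(1-\mu)$ and correlation $\rho_{c(e)}$. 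The single identity I will use throughout is the Bernoulli identity $(X - \mu)^2 = X(1-2\mu) + \mu^2$, which follows from $X^2 = X$; combined with $\EE[A_e B_e \condsig] = \rho_{c(e)} \sigma_{c(e)}^2 + \mu^2$, it reduces every cross-moment to a polynomial in $\mu$, $\rho_{c(e)}$, and $\sigma_{c(e)}^2$.

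The computation then splits into cases. The cases $(\ell,m) \in \{(2,0),(0,2)\}$ are immediate since $\EE[\overline{A}_e^2 \condsig] = \sigma_{c(e)}^2$, yielding $\beta_{2,0} = \beta_{0,2} = 1$. For $(1,1)$, the cross-moment equals $\Cov(A_e, B_e \condsig) = \rho_{c(e)} \sigma_{c(e)}^2$, so $\beta_{1,1}(e) = \rho_{c(e)} = (1 + \Theta(\log n/n)) \rho$ by the identities in Section~\ref{sec:notation}. For $(2,1)$ (and symmetrically $(1,2)$), one application of the Bernoulli identity gives
\begin{equation*}
    \EE[\overline{A}_e^2 \overline{B}_e \condsig]
    = (1-2\mu) \EE[A_e (B_e - \mu) \condsig] + \mu^2 \EE[B_e - \mu \condsig]
    = (1-2\mu) \rho_{c(e)} \sigma_{c(e)}^2,
\end{equation*}
so $\beta_{2,1}(e) = (1-2\mu) \rho_{c(e)} / \sigma_{c(e)}$. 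For $(2,2)$, applying the identity to both factors and collecting terms via the algebraic simplification $(1-2\mu)^2 \mu^2 + 2\mu^3 (1-2\mu) + \mu^4 = \mu^2 (1-\mu)^2 = \sigma_{c(e)}^4$ yields
\begin{equation*}
    \EE[\overline{A}_e^2 \overline{B}_e^2 \condsig] = \sigma_{c(e)}^2 \bigl[ (1-2\mu)^2 \rho_{c(e)} + \sigma_{c(e)}^2 \bigr],
\end{equation*}
so $\beta_{2,2}(e) = 1 + (1-2\mu)^2 \rho_{c(e)} / \sigma_{c(e)}^2$.

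The final step is to translate these exact formulas into the stated upper bounds using $\rho_{c(e)} = s(1-p_e)/(1-\mu)$ and $\sigma_{c(e)}^2 = \mu(1-\mu)$. For $(2,1)$ and $(1,2)$, direct substitution gives $\beta_{2,1}(e) = \sqrt{s/p_e} \cdot (1-p_e)(1-2\mu)/(1-\mu)^{3/2} \leq \sqrt{s/p_e} \leq 1/\sqrt{s(p\wedge q)}$, where the last inequality uses $s \leq 1$ and $p_e \geq p \wedge q$. For $(2,2)$, the ratio simplifies as $\rho_{c(e)}/\sigma_{c(e)}^2 = (1-p_e)/[p_e (1-\mu)^2]$, and combined with the elementary inequality $(1-2\mu)^2 \leq (1-\mu)^2$ (which is $(1-\mu)^2 - (1-2\mu)^2 = \mu(2-3\mu) \geq 0$, valid for $\mu \in [0, 2/3]$, and hence for all large $n$) one gets $\beta_{2,2}(e) \leq 1 + (1-p_e)/p_e = 1/p_e \leq 1/(s(p\wedge q))$. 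There is no deep obstacle in this lemma---it is essentially a calculation---but the only step that requires some care is the last chain for $(2,2)$, where the extra additive $1$ in $\beta_{2,2}$ must be absorbed into $1/p_e$; it fits exactly because $1 + (1-p_e)/p_e = 1/p_e$ telescopes cleanly, after which $s \leq 1$ delivers the claimed bound.
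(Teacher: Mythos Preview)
Your proof is correct and essentially takes the same approach as the paper---case-by-case computation of the normalized cross-moments, then taking the worst bound over the two edge types. The paper's own proof is much terser: it simply invokes Lemma~5 of \cite{mao2021counting} (which gives these same moment bounds for correlated Erd\H{o}s--R\'enyi edges) separately for $c(e)=+$ and $c(e)=-$, and then takes $p\wedge q$ and $\max\{\rho_+,\rho_-\}$ as the worst case. Your version is more self-contained, carrying out the Bernoulli-identity computations explicitly; in fact your intermediate bounds $\beta_{2,1}(e)\le\sqrt{s/p_e}$ and $\beta_{2,2}(e)\le 1/p_e$ are slightly sharper by a factor of $s$ than what the cited lemma records, before you relax them to the stated $1/\sqrt{s(p\wedge q)}$ and $1/(s(p\wedge q))$ via $s\le1$.
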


\begin{proof}
    Conditioning on a specific true community label vector that satisfies the balanced community event $\calH$, we can
    apply Lemma 5 from \cite{mao2021counting} for each case of $c(e)=+$ and $c(e)=-$. Then, the upper bound are different in terms of $p,q$ and $\rho_{+}, \rho_{-}$. When $(\ell,m)=(1,1)$, $\beta_{\ell,m}\leq \max\{|\rho_{+}|, |\rho_{-}|\}\leq (1+\Theta(\frac{\log n}{n}))\rho$.
    When $(\ell,m)=(2,1)$ or $(1,2)$, if $c(e)=+$, then we have $\beta_{\ell,m} \leq \frac{1}{\sqrt{sp}}$, else we have $\beta_{\ell,m} \leq \frac{1}{\sqrt{sq}}$. Therefore, $\beta_{\ell,m} \leq \frac{1}{\sqrt{s(p\wedge q)}}$. The same argument holds for the case when $(\ell, m)=(2,2)$.
\end{proof}

\begin{lemma} \label{lemma:cross_moments_regimeII}
     Let $(G_1,G_2)\sim \CSBM(n, p, q, s), p=\frac{a\log n}{n}, q=\frac{b\log n}{n}$. Denote $\overline{A}^{\whbsigma_A}, \overline{B}^{\whbsigma_B}$ as the approximately centralized adjacency matrices of $G_1$ and $G_2$ correspondingly. Let $e:=(u,v)\in [n]\times [n]$ be an arbitrary edge. 
     Define the edge type indicator $c(e)=+$ if $e$ is an in-community edge and $c(e)=-$ if $e$ is a cross-community edge under $\bsigma_{*}$. 
     Define $\Delta:=|sp-sq|$.
     The following holds for $0\leq \ell, m\leq 2, 1\leq \ell+m\leq 4$, and any $\bsigma_{*}, \whbsigma=(\whbsigma_A, \whbsigma_B)$,
    \begin{equation*}
        \eta_{\ell,m}(e):=\sigma_{c(e)}^{-(\ell+m)} \EE[\overline{A}_{e}^{\whbsigma_A, \ell} \overline{B}_{e}^{\whbsigma_B, m} \condsig, \whbsigma
        ] \leq 
        \begin{cases}
            \Theta(\sqrt{\Delta}) \quad &(\ell,m)= (0,1), (1,0) \\
            1+\Theta(\frac{\log n}{n}) \quad &(\ell,m)= (2,0), (0,2) \\
            \rho (1+\Theta(\frac{\log n}{n})) \quad &(\ell,m)= (1,1) \\
            \frac{1}{\sqrt{s(p\wedge q)}}(1+\Theta(\frac{\log n}{n})) \quad &(\ell,m)= (2,1), (1,2) \\
            \frac{1}{s(p\wedge q)}(1+\Theta(\frac{\log n}{n})) \quad &(\ell,m) = (2,2).
        \end{cases}
    \end{equation*}
\end{lemma}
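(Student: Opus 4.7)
The key observation is that $\overline{A}^{\whbsigma_A}_e$ can be written as a properly centered random variable plus a deterministic bias:
\[
\overline{A}^{\whbsigma_A}_e = \overline{A}_e + \delta_A(e), \qquad \delta_A(e) := \EE[A_e \mid \bsigma_{*}] - E_A(e),
\]
and similarly $\overline{B}^{\whbsigma_B}_e = \overline{B}_e + \delta_B(e)$. A direct case analysis over the four combinations of $(\bsigma_{*}(u)\bsigma_{*}(v), \whbsigma_A(u)\whbsigma_A(v)) \in \{+, -\}^2$ shows that $\delta_A(e) \in \{0, \pm\Delta\}$: it is zero precisely when $\whbsigma_A$ agrees with $\bsigma_{*}$ on the community type of the pair $e$, and equals $\pm(sp - sq)$ otherwise. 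Thus $|\delta_A(e)|, |\delta_B(e)| \le \Delta$ deterministically, and the only randomness in $\overline{A}^{\whbsigma_A}_e$ (conditional on $\bsigma_{*}, \whbsigma$) comes from $\overline{A}_e$, which does have mean zero.

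\textbf{Expansion and reduction to regime I.} I would apply the binomial expansion
\[
(\overline{A}^{\whbsigma_A}_e)^{\ell}(\overline{B}^{\whbsigma_B}_e)^{m} = \sum_{i=0}^{\ell}\sum_{j=0}^{m} \binom{\ell}{i}\binom{m}{j}\, \delta_A^{\ell-i}\delta_B^{m-j}\, \overline{A}_e^{\,i}\,\overline{B}_e^{\,j},
\]
and take expectations conditional on $(\bsigma_{*}, \whbsigma)$. Because $\EE[\overline{A}_e \mid \bsigma_{*}] = \EE[\overline{B}_e \mid \bsigma_{*}] = 0$, the terms with $(i,j) \in \{(1,0),(0,1)\}$ vanish; the $(0,0)$ term is just $\delta_A^{\ell}\delta_B^{m}$; and each remaining term with $i+j \ge 2$ is controlled by Lemma~\ref{lemma:cross_moments_regimeI}, which gives $|\EE[\overline{A}_e^{\,i}\overline{B}_e^{\,j}\mid \bsigma_{*}]| \le \sigma_{c(e)}^{i+j}\,\beta_{i,j}(e)$.

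\textbf{Normalization and bounding.} After multiplying by $\sigma_{c(e)}^{-(\ell+m)}$, each contributing term has the form
\[
|\delta_A|^{\ell-i}|\delta_B|^{m-j}\, \sigma_{c(e)}^{-(\ell+m-i-j)}\, \beta_{i,j}(e) \ \le\ \bigl(\Delta/\sigma_{c(e)}\bigr)^{\ell+m-i-j}\, \beta_{i,j}(e).
\]
The dominant ``bias--free'' term $(i,j) = (\ell,m)$ reproduces exactly the regime-I bound $\beta_{\ell,m}(e)$. All remaining terms pick up at least one factor of $\Delta/\sigma_{c(e)}$. In the logarithmic regime, $\Delta = s|p-q| = \Theta(\log n/n)$ and $\sigma_{c(e)}^2 \asymp s(p\wedge q) = \Theta(\log n/n)$, so $\Delta^{2}/\sigma_{c(e)}^{2} = \Theta(\log n/n)$ and $\Delta/\sigma_{c(e)} = \Theta(\sqrt{\log n / n}) = \Theta(\sqrt{\Delta})$. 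Consequently, for each $(\ell, m)$ with $\ell + m \ge 2$, all bias corrections carry at least one factor $(\Delta/\sigma_{c(e)})^{2} = \Theta(\log n / n)$, which is precisely absorbed into the stated $(1 + \Theta(\log n /n))$ multiplier on top of the regime-I bound.

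\textbf{The degenerate cases and main obstacle.} For $(\ell,m) \in \{(1,0),(0,1)\}$, the above machinery does not apply directly since Lemma~\ref{lemma:cross_moments_regimeI} only covers $\ell + m \ge 2$; here only the pure bias term $(i,j) = (0,0)$ survives, giving $\eta_{\ell,m} \le |\delta|/\sigma_{c(e)} \le \Delta/\sigma_{c(e)}$. The bound $\Delta/\sigma_{c(e)} \le \Theta(\sqrt{\Delta})$ then follows from $\Delta/\sigma_{c(e)}^{2} = \Theta(1)$, which uses that $|p-q|/(p \wedge q) = |a-b|/\min(a,b)$ is a fixed constant in this regime. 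The main technical bookkeeping challenge is to verify that the $(1+\Theta(\log n/n))$ factor in each bound is uniform over all $\bsigma_{*}$ and $\whbsigma$ and tracks correctly through the multiple cross--terms for $(\ell,m) \in \{(2,1),(1,2),(2,2)\}$, where one must also invoke $\rho_{\pm} = (1 + \Theta(\log n/n))\rho$ to collapse $\rho_{c(e)}$ into $\rho$ in the stated bound for $\eta_{1,1}$.
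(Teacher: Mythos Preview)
Your approach is correct and takes a genuinely cleaner route than the paper. The paper proves this lemma by brute-force case analysis: for each $(\ell,m)$ it separately computes the conditional moment when the centralization is correct on both sides, incorrect on one side, or incorrect on both, by expanding explicit polynomials in $p'=sp$ and $q'=sq$ (e.g., for $(\ell,m)=(2,1)$ it writes out $q's - p'q' - 2p'q's + 3p'^2q' - p'^3$ and compares it termwise to $qs^2(1-q)(1-2qs)$). Your decomposition $\overline{A}^{\whbsigma_A}_e = \overline{A}_e + \delta_A$ with $|\delta_A|\le\Delta$ unifies all of these cases at once: the binomial expansion isolates the regime-I moment $\beta_{\ell,m}$ as the leading term and packages every bias correction as a power of $\Delta/\sigma_{c(e)}$ times a lower-order $\beta_{i,j}$, with no need to track which side is mis-centered. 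This buys a uniform argument and makes the $(1+\Theta(\log n/n))$ multiplier structural rather than the outcome of several ad hoc polynomial subtractions.

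One phrasing point: it is not literally true that every correction term carries a factor $(\Delta/\sigma_{c(e)})^{2}$; for instance, at $(\ell,m)=(2,1)$ the term $(i,j)=(2,0)$ contributes only a single power $(\Delta/\sigma_{c(e)})\cdot\beta_{2,0}$. What actually makes the bookkeeping close is that the \emph{ratio} of each correction to the regime-I upper bound on $\beta_{\ell,m}$ is $O(\log n/n)$: since $i\le\ell$, $j\le m$ forces $i+j\le\ell+m-1$ for any nonvanishing correction with $i+j\ge 2$, and $\beta_{i,j}\le (s(p\wedge q))^{-(i+j-2)/2}$ while $\Delta/\sigma_{c(e)}\asymp\sqrt{s(p\wedge q)}\cdot\Theta(1)$, the missing power of $\sigma$ exactly compensates. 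You already flag this bookkeeping as the main technical step, so this is just a sharpening of the stated reason.
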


\begin{proof}
    Denote the two vertices on $e$ as $u$ and $v$. We denote $p':=sp$ and $q':=sq$.
    
    \textbf{(a) $\ell+m=1$.} Without loss of generality, we consider $\ell=1$ and $m=0$.
    \begin{equation*}
        \EE[\overline{A}_{e}^{\whbsigma_A, 1} \condsig, \whbsigma] = 
        \begin{cases}
            0, & \bsigma_{*}(u)\bsigma_{*}(v)=\whbsigma_A(u)\whbsigma_A(v), \\
            \pm \Delta, & \bsigma_{*}(u)\bsigma_{*}(v) \neq \whbsigma_A(u)\whbsigma_A(v).
        \end{cases}
    \end{equation*}
    When $\EE[\overline{A}_{e}^{\whbsigma_A, 1} \condsig, \whbsigma]$ is non-zero, $\sigma_{c(e)}^{-1}\EE[\overline{A}_{e}^{\whbsigma_A, 1} \condsig, \whbsigma] = \Theta(\sqrt{\Delta})$.
    
    \textbf{(b) $\ell+m=2$.}
    We first consider the case of $\ell=2$ or $m=2$.

    We explicitly calculate the expectation on the following two cases.
    If $\bsigma_{*}(u)\bsigma_{*}(v) = \whbsigma(u)\whbsigma(v)$, then $\EE[\overline{A}_{e}^{\whbsigma_A, 2} \condsig, \whbsigma]=p'(1-p')=\sigma^2_{c(e)}$.
    If $\bsigma_{*}(u)\bsigma_{*}(v) \neq \whbsigma(u)\whbsigma(v)$, then $\EE[\overline{A}_{e}^{\whbsigma_A, 2} \condsig, \whbsigma]=p'-2p'q'+q'^2=\sigma^2_{c(e)}+\Delta^2$.

    In summary, we have 
    \begin{equation*}
        \EE [\overline{A}_{e}^{\whbsigma_A, 2} \condsig, \whbsigma] = \EE [\overline{B}_{e}^{\whbsigma_B, 2} \condsig, \whbsigma] \leq \sigma_{c(e)}^2+\Delta^2 
        =
        \sigma_{c(e)}^2(1+\Theta(\frac{\log n}{n})).
    \end{equation*}
    
    We next consider the case of $\ell=m=1$. 
    If $\bsigma_{*}(u)\bsigma_{*}(v) = \whbsigma(u)\whbsigma(v)$, then $\EE [\overline{A}_{e}^{\whbsigma_A, 1} \overline{B}_{e}^{\whbsigma_B, 1} \condsig, \whbsigma] = \rho_{c(e)} \sigma_{c(e)}^2$.
    If $\bsigma_{*}(u)\bsigma_{*}(v) \neq \whbsigma(u)\whbsigma(v)$, then $\EE [\overline{A}_{e}^{\whbsigma_A, 1} \overline{B}_{e}^{\whbsigma_B, 1} \condsig, \whbsigma] = \rho_{c(e)} \sigma_{c(e)}^2 + \Delta^2$.

    In summary, we have
    \begin{equation*}
        \EE [\overline{A}_{e}^{\whbsigma_A, 1} \overline{B}_{e}^{\whbsigma_B, 1} \condsig, \whbsigma] \leq \rho_{c(e)} \sigma_{c(e)}^2 + \Delta^2 
        =
        \rho\sigma_{c(e)}^2 (1+\Theta(\frac{\log n}{n}))
    \end{equation*}

    \textbf{(c) $\ell+m=3$.}
    With loss of generality, we assume that this edge connects two vertices from different communities. Conditioned on correct centralization, we can compute
    \begin{equation*}
        \EE[\overline{A}_{e}^{\whbsigma_A, 2} \overline{B}_{e}^{\whbsigma_B, 1} \condsig, \whbsigma] 
        = \EE [\overline{A}_{e}^2 \overline{B}_{e} \condsig] 
        = qs^2(1-q)(1-2qs)
        = \sigma_{-}^3 \frac{\rho_{-} (1-2q')}{\sqrt{q'(1-q')}} \leq \sigma_{-}^3 \frac{1}{\sqrt{q'}},
    \end{equation*}
    which is the same as in Lemma~\ref{lemma:cross_moments_regimeI}.
    If the centralization is incorrect for both graphs, the expectation changes to the following:
    \begin{equation*}
        \EE[\overline{A}_{e}^{\whbsigma_A, 2} \overline{B}_{e}^{\whbsigma_B, 1} \condsig, \whbsigma] = q's - p'q' - 2p'q's + 3p'^2q' - p'^3.
    \end{equation*}
    From observation, we see that the dominant part in both cases are the same, which is $q's=\Theta(\frac{\log n}{n})$. Then, we can show that the difference between two quantities are minor:
    \begin{equation} \label{eq:moment_difference_l+m=3}
        \frac{\EE[\overline{A}_{e}^{\whbsigma_A, 2} \overline{B}_{e}^{\whbsigma_B, 1} \condsig, \whbsigma] - \EE [\overline{A}_{e}^2 \overline{B}_{e} \condsig]}{\sigma_{-}^3} = \frac{(p'-q')(p'^2-2p'q'-2q'^2+q'+2q's)}{q'(1-q')\sqrt{q'(1-q')}} = \Theta(\sqrt{q'}).
    \end{equation}
    The above~\eqref{eq:moment_difference_l+m=3} also holds for the case when $e$ is classified correctly in $A$ or $B$ only.
    
    By also considering the other case, that is, this edge is connects two vertices from the same community but wrongly centralized according to $\whbsigma$, we conclude with
    \begin{equation*}
        \EE[\overline{A}_{e}^{\whbsigma_A, 2} \overline{B}_{e}^{\whbsigma_B, 1} \condsig, \whbsigma] \leq \sigma_{c(e)}^3 \frac{1}{\sqrt{s(p\wedge q)}} (1+\Theta(s(p\wedge q))) = \sigma_{c(e)}^3 \frac{1}{\sqrt{s(p\wedge q)}} (1+\Theta(\frac{\log n}{n}))
    \end{equation*}
    
    \textbf{(d) $\ell+m=4$.}
    With loss of generality, we assume that this edge connects two vertices from different communities. For the correct centralization, the moment stays the same as in Lemma~\ref{lemma:cross_moments_regimeI},
    \begin{equation*}
        \EE[\overline{A}_{e}^2 \overline{B}_{e}^2 \condsig] = q's-4q'^2s+4q'^3s+2q'^3-3q'^4 = q'^2(1-q')^2+q'(1-q')\rho_{-} (1-2q')^2
        \leq \sigma_{-}^4 \frac{1}{q'}.
    \end{equation*}
    If the centralization is incorrect for both graphs, we have
    \begin{equation*}
        \EE[\overline{A}_{e}^{\whbsigma_A, 2} \overline{B}_{e}^{\whbsigma_B, 2} \condsig, \whbsigma] = q's - 4p'q's + 4p'^2q's + 2p'^2q' - 4p'^3q' + p'^4,
    \end{equation*}
    and we can show that the error has the following order
    \begin{equation} \label{eq:moment_difference_l+m=4}
        \EE[\overline{A}_{e}^{\whbsigma_A, 2} \overline{B}_{e}^{\whbsigma_B, 2} \condsig, \whbsigma] - \EE[\overline{A}_{e}^{\whbsigma_A, 2} \overline{B}_{e}^{\whbsigma_B, 2} \condsig] 
        \leq \Theta((\frac{\log n}{n})^2) 
        = \Theta(\sigma_{c(e)}^{4}).
    \end{equation}
    When the centralization is incorrect for only one graph, ~\eqref{eq:moment_difference_l+m=4} still holds.
    
    By also considering the other case, that is, this edge is connects two vertices from the same community but wrongly centralized according to $\whbsigma$, we conclude with
    \begin{equation*}
        \EE[\overline{A}_{e}^{\whbsigma_A, 2} \overline{B}_{e}^{\whbsigma_B, 2} \condsig, \whbsigma] 
        \leq \Theta(\sigma_{c(e)}^{4}) + \sigma_{c(e)}^4 \frac{1}{s(p\wedge q)} \leq \sigma_{c(e)}^4 \frac{1}{s(p\wedge q)} (1+\Theta(\frac{\log n}{n})). \qedhere
    \end{equation*}
\end{proof}

\section{Proof of Theorem~\ref{theorem:almost_exact_matching}: almost exact graph matching}
\label{sec:theorem3}

Fix constants $a \neq b>0, s\in [0,1]$.
Throughout the paper, we refer to $sD_{+}(a,b) > 1$ as Regime I and $sD_+(a,b) < 1$ as Regime II. We define $\mu:=|\mathcal{T}|n^N \rho^N \sigma_{\eff}^{2N}$, where $\sigma_{\eff}^2=\frac{\sigma_{+}^2+\sigma_{-}^2}{2}$. 
Depending on the different parameter regimes, the analysis is different. We first present the first and second moment bounds for both regimes and then prove Theorem~\ref{theorem:almost_exact_matching}.

\subsection{Regime I: Exact community recovery is possible for a single graph}

\begin{proposition}[Mean calculation, $sD_{+}(a,b)\geq 1$] \label{prop:mean_regimeI}
    Given $(G_1,G_2)\sim \CSBM(n,\frac{a\log n}{n}, \frac{b\log n}{n}, s)$, the similarity score satisfies,
    \begin{equation*}
        \EE[\Phi_{ij}\indH] = 
        \begin{cases}
            (1+o(1))\mu, &\text{if } \pi_{*}(i)=j,\\
            0, &\text{if } \pi_{*}(i)\neq j.
        \end{cases}
    \end{equation*}
\end{proposition}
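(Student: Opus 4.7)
The plan is to expand $\Phi_{ij}$ via~\eqref{eq:definition_similarity_score}, reduce the cross-moments by conditioning on $(\bsigma_{*},\pi_{*})$, and exploit the chandelier structure. Writing
\[
\EE[\Phi_{ij}\indH] = \sum_{H\in \mathcal{T}} \aut(H) \sum_{S_1(i)\cong H}\sum_{S_2(j)\cong H} \EE\bigl[\overline{A}_{S_1}\overline{B}_{S_2}\indH\bigr],
\]
I would condition on $(\bsigma_{*},\pi_{*})$. Given these, the centered variables $\{\overline{A}_e\}_e$ are mutually independent, as are $\{\overline{B}_f\}_f$, and $\overline{A}_e$ is correlated with $\overline{B}_f$ only when $f=\pi_{*}(e)$. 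Since each such variable is centered, the cross-moment $\EE[\overline{A}_{S_1}\overline{B}_{S_2}\condsig,\pi_{*}]$ is nonzero only if every edge of $S_1$ pairs with an edge of $S_2$ under $\pi_{*}$, forcing $S_1=\pi_{*}^{-1}(S_2)$ as edge subgraphs of $K_n$.

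\textbf{True pairs ($j=\pi_{*}(i)$).} Then $\pi_{*}^{-1}(S_2)$ is naturally rooted at $\pi_{*}^{-1}(j)=i$, so the constraint pairs rooted embeddings of $H$ at $i$ bijectively via $S_2=\pi_{*}(S)$. Using $\EE[\overline{A}_e\overline{B}_{\pi_{*}(e)}\condsig] = \rho_{c(e)}\sigma_{c(e)}^{2}$ and edge independence,
\[
\EE[\overline{A}_S \overline{B}_{\pi_{*}(S)}\condsig] = (\rho_{+}\sigma_{+}^{2})^{N_{+}(S,\bsigma_{*})}(\rho_{-}\sigma_{-}^{2})^{N_{-}(S,\bsigma_{*})},
\]
where $N_{\pm}(S,\bsigma_{*})$ count in- and cross-community edges of $S$. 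For a fixed rooted tree $S$, traversing $S$ in BFS order shows that each edge is in-community independently with probability exactly $1/2$ under the uniform prior on $\bsigma_{*}$, so $N_{+}(S,\bsigma_{*})\sim \Bin(N,1/2)$ exactly and $\EE[(\rho_{+}\sigma_{+}^{2})^{N_{+}}(\rho_{-}\sigma_{-}^{2})^{N_{-}}] = ((\rho_{+}\sigma_{+}^{2}+\rho_{-}\sigma_{-}^{2})/2)^{N}$. Since $p,q=o(1)$ gives $\rho_{\pm}=(1+O(\log n/n))s$, it follows that $(\rho_{+}\sigma_{+}^{2}+\rho_{-}\sigma_{-}^{2})/2=(1+o(1))\rho\sigma_{\eff}^{2}$; the restriction to $\mathcal{H}$ costs only $\PP(\mathcal{H}^{c})$ times the supremum of the integrand, which is negligible relative to the main term. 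Combined with $|\{S(i)\cong H\}| = (1+o(1))n^{N}/\aut(H)$, this yields $\EE[W_{i,H}(\overline{A})W_{\pi_{*}(i),H}(\overline{B})\indH] = (1+o(1))n^{N}\rho^{N}\sigma_{\eff}^{2N}/\aut(H)$, and summing over $H\in\mathcal{T}$---the $\aut(H)$ factor cancels the weight---gives $(1+o(1))\mu$.

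\textbf{Fake pairs ($j\neq \pi_{*}(i)$).} Now the constraint $S_1=\pi_{*}^{-1}(S_2)$ would force a single unrooted subgraph of $K_n$ to admit a chandelier-isomorphism to $H$ when rooted either at $i$ or at $\pi_{*}^{-1}(j)$. But in an $(L,M,K,R,D)$-chandelier the root is the unique vertex whose removal yields $L$ components, each an $M$-wire capped by a pairwise non-isomorphic $K$-bulb; hence the root is an invariant of the underlying unrooted graph. Since $i\neq \pi_{*}^{-1}(j)$, no embedding can meet both root requirements, so every cross-moment vanishes: $\EE[\Phi_{ij}\condsig,\pi_{*}]=0$, and taking outer expectation gives $\EE[\Phi_{ij}\indH]=0$.

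The main care will be in rigorously justifying the $(1+o(1))$ factors in the true-pair computation---the passage $(\rho_{+}\sigma_{+}^{2}+\rho_{-}\sigma_{-}^{2})/2 = (1+o(1))\rho\sigma_{\eff}^{2}$, the replacement $(n-1)(n-2)\cdots(n-N)=(1+o(1))n^{N}$ with $N=\Theta(\log n)$, and the correction from restricting to $\mathcal{H}$---each of which is routine given the logarithmic sparsity $p,q=\Theta(\log n/n)$ and $N=o(n)$. The conceptually essential ingredient is the \emph{unique-root} structural property of chandeliers, guaranteed by pairwise non-isomorphism of bulbs, which is precisely what makes the first moment of $\Phi_{ij}$ vanish on fake pairs.
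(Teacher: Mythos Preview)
Your proof is correct and follows the same overall strategy as the paper---expand $\Phi_{ij}$, condition on $(\bsigma_*,\pi_*)$, and use edge-wise independence to reduce to per-edge second moments---but differs in two places worth noting. For true pairs, the paper conditions on $\calH$ first and invokes its tree node assigning Lemma~\ref{lemma:choose_incomm_edges} to show the in-community edge count is \emph{approximately} $\Bin(N,1/2)$; your observation that for a \emph{fixed} embedded tree $S$ and i.i.d.\ uniform $\bsigma_*$ one has $N_+(S,\bsigma_*)\sim\Bin(N,1/2)$ \emph{exactly} (by BFS along $S$) is a more elementary route that bypasses that lemma, and the $\calH$ correction is indeed negligible since $\PP(\calH^c)\le e^{-(1-o(1))\sqrt{n}}\ll (\rho\sigma_{\eff}^2)^N=e^{-\Theta(\log^2 n)}$. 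For fake pairs, the paper asserts in one sentence that only $j=\pi_*(i)$ contributes; your explicit root-uniqueness argument (the unrooted chandelier determines its root, since for $L\ge 2$ and $M\ge 1$ removing any vertex other than the root leaves a component of size strictly larger than $M+K$) supplies the justification the paper glosses over. Both differences make your write-up somewhat cleaner than the paper's, without changing the substance.
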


\begin{proposition}[Variance calculation--True pairs, $sD_{+}(a,b) > 1$] \label{prop:var_true_regimeI}
    Suppose that $j=\pi_{*}(i)$, that $sD_{+}(a,b) > 1$, and that 
    \begin{align}\label{eq:conditions_true_pair_regimeI}
        \nonumber \frac{14L^2}{\rho^{2(K+M)}(|\mathcal{J}|)} \leq \frac{1}{2}, &\quad
        \frac{22R^4(2N+1)(11\beta)^{2(K+M)}}{n} \leq \frac{1}{2}, \\
        \frac{4 R^{\frac{4}{M}}(11\beta)^{\frac{4K+4M}{M}}}{ns(p\wedge q)} \leq \frac{1}{2}, &\quad
        \frac{1+2L^2}{ns(p\wedge q)} \leq \frac{1}{4}.
    \end{align}
    Then, for any $i \in [n]$, we have
    \begin{equation*}
        \frac{\Var[\Phi_{ij}\indH]}{\EE[\Phi_{i\pi_{*}(i)}\indH]^2} = O\left ( \frac{L^2}{\rho^2 ns(p\wedge q)}+\frac{L^2}{\rho^{2(K+M)} |\mathcal{J}|}\right ).
    \end{equation*}
\end{proposition}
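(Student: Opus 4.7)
\textbf{Proof proposal for Proposition~\ref{prop:var_true_regimeI}.}

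\textit{Setup and reduction to the ER-like analysis.} The plan is to expand
\[
\Var[\Phi_{ij}\indH]
= \sum_{H,I \in \mathcal{T}} \aut(H)\aut(I) \sum_{\substack{S_1(i),S_2(j)\cong H \\ T_1(i),T_2(j)\cong I}} \Bigl(\EE[\overline{A}_{S_1}\overline{B}_{S_2}\overline{A}_{T_1}\overline{B}_{T_2}\indH] - \EE[\overline{A}_{S_1}\overline{B}_{S_2}\indH]\EE[\overline{A}_{T_1}\overline{B}_{T_2}\indH]\Bigr),
\]
and then use the Regime~I hypothesis $sD_{+}(a,b)>1$ to argue that on an event of probability $1-o(1)$ the output $\whbsigma$ of Algorithm~\ref{alg:community_recovery_almost_exact} exactly matches $\bsigma_*$ on both graphs, so that $\overline{A}^{\whbsigma_A}=\overline{A}$ and $\overline{B}^{\whbsigma_B}=\overline{B}$. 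The contribution of the complementary event is trivially controlled via the uniform bound $|\overline{A}^{\whbsigma_A}_{ij}|,|\overline{B}^{\whbsigma_B}_{ij}|\le 1$ together with the crude deterministic bound on the total number of chandelier copies rooted at $(i,j)$, which is polynomial in $n$; the resulting contribution is super-polynomially small and negligible. This reduces the problem to analyzing the truly centered statistic, which is structurally the same as in Mao--Wu--Xu--Yu~\cite{mao2022chandelier}, but with parameters $\rho$, $\sigma_+,\sigma_-$ and $p\wedge q$ specific to the SBM.

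\textit{Classification by the decorated union graph.} Let $U=S_1\cup S_2\cup T_1 \cup T_2$ with the decoration $D_U$ as in Section~\ref{sec:preliminaries}. Because $\EE[\overline{A}_e\condsig]=0$ for every edge $e$ and different edges are conditionally independent, the conditional cross-moment $\EE[\overline{A}_{S_1}\overline{B}_{S_2}\overline{A}_{T_1}\overline{B}_{T_2}\condsig]$ vanishes unless every edge of $U$ whose decoration lies in $\{S_1,T_1\}$ appears at least twice in the $A$-side, and similarly for the $B$-side. Also, after the subtraction $\EE[\cdot]\EE[\cdot]$ only configurations in which the $(S_1,S_2)$-part and the $(T_1,T_2)$-part share at least one edge survive. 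The plan is therefore to partition the surviving configurations according to the combinatorial intersection pattern of the two chandeliers $H$ and $I$: which wires of $H$ are identified (possibly with reversal) with wires of $I$, and which bulbs of $H$ are identified with bulbs of $I$ --- on each of the $A$ and $B$ sides independently. This is the same case tree as in~\cite{mao2022chandelier} and gives a sum indexed by the number $\ell$ of shared wire/bulb branches on each side.

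\textit{Bounding each intersection class.} For each intersection class I would apply the cross-moment bound of Lemma~\ref{lemma:cross_moments_regimeI}: each ``doubly used'' edge contributes at most $\rho(1+\Theta(\log n/n))\sigma_{c(e)}^2$ when used once on each side, at most $\sigma_{c(e)}^2$ when used twice on a single side, and gains an additional factor of at most $(s(p\wedge q))^{-1/2}$ or $(s(p\wedge q))^{-1}$ for each coincident edge in the multi-use patterns. Averaging these local factors over the community types via Lemma~\ref{lemma:choose_incomm_edges_extended} replaces $\sigma_{c(e)}^{2}$ by $\sigma_{\mathrm{eff}}^{2}$. The number of embeddings of the union graph into $K_n$ is controlled by $n^{v(U)}$, and is compared against the product $n^{2v(H)}$ appearing in $\EE[\Phi_{i\pi_*(i)}\indH]^2$. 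Whenever an edge is shared we lose one factor of $n$ but gain at most a factor of $(s(p\wedge q))^{-1}$ (from Lemma~\ref{lemma:cross_moments_regimeI}), which is exactly the source of the $\frac{1}{ns(p\wedge q)}$ factor in the statement; whenever a full wire is shared we lose a factor of $\rho^{K+M}/|\mathcal{J}|$, giving the $\frac{1}{\rho^{2(K+M)}|\mathcal{J}|}$ factor. The automorphism factor $\aut(H)\aut(I)$ is absorbed via Corollary~\ref{corollary:automorphism_inequality}: the bound $\sqrt{\aut(H)\aut(I)}\le \aut(H\cup I)(2N)^{d}$ in terms of the symmetric-difference size $d$ cancels the orbit overcounting of branches/bulbs, at the cost of the polynomial factor $(2N)^{d}$ which is eaten by the geometric series provided the conditions in~\eqref{eq:conditions_true_pair_regimeI} are satisfied.

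\textit{Assembling and the main obstacle.} Summing the geometric series over the number $\ell$ of shared branches on each side (together with a binomial coefficient $\binom{L}{\ell}$ per side) and dividing by $\EE[\Phi_{i\pi_*(i)}\indH]^2 = (1+o(1))\mu^2$ from Proposition~\ref{prop:mean_regimeI}, the dominant terms are those with exactly one shared wire ($\ell=L-1$ branches are ``free''), giving the $L^2/(\rho^{2(K+M)}|\mathcal{J}|)$ term, and those with exactly one shared edge inside the bulb/wire structure, giving the $L^2/(\rho^2 ns(p\wedge q))$ term. The four inequalities in~\eqref{eq:conditions_true_pair_regimeI} are precisely what is needed to guarantee that the geometric series in the counts of shared wires, bulbs, singletons, and in the automorphism-inequality slack all converge with common ratio at most $1/2$. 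The main technical obstacle, as in~\cite{mao2022chandelier}, is the bookkeeping of partially-shared branches where a wire of $H$ coincides with a wire of $I$ but the attached bulbs differ (or vice versa): here one must combine the tight automorphism inequality of Corollary~\ref{corollary:automorphism_inequality} with the Otter counting bound $|\mathcal{J}|\ge \beta^{-K}$ to show that each such ``misaligned'' branch carries an extra factor of at most $R^{4/M}(11\beta)^{(4K+4M)/M}/(ns(p\wedge q))$, which is exactly the factor controlled by the third inequality in~\eqref{eq:conditions_true_pair_regimeI}.
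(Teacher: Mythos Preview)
Your high-level plan is the same as the paper's --- expand the variance, restrict to union graphs $U=S_1\cup S_2\cup T_1\cup T_2$ with no $1$-decorated edges, bound the cross-moments via Lemma~\ref{lemma:cross_moments_regimeI}, and sum geometric series --- but two aspects of your writeup don't match the actual argument and would not get you through the bookkeeping as stated.

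First, the organizing decomposition is not ``which wires/bulbs of $H$ are identified with wires/bulbs of $I$.'' The paper instead partitions the \emph{decorated union graph} $\dot U$ into three pieces $(\dot U_L,\dot U_M,\dot U_N)$: the branches that form cycles (excess part), the tree branches with at least $M$ edges that are $3$- or $4$-decorated, and the remaining tree branches. Each piece gets its own weight $w(\cdot)$ and its own combinatorial bound (Lemmas~\ref{lemma:var_true_regimeI_PL}--\ref{lemma:var_true_regimeI_PN}), and the four conditions in~\eqref{eq:conditions_true_pair_regimeI} are precisely the convergence conditions for the geometric series over $k$ (the $\dot U_N$-sum), over $e_M$ (the $\dot U_M$-sum), and over $e_L$ (the $\dot U_L$-sum). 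Your description does not separate the cyclic part from the tree part, and without that separation the second condition in~\eqref{eq:conditions_true_pair_regimeI} has no role.

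Second, Corollary~\ref{corollary:automorphism_inequality} is not used here. In Regime~I with $j=\pi_*(i)$ every edge of $\dot U_L$ is at least $2$-decorated, so each bulb in $\dot U_L$ is exactly $2$-decorated and the paper uses the direct inequality $w(\dot U_L)\le \aut(\dot U_L)$ (cf.\ the proof of Lemma~\ref{lemma:var_true_regimeI_PL}); for $\dot U_M$ and $\dot U_N$ the cruder bound $w(\cdot)\le R^{2|\mathcal N|}$ suffices. Corollary~\ref{corollary:automorphism_inequality} is only invoked in Regime~II, where $1$-decorated edges appear. Finally, one SBM-specific step you glossed over is the computation of $\EE[g_{\dot U}(\sigma_+,\sigma_-)\condH]$ via Corollary~\ref{corollary:incomm_edges_independence}: this produces the extra factors $\gamma_1^{d_3}\gamma_2^{d_4}\,2^{k+1}$ (with $\gamma_1\le\gamma_2<2$), which must be tracked through the geometric series and are why the paper writes $\gamma_2^2$ in front of the summands before invoking~\eqref{eq:conditions_true_pair_regimeI}.
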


\begin{proposition}[Variance calculation--Fake pairs, $sD_{+}(a,b) > 1$] \label{prop:var_fake_regimeI}
    Suppose that $j\neq \pi_{*}(i)$, that $sD_{+}(a,b) > 1$, and that 
    \begin{align}\label{eq:conditions_fake_pair_regimeI}
        4^{L+4} L^{2L\wedge(4K+2)} (11\beta)^{8(K+M)} R^4 (2N+1)^3 \leq \frac{n}{2}, \quad \frac{ 4 R^{\frac{2}{M}} (11\beta)^{\frac{4(K+M)}{M}}}{ns(p\wedge q)} \leq \frac{1}{2}.
    \end{align}
    Then, for any $i \in [n]$, we have
    \begin{equation*}
        \frac{\Var[\Phi_{ij}\indH]}{\EE[\Phi_{i\pi_{*}(i)}\indH]^2} = O(\frac{1}{|\mathcal{T}|\rho^{2N}}).
    \end{equation*}
\end{proposition}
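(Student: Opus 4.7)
The plan is to adapt the second-moment analysis of Mao, Wu, Xu, and Yu~\cite{mao2022chandelier} for correlated Erd\H{o}s--R\'enyi graphs to the correlated SBM setting. Since $\EE[\Phi_{ij}\indH]=0$ for $j\neq\pi_{*}(i)$ by Proposition~\ref{prop:mean_regimeI}, we have $\Var[\Phi_{ij}\indH]=\EE[\Phi_{ij}^{2}\indH]$. Expanding,
\[
\EE[\Phi_{ij}^{2}\indH]=\sum_{H,I\in\mathcal T}\aut(H)\aut(I)\sum_{\substack{S_{1}(i),S_{2}(j)\cong H\\ T_{1}(i),T_{2}(j)\cong I}}\EE\bigl[\overline{A}_{S_{1}}\overline{A}_{T_{1}}\overline{B}_{S_{2}}\overline{B}_{T_{2}}\indH\bigr].
\]
Conditioning on $\bsigma_{*}$ and using edge-wise independence, the inner expectation factors over the decorated union graph $\dot U$ whose vertex set is $V(S_{1})\cup V(T_{1})\cup V(\pi_{*}^{-1}(S_{2}))\cup V(\pi_{*}^{-1}(T_{2}))$ and whose decoration records, for each edge, which of $\{S_{1},T_{1},S_{2},T_{2}\}$ it belongs to. In Regime~I the true centering gives $\EE[\overline{A}_{e}\condsig]=0$ for every edge, so a nonzero contribution requires that every edge of $\dot U$ has total $A$-side plus $B$-side multiplicity at least two, with per-edge factors bounded by Lemma~\ref{lemma:cross_moments_regimeI}.

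Next, I would identify the leading contribution. It comes from the diagonal configurations $H=I$, $S_{1}=T_{1}$, $S_{2}=T_{2}$, with $V(S_{1})\cap V(\pi_{*}^{-1}(S_{2}))=\varnothing$ (which is typical since $i\neq\pi_{*}^{-1}(j)$ and $N=o(\sqrt n)$). Each edge is then either $\{S_{1},T_{1}\}$- or $\{S_{2},T_{2}\}$-decorated and contributes at most $\sigma_{c(e)}^{2}\cdot\beta_{2,0}\leq\sigma_{c(e)}^{2}$; Lemma~\ref{lemma:incomm_edges_independence} averages the resulting $\prod_{e}\sigma_{c(e)}^{2}$ over $\bsigma_{*}\mid\calH$ to yield $(1+o(1))\sigma_{\eff}^{4N}$. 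Combined with the standard count $(1+o(1))n^{N}/\aut(H)$ of embeddings of $H$ rooted at each of $i$ and $j$, the diagonal contribution evaluates to
\[
(1+o(1))\sum_{H\in\mathcal T}\aut(H)^{2}\cdot\frac{n^{2N}}{\aut(H)^{2}}\cdot\sigma_{\eff}^{4N}=(1+o(1))|\mathcal T|\,n^{2N}\sigma_{\eff}^{4N}.
\]
Dividing by $(\EE[\Phi_{i\pi_{*}(i)}\indH])^{2}=(1+o(1))\mu^{2}=(1+o(1))|\mathcal T|^{2}n^{2N}\rho^{2N}\sigma_{\eff}^{4N}$ yields exactly the target bound $O(1/(|\mathcal T|\rho^{2N}))$.

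The main technical obstacle is to show that all non-diagonal decorated union graphs sum to at most the same order. Following~\cite{mao2022chandelier}, I would partition $\dot U$ by (a) the excess $k=e(\dot U)-v(\dot U)+1$, (b) the sizes of $E(S_{1})\triangle E(T_{1})$ and $E(S_{2})\triangle E(T_{2})$, and (c) the multiset of decoration types among the edges. For each class, the number of realizations $(H,I,S_{1},T_{1},S_{2},T_{2})$ is bounded by combining Corollary~\ref{corollary:automorphism_inequality} (which trades $\sqrt{\aut(H)\aut(I)}$ for $\aut(\dot U_{A})\aut(\dot U_{B})$ at a cost of $(2(2N+1))^{d}$ per symmetric-difference edge), the family bounds $|\mathcal T|\leq\binom{|\mathcal J|}{L}$ and $|\mathcal J|\leq\beta^{K}$, and the trivial $n^{v(\dot U)}$ embedding count. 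The per-edge moments from Lemma~\ref{lemma:cross_moments_regimeI} contribute $\rho$ per $(1,1)$-edge, $1/\sqrt{s(p\wedge q)}$ per $(2,1)$- or $(1,2)$-edge, and $1/(s(p\wedge q))$ per $(2,2)$-edge; every deviation from the diagonal scenario is paid for by at least one factor of $1/n$ (lost embedding), $\rho<1$ (from a $(1,1)$-edge replacing a $(2,0)$-edge), or $\sqrt{s(p\wedge q)}\ll 1$ (higher multiplicity). The two conditions in~\eqref{eq:conditions_fake_pair_regimeI} are calibrated precisely so that these savings dominate the combinatorial enumeration factors $4^{L}L^{2L\wedge(4K+2)}$, $R^{4}$, $(11\beta)^{8(K+M)}$, and $(2N+1)^{3}$, giving a convergent geometric-type sum across classes. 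The SBM-specific averaging over $\bsigma_{*}$ in the non-tree case is handled by Corollary~\ref{corollary:incomm_edges_independence}, while the degree and automorphism caps $D,R$ do not affect the leading behavior by Lemmas~\ref{lemma:unlabeled_rooted_deg} and~\ref{lemma:unlabeled_rooted_aut}. Summing yields the total bound $O(|\mathcal T|n^{2N}\sigma_{\eff}^{4N})$, and the proposition follows.
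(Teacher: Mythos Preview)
Your high-level strategy matches the paper's: isolate the diagonal term $H=I$, $S_1=T_1$, $S_2=T_2$, $V(S_1)\cap V(\pi_*^{-1}(S_2))=\{i,j\}$ (the $k=-2$ case), which gives exactly $(1+o(1))|\mathcal T|n^{2N}\sigma_{\eff}^{4N}$ and hence the target $O(1/(|\mathcal T|\rho^{2N}))$ after dividing by $\mu^2$. This part is correct and identical to the paper.

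The gap is in your treatment of the off-diagonal terms. You propose to partition by excess $k$, by $|E(S_1)\triangle E(T_1)|$, $|E(S_2)\triangle E(T_2)|$, and by decoration type, then appeal to Corollary~\ref{corollary:automorphism_inequality} to control $\sqrt{\aut(H)\aut(I)}$. This is not how the paper (or~\cite{mao2022chandelier}) organizes the sum, and your sketch does not obviously converge. The paper instead decomposes the decorated union graph at the \emph{branch level} into three parts $\dot U_L,\dot U_M,\dot U_N$: tree branches rooted at $i$ or $j$ with fewer than $M$ highly-decorated edges, tree branches with at least $M$ such edges, and the non-tree remainder (see~\eqref{eq:definition_UN_fakepair}--\eqref{eq:definition_UL_UM_fakepair}). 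Correspondingly, the weight $\aut(H)\aut(I)$ factors as $w(\dot U_L)w(\dot U_M)w(\dot U_N)$, and each factor is bounded separately (Lemmas~\ref{lemma:var_fake_regimeI_PL}--\ref{lemma:var_fake_regimeI_PN}). The crucial point is the $P_L$ bound: because in Regime~I every edge of $\dot U_L$ is exactly $2$-decorated, one has $w(\dot U_L)\le\aut(\dot U_L)$ \emph{without} any loss factor from Corollary~\ref{corollary:automorphism_inequality}, and the unlabeled count $|\wt{\mathcal U}_L(v_L)|$ carries a single factor of $|\mathcal T|$ (not $|\mathcal T|^2$). This is what produces the extra $1/|\mathcal T|$ in the final answer; your symmetric-difference bookkeeping does not obviously recover it. Similarly, the $P_M$ and $P_N$ bounds use the branch-count inequalities $|\mathcal N_M(i)|+|\mathcal N_M(j)|\le 2e_M/M$ and $|\mathcal N_N(i)|+|\mathcal N_N(j)|\le 4(k+2)$ to cap the number of uncontrolled bulbs by $R^{2e_M/M}$ and $R^{4(k+2)}$ respectively---these are what make the two conditions in~\eqref{eq:conditions_fake_pair_regimeI} exactly the right calibration. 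Your appeal to Corollary~\ref{corollary:automorphism_inequality} is misplaced here: that inequality is needed only in Regime~II (Claims~\ref{claim:weight_aut_ineq_true_pair} and~\ref{claim:weight_aut_ineq_fake_pair}) to handle $1$-decorated edges, which do not occur in Regime~I.
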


\subsection{Regime II: Exact community recovery is impossible for a single graph}

Before getting into the details of proof, let us first give more intuition on the approximately centralized adjacency matrix.

In this regime, for each element in the adjacency matrix, there are four cases: (1) Correct centralization for in-community edges; (2) Incorrect centralization for in-community edges; (3) Correct centralization for cross-community edges; (4) Incorrect centralization for cross-community edges. Figure~\ref{fig:graphons_corr_incorr} gives an example of how the graphon for balanced 2-community SBM changes under community label misclassifications.

\begin{figure}[htbp]
\centering
\subfigure[Regime I: exact recovery is possible.]{
    \includegraphics[width=6cm]{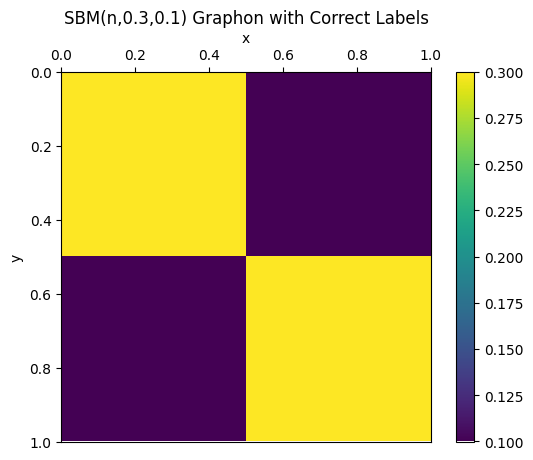}
}
\subfigure[Regime II: exact recovery is impossible.]{
    \includegraphics[width=6cm]{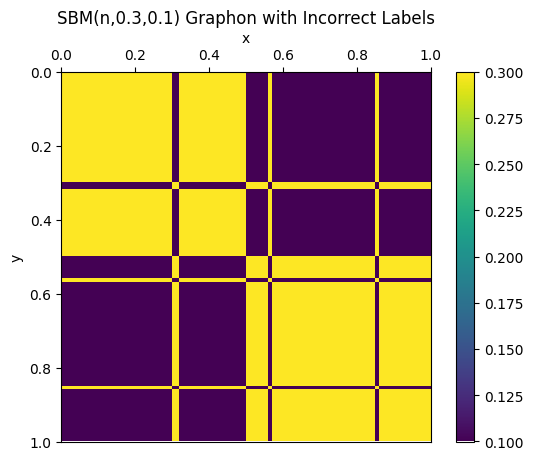}
}

\caption{Illustration on the impact of incorrect community labels on adjacency matrix centralization with $G \sim \SBM(n, 0.3, 0.1)$. Centralized adjacency matrix $\overline{A}^{\whbsigma_A}=A-\EE[A]$, where matrix $\EE[A]$ takes the value of discretized graphon as above.}
\label{fig:graphons_corr_incorr}
\end{figure}

\begin{proposition}[Mean calculation, $sD_{+}(a,b) < 1$] \label{prop:mean_regimeII}
    Let $(G_1,G_2)\sim \CSBM(n,\frac{a\log n}{n}, \frac{b\log n}{n}, s)$.
    Given $(K,L,M,R,D)$-Chandelier class $\mathcal{T}$. 
    Assume that $D=o(\frac{\log n}{\log \log n})$.
    
    For all $i\in [n]$ with $j=\pi_{*}(i)$, we have
    \begin{equation*}
        \EE[\Phi_{i\pi_{*}(i)}^{\whbsigma} \indH] =
        (1+o(1))\mu.
    \end{equation*}
    
    For all $i\in [n]$ with $j\neq \pi_{*}(i)$, we have 
    \begin{equation*}
        \EE[\Phi_{ij}^{\whbsigma} \indH]=o(\mu).
    \end{equation*}
\end{proposition}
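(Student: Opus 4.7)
The plan is to expand
\[
\EE\bigl[\Phi_{ij}^{\whbsigma}\indH\bigr] \;=\; \sum_{H\in\mathcal{T}} \aut(H) \sum_{S_1(i)\cong H} \sum_{S_2(j)\cong H} \EE\bigl[\overline{A}_{S_1}^{\whbsigma_A}\overline{B}_{S_2}^{\whbsigma_B}\indH\bigr],
\]
and work conditionally on $(\bsigma_{*},\whbsigma)$. Since the edges of $A$ are then independent and $B_{e}$ is correlated only with $A_{\pi_{*}^{-1}(e)}$, the inner expectation factorizes as a product over edge orbits: one factor $\EE[\overline{A}_e^{\whbsigma_A}\overline{B}_{\pi_{*}(e)}^{\whbsigma_B}\mid\bsigma_{*},\whbsigma]$ per \emph{matched} edge $e\in E(S_1)\cap\pi_{*}^{-1}(E(S_2))$, and one factor $\EE[\overline{A}_e^{\whbsigma_A}\mid\bsigma_{*},\whbsigma]$ (or $\EE[\overline{B}_{e'}^{\whbsigma_B}\mid\bsigma_{*},\whbsigma]$) per \emph{isolated} edge in $S_1$ (or $S_2$). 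By Lemma~\ref{lemma:cross_moments_regimeII}, an isolated-edge factor vanishes unless $\whbsigma$ \emph{miscentralizes} the edge, i.e.\ $\whbsigma(u)\whbsigma(v)\neq\bsigma_{*}(u)\bsigma_{*}(v)$ at its endpoints, in which case it has absolute value $\Delta=|sp-sq|$. This is the structural observation that organizes everything that follows.

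For the true-pair case $j=\pi_{*}(i)$, the dominant contribution comes from the ``fully matched'' configurations $S_2=\pi_{*}(S_1)$, which exist because $\pi_{*}$ identifies the two roots. Each of the $N$ matched edges then contributes $(1+\Theta(\log n/n))\rho\sigma_{c(e)}^2$ by Lemma~\ref{lemma:cross_moments_regimeII}, and Lemma~\ref{lemma:choose_incomm_edges} shows that under $\indH$ the number of in-community edges in the chandelier is $\Bin(N,\tfrac12\pm 2n^{-1/4})$-distributed, so averaging $\prod_e \sigma_{c(e)}^2$ yields $(1+o(1))\sigma_{\eff}^{2N}$. With $N=O(\log^2 n)$ (so $N\log n/n=o(1)$), the product of the $(1+\Theta(\log n/n))$ factors is also $1+o(1)$. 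The number of fully matched pairs per tree $H$ is $(1+o(1))n^N/\aut(H)$, so multiplying by $\aut(H)$ and summing over $H\in\mathcal{T}$ yields the main term $(1+o(1))\mu$.

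It remains to bound the contribution of configurations with $z=2(N-k)\ge 2$ isolated edges; in the fake-pair case these are \emph{all} configurations, since $S_2=\pi_{*}(S_1)$ is impossible when $j\neq\pi_{*}(i)$. For a fixed $(S_1,S_2)$ with $z$ isolated edges, the conditional expectation is nonzero only when $\whbsigma$ miscentralizes every isolated edge. Because chandeliers have maximum degree $D$, this forces at least $\lceil z/D\rceil$ vertices in a prescribed set to be misclassified, an event of probability at most $n^{-(z/D)(sD_{+}(a,b)-\eps|\log(a/b)|)+o(1)}$ by Lemma~\ref{lemma:bad_vertices_joint_distribution} and Remark~\ref{remark:delta_eps_notation}. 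Counting gives $O(n^{N+z/2}\cdot|\mathcal{T}|)$ such pairs (each isolated-edge-pair adds one free endpoint, since $S_1,S_2$ are trees), each contributing $O\bigl(\rho^{N-z/2}\sigma_{\eff}^{2(N-z/2)}\Delta^{z}\bigr)$ from the edge moments. Dividing by $\mu$ and substituting $\Delta=\Theta(\log n/n)$, $\sigma_{\eff}^2=\Theta(\log n/n)$, $\rho=\Theta(1)$, the total ratio is bounded by
\[
\sum_{\substack{z\ge 2\\ z\text{ even}}}\bigl(C\sqrt{\log n}\cdot n^{-\Omega(1/D)}\bigr)^{z}.
\]
The choice $D=C_5\log n/(\log\log n)^2$ in~\eqref{eq:LKMRD_simplifed_assumption} gives $n^{-1/D}=\exp(-\Omega((\log\log n)^2))$, which beats any polylogarithmic factor; hence the geometric series is $o(1)$, which is $o(\mu)$ both as the error term for the true pairs and as the full answer for the fake pairs.

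The main obstacle will be the combinatorial bookkeeping in the last step: one must parameterize pairs $(S_1,S_2)$ by the ``overlap type'' of the decorated union graph introduced in Section~\ref{sec:preliminaries}, uniformly bound the number of configurations yielding a given miscentralized vertex set, and apply Lemma~\ref{lemma:bad_vertices_joint_distribution} once per configuration type rather than naively union-bounding. A secondary technical point is that all the intermediate estimates are conditioned on $\calH$; one must verify via $\PP(\calH^c)\le e^{-(1-o(1))\sqrt n}$ that removing $\indH$ costs at most an additive $o(\mu)$.
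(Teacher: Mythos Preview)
Your approach is essentially the same as the paper's, and in one respect cleaner: the paper splits the non--fully-matched configurations into two separate cases (no common edges vs.\ some common edges) and uses a different argument for each, whereas you treat them uniformly via the single parameter $z$. Both routes arrive at the same geometric series controlled by $n^{-\Omega(1/D)}$ beating the polylogarithmic factors.

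One point needs correction. You claim that $z$ isolated edges force at least $\lceil z/D\rceil$ misclassified vertices, yielding probability $n^{-(z/D)D_+(a,b,s,\eps)}$. But the $z$ isolated edges split as $z/2$ in $S_1$ (requiring miscentralization by $\whbsigma_A$) and $z/2$ in $S_2$ (requiring miscentralization by $\whbsigma_B$), and Lemma~\ref{lemma:bad_vertices_joint_distribution} applies to a \emph{single} graph. Since $G_1$ and $G_2$ are correlated, you cannot simply multiply the two bounds to get the exponent $z/D$; the directly available bound is $n^{-(z/2D)D_+(a,b,s,\eps)}$, obtained by applying the lemma to (say) the $\geq (z/2)/D$ misclassified vertices forced in $\whbsigma_A$ alone. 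The paper uses exactly this weaker exponent (it writes ``$2M/2D$'' with $z=2M$). Fortunately this does not affect your conclusion: under $D=o(\log n/\log\log n)$ one still has $n^{D_+/(2D)}=(\log n)^{\omega(1)}\gg\sqrt{\log n}$, so the geometric sum is $o(1)$ either way. Just replace $z/D$ by $z/(2D)$ in the probability estimate and adjust the constant inside $\Omega(1/D)$ accordingly.
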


\begin{proposition}[Variance calculation--True pairs, $sD_{+}(a,b) < 1$] \label{prop:var_true_regimeII}
    Suppose that $j=\pi_{*}(i)$, $sD_{+}(a,b) < 1$, $L=o(n)$, and that for some $c>0$,
    \begin{align}\label{eq:conditions_true_pair_regimeII}
        \nonumber \frac{4R^{\frac{2}{M}} (15\beta)^{2\frac{K+M}{M}}}{ns(p\wedge q)} \leq \frac{1}{2},
        &\quad \frac{30R^4(2N+1)^2 (15\beta)^{4(K+M)}}{n} \leq \frac{1}{2}, \\
        \quad \frac{sD_+(a,b)}{D} \geq \frac{(\log \log n)^2}{\log n}, 
        &\quad 2NL(4LM)^{6L} \leq \log^c n.
    \end{align}    
    Then, for any $i \in [n]$, we have
    \begin{equation*}
        \frac{\Var[\Phi_{ij}^{\whbsigma} \indH]}{\EE[\Phi_{i\pi_{*}(i)}^{\whbsigma} \indH]^2} = O\left ( \frac{L^2}{\rho^2 ns(p\wedge q)}+\frac{L^2}{\rho^{2(K+M)} |\mathcal{J}|}\right ).
    \end{equation*}
\end{proposition}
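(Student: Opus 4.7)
The plan is to follow the same variance decomposition strategy used in Proposition~\ref{prop:var_true_regimeI}, but with a careful accounting of the additional contributions coming from edges of $U := S_1 \cup S_2 \cup T_1 \cup T_2$ that appear in only one of the four rooted subgraphs (``singleton edges'')---in Regime~I these automatically force the conditional cross-moment to vanish, whereas in Regime~II they contribute $\Theta(\Delta)$ whenever the centralization is incorrect. Concretely, I first expand
\begin{align*}
\Var[\Phi_{ij}^{\whbsigma}\indH]
&= \sum_{H,I\in \mathcal{T}} \aut(H)\aut(I) \sum_{\substack{S_1(i),T_1(i)\cong H,I}} \sum_{\substack{S_2(j),T_2(j)\cong H,I}} \biggl( \EE\bigl[\overline{A}^{\whbsigma_A}_{S_1} \overline{B}^{\whbsigma_B}_{S_2} \overline{A}^{\whbsigma_A}_{T_1} \overline{B}^{\whbsigma_B}_{T_2}\indH\bigr] \\
&\qquad\qquad - \EE\bigl[\overline{A}^{\whbsigma_A}_{S_1}\overline{B}^{\whbsigma_B}_{S_2}\indH\bigr] \EE\bigl[\overline{A}^{\whbsigma_A}_{T_1}\overline{B}^{\whbsigma_B}_{T_2}\indH\bigr] \biggr),
\end{align*}
condition on $(\bsigma_*, \whbsigma)$, and classify each term by the isomorphism type and decoration pattern of $\dot U$, so that Lemma~\ref{lemma:cross_moments_regimeII} is applied edge-by-edge with the decoration data.

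I would then split the resulting sum into a \emph{matched contribution}, in which every edge of $U$ is at least $2$-decorated, and an \emph{error contribution}, in which $U$ contains at least one singleton edge. The matched contribution is treated exactly as in the proof of Proposition~\ref{prop:var_true_regimeI}: the automorphism inequality (Corollary~\ref{corollary:automorphism_inequality}) reduces $\aut(H)\aut(I)$ to $\aut(\dot U)$ up to a factor absorbed by the $\beta^{K+M}$ bounds, the trivial $n^{v(U)-1}$ embedding count and the asymptotic independence of Lemma~\ref{lemma:incomm_edges_independence} produce the $\rho^{d_2+d_3+d_4}$ and $\sigma_{\mathrm{eff}}^{\,2\cdot\#\text{edges}}$ factors, and the gains per excess edge (factors of $1/(ns(p\wedge q))$) together with the hypotheses in~\eqref{eq:conditions_true_pair_regimeII} collapse the sum to $O(L^2/(\rho^2 n s(p\wedge q)) + L^2/(\rho^{2(K+M)}|\mathcal{J}|))$, matching the stated bound.

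For the error contribution I use the main observation highlighted in the challenge discussion: conditional on $\bsigma_*$, a configuration with $z$ singleton edges contributes nothing to $\EE[\overline{A}^{\whbsigma_A}_{S_1}\overline{B}^{\whbsigma_B}_{S_2}\overline{A}^{\whbsigma_A}_{T_1}\overline{B}^{\whbsigma_B}_{T_2}\mid \bsigma_*,\whbsigma]$ \emph{unless} every one of those $z$ edges is miscentered by $\whbsigma$, which in turn forces at least $\lceil z/D\rceil$ of the involved endpoints to lie in the bad set $I_\eps(G)$ (since each misclassified vertex, under the event $\calF$, affects at most $D$ of its incident edges within $U$). Lemma~\ref{lemma:bad_vertices_joint_distribution} then bounds the probability of any such fixed set of bad vertices by $n^{-(z/D)(sD_+(a,b)-\eps |\log(a/b)|)+o(1)}$. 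Under the hypothesis $sD_+(a,b)/D \geq (\log\log n)^2/\log n$, this factor is $n^{-(z/D)\cdot sD_+(a,b,s,\eps)} = \exp(-\Omega(z(\log\log n)^2))$, which is $o(1/\log^C n)$ for any constant $C$ per singleton edge. Combined with the elementary bound $|\EE[\overline{A}^{\whbsigma_A}_e \mid \bsigma_*, \whbsigma]| \leq \Delta = \Theta(\log n/n)$ for a miscentered singleton edge, every singleton edge costs a factor of $O(\Delta\cdot n^{-sD_+(a,b,s,\eps)/D}) = o((\log n)^{-C}/n)$ relative to the matched calculation.

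The main obstacle will be the combinatorial bookkeeping in the error contribution: one must bound, uniformly in $z$ and the choice of decoration pattern, the product of (a) the number of labeled embeddings of $U$ in $K_n$, which can now be as large as $n^{4N-1}$ because $|V(U)|$ may reach $4N$ rather than the $2N+1$ of the matched regime, (b) the $\Delta^z$ factor from the singleton-edge moments, and (c) the miscentering probability $n^{-z\cdot sD_+(a,b,s,\eps)/D}$. The strategy is to compare each error configuration to a ``doubled'' reference configuration obtained by duplicating each singleton edge into a $2$-decorated edge: this costs the reference configuration $\Theta(1/(ns(p\wedge q)))$ per doubled edge (since it must share an additional vertex), so the net exchange rate per singleton edge is $O(n\cdot\Delta\cdot n^{-sD_+(a,b,s,\eps)/D}\cdot \text{polylog}(n)) = o(1)$. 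The enumeration of all decoration patterns is controlled using $|\mathcal{T}|\leq \beta^{LK+LM}$ and the condition $2NL(4LM)^{6L}\leq \log^c n$, which guarantees that the total number of patterns is at most polylogarithmic and hence is dominated by the $\exp(-\Omega(z(\log\log n)^2))$ gain. Summing these bounds over $z\geq 1$ shows the error contribution is $o(\mu^2/\polylog n)$, and combining with the matched contribution yields the proposition.
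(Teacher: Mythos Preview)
Your high-level decomposition matches the paper exactly: split $\Var[\Phi_{ij}^{\whbsigma}\indH]$ into the cross-moment $V_1$ minus the product-of-means $V_2$, observe $V_2\ge 0$ and drop it, then split $V_1=V_{11}+V_{12}$ according to whether every edge of $\dot U$ is at least $2$-decorated. For $V_{11}$ you correctly reuse the Regime~I proof with Lemma~\ref{lemma:cross_moments_regimeII} in place of Lemma~\ref{lemma:cross_moments_regimeI}, and for $V_{12}$ the key probabilistic input---that $z$ singleton edges force at least $\lceil z/D\rceil$ misclassified vertices, with probability $\le n^{-zD_+(a,b,s,\eps)/D}$ via Lemma~\ref{lemma:bad_vertices_joint_distribution}---is exactly what the paper uses.

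Two places where your sketch diverges from, or underspecifies, the paper's actual argument. First, Corollary~\ref{corollary:automorphism_inequality} is \emph{not} used in the matched ($V_{11}$) analysis; there $w(\dot U_L)\le \aut(\dot U_L)$ holds directly because every bulb is exactly $2$-decorated. The automorphism inequality is needed precisely for the \emph{error} part $V_{12}$, where partially-overlapping bulbs give the weaker bound $w(\dot U_L)\le \aut(\dot U_L)(2K)^z$ (the paper's Claim~11.3). You should relocate your invocation of Corollary~\ref{corollary:automorphism_inequality} accordingly. Second, your ``doubled reference configuration'' heuristic captures the right exchange rate per singleton edge, but the paper does not argue this way: it reuses the $(\dot U_L,\dot U_M,\dot U_N)$ partition from Section~\ref{sec:prop2} and proves an explicit bound on $P_L(v_L,z,\ell)$ (Lemma~11.2) via a bulb-pairing argument that distinguishes ``effective non-isomorphic'' bulbs from their shadows and bounds the number of unlabeled decorated structures by $(4N)^{2z+1}L(4LM)^{6L}|\mathcal T|^2\rho^{2N-z/2}$. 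This is where the hypothesis $2NL(4LM)^{6L}\le \log^c n$ actually enters. Your heuristic would need to be replaced by (or shown equivalent to) this concrete enumeration to close the argument; in particular the $\rho^{-z/2}$ loss and the $(4N)^{2z}$ combinatorial factor are what get beaten by $n^{-zD_+(a,b,s,\eps)/D}$ under the third condition in~\eqref{eq:conditions_true_pair_regimeII}, yielding $V_{12}/\mu^2=o(n^{-\eps'})$ rather than merely $o(1)$.
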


\begin{proposition}[Variance calculation--Fake pairs, $sD_{+}(a,b) < 1$] \label{prop:var_fake_regimeII}
    Suppose that $j\neq \pi_{*}(i)$, $sD_{+}(a,b) < 1$, $N=\Theta(\log n)$, $D=o(\frac{\log n}{\log \log n})$, and that 
    \begin{align}\label{eq:conditions_fake_pair_regimeII}
        &\nonumber \frac{4R^{\frac{2}{M}}(15\beta)^{2\frac{K+M}{M}}}{ns(p\wedge q)} \leq \frac{1}{2},
        \quad \frac{sD_+(a,b)}{D} \geq \frac{(\log \log n)^2}{\log n}, \\
        &\left( \frac{(15\beta)^{2(K+M)} 30R^4 (4n+1)^2}{n}  \right) (2\beta)^{8(K+M)} (4LM)^{4L} (4L)! \leq \frac{1}{2},
    \end{align}    
    Then, for any $i \in [n]$, we have
    \begin{equation*}
        \frac{\Var[\Phi_{ij}^{\whbsigma} \indH]}{\EE[\Phi_{i\pi_{*}(i)}^{\whbsigma} \indH]} = O(\frac{1}{|\mathcal{T}| \rho^{2N}}).
    \end{equation*}
\end{proposition}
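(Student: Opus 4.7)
The plan is to bound $\Var[\Phi_{ij}^{\whbsigma}\indH] \leq \EE[(\Phi_{ij}^{\whbsigma})^2 \indH]$, which is valid since the subtracted $\EE[\Phi_{ij}^{\whbsigma}\indH]^2$ is $o(\mu^2)$ by Proposition~\ref{prop:mean_regimeII}. Expanding the square gives a quadruple sum over chandelier pairs $H, I \in \mathcal{T}$ and pairs of isomorphic copies $(S_1(i), S_2(j))\cong H$ and $(T_1(i), T_2(j))\cong I$, where $S_1, T_1$ are rooted at $i$ in $G_1$ and $S_2, T_2$ at $j$ in $G_2$. I will organize this sum by the decorated union graph $\dot U := (S_1\cup S_2\cup T_1\cup T_2, D_U)$, stratified by $(v(U), k(U), d_1, d_2, d_3, d_4)$, where $d_t$ counts the $t$-decorated edges. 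This follows the same overall framework as the proof of Proposition~\ref{prop:var_true_regimeII}, but accommodates the genuinely larger enumeration space when the two roots $i$ and $\pi_*^{-1}(j)$ are distinct.

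For each fixed $\dot U$, the conditional moment is handled in two layers. By the asymptotic independence of in- versus cross-community edge counts (Lemma~\ref{lemma:incomm_edges_independence}), the average over $\bsigma_*$ factorizes essentially into a product of per-edge moments. Applying the per-edge bounds $\eta_{\ell,m}$ from Lemma~\ref{lemma:cross_moments_regimeII} yields a factor of $\rho\sigma_{\eff}^{2}(1+o(1))$ per 2-decorated edge (the dominant case), at most $(s(p\wedge q))^{-1/2}$ or $(s(p\wedge q))^{-1}$ per 3- or 4-decorated edge, and crucially $\Theta(\sqrt{\Delta})$ per 1-decorated edge. This last factor is nonzero only when the edge is incorrectly centralized under $\whbsigma$; by Lemma~\ref{lemma:bad_vertices_joint_distribution}, the joint miscentralization of all $d_1$ such edges requires at least $\lceil d_1/D\rceil$ vertices to be misclassified, which occurs with probability at most $n^{-(d_1/D)(sD_+(a,b) - \eps|\log(a/b)|)}$. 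The hypothesis $sD_+(a,b)/D \geq (\log\log n)^2/\log n$ is invoked precisely to ensure this penalty suppresses any combinatorial growth coming from 1-decorated edges.

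The enumeration step counts placements of $\dot U$ in $K_n$ with roots fixed at $i$ and $j$, giving at most $n^{v(U)-2}$ choices. Unlike the true-pair setting, here $i\neq \pi_*^{-1}(j)$ forces $\dot U$ to carry two distinct root markers, so $v(U)$ may be as large as $4N+2$. The chandelier structure---$L$ branches, each an $M$-wire attached to a $K$-bulb of at most $R$ automorphisms and maximum degree at most $D$---together with the automorphism inequality of Corollary~\ref{corollary:automorphism_inequality} and the bound $|\mathcal{J}|\leq\beta^{K}$, is used to enumerate the distinct $\dot U$ at each excess $k(U)$. The factor $(4LM)^{4L}(4L)!$ appearing in~\eqref{eq:conditions_fake_pair_regimeII} absorbs the enumeration of how the $4L$ branches across the four chandelier copies can share wires and bulbs given two distinct roots; the factor $(2\beta)^{8(K+M)}$ accounts for the choices of $K$-bulbs and their interleavings. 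The dominant contribution arises from the configuration where $S_1$ pairs with $S_2$ and $T_1$ with $T_2$ via $\pi_*$, all edges are 2-decorated, and the $S$- and $T$-subgraphs are otherwise vertex-disjoint; evaluating this dominant term yields exactly the claimed $\mu^2/(|\mathcal{T}|\rho^{2N})$ order.

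The main obstacle is certifying that every deviation from the dominant topology is suppressed by the corresponding hypothesis factor in~\eqref{eq:conditions_fake_pair_regimeII}. Merging vertices across copies past the dominant configuration must cost at least $(ns(p\wedge q))^{-1}$ after shifting a 2-decorated edge to a 3- or 4-decorated edge; introducing a 1-decorated edge costs at least $\sqrt{\Delta}\cdot n^{-(sD_+(a,b)-\eps|\log(a/b)|)/D}$, which is $o(1)$ under the stated regime; and positive excess $k(U)\geq 0$ introduces cycles that are compensated by the $R$-based automorphism bound. The delicate part specific to fake pairs is that the combinatorial factor $(4LM)^{4L}(4L)!$ must be fully absorbed by the $(15\beta)^{2(K+M)}\cdot 30R^4(4n+1)^2/n$ factor supplied by the hypothesis, and this requires treating configurations where $(S_1\cup S_2)$ and $(T_1\cup T_2)$ fail to decouple cleanly along the two roots. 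The case-by-case accounting over all admissible $(d_1, d_2, d_3, d_4, k)$-strata, especially those with positive excess combined with nonzero $d_1$, is where the bulk of the technical work lies.
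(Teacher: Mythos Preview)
Your overall strategy aligns with the paper's: expand the second moment over decorated union graphs, apply the per-edge bounds of Lemma~\ref{lemma:cross_moments_regimeII}, and penalize $1$-decorated edges via Lemma~\ref{lemma:bad_vertices_joint_distribution}. But you have misidentified the dominant configuration, and this is not a cosmetic slip.

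For fake pairs, $S_1$ is rooted at $i$ in $G_1$-coordinates while $S_2$, pulled back by $\pi_*^{-1}$, is rooted at $\pi_*^{-1}(j)\neq i$; the configuration ``$S_1$ pairs with $S_2$ via $\pi_*$'' is therefore geometrically impossible since the roots differ. The genuine dominant term is the excess $k=-2$ case where $S_1=T_1$, $S_2=T_2$, $H=I$, and the two resulting trees (one at $i$, one at $\pi_*^{-1}(j)$) are vertex-disjoint. Every edge then lies in $K_{20}\cup K_{02}$, so $e(K_{11})=0$, no $\rho$ factor appears, and the contribution is $|\mathcal T|\,n^{2N}\sigma_{\eff}^{4N}=\mu^2/(|\mathcal T|\rho^{2N})$. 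The paper isolates this $k=-2$ case explicitly (Section~\ref{sec:prop3} for Regime~I; Lemma~\ref{lemma:V12_bound_fake_pair} for Regime~II). If you proceed believing a $K_{11}$-heavy configuration dominates, your suppression accounting for deviations will be anchored to the wrong baseline, and the $\rho^{-2N}$ in the target bound will appear to come from the wrong place.

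A second gap: the paper does not control the non-dominant strata by a raw stratification on $(d_1,\dots,d_4,k)$ alone. It uses the two-root partition $\dot U=\dot U_L\cup\dot U_M\cup\dot U_N$ of~\eqref{eq:definition_UN_fakepair}--\eqref{eq:definition_UL_UM_fakepair} and bounds each piece separately (Lemma~\ref{lemma:var_fake_regimeII_PL} for $P_L$; adaptations of Lemmas~\ref{lemma:var_fake_regimeI_PM}--\ref{lemma:var_fake_regimeI_PN} for $P_M,P_N$, with $15$ replacing $11$ to accommodate $1$-decorations). The factor $(4LM)^{4L}(4L)!$ in~\eqref{eq:conditions_fake_pair_regimeII} arises specifically in the $P_L$ analysis, from counting how ``invader'' branches rooted at $i$ can land in $\dot U_L(j)$ and vice versa (see Claim~\ref{claim:weight_aut_ineq_fake_pair} and the surrounding argument). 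Without this partition it is unclear how you would separate the $R^{4(k+2)}$ cycle cost from the $|\mathcal T|$ tree factor, or make the geometric sums over $k$, $v$, and the number $z$ of $1$-decorated edges converge under the stated hypotheses.
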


\subsection{Putting things together: proof of Theorem~\ref{theorem:almost_exact_matching}}

\begin{proof}[Proof of Proposition~\ref{theorem:almost_exact_matching}]
    Firstly, the following conditions imply~\eqref{eq:conditions_true_pair_regimeI},~\eqref{eq:conditions_fake_pair_regimeI},~\eqref{eq:conditions_true_pair_regimeII}, and~\eqref{eq:conditions_fake_pair_regimeII}:
    \begin{align}\label{eq:conditions_chandelier_parameters}
        \nonumber & L\leq \frac{c_1\log n}{\log \log n} \wedge c_6 \sqrt{ns(p\wedge q)}, 
        \quad \frac{c_2}{\log (ns(p\wedge q))} \leq \frac{M}{K} \leq \frac{\log \frac{\rho^2}{\alpha}}{2\log \frac{1}{\rho^2}},
        \quad KL \geq \frac{c_3\log n}{\log \frac{\rho^2}{\alpha}}, \\
        & K+M\leq c_4 \log n,
        \quad R = \exp(c_5 K), \quad D\leq c_7 \frac{\log n}{(\log \log n)^2}.
    \end{align}
    for some absolute constants $c_1, c_2, \dots,c_7 >0$.

    Furthermore, with growing $ns(p\wedge q)$, we have for $j=\pi_{*}(i)$, $\frac{\Var[\Phi_{ij}\indH]}{\EE[\Phi_{i\pi_{*}(i)} \indH]^2}=o(1)$ when $sD_+(a,b)>1$ and $\frac{\Var[\Phi_{ij}\indH]}{\EE[\Phi_{i\pi_{*}(i)} \indH]^2}=o(1)$ when $sD_+(a,b)<1$.
    For any $\eps>0$ there exists $\eps'>0$ such that $s^2 \geq \alpha + \eps \Leftrightarrow \rho^2 \geq \alpha + \eps'$. We have for $j\neq \pi_{*}(i)$, $\frac{\Var[\Phi_{ij}\indH]}{\EE[\Phi_{i\pi_{*}(i)} \indH]^2}=o(\frac{1}{n^2})$ when $sD_+(a,b)>1$ and $\frac{\Var[\Phi_{ij}\indH]}{\EE[\Phi_{i\pi_{*}(i)} \indH]^2}=o(\frac{1}{n^2})$ when $sD_+(a,b)<1$.
    
    \textbf{Next, we claim that almost exact recovery is achievable by counting chandeliers.}
    Let $\tau=c\mu$, for arbitrary $c \in (0,1)$, where $\mu = \EE[\Phi_{i\pi_{*}(i)}\indH]$. By Chebyshev's inequality, the probability that the similarity score of a fake pair ($j\neq \pi_*(i)$) of vertices exceeding $\tau$, is upper bounded as
    \begin{equation*}
        \PP( \Phi_{ij}\indH \geq \tau ) 
        \leq \PP\left( |\Phi_{ij}\indH- \EE[\Phi_{ij}\indH]| \geq c\EE[\Phi_{ij}\indH] \right) 
        \leq \frac{\Var[\Phi_{ij}\indH]}{c^2 \EE[\Phi_{ii}\indH]^2} = O\left( \frac{1}{|\mathcal{T}|\rho^{2N}}\right) = o(\frac{1}{n^2}).
    \end{equation*}
    
    Next, by the definition of $\calH$, we have
    \begin{equation*}
        \PP(\Phi_{ij} \geq \tau) \leq \PP(\Phi_{ij}\indH \leq c) + \PP(\calH^c) = o(\frac{1}{n^2}).
    \end{equation*}
    
    Applying union bound over all $i \neq j \in [n]$, we have $\PP\{ \exists i\neq j, \Phi_{ij} \geq \tau\}=o(1)$. Thus, for all possible pairs of vertices $(i,j) \in [n]\times [n]$, $\Phi_{ij}<\tau$ with high probability.

    We next study the probability that the similarity score of a true pair of vertices falling below the threshold $\tau$. We even consider a larger set containing $F:=\{i \in [n]: \Phi_{i\pi_*(i)} < \tau\}$,
    \begin{align*}
        \PP( |\Phi_{i\pi_{*}(i)}\indH-\mu| > (1-c)\mu ) 
        &\leq \frac{\Var[\Phi_{i\pi_*(i)}\indH]}{(1-c)^2 \EE[\Phi_{i\pi_{*}(i)}\indH]^2} \\
        &= O\left ( \frac{L^2}{\rho^2 ns(p\wedge q)}+\frac{L^2}{\rho^{2(K+M)} |\mathcal{J}|}\right ) =: \gamma = o(1).
    \end{align*}

    Therefore, $\EE[|F|] \leq n \gamma$.
    Denote $I\in [n]$ the vertex set this algorithm matches.
    For every vertex not in the set $F$, it is guaranteed to be matched up correctly, so
    \begin{equation*}
        \mathbb{E}[|I|]\geq n-\mathbb{E}[|F|]=n(1-\gamma).
    \end{equation*}

    By Markov's inequality,
    \begin{equation*}
        \mathbb{P}(|F|>=\sqrt{\gamma} n)
        \leq \frac{\gamma n}{\sqrt{\gamma} n} 
        = \sqrt{\gamma}=o(1).
    \end{equation*} 
    Therefore, Algorithm~\ref{alg:chandelier_scoring} matches $|I|\geq (1-\sqrt{\gamma})n$ vertices correctly with high probability.

    In Regime II, with Proposition \ref{prop:mean_regimeII}, Proposition \ref{prop:var_true_regimeII}, and Proposition \ref{prop:var_fake_regimeII}, the same argument follows. These two regimes together complete the proof.
\end{proof}

\section{Proof of Theorem~\ref{theorem:efficient_almost_exact_matching}: Efficient algorithm for almost exact graph matching}
\label{sec:theorem4}

\subsection{Algorithm--Color coding based similarity score estimation}

In this section, we re-state and modify the efficient graph matching algorithm for correlated Erd\H{o}s--R\'enyi graphs discussed in Section 5 of~\cite{mao2022chandelier}, and then analyze it for correlated SBMs.

Let $(G_1, G_2)$ be a pair of correlated SBMs with adjacency matrices $A$ and $B$.
Let $H$ be a rooted connected graph with $N+1$ vertices.
We now want to approximately count the signed subgraphs rooted at $i\in [n]$ on the centralized adjacency matrices.
In general, we do not have access to the centralized adjacency matrices $\overline{A}$ and $\overline{B}$, we would use community label estimates $\whbsigma_A$ and $\whbsigma_B$ from Algorithm~\ref{alg:community_recovery_almost_exact} to approximately centralize the adjacency matrices as $\overline{A}^{\whbsigma_A}$ and $\overline{B}^{\whbsigma_B}$.

First, we generate a random coloring $\mu:[n]\rightarrow [N+1]$, 
which is assigning every node on $\overline{A}$ to one of $N+1$---the same as the number of vertices on a chandelier---colors independently and uniformly at random. 
For any vertex set $V \subset [n]$, we define $\chi_{\mu}(V)=\mathbf{1}_{\{\forall x, y \in V, x \neq y: \mu(x) \neq \mu(y)\}}$. We call the vertex set $V$ being colorful if $\chi_{\mu}(V)=1$. We denote $r:=\PP(\chi_{\mu}(V)=1)=\frac{(N+1)!}{(N+1)^{N+1}}$.
We define the \textbf{approximate signed rooted subgraph count} as
\begin{equation}\label{eq:definition_approx_signed_subgraph_count}
    X_{i,H}(\overline{A}^{\whbsigma_A},\mu) := \sum_{S(i) \cong H} \chi_{\mu}(V(S)) \Pi_{e \in E(S)} \overline{A}_{e}^{\whbsigma_A}.
\end{equation}
Observe that $\EE[X_{i,H}(\overline{A}^{\whbsigma_A},\mu)]=rW_{i,H}(\overline{A}^{\whbsigma_A})$, where $W_{i,H}(\overline{A}^{\whbsigma_A})$ is the ground truth of signed counts as defined in Section~\ref{sec:theorem_proof}. In another word, $X_{i,H}(\overline{A}^{\whbsigma_A},\mu)/r$ is an unbiased estimator of $W_{i,H}(\overline{A}^{\whbsigma_A})$.

Let $t := \ceil{1/r}$, we repeat the random coloring for $t$ times and then average over the estimates before taking the inner product of two signed counts vectors. 
Formally, we generate $2t$ independent colorings independently and uniformly at random, denoted as $\{\mu_a\}_{a=1}^t$ and $\{\nu_b\}_{b=1}^t$. 
For vertex $i$ in $G_1$ and vertex $j$ in $G_2$, we define the \textbf{approximate similarity score} as follows
\begin{equation*}
    \wt{\Phi}_{ij}^{\whbsigma} := \frac{1}{r^2} \sum_{H \in \mathcal{T}} \aut(H) 
    \left(\frac{1}{t} \sum_{a=1}^{t} X_{i,H}(\overline{A}^{\whbsigma_A},\mu_a) \right)
    \left(\frac{1}{t} \sum_{b=1}^{t} X_{j,H}(\overline{B}^{\whbsigma_B},\nu_b) \right).
\end{equation*}
We have $\EE[\wt{\Phi}_{ij}^{\whbsigma}|\overline{A}^{\whbsigma_A}, \overline{B}^{\whbsigma_B}] = \Phi_{ij}^{\whbsigma}$ 
and also 
$\EE[\wt{\Phi}_{ij}^{\whbsigma}\indH|\overline{A}^{\whbsigma_A}, \overline{B}^{\whbsigma_B},\bsigma_{*}] = \Phi_{ij}^{\whbsigma}\indH$ also holds as $\indH$ is a deterministic function on $\bsigma_{*}$. 
We summarize the algorithm as Algorithm~\ref{alg:efficient_chandelier_scoring}. There are two additional steps than Algorithm 2 in~\cite{mao2022chandelier}: First, in the construction of the chandelier class in this work, we need to filter out instances with maximum degree greater than a threshold $D$, which takes $O(K)$ time for each tree; Second, we need to obtain the community label estimates before estimating the signed subgraph counts.

\begin{algorithm}[ht!]
\caption{Efficient Almost Exact Graph Matching Algorithm}
\begin{flushleft}
    \textbf{Input:} {Adjacency matrices $A$ and $B$ on $n$ vertices for correlated stochastic block models $(G_1,G_2)$.}
\end{flushleft}
\textbf{Step 1 - Construct the chandelier class}
\begin{algorithmic}[1]
\STATE (Rooted tree generation \cite{beyer1980gen_rooted_trees}) List all non-isomorphic rooted trees with $K$ edges.
\STATE (Automorphism constraint \cite{colbourn1981count_aut}) Compute $\aut(H)$ for each rooted tree using the automorphism algorithm for trees.
\STATE (Maximum degree constraint) Compute maximum degree of vertices.
\STATE (Chandelier class) Return $\mathcal{J}$ as the subset of rooted trees whose number of automorphisms is at most $R$ and maximum degree is at most $D$. Construct $(K,L,M,R,D)$-Chandelier class $\mathcal{T}$.
\end{algorithmic}

\textbf{Step 2 - Estimation of similarity score}

\begin{algorithmic}
\STATE (Random Coloring) Generate i.i.d. uniformly random colorings $\{\mu_a\}_{a=1}^t$ and $\{\nu_b\}_{b=1}^t$, each maps from $[n]$ to $[N+1]$.
\STATE (Community recovery) Obtaining $\whbsigma_A$ and $\whbsigma_B$ for $A$ and $B$ independently by Algorithm~\ref{alg:community_recovery_almost_exact}.
\FOR{all $(i,j) \in [n] \times [n]$} 
    \FOR{all $H \in \mathcal{T}$}
    \STATE (Signed counts estimation \cite{mao2021counting}) Compute $\{X_{i,H}(\overline{A}^{\whbsigma_A}, \mu_a)\}_{a=1}^{t}$ and $\{X_{j,H}(\overline{B}^{\whbsigma_B}, \nu_b)\}_{b=1}^{t}$.
    \ENDFOR
\ENDFOR
\end{algorithmic}

\begin{flushleft}
    \textbf{Output:} The approximate similarity scores {$\{\wt \Phi_{ij}\}_{i,j\in [n]}$}.
\end{flushleft}
\label{alg:efficient_chandelier_scoring}
\end{algorithm}

When we can obtain the correct centralized adjacency matrices $\overline{A}$ and $\overline{B}$, we replace $\overline{A}^{\whbsigma_A}$ (resp. $\overline{B}^{\whbsigma_B}$) with $\overline{A}$ (resp. $\overline{B}$), and everything defined above is still valid.
In that case, we denote the approximate similarity score as $\wt{\Phi}$ rather than $\wt{\Phi}^{\whbsigma}$.

\subsection{Analysis}
The analysis is similar to that in Section $5.1$ of~\cite{mao2022chandelier}, while we split the analysis into two regimes.

Under Regime I ($sD_+(a,b)>1$), we can recover the community labels on $G_1$ and $G_2$ exactly correct with high probability.   
We define $\Gamma_{ij}$ as an upper bound of $\Var[\Phi_{ij}]$ as follows:
\begin{align}\label{eq:definition_gamma_ij_regime1}
    \Var[\Phi_{ij} \indH] 
    \nonumber &\leq \sum_{H,I\in \mathcal{T}} \aut(H) \aut(I) \sum_{S_1(i),S_2(j) \cong H} \sum_{T_1(i),T_2(j) \cong I} |\EE[\overline{A}_{S_1} \overline{B}_{S_2} \overline{A}_{T_1} \overline{B}_{T_2} \indH]| \\
    &\quad \mathbf{1}_{\{S_1 \neq S_2 \text{ or } T_1 \neq T_2 \text{ or } V(S_1)\cap V(T_1) \neq \{i\}\}} \mathbf{1}_{\{S_1\bigtriangleup T_1\subset S_2 \cup T_2, S_2\bigtriangleup T_2\subset S_1 \cup T_1 \}} =: \Gamma_{ij}.
\end{align}
If $\mathbf{1}_{\{S_1\bigtriangleup T_1\subset S_2 \cup T_2, S_2\bigtriangleup T_2\subset S_1 \cup T_1 \}}=0$, 
then there exists some edge occurring only once among $S_1$, $S_2$, $T_1$, and $T_2$, 
and so 
$|\EE[\overline{A}_{S_1} \overline{B}_{S_2} \overline{A}_{T_1} \overline{B}_{T_2} \indH]| = 0$. 
Therefore, we only look at the cases where every edge occurs at least two times among these four chandeliers.
If $\mathbf{1}_{\{S_1 \neq S_2 \text{ or } T_1 \neq T_2 \text{ or } V(S_1)\cap V(T_1) \neq \{i\}\}}=0$, 
then $S_1=S_2$, $T_1=T_2$, and $S_1$ has no common vertex with $T_1$ except for the root. In this case, $S_1$ and $T_1$, and $S_2$ and $T_2$ have no common edges, so the covariance between $\overline{A}_{S_1}\overline{B}_{T_1}$ and $\overline{A}_{S_2}\overline{B}_{T_2}$ is always zero.

For Regime II ($sD_+(a,b)<1$), where we cannot recover the correct centralized adjacency matrices with high probability. 
Alternatively, we define $\Gamma_{ij}'$ as:
\begin{equation}\label{eq:definition_gamma_ij_regime2}
    \Gamma_{ij}':= 
    \sum_{H,I\in \mathcal{T}} \aut(H)\aut(I) \sum_{S_1,S_2\cong H, T_1,T_2\cong I} \EE [\overline{A}_{S_1}^{\whbsigma_A} \overline{B}_{S_2}^{\whbsigma_B} \overline{A}_{T_1}^{\whbsigma_A} \overline{B}_{T_2}^{\whbsigma_B} \indH].
\end{equation}
From the definition of variance,
\begin{equation*}
    \Var[\Phi_{ij}^{\whbsigma}\indH] \leq \Gamma_{ij}'.
\end{equation*}

\begin{lemma} \label{lemma:color_coding_var}
    Fix constants $a \neq b>0, s\in [0,1]$.
    Let $(G_1, G_2) \sim \CSBM(n, \frac{a\log n}{n}, \frac{b\log n}{n}, s)$.
    The variance of estimation error has the following upper bound:
    \begin{equation*}
        \Var[(\wt{\Phi}_{ij}-\Phi_{ij}) \indH] \leq 3 \Gamma_{ij}, \qquad \qquad 
        \Var[(\wt{\Phi}_{ij}^{\whbsigma}-\Phi_{ij}^{\whbsigma}) \indH] \leq 3 \Gamma_{ij}',
    \end{equation*}    
    where $\Gamma_{ij}$ is defined in (\ref{eq:definition_gamma_ij_regime1}) and $\Gamma_{ij}'$ is defined in (\ref{eq:definition_gamma_ij_regime2}).
\end{lemma}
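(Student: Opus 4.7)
The plan is to exploit the unbiasedness of the color-coded estimator: since the colorings $\{\mu_a,\nu_b\}$ are independent of $(A,B,\bsigma_*)$ and $\EE[\chi_\mu(V)]=r$ for every $(N+1)$-vertex set $V$, we have $\EE[\wt\Phi_{ij}\mid A,B,\bsigma_*] = \Phi_{ij}$, and analogously $\EE[\wt\Phi_{ij}^{\whbsigma}\mid A,B,\bsigma_*,\whbsigma] = \Phi_{ij}^{\whbsigma}$. Hence $\Var[(\wt\Phi_{ij}-\Phi_{ij})\indH] = \EE[(\wt\Phi_{ij}-\Phi_{ij})^2\indH]$, and similarly for the $\whbsigma$-version. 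I then decompose the averaged signatures as $\tfrac{1}{t}\sum_a X_{i,H}(\overline{A},\mu_a) = rW_{i,H}(\overline{A}) + \xi_i^H$ and $\tfrac{1}{t}\sum_b X_{j,H}(\overline{B},\nu_b) = rW_{j,H}(\overline{B}) + \eta_j^H$, where the centered coloring noises $\xi$ and $\eta$ depend on disjoint coloring families and are therefore conditionally independent given $(A,B,\bsigma_*)$.

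Substituting into $\wt\Phi_{ij} - \Phi_{ij}$ and squaring, six of the nine cross-term types vanish upon taking $\EE[\cdot\mid A,B,\bsigma_*]$ by conditional centering, leaving three surviving contributions proportional respectively to $W_{i,H}W_{i,I}\EE[\eta^H\eta^I\mid B]$, $W_{j,H}W_{j,I}\EE[\xi^H\xi^I\mid A]$, and $\EE[\xi^H\xi^I\mid A]\,\EE[\eta^H\eta^I\mid B]$. Since the colorings in each batch are i.i.d., only diagonal pairs survive in the noise covariance, yielding
\[
\EE[\eta_j^H\eta_j^I\mid B] = \frac{1}{t}\sum_{S_2,T_2}\bigl(r_{|V(S_2)\cup V(T_2)|} - r^2\bigr)\,\overline{B}_{S_2}\overline{B}_{T_2},
\]
where $r_k := (N+1)!/\bigl((N+1-k)!\,(N+1)^k\bigr)$ for $k\leq N+1$ and $r_k := 0$ otherwise. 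Because $|V(S_2)|=|V(T_2)|=N+1$, the index is either $N+1$ (giving $|r_k-r^2|=r(1-r)$) or strictly larger (giving $|r_k-r^2|=r^2$); in either case $|r_k-r^2|\leq r$, and an analogous identity holds for $\EE[\xi_i^H\xi_i^I\mid A]$.

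Assembling, each of the three contributions carries a coloring prefactor of the form $|r_{k_1}-r^2|/(tr^2)$ or $|r_{k_1}-r^2|\,|r_{k_2}-r^2|/(t^2r^4)$. Using $t=\lceil 1/r\rceil$ so that $tr\geq 1$, together with $|r_k-r^2|\leq r$, every such prefactor is bounded by $1$. What remains is $\sum_{H,I,S_1,T_1,S_2,T_2}\aut(H)\aut(I)\bigl|\EE[\overline{A}_{S_1}\overline{A}_{T_1}\overline{B}_{S_2}\overline{B}_{T_2}\indH]\bigr|$. In Regime I, this expectation vanishes unless every edge of the union is double-counted (by conditional independence of edges given $\bsigma_*$), which recovers the structural indicator $\mathbf{1}_{\{S_1\triangle T_1\subset S_2\cup T_2,\,S_2\triangle T_2\subset S_1\cup T_1\}}$ built into $\Gamma_{ij}$; in Regime II, $\Gamma_{ij}'$ carries no such indicators and the bound is direct. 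Summing the three contributions yields $\leq 3\Gamma_{ij}$ for the first statement and $\leq 3\Gamma_{ij}'$ for the second.

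I expect the main technical obstacle to be the careful bookkeeping between the noise-covariance prefactor and the structural indicators in Regime I: one must verify that the `trivial' configurations excluded by $\mathbf{1}_{\{S_1\neq S_2\text{ or }T_1\neq T_2\text{ or }V(S_1)\cap V(T_1)\neq\{i\}\}}$ in the definition of $\Gamma_{ij}$ are absorbed by the constant factor $3$, which relies on the observation that in such configurations $|V(S_2)\cup V(T_2)|\geq N+1$ forces the coloring prefactor to simplify favorably. All other steps follow from elementary expansion together with the independence structure of the colorings.
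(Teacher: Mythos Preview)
Your high-level strategy is sound and closely parallels the paper's: both arguments use unbiasedness of the color-coded estimator to reduce $\Var[(\wt\Phi_{ij}-\Phi_{ij})\indH]$ to a conditional variance given $(A,B,\bsigma_*)$, then control coloring-indicator moments to extract the prefactors. The paper organizes the computation by summing over coloring-index tuples $(c,d,e,f)$ and invoking an auxiliary covariance lemma (Lemma~\ref{lemma:color_coding_cov_auxiliary}); your $\xi/\eta$ decomposition is an equivalent reorganization that also lands on three surviving contributions.

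There is, however, a concrete error in your formula for the joint colorfulness probability. You write
\[
\EE\bigl[\chi_\nu(V(S_2))\chi_\nu(V(T_2))\bigr]=r_{|V(S_2)\cup V(T_2)|},
\]
but the product of indicators equals $1$ when \emph{each} of $V(S_2)$ and $V(T_2)$ is individually colorful, which is \emph{not} the event that their union is colorful (the union can have more than $N+1$ vertices and still have both sets colorful). The correct value depends on $m:=|V(S_2)\cap V(T_2)|$ and equals $r\cdot (N{+}1{-}m)!/(N{+}1)^{N+1-m}$. This matters precisely at the spot you flag as the main obstacle. In the ``trivial'' configurations excluded from $\Gamma_{ij}$ one has $V(S_2)\cap V(T_2)=\{i\}$, i.e.\ $m=1$, and the correct formula gives
\[
p \;=\; r\cdot \frac{N!}{(N{+}1)^{N}} \;=\; r^2
\]
exactly, so the prefactor $p-r^2$ \emph{vanishes} and these terms drop out---this is how both the paper (see the Regime~I clause in the proof of Lemma~\ref{lemma:color_coding_cov_auxiliary}) and the corrected version of your argument recover the first indicator in $\Gamma_{ij}$. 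Your formula instead gives $r_k=0$ for $k=2N+1>N+1$, hence a nonzero prefactor $r^2$; the trivial configurations then contribute on the order of $\mu^2/t$, which is \emph{not} bounded by $3\Gamma_{ij}$ (recall $\Gamma_{ij}=o(\mu^2)$). They are therefore not ``absorbed by the constant $3$''. Once you replace $r_{|V(S_2)\cup V(T_2)|}$ by the correct joint probability and use $p=r^2$ when $m=1$, the rest of your argument goes through and matches the paper's bound.
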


\begin{proof}    
    \textbf{Regime I: Assume that $sD_+(a,b) > 1$.} Define
    \begin{equation*}
        Y_{ij}(\mu, \nu) := \sum_{H \in \mathcal{T}} \aut(H) X_{i,H}(\overline{A}, \mu) X_{j,H}(\overline{B}, \nu),
    \end{equation*}
    where $\mu,\nu$ are two $(N+1)$-coloring of the vertices in $[n]$. 
    Then, we can represent the approximate similarity score as
    \begin{equation}\label{eq:definition_Yij_mu_nu}
        \wt{\Phi}_{ij}=\frac{1}{r^2t^2} \sum_{c=1}^{t} \sum_{d=1}^{t} Y_{ij}(\mu_c, \nu_d).
    \end{equation}
    For any $\mu_c, \nu_d$, $1\leq c, d \leq t$, $Y_{ij}(\mu_c, \nu_d)/r^2$ is an unbiased estimator of $\Phi_{ij}$ given $\overline{A}, \overline{B}$ as
    \begin{equation*}
        \EE[Y_{ij}(\mu_c, \nu_d) \indH | \overline{A}, \overline{B}, \bsigma_{*}] 
        = r^2 \sum_{H \in \mathcal{T}} \aut(H) W_{i,H}(\overline{A}) W_{j,H}(\overline{B}) \indH
        = r^2 \Phi_{ij} \indH.
    \end{equation*}
    Note that $X_{i,H}(\overline{A}, \mu)$ and $X_{j,H}(\overline{B}, \nu)$ are independent conditioned on $\overline{A}, \overline{B}$. Moreover, note that $\{Y_{ij}(\mu_c, \nu_d)\}_{1\leq c, d \leq t}$ are identically distributed. 
    Hence, we have
    \begin{equation*}
        \EE[\wt{\Phi}_{ij} \indH] = \EE[\EE_{\mu, \nu}[\frac{1}{r^2 t^2} \sum_{c=1}^t \sum_{d=1}^t Y_{ij}(\mu_c, \nu_d) \indH | \overline{A}, \overline{B}, \bsigma_{*}]] = \EE [\Phi_{ij} \indH].
    \end{equation*}
    Next, we bound the variance of $\wt{\Phi}_{ij}$. In particular, we can get
    \begin{align*}
        \Var[(\wt{\Phi}_{ij}-\Phi_{ij}) \indH] 
        &= \Var(\EE[(\wt{\Phi}_{ij} - \Phi_{ij}) \indH | \overline{A}, \overline{B}, \bsigma_{*}]) + \EE[\Var((\wt{\Phi}_{ij} - \Phi_{ij}) \indH | \overline{A}, \overline{B}, \bsigma_{*})]  \\
        &= \EE[\Var(\wt{\Phi}_{ij}| \overline{A}, \overline{B}, \bsigma_{*}) \indH],
    \end{align*}
    because $\Phi_{ij}$ and $\indH$ are fixed conditional on $\overline{A}, \overline{B}, \bsigma_{*}$. Furthermore, conditioned on $\overline{A}$ and $\overline{B}$, for any $1\leq c,d,e,f \leq t$, $Y_{ij}(\mu_c, \nu_d)$ are independent with $Y_{ij}(\mu_e, \nu_f)$ if and only if $c \neq e$ and $d \neq f$. 
    \begin{equation*}
        \Var[(\wt{\Phi}_{ij}-\Phi_{ij}) \indH] 
        \leq \frac{1}{r^4t^4} \sum_{c=1}^t \sum_{d=1}^t \sum_{e=1}^t \sum_{f=1}^t \EE [\Cov(Y_{ij}(\mu_c, \nu_d), Y_{ij}(\mu_e, \nu_f))| \overline{A}, \overline{B}, \bsigma_{*}) \indH].
    \end{equation*}
    
    Applying Lemma~\ref{lemma:color_coding_cov_auxiliary}, we have 
    \begin{align*}
        \Var[(\wt{\Phi}_{ij}-\Phi_{ij}) \indH] 
        &\leq \frac{1}{r^4t^4} \sum_{c=1}^t \sum_{d=1}^t \sum_{e=1}^t \sum_{f=1}^t 
        (r^{2 + \indicator{c\neq e} + \indicator{d\neq f}} - r^4) \Gamma_{ij} \\
        &= \frac{1}{r^2t^2} \left( t^2r^2 + 2t^3r^3 -(t^2+2t^3)r^4 \right) \Gamma_{ij} \\
        &\leq 3 \Gamma_{ij},
    \end{align*}
    where the last inequality holds because $t = \ceil{1/r}, tr\geq 1$.

    \textbf{Regime II: Assume that $sD_+(a,b) < 1$.} With community label estimates $\whbsigma_A$ and $\whbsigma_B$, we define
    \begin{equation}\label{eq:definition_Yij_mu_nu_sigmahat}
        Y_{ij}^{\whbsigma}(\mu, \nu) := \sum_{H \in \mathcal{T}} \aut(H) X_{i,H}(\overline{A}^{\whbsigma_A}, \mu) X_{j,H}(\overline{B}^{\whbsigma_B}, \nu),
    \end{equation}
    where $\mu,\nu$ are two $(N+1)$--coloring of the vertices in $[n]$. 
    The proof for Regime I still holds in this case by replacing $\overline{A}, \overline{B}, Y_{ij}, \Phi_{ij}, \wt{\Phi}_{ij}$ and $\Gamma_{ij}$ 
    with $\overline{A}^{\whbsigma_A}, \overline{B}^{\whbsigma_B}, Y_{ij}^{\whbsigma}, \Phi_{ij}^{\whbsigma}, \wt{\Phi}_{ij}^{\whbsigma}$ and $\Gamma_{ij}'$ correspondingly.
\end{proof}

\begin{lemma} [Extension of Lemma $12$ in~\cite{mao2022chandelier} to correlated SBMs] \label{lemma:color_coding_cov_auxiliary}
    Fix constants $a \neq b>0, s\in [0,1]$. 
    Let $(G_1,G_2)\sim \CSBM(n,a\frac{\log n}{n},b\frac{\log n}{n}, s)$. 
    Fix any $1\leq c,d,e,f \leq t$, and $i, j \in [n]$.
    If $sD_{+}(a,b) > 1$, then
    \begin{equation}\label{eq:cond_cov_est_bound_regI}
        \EE[\Cov (Y_{ij}(\mu_c,\nu_d), Y_{ij}(\mu_e,\nu_f)| \overline{A}, \overline{B}, \bsigma_{*}) \indH] 
        \leq 
        (r^{2+\mathbf{1}_{\{d\neq f\}}+\mathbf{1}_{\{c\neq e\}}} - r^4) \Gamma_{ij}.
    \end{equation}
    If $sD_+(a,b) < 1$, then
    \begin{equation}\label{eq:cond_cov_est_bound_regII}
        \EE[\Cov (Y_{ij}(\mu_c,\nu_d), Y_{ij}(\mu_e,\nu_f)|\overline{A}^{\whbsigma_A}, \overline{B}^{\whbsigma_B}, \bsigma_{*}) \indH] 
        \leq (r^{2+\mathbf{1}_{\{d\neq f\}}+\mathbf{1}_{\{c\neq e\}}} - r^4)\Gamma_{ij}'.
    \end{equation} 
\end{lemma}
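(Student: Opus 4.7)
The plan is to separate the two sources of randomness in $Y_{ij}(\mu,\nu)$: the random colorings $\{\mu_c,\nu_d\}$, which enter only through the indicators $\chi_\mu, \chi_\nu$, and the graph randomness $(\overline{A}, \overline{B}, \bsigma_*)$. Conditioning on the graphs and expanding $Y_{ij}(\mu_c,\nu_d) Y_{ij}(\mu_e,\nu_f)$ using the definition~\eqref{eq:definition_Yij_mu_nu}, the inner conditional covariance decomposes as a four-fold sum over chandelier embeddings:
\begin{align*}
    &\Cov\!\big(Y_{ij}(\mu_c,\nu_d), Y_{ij}(\mu_e,\nu_f) \,\big|\, \overline{A},\overline{B},\bsigma_*\big) \\
    &\quad= \sum_{H,I,S_1,S_2,T_1,T_2} \aut(H)\aut(I)\, \overline{A}_{S_1}\overline{B}_{S_2}\overline{A}_{T_1}\overline{B}_{T_2} \cdot \Cov\!\big(\chi_{\mu_c}(V(S_1))\chi_{\nu_d}(V(S_2)),\, \chi_{\mu_e}(V(T_1))\chi_{\nu_f}(V(T_2))\big),
\end{align*}
isolating a ``color-covariance factor'' that depends only on the vertex sets.

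The key step is the deterministic bound $0 \leq \Cov(\cdots) \leq r^{2+\indicator{c\neq e}+\indicator{d\neq f}} - r^4$. Since $\mu_c,\mu_e,\nu_d,\nu_f$ are jointly independent except when their indices coincide, the expectation inside the covariance factors as $\mathbb{E}[\chi_{\mu_c}(V(S_1))\chi_{\mu_e}(V(T_1))]\cdot\mathbb{E}[\chi_{\nu_d}(V(S_2))\chi_{\nu_f}(V(T_2))]$. Each factor equals $r^2$ when the coloring indices differ, and is bounded above by $r$ when they coincide, via the pointwise inequality $\chi_\mu(V_1)\chi_\mu(V_2) \leq \chi_\mu(V_1)$ (both are $\{0,1\}$-indicators). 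A case analysis across $(c=e,\,c\neq e)\times(d=f,\,d\neq f)$ then yields the displayed bound, and an analogous calculation shows the covariance is always non-negative.

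Substituting this uniform bound and taking expectation against $\indH$ produces an upper bound of $(r^{2+\indicator{c\neq e}+\indicator{d\neq f}}-r^4)$ times a sum over chandelier quadruples of $\aut(H)\aut(I)\,\mathbb{E}[\overline{A}_{S_1}\overline{B}_{S_2}\overline{A}_{T_1}\overline{B}_{T_2}\indH]$. In Regime I, the two indicators in the definition of $\Gamma_{ij}$ are justified as follows: (i) any edge appearing in exactly one of $S_1,S_2,T_1,T_2$ satisfies $\mathbb{E}[\overline{A}_e\mid\bsigma_*]=0$ by true centering, so the graph moment vanishes by conditional independence, recovering the second indicator; (ii) if $S_1=S_2$, $T_1=T_2$, and $V(S_1)\cap V(T_1)=\{i\}$ (and hence $V(S_2)\cap V(T_2)=\{j\}$ correspondingly), then a direct count of the colorings shows $\mathbb{E}[\chi_\mu(V(S_1))\chi_\mu(V(T_1))]=r^2$ (the only shared constraint is on the root's color, which factors out), forcing the color-covariance factor itself to vanish, recovering the first indicator. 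In Regime II, entries of $\overline{A}^{\whbsigma_A}$ do not have zero conditional mean when $\whbsigma_A$ disagrees with $\bsigma_*$ on the edge, so the singly-occurring-edge cancellation fails; the sum is then unrestricted and coincides with $\Gamma_{ij}'$ by definition, giving~\eqref{eq:cond_cov_est_bound_regII}.

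The main obstacle I expect is the bookkeeping of the color-covariance factor in Regime I---specifically, verifying that it vanishes exactly when the ``degenerate'' quadruple condition holds, so that restriction to $\Gamma_{ij}$ is consistent. A secondary subtlety is that $\Gamma_{ij}'$ is defined without absolute values on the graph-moment terms, so in Regime II one cannot bound termwise; instead, one leverages the non-negativity of the color-covariance factor and the identity $\Gamma_{ij}' = \mathbb{E}[(\Phi_{ij}^{\whbsigma})^2\indH]\geq 0$ to conclude the aggregate bound.
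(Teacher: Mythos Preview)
Your decomposition, the bound $0\le\EE[\chi_{\mu_c}(V(S_1))\chi_{\mu_e}(V(T_1))]\,\EE[\chi_{\nu_d}(V(S_2))\chi_{\nu_f}(V(T_2))]-r^4\le r^{2+\indicator{c\neq e}+\indicator{d\neq f}}-r^4$, and the Regime~I restriction to $\Gamma_{ij}$ via (i) vanishing of singly-decorated moments under exact centering and (ii) vanishing of the color factor when $V(S_1)\cap V(T_1)=\{i\}$ and $V(S_2)\cap V(T_2)=\{j\}$, all match the paper's proof step for step. Your root-conditioning argument for~(ii) is in fact crisper than the paper's phrasing (the paper writes ``shares no common edges'' where the relevant fact is that only the root vertex is shared).

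Your proposed resolution of the Regime~II sign issue, however, does not work. From $x_q\in[0,M]$ and $\sum_q a_q\ge 0$ one \emph{cannot} conclude $\sum_q a_q x_q\le M\sum_q a_q$ when individual $a_q=\EE[\overline{A}_{S_1}^{\whbsigma_A}\overline{B}_{S_2}^{\whbsigma_B}\overline{A}_{T_1}^{\whbsigma_A}\overline{B}_{T_2}^{\whbsigma_B}\indH]$ may be negative: take $a_1=2$, $a_2=-1$, $x_1=M$, $x_2=0$, giving $2M>M$. The paper's proof makes exactly the same unjustified termwise replacement and so shares this gap. The clean repair is to redefine $\Gamma_{ij}'$ with absolute values on the graph moment; then $\sum_q a_q x_q\le\sum_q|a_q|\,x_q\le M\sum_q|a_q|$ follows immediately from the nonnegativity of $x_q$ that you established, and nothing downstream changes because the proofs of Propositions~\ref{prop:var_true_regimeII}--\ref{prop:var_fake_regimeII} already bound each moment termwise in absolute value via Lemma~\ref{lemma:cross_moments_regimeII}.
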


\begin{proof}
    We first prove~\eqref{eq:cond_cov_est_bound_regII} and then explain why it implies~\eqref{eq:cond_cov_est_bound_regI}.
    
    For two independent $(N+1)$--colorings $\mu$ and $\nu$ of the vertices in $[n]$.
    From the definition of $Y_{ij}^{\whbsigma}(\mu, \nu)$~\eqref{eq:definition_Yij_mu_nu_sigmahat}, $\EE[Y_{ij}^{\whbsigma}(\mu, \nu)| \overline{A}^{\whbsigma_A}, \overline{B}^{\whbsigma_B}, \bsigma_{*}]=r^2 \Phi_{ij}^{\whbsigma}$. Then,
    \begin{equation}\label{eq:sec5_lemma5.2_cov_expand}
        \Cov (Y_{ij}^{\whbsigma}(\mu_c,\nu_d), Y_{ij}^{\whbsigma}(\mu_e,\nu_f)| \overline{A}^{\whbsigma_A}, \overline{B}^{\whbsigma_B}, \bsigma_{*}) 
        = \EE[Y_{ij}^{\whbsigma}(\mu_c,\nu_d) Y_{ij}^{\whbsigma}(\mu_e,\nu_f) | \overline{A}^{\whbsigma_A}, \overline{B}^{\whbsigma_B}, \bsigma_{*}] - r^4 (\Phi_{ij}^{\whbsigma})^2.
    \end{equation}
    From the definition of $Y_{ij}^{\whbsigma}(\mu, \nu)$ in~\eqref{eq:definition_Yij_mu_nu_sigmahat} again,     
    \begin{multline*}
        \EE[Y_{ij}(\mu_c,\nu_d) Y_{ij}(\mu_e,\nu_f)| \overline{A}^{\whbsigma_A}, \overline{B}^{\whbsigma_B}, \bsigma_{*}] \\
        = \EE[\sum_{H \in \mathcal{T}} \aut(H) X_{i,H}(\overline{A}^{\whbsigma_A}, \mu_c) X_{j,H}(\overline{B}^{\whbsigma}, \nu_d) 
        \sum_{I \in \mathcal{T}} \aut(H) X_{i,I}(\overline{A}^{\whbsigma_A}, \mu_e) X_{j,I}(\overline{B}^{\whbsigma_B}, \nu_f)| \overline{A}^{\whbsigma_A}, \overline{B}^{\whbsigma_B}, \bsigma_{*}].
    \end{multline*}
    From~\eqref{eq:definition_approx_signed_subgraph_count} and the independence of colorings,
    \begin{align}\label{eq:sec5_lemma5.2_cross_moment}
        \nonumber \EE[Y_{ij}(\mu_c,\nu_d) &Y_{ij}(\mu_e,\nu_f)| \overline{A}^{\whbsigma_A}, \overline{B}^{\whbsigma_B}, \bsigma_{*}] \\
        \nonumber &=\EE\bigg[\sum_{H,I \in \mathcal{T}} \aut(H) \aut(I) \sum_{S_1(i),S_2(j) \cong H} \sum_{T_1(i), T_2(j) \cong I} 
        \chi_{\mu_c}(V(S_1)) \overline{A}_{S_1}^{\whbsigma_A} \\
        \nonumber &\quad \times
        \chi_{\nu_d}(V(S_2)) \overline{B}_{S_2}^{\whbsigma_B} \times
        \chi_{\mu_e}(V(T_1)) \overline{A}_{T_1}^{\whbsigma_A} \times 
        \chi_{\nu_f}(V(T_2)) \overline{B}_{T_2}^{\whbsigma_B}| \overline{A}^{\whbsigma_A}, \overline{B}^{\whbsigma_B}, \bsigma_{*} \bigg] \\
        \nonumber &= \sum_{H,I \in \mathcal{T}} \aut(H) \aut(I) \sum_{S_1(i),S_2(j) \cong H} \sum_{T_1(i), T_2(j) \cong I} 
        \EE [ \chi_{\mu_c}(V(S_1)) \chi_{\mu_e}(V(T_1)) ] \\
        &\quad \times 
        \EE [ \chi_{\nu_d}(V(S_2)) \chi_{\nu_f}(V(T_2)) ]
        \overline{A}_{S_1}^{\whbsigma_A} \overline{B}_{S_2}^{\whbsigma_B} 
        \overline{A}_{T_1}^{\whbsigma_A} \overline{B}_{T_2}^{\whbsigma_B}.
    \end{align}
    
    From~\eqref{eq:sec5_lemma5.2_cov_expand}, ~\eqref{eq:sec5_lemma5.2_cross_moment}, and~\eqref{eq:definition_similarity_score_est},
    \begin{multline*}
        \EE[\Cov (Y_{ij}(\mu_c,\nu_d), Y_{ij}(\mu_e,\nu_f)| \overline{A}^{\whbsigma_A}, \overline{B}^{\whbsigma_B}, \bsigma_{*}) \indH] \\
        \begin{aligned}
            &= \sum_{H,I \in \mathcal{T}} \aut(H) \aut(I) \sum_{S_1(i),S_2(j) \cong H} \sum_{T_1(i), T_2(j) \cong I} 
            \EE [\overline{A}_{S_1}^{\whbsigma_A} \overline{B}_{S_2}^{\whbsigma_B} \overline{A}_{T_1}^{\whbsigma_A} \overline{B}_{T_2}^{\whbsigma_B} \indH] \\
            &\quad \times \bigg (
            \EE [ \chi_{\mu_c}(V(S_1)) \chi_{\mu_e}(V(T_1)) ]
            \EE [ \chi_{\nu_d}(V(S_2)) \chi_{\nu_f}(V(T_2)) ] - r^4 \bigg ).
        \end{aligned}
    \end{multline*}

    Observe that $\EE [ \chi_{\mu_c}(V(S_1)) \chi_{\mu_e}(V(T_1)) ] \leq r^{1+\indicator{c\neq e}}$ and $\EE [ \chi_{\nu_d}(V(S_2)) \chi_{\nu_f}(V(T_2)) ] \leq r^{1+\indicator{d\neq f}}$,
    \begin{multline*}
        \EE[\Cov (Y_{ij}(\mu_c,\nu_d), Y_{ij}(\mu_e,\nu_f)| \overline{A}^{\whbsigma_A}, \overline{B}^{\whbsigma_B}, \bsigma_{*}) \indH] \\
        \begin{aligned}
            &\leq \sum_{H,I \in \mathcal{T}} \aut(H) \aut(I) \sum_{S_1(i),S_2(j) \cong H} \sum_{T_1(i), T_2(j) \cong I} 
            \EE [\overline{A}_{S_1}^{\whbsigma_A} \overline{B}_{S_2}^{\whbsigma_B} \overline{A}_{T_1}^{\whbsigma_A} \overline{B}_{T_2}^{\whbsigma_B} \indH] 
            (r^{2+\mathbf{1}_{\{d\neq f\}}+\indicator{c\neq e}} - r^4) \\
            &= \Gamma_{ij}' (r^{2+\mathbf{1}_{\{d\neq f\}}+\indicator{c\neq e}} - r^4),
        \end{aligned}
    \end{multline*}
    and this completes the proof for Regime II.
    
    In Regime I, $\EE [\overline{A}_{S_1} \overline{B}_{S_2} \overline{A}_{T_1} \overline{B}_{T_2} \indH] \neq 0$ only if $\{S_1\bigtriangleup T_1\subset S_2 \cup T_2, S_2\bigtriangleup T_2\subset S_1 \cup T_1 \}$ happens. 
    In addition, if $S_1=S_2$, $T_1=T_2$, and $V(S_1) \cap V(T_1) = \{i\}$, then $\EE [ \chi_{\mu_c}(V(S_1)) \chi_{\mu_e}(V(T_1)) ] = \EE[ \chi_{\nu_d}(V(S_2)) \chi_{\nu_f}(V(T_2)) ] = r^2$, because $S_1$ (resp. $S_2$) shares no common edges with $T_1$ (resp. $T_2$)
    Therefore,
    \begin{multline*}
        \EE[\Cov (Y_{ij}(\mu_c,\nu_d), Y_{ij}(\mu_e,\nu_f)| \overline{A}, \overline{B}, \bsigma_{*}) \indH] \\
        \begin{aligned}
            & \leq \sum_{H,I \in \mathcal{T}} \aut(H) \aut(I) \sum_{S_1(i),S_2(j) \cong H} \sum_{T_1(i), T_2(j) \cong I} 
            \EE [\overline{A}_{S_1} \overline{B}_{S_2} \overline{A}_{T_1} \overline{B}_{T_2} \indH] 
            (r^{2+\mathbf{1}_{\{d\neq f\}}+\indicator{c\neq e}} - r^4) \\
            & = \sum_{H,I \in \mathcal{T}} \aut(H) \aut(I) \sum_{S_1(i),S_2(j) \cong H} \sum_{T_1(i), T_2(j) \cong I} 
            \EE [\overline{A}_{S_1} \overline{B}_{S_2} \overline{A}_{T_1} \overline{B}_{T_2} \indH] (r^{2+\mathbf{1}_{\{d\neq f\}}+\mathbf{1}_{\{c\neq e\}}} - r^4) \\
            &\quad \times \mathbf{1}_{\{S_1 \neq S_2 \text{ or } T_1 \neq T_2 \text{ or } V(S_1)\cap V(T_1) \neq \{i\}\}} \mathbf{1}_{\{S_1\bigtriangleup T_1\subset S_2 \cup T_2, S_2\bigtriangleup T_2\subset S_1 \cup T_1 \}}\\
            &= \Gamma_{ij}  (r^{2+\mathbf{1}_{\{d\neq f\}}+\mathbf{1}_{\{c\neq e\}}} - r^4),
        \end{aligned}
    \end{multline*}
    where the first line follows from the proof in Regime II and the last line applies the definition of $\Gamma_{ij}$~\eqref{eq:definition_gamma_ij_regime1}.
\end{proof}

Now, we are ready to prove Theorem~\ref{theorem:efficient_almost_exact_matching}.

\paragraph{Proof of Theorem \ref{theorem:efficient_almost_exact_matching}:} 
\emph{For the first part of this proof}, our goal is to show that the estimated score preserves the asymptotic upper bound on $\frac{\Var[\Phi_{ij} \indH]}{\EE[\Phi_{i\pi_{*}(i)}\indH]^2}$ for Regime I and $\frac{\Var[\Phi_{ij}^{\whbsigma} \indH]}{\EE[\Phi_{i\pi_{*}(i)}^{\whbsigma}\indH]^2}$ for Regime II. 

For Regime I ($sD_+(a,b)>1$), we have
\begin{align*}
    \nonumber \frac{\Var[\wt{\Phi}_{ij}\indH]}{\EE[\wt{\Phi}_{i\pi_{*}(i)}\indH]^2} 
    &= \frac{\Var[\wt{\Phi}_{ij}\indH]}{\EE[\Phi_{i\pi_{*}(i)}\indH]^2} \\
    &= \frac{1}{\EE[\Phi_{i\pi_{*}(i)}\indH]^2} \left[ \Var[\Phi_{ij}\indH] + \Var[(\wt{\Phi}_{ij}-\Phi_{ij}) \indH] + 2\Cov((\wt{\Phi}_{ij}-\Phi_{ij}) \indH, \Phi_{ij}\indH) \right],
\end{align*}
where the first equality is from the fact that $\wt{\Phi}_{ij}\indH$ is an unbiased estimator of $\Phi_{ij}\indH$ conditioned on $\overline{A}, \overline{B}$, and $\bsigma_{*}$.
Further, 
\begin{align}
    \nonumber \frac{\Var[\wt{\Phi}_{ij}\indH]}{\EE[\wt{\Phi}_{i\pi_{*}(i)}\indH]^2} 
    \leq \frac{1}{\EE[\Phi_{i\pi_{*}(i)}\indH]^2} \left[\Gamma_{ij} + 3\Gamma_{ij} + 2\EE[(\wt{\Phi}_{ij}-\Phi_{ij}) \Phi_{ij} \indH] \right] 
    \leq 4\frac{\Gamma_{ij}}{\EE[\Phi_{i\pi_{*}(i)}\indH]^2},
\end{align}
where the first inequality holds from~\eqref{eq:definition_gamma_ij_regime1}, Lemma~\ref{lemma:color_coding_var}, and $\EE[(\wt{\Phi}_{ij}-\Phi_{ij}) \indH]=0$, and the second inequality holds because $\EE[((\wt{\Phi}_{ij}-\Phi_{ij})\Phi_{ij} \indH)] = \EE[\EE[(\wt{\Phi}_{ij}-\Phi_{ij})| \overline{A}, \overline{B}, \bsigma_{*}] \Phi_{ij} \indH]=0$.

The proof of Proposition~\ref{prop:var_true_regimeI} and Proposition~\ref{prop:var_fake_regimeI} are based on analyzing $\Gamma_{ij}$ and thus the upper bounds on $\frac{\Var[\Phi_{ij} \indH]}{\EE[\Phi_{i\pi_{*}(i)} \indH]^2}$ all hold for $\frac{\Gamma_{ij}}{\EE[\Phi_{i\pi_{*}(i)} \indH]^2}$.
Therefore, we conclude that for all $i\in [n]$, if~\eqref{eq:conditions_true_pair_regimeI} holds, then
    \begin{align}
        \nonumber \frac{\Var [\wt{\Phi}_{i\pi_{*}(i)}\indH]}{\EE[\wt{\Phi}_{i\pi_{*}(i)}\indH]^2} = O\left (\frac{L^2}{\rho^2 ns(p\wedge q)}+\frac{L^2}{\rho^{2(K+M)}|\mathcal{J}|} \right);
    \end{align}
and that for all $i, j\in [n]$, $j \neq \pi_{*}(i)$, if~\eqref{eq:conditions_fake_pair_regimeI} holds, then 
    \begin{align}
        \nonumber \frac{\Var [\wt{\Phi}_{ij}\indH]}{\EE[\wt{\Phi}_{i\pi_{*}(i)}\indH]^2} = O\left( \frac{1}{|\mathcal{T}|\rho^{2N}} \right).
    \end{align}

For Regime II ($sD_+(a,b)<1$), from~\eqref{eq:definition_gamma_ij_regime2} and Lemma~\ref{lemma:color_coding_var},
\begin{equation*}
    \frac{\Var[\wt{\Phi}_{ij}^{\whbsigma}\indH]}{\EE[\wt{\Phi}_{ij}^{\whbsigma} \indH]^2} 
    \leq \frac{1}{\EE[\Phi_{i\pi_{*}(i)}^{\whbsigma}\indH]^2} \left[\Gamma_{ij}' + 3\Gamma_{ij}' + 2\EE[(\wt{\Phi}_{ij}^{\whbsigma}-\Phi_{ij}^{\whbsigma}) \Phi_{ij}^{\whbsigma}\indH]) \right] 
    \leq 4\frac{\Gamma_{ij}'}{\EE[\Phi_{i\pi_{*}(i)}^{\whbsigma}\indH]^2},
\end{equation*}

The proof of Proposition~\ref{prop:var_true_regimeII} and Proposition~\ref{prop:var_fake_regimeII} are based on analyzing $\Gamma_{ij}'$ and thus the upper bounds on $\frac{\Var[\Phi_{ij}^{\whbsigma} \indH]}{\EE[\Phi_{i\pi_{*}(i)}^{\whbsigma} \indH]^2}$ all hold for $\frac{\Gamma_{ij}'}{\EE[\Phi_{i\pi_{*}(i)}^{\whbsigma} \indH]^2}$.
Therefore, we conclude that for all $i\in [n]$, if~\eqref{eq:conditions_true_pair_regimeII} holds, then
    \begin{equation*}
        \frac{\Var [\wt{\Phi}_{i\pi_{*}(i)}^{\whbsigma}\indH]}{\EE[\wt{\Phi}_{i\pi_{*}(i)}^{\whbsigma}\indH]^2} = O\left (\frac{L^2}{\rho^2 ns(p\wedge q)}+\frac{L^2}{\rho^{2(K+M)}|\mathcal{J}|} \right);
    \end{equation*}
and that for all $i, j\in [n]$, $j \neq \pi_{*}(i)$, if~\eqref{eq:conditions_fake_pair_regimeII} holds, then 
    \begin{equation*}
        \nonumber \frac{\Var [\wt{\Phi}_{ij}^{\whbsigma}\indH]}{\EE[\wt{\Phi}_{i\pi_{*}(i)}^{\whbsigma}\indH]^2} = O\left( \frac{1}{|\mathcal{T}|\rho^{2N}} \right).
    \end{equation*}
    
\emph{For the second part of this proof, we are going to show the time complexity of Algorithm \ref{alg:efficient_chandelier_scoring}.}

First, we know that step 1-1 costs time $O(\beta^K)$ by the algorithm in~\cite{beyer1980gen_rooted_trees}, step 1-2 takes time $O(K)$ by the algorithm in~\cite{colbourn1981count_aut}, and step 1-3 takes time $O(K)$ by enumerating through all edges.
In summary, he total time complexity to generate $\mathcal{J}$ is $O(K^2 \alpha^{K})$. Afterwards, it takes $O(|\mathcal{T}|)$ time to complete step 1-4, the generation of chandelier class.

The signed counts estimation step takes $O(|\mathcal{T}|N3^N n^2)$ times as shown in the proof of Proposition~5 in~\cite{mao2022chandelier}. Then, the total time complexity is
\begin{equation*}
    O\left (K^2 \beta^K+|\mathcal{T}|(1+N(3\e)^N n^2) \right ) = O\left (\binom{|\mathcal{J}|}{L} N(3\e)^N n^2 \right )
    = O\left (\beta^{KL} (3\e)^N n^2 \right) = O\left ((3\e\beta)^N n^2 \right).
\end{equation*}

Under condition~\eqref{eq:LKMRD_simplifed_assumption} and for large enough $\log(ns(p\wedge q)) \geq \log (2)$,  we have
\begin{equation*}
    N=(K+M)L=\frac{C' \log n}{\eps} (1+\frac{C''}{\log(ns(p\wedge q))}) \leq \frac{C'(1+\frac{C''}{\log 2})}{\eps} \log n,
\end{equation*}
for some constants $C'$ and $C''$.
Hence, there exists some constant $C$ depending only on $\eps$ such that the total time complexity of Algorithm~\ref{alg:efficient_chandelier_scoring} is $O(n^C)$.

\section{Proof of Theorem~\ref{theorem:seeded_graph_matching} (Exact graph matching by seeded graph matching)}
\label{sec:theorem5}

For a pair of correlated SBMs $(G_1,G_2) \sim \CSBM(n, p, q, s)$, where $p=\frac{a\log n}{n}$ and $q=\frac{b\log n}{n}$, 
Algorithm~\ref{alg:efficient_chandelier_scoring} efficiently matches $(1-o(1))n$ vertices correctly with high probability.
Our next step is to finish the matching on those remaining vertices.

To do this, we use the seeded graph matching algorithm\footnote{Relevant seeded graph matching algorithms also occur in \cite{yartseva2013percolation_graph_matching, barak2019nearly_efficient, mao2023exact}.} (Algorithm~\ref{alg:seeded_graph_matching}):
Starting with an initial partial matching on at least $(1-\eps/16)n$ vertices which is correct on whatever it matches, we form new matches between vertex $i$ in $G_1$ and vertex $j$ in $G_2$ if the common neighbors of those two vertices under the current partial matching sufficiently large.
We then update the partial matching and repeat this rule until we get a complete matching, which will be shown happens with high probability.
Define $h(x)=x\log x - x + 1$.
The sufficiently large common neighbors threshold is set as $\gamma \frac{p^2+q^2}{2}s^2(n+2n^{\frac{3}{4}})$, where $\gamma \in (1,\infty)$ such that $h(\gamma)=\frac{3\log n}{(n-2)pqs^2}$.

Denote $\NN_{\pi}(i,j)$ as the number of common neighbors of $i$ and $j$ under correspondence $\pi$. In another word, $\NN_{\pi}(i,j)$ is the number of vertex $v \in I$ such that $v$ is a neighbor of $i$ in $A$ and $\pi(v)$ is a neighbor of $j$ in $B$. If $\pi=\pi_{*}$, we also write it as $\NN(i,j)$.

The following lemma for SBM is an analogy to Lemma $13$ in \cite{mao2022chandelier}, which studied the property of Erd\H{o}s--R\'enyi graph.

\begin{lemma} \label{lemma:edges_btw_two_sets}
    Fix $\eps>0$ and $a, b > 0$ such that $\frac{a+b}{2}\geq 1+\eps$. Let $p=a\frac{\log n}{n}, q=b\frac{\log n}{n}$ and $G\sim \SBM(n,p,q)$. 
    Let $I \subset [n]$ be a subset of vertices,
    $e_G(I,I^c)$ denotes the number of edges between vertices in $I$ and vertices in $I^c=[n]\setminus I$. With probability $1-n^{-\frac{\eps}{8}}$, for any $I$ such that $|I|\leq \frac{\eps}{16}n$, $e_G(I,I^c)\geq \eta |I||I^c| p\wedge q$, where $\eta$ is the unique solution in $(0,1)$ such that $h(\eta)=\frac{(1+\frac{\eps}{8})\log n}{(p+q)(\frac{n}{2}-n^{3/4}-\frac{\eps}{16}n)}$. In particular,
    \begin{equation*}
        \eta \geq 1-\sqrt{\frac{1+\frac{\eps}{8}}{(1+\eps)(1-\frac{\eps}{7})}}.
    \end{equation*}
\end{lemma}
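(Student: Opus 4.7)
The plan is to condition on the community assignment $\bsigma_{*}$, restrict to the balanced-communities event $\calH$, lower-bound the conditional mean of $e_G(I, I^c)$ from below using the structure of $\calH$, apply the multiplicative Chernoff tail bound (Lemma~\ref{pre:chernoff_bound_multiplicative}) to the sum of conditionally independent edge indicators, and finally take a union bound over all small subsets $I$.

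For a fixed $I \subset [n]$ with $|I|=k \le \eps n/16$, let $k_{+}, k_{-}$ denote the number of vertices of $I$ in community $+1, -1$. Conditional on $\bsigma_{*}$, the indicator variables $\{A_{uv} : u \in I, v \in I^c\}$ are mutually independent Bernoullis with parameter $p$ (same community) or $q$ (different community), so
\begin{equation*}
\mu := \EE[e_G(I, I^c) \mid \bsigma_{*}] = k_+(n_+ - k_+)p + k_-(n_- - k_-)p + k_+(n_- - k_-)q + k_-(n_+ - k_+)q,
\end{equation*}
and on $\calH$ (using $n_{\pm} \geq n/2 - n^{3/4}$ and $k_{\pm} \leq k \leq \eps n/16$) we get
\begin{equation*}
\mu \;\geq\; k\,(n/2 - n^{3/4} - \eps n/16)(p+q) \;=:\; \mu_L.
\end{equation*}
The definition of $\eta$ makes $\mu_L \, h(\eta) = (1+\eps/8)\, k \log n$. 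Since $e_G(I,I^c)$ is, given $\bsigma_{*}$, a sum of independent Bernoullis of mean $\mu \geq \mu_L$, Lemma~\ref{pre:chernoff_bound_multiplicative} yields
\begin{equation*}
\PP\!\left(e_G(I, I^c) \leq \eta\, \mu \,\big|\, \bsigma_{*}\right) \;\leq\; \exp(-\mu\, h(\eta)) \;\leq\; \exp(-\mu_L\, h(\eta)) \;=\; n^{-(1+\eps/8)\, k}.
\end{equation*}
Because each edge indicator has success probability at least $p\wedge q$, trivially $\mu \geq |I||I^c|(p\wedge q)$, so the target event $\{e_G(I,I^c) \leq \eta|I||I^c|(p\wedge q)\}$ is contained in $\{e_G(I,I^c) \leq \eta\mu\}$ and inherits the same bound. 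A union bound over $k \in [1, \eps n/16]$ and over $I$'s of size $k$ (with $\binom{n}{k} \leq (en/k)^k$) gives
\begin{equation*}
\sum_{k=1}^{\eps n/16} \binom{n}{k}\, n^{-(1+\eps/8)k} \;\leq\; \sum_{k \geq 1} (e/k)^k\, n^{-\eps k/8} \;=\; O(n^{-\eps/8}),
\end{equation*}
with the $k=1$ term dominating; combined with $\PP(\calH^c) = e^{-(1-o(1))\sqrt n} = o(n^{-\eps/8})$, this establishes the probabilistic claim.

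For the ``in particular'' lower bound on $\eta$, substituting $p+q = (a+b)\log n / n$ into the defining equation and using $(a+b)/2 \geq 1+\eps$ yields, for $n$ large enough that $2 n^{-1/4} \leq \eps/56$, the bound $h(\eta) \leq \frac{1+\eps/8}{(1+\eps)(1-\eps/7)}$. Invoking the elementary inequality $h(x) \geq (1-x)^2/2$ for $x \in [0,1]$ (immediate from the Taylor expansion $h(x) = \sum_{j \geq 2} (1-x)^j/(j(j-1))$) then yields the displayed lower bound on $\eta$ (up to the standard $\sqrt{2}$ factor coming from that inequality).

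The main obstacle I anticipate is identifying the right quantity to feed into Chernoff. Applying Chernoff directly to a target threshold of $|I||I^c|(p\wedge q)$ would at best produce an exponent of the form $|I||I^c|(p\wedge q)\, h(\eta)$, involving only the smaller edge density, which is too weak when $p$ and $q$ are imbalanced (the worst case being when $a \wedge b$ is much smaller than $(a+b)/2$). Using instead the actual conditional mean $\mu$ retains the factor $(p+q)$---essentially twice the average density---in the exponent; the trivial comparison $\mu \geq |I||I^c|(p\wedge q)$ then translates the Chernoff tail on $\{e_G \leq \eta\mu\}$ into the weaker-but-cleaner form stated in the lemma. A secondary technical point is checking that the $k$-sum from the union bound is dominated by its $k=1$ term, which is immediate once $\eps>0$.
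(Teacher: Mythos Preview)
Your proposal is correct and follows essentially the same approach as the paper: condition on $\bsigma_*$, lower-bound the conditional mean $N_1p+N_2q$ by $k(n/2-n^{3/4}-\eps n/16)(p+q)$ on $\calH$, apply the multiplicative Chernoff lower tail, then union bound over all small $I$. The paper splits the mean explicitly into the same-community and cross-community contributions $N_1,N_2$ before bounding, and uses the cruder $\binom{n}{k}\le n^k$ in the union bound, but the argument is otherwise identical; you also correctly flag the stray factor of $\sqrt{2}$ in the ``in particular'' bound, which the paper's own derivation does not quite justify as written.
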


\begin{proof}
    If $I=\emptyset$, $e_G(I,I^c)\geq \eta |I||I^c| \frac{p+q}{2}$ holds trivially. 
    Let $I$ be an non-empty set with $1\leq |I| \leq \frac{\eps}{16}n$. 
    We denote $k$ as $|I|$. 
    The number of edges between $I$ and $I^c$ is the summation of $k(n-k)$ independent Bernoulli trails.
    For arbitrary vertex $i \in I$ and vertex $j \in I^c$, if they have the same community label, then the Bernoulli trail between them has mean $p$. Otherwise,t he Bernoulli trail has mean $q$.
    We assume that there are $N_1$ trails with mean $p$ and $N_2$ trails with mean $q$, where $N_1+N_2=k(n-k)$.
    We can write out the distribution as $e_G(I,I^c) \sim \Bin(N_1,p)+\Bin(N_2,q)$.
    
    Under the balanced community event $\calH$, $\frac{n}{2}-n^{3/4} \leq |V^+|, |V^-| \leq \frac{n}{2}+n^{3/4}$. Assume that there are $k_{+}$ vertices in $I$ with label $+1$ and the remaining $k_{-}$ vertices with label $-1$.
    \begin{align}
        \frac{N_1}{k} &= \frac{1}{k} \Big( k_+(|V^+|-k_+)+k_-(|V^-|-k_-) \Big )
        \geq \frac{n}{2} - n^{3/4} - \frac{k_+^2 + k_-^2}{k} 
        \geq \frac{n}{2} - n^{3/4} - \frac{\eps}{16}n, \label{eq:equation88} \\
        \frac{N_2}{k} &= \frac{1}{k} \Big( k_-(|V^+|-k_+)+k_+(|V^-|-k_-) \Big )
        \geq \frac{n}{2} - n^{3/4} - \frac{2k_1 k_2}{k} 
        \geq \frac{n}{2} - n^{3/4} - \frac{\eps}{32}n. \label{eq:equation89}
    \end{align}

    We are interested in the probability of $e_G(I,I^{c})$ being less than $n|I||I^{c}|p\wedge q$:
    \begin{multline*}
        \PP(e_G(I,I^c) \leq \eta k(n-k) p\wedge q) 
        \leq \PP(e_G(I,I^c) \leq \eta (N_1p+N_2q))
        \leq \exp(-(N_1p+N_2q) h(\eta)),
    \end{multline*}
    where the first inequality holds because $N_1+N_2=k(n-k)$ and $p\wedge q \leq p, q$ and the second inequality holds because of the multiplicative Chernoff bound (Lemma~\ref{pre:chernoff_bound_multiplicative}).

    Let $h(\eta) = \frac{(1+\frac{\eps}{8})\log n}{(p+q)(\frac{n}{2}-n^{3/4}-\frac{\eps}{16}n)}$, we can show that
    \begin{equation}\label{eq:lemma6.1_many_inequalities}
        \frac{k(1+\frac{\eps}{8})\log n}{N_1p+N_2q} <
        \frac{(1+\frac{\eps}{8})\log n}{(p+q)(\frac{n}{2}-n^{3/4}-\frac{\eps}{16}n)}
        \leq \frac{1+\frac{\eps}{8}}{(1+\eps)(1-2n^{-1/4}-\frac{\eps}{8})} 
        < \frac{1+\frac{\eps}{8}}{(1+\eps)(1-\frac{\eps}{7})}< 1,
    \end{equation}
    where the first inequality holds because of~\eqref{eq:equation88} and~\eqref{eq:equation89}, the second inequality holds because $\frac{a+b}{2}\geq 1+\eps$ and the third inequality holds for sufficiently large $n$. Since $h(\eta) \in (0,1)$, there is an unique solution of $\eta\in (0,1)$ due to the monotocity (decreasing) of the function~$h(\cdot)$.

    Applying the fact that $(N_1p+N_2q)h(\eta) > (1+\frac{\eps}{8})\log n$~\eqref{eq:lemma6.1_many_inequalities} and using an union bound over all subsets $I \subset [n]$ with size $1 \leq |I| \leq \frac{\eps}{16}$, we have
    \begin{multline*}
        \PP(\exists I \subset [n] \text{ s.t. } 1 \leq |I| \leq \frac{\eps}{16}, e_G(I,I^c)\leq \eta |I|(n-|I|) p\wedge q) \\
        \quad \leq \sum_{k=1}^{\frac{\eps}{16}} n^k \PP(e_G(I,I^c) \leq \eta k(n-k) p\wedge q) 
        \leq \sum_{k=1}^{\frac{\eps}{16}} n^{k-k(1+\frac{\eps}{8})} = O(n^{-\frac{\eps}{8}}).
    \end{multline*}

    Because $h(x) \geq \frac{(x-1)^2}{2}$ for $x\in (0,1)$, we have $\eta \geq 1-\sqrt{2h(\eta)} > 1-\sqrt{\frac{1+\frac{\eps}{8}}{(1+\eps)(1-\frac{\eps}{7})}}$.
\end{proof}

Now, we are ready to prove Theorem~\ref{theorem:seeded_graph_matching}.

\begin{proof} [Proof of Theorem \ref{theorem:seeded_graph_matching}]
    \textbf{Firstly, we study the size of common neighbors.}
    For an arbitrary vertex $u\in [n]$ in $G_1$ and $v\in [n]$ in $G_2$ such that $v$ is not the true correspondence of $u$, we study the number of common neighbors $\NN(u,v)$ in the intersection graph corresponding to the true permutation. 
    
    \emph{Case a: These two vertices come from different communities.} If $v\neq \pi_{*}(u), \bsigma(u)\bsigma(v)=-1$, then we have $\NN(u,v) \sim \Bin(n-2,pqs^2)$. By the multiplicative Chernoff bound (Lemma~\ref{pre:chernoff_bound_multiplicative}) for Binomial distributions, for $\gamma \in (1,\infty)$, we have
    \begin{equation*}
        \PP(\NN(u,v) \geq \gamma \frac{p^2+q^2}{2}s^2(n + 2n^{\frac{3}{4}})) 
        < \PP(\NN(u,v) \geq \gamma (n-2)pqs^2) 
        \leq \exp(-(n-2)pqs^2h(\gamma)) = n^{-3},
    \end{equation*}
    where $h(\gamma)=\frac{3\log n}{(n-2)pqs^2} > 1, \gamma \in (1, \infty)$. The first inequality holds because $\frac{p^2+q^2}{2}(n + 2n^{\frac{3}{4}}) > (n-2)pq$.
    
    By a union bound over all $u,v$ such that $v\neq \pi_{*}(u), \bsigma(u)\bsigma(v)=-1$, we have
    \begin{equation*}
        \PP\{\exists v\neq \pi_{*}(u), \bsigma(u)\bsigma(v)=-1, \text{ s.t. } \NN(u,v) \geq \gamma \frac{p^2+q^2}{2}s^2n(1 + 2n^{-\frac{1}{4}}) \} = O(\frac{1}{n}).
    \end{equation*}

    \emph{Case b: These two vertices come from the same community.} If $v\neq \pi_{*}(u), \bsigma(u)\bsigma(v)=1$, then we have $\NN(u,v) \sim \Bin(|V^{\bsigma(v)}|-2, p^2s^2) + \Bin(|V^{-\bsigma(v)}|, q^2s^2)$ because there are $|V^{\bsigma(v)}|-2$ vertices left from the same community as $u,v$ and $|V^{-\bsigma(v)}|$ vertices from the different community. Denote $n_1$ as $|V^{\bsigma(v)}|-2$ and $n_2$ as $|V^{-\bsigma(v)}|$, $n_1+n_2=n-2$.
    \begin{align*}
        \nonumber \PP\{\NN(u,v) \geq \gamma \frac{p^2+q^2}{2}s^2(n + 2n^{\frac{3}{4}})\} &<
        \PP\{\NN(u,v) \geq \gamma (n_1p^2 + n_2q^2)s^2\} \\
        &\leq \exp(-(n_1p^2 + n_2q^2)s^2h(\gamma)) < n^{-3}.
    \end{align*}
    The first inequality holds because $\frac{p^2+q^2}{2}(n + 2n^{\frac{3}{4}}) > n_1p^2 + n_2q^2$. The last inequality holds because $\frac{3\log n}{(n-2)pq} > \frac{3\log n}{\frac{p^2+q^2}{2}n(1-2n^{-\frac{1}{4}})} > \frac{3\log n}{n_1p^2 + n_2q^2}$ for sufficiently large $n$.
    
    Then, with an union bound over all $u,v$ such that $v\neq \pi_{*}(u), \bsigma(u)\bsigma(v)=1$, we have
    \begin{equation*}
        \PP\{\exists v\neq \pi_{*}(u), \bsigma(u)\bsigma(v)=1, \text{ s.t. } \NN(u,v) \geq \gamma \frac{p^2+q^2}{2}s^2n(1 + 2n^{-\frac{1}{4}})\} 
        =O(\frac{1}{n}).
    \end{equation*}
    
    Combining the above two cases, we have
    \begin{equation} \label{eq:lemma6.1_no_fake_correspondence_high_common_neighbor}
        \PP \{ \exists v \neq \pi_{*}(u), \NN(u,v) \geq \gamma \frac{p^2+q^2}{2}s^2n(1 + 2n^{-\frac{1}{4}})\} = o(1).
    \end{equation}

    \textbf{Secondly, we show that the algorithm is working properly.} 
    From~\eqref{eq:lemma6.1_no_fake_correspondence_high_common_neighbor}, we assume that for all $v \neq \pi_{*}(u)$, $\NN(u,v) < \gamma \frac{p^2+q^2}{2}s^2(n + 2n^{\frac{3}{4}})$.
    We want to show $\wt \pi = \pi_{*}|_J$ in every step of Algorithm \ref{alg:seeded_graph_matching} by induction.
    This is true for the initialization of $\wt{\pi}$ as a base case, from our assumption for Theorem~\ref{theorem:seeded_graph_matching}.
    Suppose that this is true for $t$-th round of the algorithm, then at the $(t+1)$-th round, we have that for all $v \neq \pi_{*}(u)$,
    \begin{equation*}
        \NN_{\wt \pi}(u,v) \leq \NN(u,v) < \gamma \frac{p^2+q^2}{2}s^2(n + 2n^{\frac{3}{4}}).
    \end{equation*}
    Therefore, as the $(t+1)$-th round, the algorithm will still not add an fake correspondence.
    Either the algorithm terminates, or a new vertex $i$ is added to $J$ such that $\wt \pi(i)=i$, which preserves the property $\wt{\pi}=\pi_{*}|_J$.

    Next, we show that Algorithm \ref{alg:seeded_graph_matching} always ends with $J=[n]$ by contradiction. 
    Assume that the algorithm terminates with $|J|<n$, then $J^c \neq \emptyset$. Then, by definition, for al $i \in J^c$, $\NN_{\wt{\pi}}(i, \pi_{*}(i))< \gamma \frac{p^2+q^2}{2}(n + 2n^{\frac{3}{4}})$. Therefore,
    \begin{equation}\label{eq:lemma6.1_contradict_upper_bound}
        e_{G_1 \wedge_{\pi_{*}} G_2}(J,J^c) = \sum_{i\in J^c} \NN_{\wt{\pi}(i, \pi_{*}(i))} \leq |J^c| \gamma \frac{p^2+q^2}{2}s^2(n + 2n^{\frac{3}{4}}).
    \end{equation}

    On the other side, $G_1 \wedge_{\pi_{*}} G_2 \sim \SBM(n, s^2p, s^2q)$. If $s^2\frac{p+q}{2}\geq (1+\eps)\frac{\log n}{n}$, then from Lemma \ref{lemma:edges_btw_two_sets} (in view of $J^c$ as $I$), with probability of $1-O(n^{-\frac{\eps}{8}})$,
    \begin{equation}\label{eq:lemma6.1_contradict_lower_bound}
        e_{G_1 \wedge_{\pi_{*}} G_2}(J,J^c) \geq \eta |J||J^c| s^2 \frac{p+q}{2} \geq \eta (1-\frac{\eps}{16})n|J^c| s^2 (p\wedge q).
    \end{equation}
    To reach a contradiction between~\eqref{eq:lemma6.1_contradict_upper_bound} and~\eqref{eq:lemma6.1_contradict_lower_bound}, the remaining is to prove that 
    \begin{equation}\label{eq:lemma6.1_contradict_condition}
        \gamma \leq \frac{\eta (1-\frac{\eps}{16}) ns(p\wedge q)}{\frac{p^2+q^2}{2} (n+2n^{\frac{3}{4}})}.
    \end{equation}
    We observe that $h(\gamma)=\frac{3\log n}{(n-2)pqs^2} = \Theta(1)$, while $\frac{\eta (1-\frac{\eps}{16}) ns(p\wedge q)}{\frac{p^2+q^2}{2} (n+2n^{\frac{3}{4}})} = \Theta(\frac{n}{\log n})$. Therefore~\eqref{eq:lemma6.1_contradict_condition} holds for sufficiently large $n$.

    \textbf{Finally, we analyze the time complexity of Algorithm~\ref{alg:seeded_graph_matching}.} 
    For each $u\in [n]$ to be added into the seeded set $I$, we update the number of common neighbors $\NN_{\wt{\pi}}(i,j)$ for all $i,j \in I$. This step takes $O(\deg_{G_1}(u)\deg_{G_2}(u))$, where $\deg_{G}(u)$ denotes the degree of $u$ in graph $G$. With probability $1-o(\frac{1}{n^2})$, $\deg_{G_1}(u)=O(n(p+q))$. 
    By summing up for all possible $u$ to be added, the time complexity of Algorithm \ref{alg:seeded_graph_matching} is $O(n^3(p+q)^2)$.
\end{proof}

\section{Proof of Proposition~\ref{prop:mean_regimeI}}
\label{sec:prop1}

In this section, we calculate the expectation of similarity score for correlated SBM. Recall that we have defined $\sigma_{+}^2:= sp(1-sp), \sigma_{-}^2:= sq(1-sq)$ and that $\sigma_{\eff}^2:=(\frac{\sigma_{+}^2+\sigma_{-}^2}{2})$ throughout this paper.

\begin{proof}[Proof of Proposition~\ref{prop:mean_regimeI}]
By definition of the similarity score, 
\[
\EE[\Phi_{ij} \indH]=\sum_{H \in \mathcal{T}} \aut(H) \sum_{S(i)\cong H} \sum_{S(j)\cong H} \EE\left[\overline{A}_{S(i)} \overline{B}_{S(j)} \indH \right].\] 

The expectation is zero if $S(i)=S(j)$, which implies considering $\pi_{*}(i)=j$ suffices:
\begin{align}
    \nonumber \EE[\Phi_{ij} \indH]=(1+o(1))\sum_{H \in \mathcal{T}} \aut(H) \sum_{S(i)\cong H} \EE\left[\overline{A}_{S(i)} \overline{B}_{S(\pi_{*}(i))} \condH \right].
\end{align}

Define $X_1$ as the random variable for the number of in-community edges of $S(i)$, which takes possible value from $0$ to $N$. Define $|\{S(i): S(i) \cong H\}|$ as the total number of $S$ rooted at $i$ and isomorphic to $H$, we have
\begin{align*}
    \EE[\Phi_{i\pi_{*}(i)} \indH] 
    &= (1+o(1))\sum_{H\in \mathcal{T}} \aut(H) |\{S(i): S(i) \cong H\}| \EE \bigg[ \EE\left[\overline{A}_{S(i)} \overline{B}_{S(\pi_{*}(i))} \mid X_1, \calH \right] \bigg] \\
    &\leq (1+o(1))\sum_{H\in \mathcal{T}} n^N \sum_{N_1=0}^N \PP(X_1=N_1 \condH) \rho^N \sigma_{+}^{2N_1} \sigma_{-}^{2N-2N_1}.
\end{align*}
The second inequality holds by $|\{S(i): S(i) \cong H\}|=\frac{\binom{n}{N}N!}{\aut(H)}=(1+o(1))\frac{n^N}{\aut(H)}$ since $\binom{n}{N}N!$ is the number of possible vertex embedding of the chandelier onto the random graph and also $\EE\left[\overline{A}_{S(i)} \overline{B}_{S(\pi_{*}(i))} |\{X_1=N_1\}\cap \calH \right] = \rho_{+}^{N_1}\rho_{-}^{N-N_1}\sigma_{+}^{2N_1} \sigma_{-}^{2(N-N_1)} 
= (1+o(1))\rho^N \sigma_{+}^{2N_1} \sigma_{-}^{2(N-N_1)}$.

Then, according to Lemma \ref{lemma:choose_incomm_edges} and the binomial theorem, we have $\PP(X_1=N_1 \condH)=(1+o(1))\binom{N}{N_1}2^{-N}$. Therefore,
\begin{align*}
    \EE[\Phi_{i\pi_{*}(i)} \indH]
    &= (1+o(1))|\mathcal{T}| n^N \rho^N \sum_{N_1=0}^N {\binom{N}{N_1}} \frac{1}{2^N} \sigma_{+}^{2N_1} \sigma_{-}^{2N-2N_1} \\
    &= (1+o(1))|\mathcal{T}| n^N \rho^N (\frac{\sigma_{+}^2+\sigma_{-}^2}{2})^N = (1+o(1))|\mathcal{T}|n^{N}\rho^{N}\sigma_{\eff}^{2N}. \qedhere 
\end{align*}
\end{proof}

\section{Proof of Proposition~\ref{prop:var_true_regimeI}}
\label{sec:prop2}

\subsection{Decorated union graph and union graph partition}

The analysis of the first moment involves two chandeliers, while the second moment analysis requires four chandeliers. Before delving into the analysis, we introduce the notation for the decorated union graph and establish a rule for union graph partition. We adopt some notations and definitions from \cite{mao2022chandelier} and further introduce concepts that are particularly useful for correlated stochastic block models.

\begin{itemize}
    \item \textbf{(Decorated Graph)} For any graph $G$, define $\dot{G}$ as a decorated graph $\dot{G}:=(G, D_G)$, where $D_G$ is a decoration mapping from the edge set to a decoration set. Define the edge set of decorated graph $E(\dot{G}):=E(G)$ and the vertex set of decorated graph $V(\dot{G}):=V(G)$.    
    \item \textbf{(Decorated Union Graph)} For any pair of chandeliers $H,I \in \mathcal{T}$, let $S_1(i), S_2(j) \cong H$ and $T_1(i), T_2(j) \cong I$. The union graph is defined as $U=S_1 \cup S_2 \cup T_1 \cup T_2$. 
    Now, let us define the proper decoration for $\dot{U}$:
    \begin{align}
        D_U(e) = 
        \begin{cases}
            \text{The subset of } \{S_1,S_2,T_1,T_2\} \text{ where } e \text{ occurs,} & \text{if } e\in E(U), \\
            \emptyset, & \text{if } e \notin E(U).
        \end{cases}
    \end{align}
    We call an edge $e \in E(U)$ $t$-decorated if $|D_U(e)|=t$, $t\in \{0,1,2,3,4\}$.
    \item \textbf{(Decorated Set Operation)}
    Assume that $\dot{U}=(U,D_U)$ and $\dot{U'}=(U',D_{U'})$ are two decorated graphs. We define the union, intersection, and difference operations, from which the complement and symmetric difference naturally follow.
    \begin{itemize}
        \item Union. $\dot{U} \cup \dot{U}' := (U \cup U', D_{U''}: D_{U''}(e) = D_U(e) \cup D_{U'}(e))$.
        \item Intersection. $\dot{U} \cap U' := (U \cap U', D_{U''}: D_{U''}(e) = D_U(e) \cap D_{U'}(e))$.
        \item Difference. $\dot{U} \setminus \dot{U}' := (U \setminus U', D_{U''}: D_{U''}(e) = D_U(e) \setminus D_{U'}(e))$.
    \end{itemize}
    \item \textbf{(Union Graph Partition)} Assume that $i$ is the root of union graph. Consider the graph $U$ with edge $(i,a)$ removed for all neighbors $a$ of $i$. Let $\dot{\mathcal{C}}(i,a)$ be the connected component therein that contains $a$. Let $\dot{\mathcal{G}}(i,a)$ be the graph union of $\dot{\mathcal{C}}(i,a)$ and the edge $(i,a)$. 
    
    Then, we divide the set of root neighbors $\mathcal{N}(i)$ into the following sets depending on whether $\dot{\mathcal{G}}(i,a)$ is a tree: $\mathcal{N}_T=\{a: (i,a) \in E(U), \dot{\mathcal{G}}(i,a) \text{ is a tree}\}, \mathcal{N}_N=\{a:(i,a)\in E(\dot{U})\}\setminus \mathcal{N}_T$. Furthermore, we breakdown $\mathcal{N}_{T}$ into two sets depending on whether there are at least $M$ edges $3$-decorated:
    $\mathcal{N}_M=\{ a\in \mathcal{N}_T, |\{e\in E(\dot{\mathcal{G}}(i,a)):|D_{U}(e)|\geq 3\}| \geq M\}$, $\mathcal{N}_L=\mathcal{N}_T\setminus \mathcal{N}_M$, . 
    
    Next, we define the decomposition of decorated union graph $\dot{U}$: 
    \begin{equation}\label{eq:definition_ULUMUN_truepair}
        \dot{U}_L := \bigcup_{a\in \mathcal{N}_L} \dot{\mathcal{G}}(i,a), \quad
        \dot{U}_M := \bigcup_{a\in \mathcal{N}_M} \dot{\mathcal{G}}(i,a), \quad
        \dot{U}_N := \dot U \setminus (\dot{U}_L \cup \dot{U}_M).
    \end{equation} 
    If any of these become an empty set, we define it as the graph consisting of the single vertex $i$. To provide an intuitive understanding: $\dot{U}_N$ contains those chandelier branches that form cycles in the union graph, while $\dot{U}_L$ and $\dot{U}_M$ are the collection of those chandelier branches that do not tangle with other branches from bottom and remain a part of tree rooted at $i$ (or $j$) in the union graph. 
    As a characteristic of trees, the decorations on edges within $\dot{U}_M$ and $\dot{U}_N$ are monotonically decreasing, meaning that the decoration set of an edge at depth $d \geq 1$ is always a subset of the decoration set of the preceding edge at depth $d-1$ that connects to it.
    For each node $v\in \dot{U}_M$ that is connected to the root, the connected component $\dot{\mathcal{G}}(i,a)$ should have at list $M$ vertices that are at least $3$-decorated. $\dot{U}_L$ is the union of all connected components remained that are trees. The definition of $\dot{U}_L$ implies that the branches cannot be fully $3$-decorated, otherwise this branch would fall in $\dot{U}_M$.    
\end{itemize}

For every union graph, we can decompose the decorated union graph based on its decoration sets. 
Specifically, we want to keep track of how many times each vertex appears on the chandeliers that are rooted at $i$ and the chandeliers that are rooted at $j$. We present the definition as follows.

\begin{definition} [Decorated union graph decomposition by decorations]
    \begin{align}
        K_{\ell m} := (V(U), \{e\in E(U): \ell=\mathbf{1}_{\{e\in S_1\}}+\mathbf{1}_{\{e\in T_1\}}, k=\mathbf{1}_{\{e\in S_2\}}+\mathbf{1}_{\{e\in T_2\}}\}).
    \end{align}
\end{definition}


\begin{definition}[Edge counts on $3$ and $4$-decorated edges] Based on the definition of union graph partition and $K_{\ell m}$, we further define
    \begin{align}
        &e_L := \frac{1}{2}[e((K_{22}\cup K_{21}) \cap U_L)+e((K_{22}\cup K_{12}) \cap U_L)],\\
        &e_M := \frac{1}{2}[e((K_{22}\cup K_{21}) \cap U_M)+e((K_{22}\cup K_{12}) \cap U_M)],\\
        &e_N := \frac{1}{2}[e((K_{22}\cup K_{21}) \cap U_N)+e((K_{22}\cup K_{12}) \cap U_N)],
    \end{align}
    where $2e_L$ (resp. $2e_M$, $2e_N$) is the counts of $3$-decorated edges and two times the $4$-decorated edges on $U_L$ (resp. $U_M$, $U_N$).
\end{definition}

We use $\mathcal{U}_L(v_L,e_L,\ell)$ to denote the collection of all possible $\dot{U}_L$ with $v_L$ vertices except for $i$ and $j$, $e_L$ counts of special edges as defined, and $\ell$ edges in $K_{11} \cap \dot{U}_L$. $\mathcal{U}_M(v_M,e_M)$ denotes the family of all possible $\dot{U}_M$ with $v_M$ vertices except for $i$ and $j$ and $e_L$ counts of special edges as defined. $\mathcal{U}_N(v_N,e_N,k)$ denotes the family of all possible $\dot{U}_N$ with $v_N$ vertices except for $i$ and $j$, $e_N$ counts of special edges as defined, and excess $k$.

We further define the weights for of decorated graphs $\dot{U}_L$, $\dot{U}_M$, and $\dot{U}_N$.
\begin{definition}
    Let $\dot{G}$ be an arbitrary $\dot{U}_L$, $\dot{U}_M$, or $\dot{U}_N$, we define the weight of $\dot{G}$ with regard to a chandelier $S$ as 
    \begin{equation}\label{eq:definition_weight_general_on_chandelier}
        w_{S}(\dot{G}) := \prod_{\mathcal{B}\in \mathcal{K}(S),\mathcal{B}\subset G} \aut(\mathcal{B})^{\frac{1}{2}}.
    \end{equation} 
    We set  $w_S(\dot{G})=1$ if $\dot{G}$ contains no bulbs in $\mathcal{K}(S)$.
    We define the weight of $\dot{G}$ as the multiplication of its weights over $S_1,S_2,T_1,T_2$: 
    \begin{equation}\label{eq:definition_weight_general}
        w(\dot{G}):= w_{S_1}(\dot{G})w_{S_2}(\dot{G})w_{T_1}(\dot{G})w_{T_2}(\dot{G}).
    \end{equation}
\end{definition}

We have the following observation with regard to the concepts introduced above:
\begin{itemize}
    \item \textbf{(Counting edges)} 
    $2(v_L+v_M+v_N+k+1+\mathbf{1}_{\{i\neq j\}})+2(e_L+e_M+e_N)=4N$. This holds because $2(v_L+v_M+v_N+k+1+\mathbf{1}_{\{i\neq j\}})$ counts all the edges on union graph twice and $2(e_L+e_M+e_N)$ makes up for an additional count for $3$-decorated edges and two counts for $4$-decorated edges.
    \item \textbf{(Automorphism as decorated graph weights)} 
    From the definition of $\aut(\cdot)$ and chandelier,
    \begin{equation}\label{eq:weights_product_is_autHautI}
        (\aut(S_1)\aut(S_2)\aut(T_1)\aut(T_2))^{\frac{1}{2}}=w(\dot{U}_L)w(\dot{U}_M)w(\dot{U}_N).    
    \end{equation}
    \item \textbf{(Trivial upper bound on the decorated graph weights)} Since $\aut(\mathcal{B})$ is upper bounded by $R$ for arbitrary bulb $\mathcal{B}$, this inequality follows from the definition of decorated union graph weights
    \begin{equation}\label{eq:trivial_bound_on_weights_truepair}
        w(\dot{U}_L) \leq (\sqrt{R}^{|\mathcal{N}_L|})^4 = R^{2|\mathcal{N}_L|}.
    \end{equation}
    The same holds for $\dot{U}_M$ and $\dot{U}_N$.
\end{itemize}

Specific to the moment calculation for correlated stochastic block models, we present the following two definitions.

\begin{definition} [$g_{\dot{U}}(\sigma_{+}, \sigma_{-})$] \label{def:g_U}
    Fix chandeliers $(S_1,S_2,T_1,T_2)$ on the complete graph, conditioned on the ground-truth labels of correlated SBMs, we define a function as follows: 
    \begin{equation*}
        g_{\dot{U}}(\sigma_{+}, \sigma_{-}) := \sigma_{+}^{h(S_1)+h(S_2)+h(T_1)+h(T_2)} \sigma_{-}^{\overline h(S_1)+\overline h(S_2)+\overline h(T_1)+\overline h(T_2)},
    \end{equation*}
    where $h(\cdot)$ is the counts of edges connecting two points from the same community, $\overline h(\cdot)$ is the counts of edges connecting two points from different communities on the complete graph $K_n$.
\end{definition}

\begin{definition} [Extension of $\sigma_{\eff}^2$]
    We define $\gamma_2 := \left (\frac{\sigma_{+}^4+\sigma_{-}^4}{2}\right )/\sigma_{\eff}^4$ 
    and $\gamma_1 := \left (\frac{\sigma_{+}^3+\sigma_{-}^3}{2}\right )/\sigma_{\eff}^3$. 
    Thus, $\left (\frac{\sigma_{+}^4+\sigma_{-}^4}{2}\right )^{k} = (1+o(1))\left (\frac{\sigma_{+}^2+\sigma_{-}^2}{2}\right )^{2k} \gamma_2^k$ 
    and $\left (\frac{\sigma_{+}^3+\sigma_{-}^3}{2}\right )^{k} = (1+o(1))\left (\frac{\sigma_{+}^2+\sigma_{-}^2}{2}\right )^{\frac{3k}{2}} \gamma_1^k$.
\end{definition}

\begin{remark}
    We observe that $\gamma_2 < 2$ and $\gamma_1 \leq \gamma_2$. The first holds simply because $\sigma_{+}^4+\sigma_{-}^4 \geq 2\sigma_{+}^2\sigma_{-}^2$. By Cauchy--Schwarz inequality, for an arbitrary random variable $X$, $\EE[X^3]\EE[X^2]^2 \leq \EE[X^4]\EE[X^2]^{3/2}$. Let $X$ takes $\sigma_{+}^2$ and $\sigma_{-}^2$ uniformly at random, then $\frac{\EE[X^3]}{\EE[X^2]^{3/2}} \leq \frac{\EE[X^4]}{\EE[X^2]^2}$ and thus $\gamma_1 \leq \gamma_2$ is implied.
\end{remark}

\subsection{Proof of the Proposition}
In regime $sD_{+}(a,b) > 1$, we can recover the correct community label with high probability~\cite{abbe2015bisec_exact_recovery,mossel2015consistency,Abbe_survey}. Therefore we can effectively work on the correct centralized adjacency matrices. Recall that $\calH=\{\frac{n}{2}-n^{\frac{3}{4}} \leq|V^+|,|V^-|\leq \frac{n}{2}+n^{\frac{3}{4}}\}$, as defined in Section~\ref{sec:preliminaries}.

\begin{proof}[Proof of Proposition~\ref{prop:var_true_regimeI}]
    From the definition of variance,
    \begin{equation*}
        \Var[\Phi_{ij}\indH] 
        = \sum_{H,I\in \mathcal{T}} \aut(H)\aut(I) \sum_{S_1,S_2\cong H, T_1,T_2\cong I} \operatorname{Cov}(\overline{A}_{S_1}\overline{B}_{S_2} \indH, \overline{A}_{T_1}\overline{B}_{T_2} \indH).
    \end{equation*}
    If $S_1 = S_2$, $T_1 = T_2$, and $V(S_1) \cap V(T_1) = \{i\}$, the covariance becomes zero. Also, we know from Proposition~\ref{prop:mean_regimeI} that $\EE[\overline{A}_{S_1} \overline{B}_{S_2}] \EE[\overline{A}_{T_1} \overline{B}_{T_2}]$ is either $0$ or $(1+o(1))(\rho \sigma_{\eff})^{2N}$. Thus,
    \begin{equation*}
        \Var[\Phi_{ij}\indH] \leq \sum_{H,I\in \mathcal{T}} \aut(H)\aut(I) \sum_{S_1,S_2\cong H, T_1,T_2\cong I} \EE[\overline{A}_{S_1} \overline{B}_{S_2} \overline{A}_{T_1} \overline{B}_{T_2} \indH].
    \end{equation*}
    $\EE[\overline{A}_{S_1} \overline{B}_{S_2} \overline{A}_{T_1} \overline{B}_{T_2}]$ is non-zero if and only if each edge $e\in U$ is at least $2$-decorated. 
    Let $\mathcal{W}_{ij}$ denote the collection of decorated union graph $U_D$, where $S_1(i), S_2(j) \cong H, T_1(i), T_2(j) \cong I$ for some $H,I\in \mathcal{T}$, such that each edge is at least $2$-decorated and satisfies $S_1\neq S_2$ or $T_1\neq T_2$ or $V(S_1) \cap V(T_1) \neq \{i\}$.
    \begin{equation}\label{eq:106}
        \Var[\Phi_{ij}\indH]
        \leq \sum_{\dot{U}\in \mathcal{W}_{ij}} (\aut(S_1)\aut(S_2)\aut(T_1)\aut(T_2))^{\frac{1}{2}} \EE[\overline{A}_{S_1} \overline{B}_{S_2} \overline{A}_{T_1} \overline{B}_{T_2} \indH].
    \end{equation}
    
    Conditioned on the ground-truth labeling $\bsigma_{*}$,
    \begin{equation*}
        g_{\dot{U}}(\sigma_{+}, \sigma_{-}) = \prod_{2\leq \ell+m \leq 4} \prod_{(u,v)\in K_{\ell m}} \sigma_{c(u,v)}^{(\ell + m)},
    \end{equation*}
    where $c(u,v)=+$ if $\bsigma_{*}(u)=\bsigma_{*}(v)$  and $c(u,v)=-$ if $\bsigma_{*}(u)\neq \bsigma_{*}(v)$.

    Conditioned on $\bsigma_{*}$, $\calH$ is determined,
    \begin{equation}\label{eq:108}
        \EE[\overline{A}_{S_1} \overline{B}_{S_2} \overline{A}_{T_1} \overline{B}_{T_2} \indH] 
        = \EE\left[ \EE[\overline{A}_{S_1} \overline{B}_{S_2} \overline{A}_{T_1} \overline{B}_{T_2} \condsig ] \indH \right].
    \end{equation}

    Conditioned on $\bsigma_{*}$, $g_{\dot{U}}(\sigma_{+}, \sigma_{-})$ is fixed, so as $g_{\dot{U}}^{-1}(\sigma_{+}, \sigma_{-})$.
    \begin{equation*}
        g_{\dot{U}}(\sigma_{+}, \sigma_{-})^{-1} \EE[\overline{A}_{S_1} \overline{B}_{S_2} \overline{A}_{T_1} \overline{B}_{T_2} \condsig]
        = \prod_{2\leq \ell+m \leq 4} \prod_{(u,v)\in K_{\ell m}} \sigma_{c(u,v)}^{-(\ell+m)}\EE[\overline{A}_{uv}^{\ell} \overline{B}_{uv}^m \condsig].
    \end{equation*}
    
    We define $\beta_{\ell m}(e)$ as $\sigma_{c(u,v)}^{-(\ell+m)}\EE[\overline{A}_{uv}^\ell \overline{B}_{uv}^m \condsig]$, for $e=(u,v) \in K_{\ell, m}$. Thus,
    \begin{equation*}
        g_{\dot{U}}(\sigma_{+}, \sigma_{-})^{-1} \EE[\overline{A}_{S_1} \overline{B}_{S_2} \overline{A}_{T_1} \overline{B}_{T_2} \condsig]
        = g_{\dot{U}}(\sigma_{+}, \sigma_{-})^{-1} \prod_{2\leq \ell+m \leq 4} \prod_{(u,v)\in K_{\ell m}} \beta_{\ell m}(e).
    \end{equation*}
    
    Then, we apply Lemma \ref{lemma:cross_moments_regimeI} to bound each $\beta_{\ell m}(e)$. Since the upper bound of $\beta_{\ell m}(e)$ only depends on $\ell$ and $m$, we have the following:
    \begin{align}
        g_{\dot{U}}(\sigma_{+}, \sigma_{-})^{-1} \EE[\overline{A}_{S_1} \overline{B}_{S_2} \overline{A}_{T_1} \overline{B}_{T_2} \condsig]
        \nonumber \leq &|\max\{\rho_{+},\rho_{-}\}|^{e(K_{11})} \frac{1}{\sqrt{(sp\wedge sq)}^{(e(K_{21})+e(K_{12}))}} \frac{1}{(sp\wedge sq)^{e(K_{22})}}\\
        = &(1+o(1))\rho^{e(K_{11})} (sp\wedge sq)^{-2N+e(U)},
        \label{eq:111}
    \end{align}
    where the last equality holds because $2\left [\frac{1}{2}(e(K_{12})+e(K_{21}))+e(K_{22})\right ]=(2-2)e(K_{20})+(2-2)e(K_{02})$ 
    $+(3-2)e(K_{12})+(3-2)e(K_{21})+(4-2)e(K_{22})=4N-e(U)$. From~\eqref{eq:108} and~\eqref{eq:111}, and the fact that $\calH$ happens with high probability, we have
    \begin{equation}
        \EE[\overline{A}_{S_1} \overline{B}_{S_2} \overline{A}_{T_1} \overline{B}_{T_2} \indH] 
        =(1+o(1))\rho^{e(K_{11})}(sp\wedge sq)^{-2N+e(U)} \EE\left[g_{\dot{U}}(\sigma_{+}, \sigma_{-}) \condH \right].
    \end{equation}
    It remains to compute $\EE\left[g_{\dot{U}}(\sigma_{+}, \sigma_{-}) \condH \right]$. Here, $\dot{U}$ is fixed and we assume that there are $d_2$ $2$-decorated edges, $d_3$ $3$-decorated edges, and $d_4$ $4$-decorated edges without loss of generality. 
    Let $X^{(2)},X^{(3)},$ and $X^{(4)}$ be the number of in-community edges for $2,3$, and $4$-decorated edges on the complete graph respectively. In $\dot{U}$ is a tree, we can apply Lemma~\ref{lemma:choose_incomm_edges_extended} to determine the distribution of $X^{(2)},X^{(3)}$, and $X^{(4)}$. 

    Generally, $\dot{U}$ contains cycles. $\dot{U}$ can be decomposed into a tree with an additional set of edges connecting vertices on the tree. We assume that there are $A_i$ $i$-decorated edges on the tree and $B_i$ $i$-decorated edges in the additional edge set of size $e(\dot{U})-v(\dot{U}) + 1$, for $i\in \{2,3,4\}$.
    We apply the Corollary~\ref{corollary:incomm_edges_independence} and have that $\PP(X^{(2,a)}=a_2, X^{(3,a)}=a_3, X^{(4,a)}=a_4, X^{(2,b)}=b_2, X^{(3,b)}=b_3, X^{(4,b)}=b_4 \condH) \leq (1+o(1)) \binom{A_2}{a_2} \binom{A_3}{a_3} \binom{A_4}{a_4} \frac{1}{2^{A_2+A_3+A_4}},$
    where $X^{(i,a)}$ is the number of $i$-decorated in-community edges on the tree-part of $\dot{U}$ and $X^{(i,b)}$ is the number of $i$-decorated in-community edges among the additional edge set. 

    $A_i$ and $B_i$ are fixed but summed up to $d_i$ for each $\dot{U}$.
    $a_i$ ($b_i$) takes possible values from $0$ to $A_i$ ($B_i$), for $i\in [4]$. The number of $i$-decorated in-community edges is $a_i+b_i$, and the number of $i$-decorated cross-community edges is $d_i-a_i-b_i=(A_i-a_i)+(B_i-b_i)$.
    
    \begin{align}
        \nonumber \EE\left[g_{\dot{U}}(\sigma_{+}, \sigma_{-}) \condH \right] &= \sum_{a_2}^{A_2} \sum_{a_3}^{A_3} \sum_{a_4}^{A_4} \sum_{b_2}^{B_2} \sum_{b_3}^{B_3} \sum_{b_4}^{B_4} \sigma_{+}^{2(a_2+b_2)}\sigma_{-}^{2(d_2-a_2-b_2)} \sigma_{+}^{3(a_3+b_3)}\sigma_{-}^{3(d_3-a_3-b_3)} \\
        \nonumber &\quad \times \sigma_{+}^{4(a_4+b_4)}\sigma_{-}^{4(d_4-a_4-b_4)} \binom{A_2}{a_2} \binom{A_3}{a_3} \binom{A_4}{a_4} \frac{1}{2^{A_2+A_3+A_4}} \\
        \nonumber &= \left(\frac{\sigma_{+}^2+\sigma_{-}^2}{2}\right)^{A_2} \left(\frac{\sigma_{+}^3+\sigma_{-}^3}{2}\right)^{A_3} \left(\frac{\sigma_{+}^4+\sigma_{-}^4}{2}\right)^{A_4} \\
        \nonumber &\quad \times \sum_{b_2=0}^{B_2}\sigma_{+}^{2b_2}\sigma_{-}^{2(B_2-b_2)} \sum_{b_3=0}^{B_3}\sigma_{+}^{3b_3}\sigma_{-}^{3(B_3-b_3)} \sum_{b_4=0}^{B_4}\sigma_{+}^{4b_4}\sigma_{-}^{4(B_4-b_4)} \\
        &\leq \left(\frac{\sigma_{+}^2+\sigma_{-}^2}{2}\right)^{d_2} \left(\frac{\sigma_{+}^3+\sigma_{-}^3}{2}\right)^{d_3} \left(\frac{\sigma_{+}^4+\sigma_{-}^4}{2}\right)^{d_4} 2^{k+1},
    \end{align}
    where the last inequality holds because we can upper bound by multiplying a binomial coefficient $\binom{B_2}{b_2}\binom{B_3}{b_3}\binom{B_4}{b_4}$ and that $B_2+B_3+B_4=k+1$. By the definition of $\gamma_2$ and $\gamma_1$,
    \begin{align}\label{eq:bound_g_usigma_condH}
        \EE\left[g_{\dot{U}}(\sigma_{+}, \sigma_{-}) \condH \right] \leq \sigma_{\eff}^{2d_2+3d_3+4d_4} \gamma_1^{d_3} \gamma_2^{d_4} 2^{k+1}
        \leq \sigma_{\eff}^{4N} \gamma_2^{2(e_L+e_M+e_N)} 2^{k+1}.
    \end{align}
    The last inequality holds because $\gamma_1 < \gamma_2$ and $d_3+d_4 =e(K_{12})+e(K_{21})+e(K_{22}) \leq 2(e_L+e_M+e_N)$.

    Denote $\mathcal{W}_{ij}(v,k)$ as the subset of $\mathcal{W}_{ij}$ that contains all the elements that have exactly $v$ vertices except for $i$ and $j$ and excess $k$. By applying the definition of union graph partitions, decorated graph weights, and~\eqref{eq:106},
    \begin{align}\label{eq:117}
        \nonumber \Var[\Phi_{ij}\indH]
        &\leq (1+o(1))\sum_{k\geq -1} \sum_v (sp\wedge sq)^{-2N+v+k+1} \sum_{\dot{U}\in \mathcal{W}_{ij}(v,k)} \rho^{e(K_{11})} w(\dot{U}_L) w(\dot{U}_M) w(\dot{U}_N) \\
        &\quad \times \sigma_{\eff}^{4N} \gamma_2^{2(e_L+e_M+e_L)} 2^{k+1},
    \end{align}
    
    For the $\dot{U}_L, \dot{U}_M, \dot{U}_N$ partition, we define
    \begin{align}
        \label{eq:definition_PL_regimeI}
        &P_L(v_L,e_L,\ell):= \sum_{\dot{U}_L\in \mathcal{U}_L(v_L,e_L,\ell)} w(\dot{U}_L), \\
        \label{eq:definition_PM_regimeI}
        &P_M(v_M,e_M):= \sum_{\dot{U}_M\in \mathcal{U}_L(v_M, e_M)} w(\dot{U}_M), \\
        \label{eq:definition_PN_regimeI}
        &P_N(v_N,e_N, k) := \sum_{\dot{U}_N\in \mathcal{U}_N(v_N, e_N)} w(\dot{U}_N).
    \end{align}
    Then, we can write out the upper bound as
    \begin{align*}
        \Var[\Phi_{ij}\indH] &\lesssim \sum_k \sum_v (sp\wedge sq)^{-2N+v+k+1} \sum_{v_L,v_M,v_N \geq 0} \sum_{e_L, e_M, e_N, \ell \geq 0}  \sigma_{\eff}^{4N} \gamma_2^{2(e_L+e_M+e_L)} 2^{k+1} \\
        &\quad \times \rho^{\ell} P_L(v_L,e_L, \ell) P_M(v_M,e_M) P_N(v_N,e_N, k).
    \end{align*}
        
    Applying Proposition~\ref{prop:mean_regimeI}, Lemma \ref{lemma:var_true_regimeI_PL}, Lemma \ref{lemma:var_true_regimeI_PM}, and Lemma \ref{lemma:var_true_regimeI_PN}, we can upper bound the variance by:
    \begin{align*}
        \nonumber \Var[\Phi_{ij}\indH]
        &\lesssim \sum_k \sum_v (sp\wedge sq)^{-2N+v+k+1} \sum_{v_i \geq 0} \sum_{e_i,\ell \geq 0} \rho^{\ell} 
        (11\beta n)^{v_M} R^{\frac{4e_M}{M}} \mathbf{1}_{\{v_M\leq e_M\frac{4K+4M}{M}\}} \\
        \nonumber & \quad \times \sigma_{\eff}^{4N} \gamma_2^{2(e_L+e_M+e_L)} 
        \times 2n^{v_L} (1+2L^2)^{e_L} f_{\wt t, t} \mathbf{1}_{\{K+M|e_L+v_L\}} \mathbf{1}_{\{v_L+e_L\leq 2N\}} \\
        & \quad \times n^{v_N} (11\beta)^{v_N} (22R^4(v_N+1)^2)^{k+1} \mathbf{1}_{\{v_N \leq 2(K+M)(k+1)\}}
    \end{align*}
    Note that $v_M\leq e_M\frac{4K+4M}{M}$, we have $\sum_{v_M\geq 0} (11\beta)^{v_M} \leq 2 (11\beta)^{e_M\frac{4K+4M}{M}}$. Also, we know that $v_N\leq 2N, v_N\leq 2(K+M)(k+1)$. Then, $\sum_{v_N\geq 0} (11\beta)^{v_M} \leq 2 (11\beta)^{2(K+M)(k+1)}$. Thus,
    \begin{align*}        
        \nonumber \Var[\Phi_{ij}\indH] &\lesssim \sum_{k,v} (sp\wedge sq)^{-2N+v+k+1} \sum_{e_L+e_M+e_N = 2N-(v+k+1)} 2^3 n^v \left(R^{\frac{4}{M}}(11\beta)^{\frac{4K+4M}{M}}\right)^{e_M} \\
        \nonumber & \quad \times (22R^4(2N+1)(11\beta)^{2(K+M)})^{k+1} (1+2L^2)^{e_L} \sigma_{\eff}^{4N} \gamma_2^{2(e_L+e_M+e_L)} \\
        & \quad \times \sum_{v_L\geq 0, l\geq 0} \left( \rho^{\ell} f_{\wt{t}, t} \mathbf{1}_{\{K+M|e_L+v_L\}} \mathbf{1}_{\{v_L+e_L\leq 2N\}} \right).
    \end{align*}
    Regarding to the Lemma 4 in \cite{mao2022chandelier}, if $\frac{12L^2}{\rho^{2(K+M)}(|\mathcal{J}|-2L)} \leq \frac{1}{2}$, then
    \begin{align*}
         \Var[\Phi_{ij}\indH]
         &\lesssim \sum_{k,v} (sp\wedge sq)^{-2N+v+k+1} \sum_{v_L+v_M+v_N+e_L+e_M+e_N=2N} 2^3 n^v \left(R^{\frac{4}{M}}(11\beta)^{\frac{4K+4M}{M}}\right)^{e_M} 
        \\& \quad (22R^4(2N+1)(11\beta)^{2(K+M)})^{k+1} (1+2L^2)^{e_L} \sigma_{\eff}^{4N} 
        \gamma_2^{2(e_L+e_M+e_N)}
        \\& \quad 8\rho^{2N} \left( \rho^{-2e_L}\mathbf{1}_{e_L\neq 0} + \frac{12L^2}{\rho^{2(K+M)|\mathcal{J}|}} \mathbf{1}_{e_L=0} \right) |\mathcal{T}|^2.
    \end{align*}
    Note that $\frac{12L^2}{\rho^{2(K+M)}(|\mathcal{J}|-2L)} \leq \frac{1}{2}$ is guaranteed by the first condition in~\eqref{eq:conditions_true_pair_regimeI} $\frac{14L^2}{\rho^{2(K+M)}(|\mathcal{J}|)} \leq \frac{1}{2}$ for large enough $n$.
    
    In the following step, we divide $\Var[\Phi_{ij}\indH]$ by $\EE[\Phi_{i\pi_{*}(i)}\indH]^2$ and use the fact that $e_N = 2N-(v+k+1+e_L+e_M)$:
    
    \begin{align*}
        \nonumber \frac{\Var[\Phi_{ij}\indH]}{\EE[\Phi_{i\pi_{*}(i)}\indH]^2} 
        &\leq (1+o(1))2^6 \sigma_{\eff}^{4N-4N} \sum_{k\geq -1} \left ( \frac{22R^4(2N+1)(11\beta)^{2(K+M)}}{n}\right )^{k+1} \\
        \nonumber & \sum_{e_M\geq 0} \left ( \gamma_2^2 \frac{R^{\frac{4}{M}}(11\beta)^{\frac{4K+4M}{M}}}{ns(p\wedge q)}\right )^{e_M} \\
        \nonumber & \sum_{e_L\geq 0} \left ( \gamma_2^2 \frac{1+2L^2}{ns(p\wedge q)}\right )^{e_L} \left ( \rho^{-2 e_L}\mathbf{1}_{\{e_L>0\}} + \frac{12 L^2}{\rho^{2(K+M)} |\mathcal{J}|} \mathbf{1}_{\{e_L=0\}} \right )\\
        & \sum_{e_M\geq 0}^{2N} \left( \gamma_2^2 (\frac{n}{sp\wedge sq})\right)^{2N-(v+k+1+e_L+e_M)} \mathbf{1}_{\{e_L+e_M\leq 2N-(v+k+1)\}} 
    \end{align*}
    In view of $\gamma_2<2$, the last three conditions in~\eqref{eq:conditions_true_pair_regimeI} guarantee that
    \begin{align*}
        &\sum_{k\geq -1} \left ( \frac{22R^4(2N+1)(11\beta)^{2(K+M)}}{n}\right )^{k+1} \leq 2, \\
        &\sum_{e_M\geq 0} \left ( \gamma_2^2 \frac{R^{\frac{4}{M}}(11\beta)^{\frac{4K+4M}{M}}}{ns(p\wedge q)}\right )^{e_M} \leq 2,
        \quad \sum_{e_L\geq 0} \left ( \gamma_2^2 \frac{1+2L^2}{ns(p\wedge q)}\right )^{e_L} \leq 2.
    \end{align*}
    Also, for sufficiently large $n$, $\frac{\gamma_2^2}{ns(p\wedge q)} \leq \frac{1}{2}$, such that
    \begin{equation*}
        \sum_{v=0}^{2N-k-1} (\frac{\gamma_2^2}{ns(p\wedge q)})^{2N-(v+k+1+e_L+e_M)} \leq 2.
    \end{equation*}
    In conclusion, 
    \begin{equation*}
        \frac{\Var[\Phi_{ij}\indH]}{\EE[\Phi_{i\pi_{*}(i)}\indH]^2} \leq O \left ( \frac{\gamma_2^2(2+4L^2)}{\rho^{2}ns(p\wedge q)} + \frac{12 L^2}{\rho^{2(K+M)} |\mathcal{J}|} \right ) 
        =O\left ( \frac{L^2}{\rho^2 ns(p\wedge q)}+\frac{L^2}{\rho^{2(K+M)} |\mathcal{J}|}\right ). \qedhere 
    \end{equation*}
\end{proof}

\subsection{Proof of auxiliary Lemmas}
To complete the proof of Proposition~\ref{prop:var_true_regimeI}, it remains to prove Lemma~\ref{lemma:var_true_regimeI_PL}, Lemma~\ref{lemma:var_true_regimeI_PM}, and Lemma~\ref{lemma:var_true_regimeI_PN}.
In this case, those upper bounds have been shown in~\cite{mao2022chandelier}. We briefly re-state the proof idea.

\begin{lemma}[Upper bound of $P_L(v_L,e_L,\ell)$ (True pairs: $j=\pi_{*}(i)$, $sD_{+}(a,b) > 1$)] \label{lemma:var_true_regimeI_PL}
    \begin{equation*}
        P_L(v_L,e_L,\ell) \leq (1+o(1)) 2n^{v_L} (1+2L^2)^{e_L} f_{\wt t, t} \mathbf{1}_{\{K+M|e_L+v_L\}} \mathbf{1}_{\{v_L+e_L\leq 2N\}}.
    \end{equation*}
\end{lemma}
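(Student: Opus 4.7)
The plan is to follow, essentially verbatim, the corresponding counting argument for the Erd\H{o}s--R\'enyi case given in Lemma~4 of~\cite{mao2022chandelier}. The crucial observation is that $P_L(v_L, e_L, \ell)$ is a purely topological quantity: it is a weighted sum over decorated tree structures and depends only on (a) the combinatorics of how chandelier branches from $S_1, S_2, T_1, T_2$ can overlap to form a tree-like $\dot{U}_L$ rooted at $i = j$ and (b) the number of embeddings of vertices into $[n]$. It does not involve community labels or the parameters $p, q, s$. Consequently, the same bound that holds for correlated Erd\H{o}s--R\'enyi graphs transfers unchanged to our setting.

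Concretely, I would decompose $\dot{U}_L = \bigcup_{a \in \mathcal{N}_L} \dot{\mathcal{G}}(i, a)$ into its maximal branches. Each branch is a tree formed by merging chandelier branches---each an $M$-wire followed by a $K$-bulb, with $K+M$ non-root vertices---from possibly several of the four chandeliers, subject to the constraint that fewer than $M$ of its edges are $\geq 3$-decorated. I would enumerate over (i) the combinatorial type of each branch, with bulbs chosen from $\mathcal{J}$, (ii) the decoration pattern indicating which subset of $\{S_1, S_2, T_1, T_2\}$ contributes to each edge, and (iii) an embedding of the $v_L$ non-root vertices into $[n]$, yielding the factor $n^{v_L}$. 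Each shared edge can arise from one of at most $2L^2$ pairs of branches (one from the $\{S_1, T_1\}$ side and one from the $\{S_2, T_2\}$ side), producing the $(1+2L^2)^{e_L}$ factor. The divisibility constraint $(K+M) \mid v_L + e_L$ is forced by the fact that each chandelier branch contributes exactly $K+M$ edges: summing across all four chandeliers and accounting for overlaps gives a total that is always a multiple of $K+M$. The factor $f_{\wt t, t}$ (as defined in~\cite{mao2022chandelier}) counts the number of ways to distribute branch-sources across the four chandeliers.

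The weights $w(\dot{U}_L)$ are controlled via the identity~\eqref{eq:weights_product_is_autHautI} together with Corollary~\ref{corollary:automorphism_inequality}: when two chandelier branches merge into a single tree with edge symmetric difference of size $d$, the geometric mean of their automorphism counts is bounded by the automorphism of the union times $(2(K+M))^d$, an overhead that is absorbed into the $(1+2L^2)^{e_L}$ factor. The indicator $\mathbf{1}_{\{v_L + e_L \leq 2N\}}$ is a trivial consequence of $\dot{U}_L \subseteq \dot{U}$ combined with the fact that every edge of the union is at least $2$-decorated, so $e(\dot{U}) \leq 2N$. The main obstacle will be the careful bookkeeping of shared structure---tracking \emph{which} edges of \emph{which} chandelier branches are being identified, and how this impacts both the multiplicity count and the automorphism weight---but since this bookkeeping is purely topological and identical to the Erd\H{o}s--R\'enyi case, I can appeal to the argument in~\cite{mao2022chandelier} directly, with no new combinatorial input required for this lemma.
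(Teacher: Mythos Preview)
Your approach is essentially the same as the paper's: both recognize that $P_L(v_L,e_L,\ell)$ is a purely combinatorial quantity independent of the SBM parameters, pass to unlabeled decorated trees, use $|H(\dot{U}_L)|\le n^{v_L}/\aut(\dot{U}_L)$, and invoke the enumeration from~\cite{mao2022chandelier} for the $(1+2L^2)^{e_L}f_{\wt t,t}$ factor and indicators.

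There is one overcomplication worth flagging. You control $w(\dot{U}_L)$ via Corollary~\ref{corollary:automorphism_inequality} and claim the resulting $(2(K+M))^d$ overhead is ``absorbed into the $(1+2L^2)^{e_L}$ factor.'' The paper's route is cleaner and avoids any overhead: since we are in $\mathcal{W}_{ij}$ every edge of $\dot{U}_L$ is at least $2$-decorated, and since decorations decrease along paths from the root while fewer than $M$ edges are $\ge 3$-decorated, every bulb in $\dot{U}_L$ is \emph{exactly} $2$-decorated. Thus each bulb coincides perfectly with exactly one other bulb, giving $w(\dot{U}_L)\le \aut(\dot{U}_L)$ directly with no multiplicative penalty. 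Your invocation of the corollary would in fact give $d=0$ here, so the outcome is the same, but the $(1+2L^2)^{e_L}$ factor does \emph{not} come from an automorphism overhead---it comes purely from counting how many ways wire segments from different chandeliers can overlap in the unlabeled enumeration (Lemma~7 in~\cite{mao2022chandelier}, not Lemma~4 as you cite). Keeping these two sources separate makes the argument cleaner and matches the paper.
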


\begin{proof}[Proof of Lemma~\ref{lemma:var_true_regimeI_PL}]
    Firstly, we define unlabeled union graph class $\mathcal{\wt{U}}(v_L,e_L,\ell)$ and set of labeled union graphs isomorphic to $\dot{U}_L \in \mathcal{\wt{U}}(v_L,e_L,\ell)$ as $H(\dot{U}_L)$.
    \begin{equation*}
        P_L(v_L, e_L, \ell) = \sum_{\dot{U}_L\in \mathcal{U}_L(v_L, e_L, \ell)} w(\dot{U}_L) 
        = \sum_{\dot{U}_L\in \mathcal{\wt U}_L(v_L, e_L, \ell)} w(\dot{U}_L) |H(\dot{U}_L)|.
    \end{equation*}
    The number of ways to label $\dot{U}_L$ is $|H(\dot{U}_L)|\leq {\binom{n}{v_L}} \frac{v_L!}{\aut(\dot{U}_L)} \leq \frac{n^{v_L}}{\aut(\dot{U}_L)}$, $\binom{n}{v_{L}} v_{L}! \leq n^{v_L}$. In addition, $w(\dot{U}_L) \leq \aut(\dot{U}_L)$. This is true because every bulbs are exactly $2$-decorated so for the union graph, the automorphism number coming from the orbits in this bulb is the same as the weights of this bulb. However, there can be other orbits outside the bulbs, for example, some bulbs are isomorphic so the vertices on the wires can be in the same orbit, thereby increasing the automorphism number.
    
    From Lemma~7 in~\cite{mao2022chandelier}, we know that
    \[|\mathcal{\wt U}_L(v_L, e_L, l)|\leq 2(1+2L^2)^{e_L} f_{\wt t, t} \mathbf{1}_{\{K+M|e_L+v_L\}} \mathbf{1}_{\{v_L+e_L\leq 2N\}},\] 
    where $f_{\wt t,t}$ counts the possible structures of chandeliers before merging edges to form a union graph and $\wt{t}$ is depending on $\ell$.
    
    Combining these pieces together, we derived the upper bound:
    \begin{equation*}
        P_L(v_L,e_L,\ell) \leq 2n^{v_L}
        (1+2L^2)^{e_L} f_{\wt t, t} \mathbf{1}_{\{K+M|e_L+v_L\}} \mathbf{1}_{\{v_L+e_L\leq 2N\}}.
    \end{equation*}
    
    See Lemma~3 in~\cite{mao2022chandelier} for more detailed discussion on $t, \wt{t}$ and $f_{\wt{t},t}$. 
\end{proof}

\begin{lemma}[Upper bound of $P_M(v_M,e_M)$ (True pairs: $j=\pi_{*}(i)$, $sD_{+}(a,b) > 1$)] \label{lemma:var_true_regimeI_PM}
    \begin{equation*}
        P_M(v_M,e_M) \leq (11\beta n)^{v_M} R^{\frac{4e_M}{M}} \sigma_{\eff}^{2v_M+2e_M} \gamma_2^{e_M} \mathbf{1}_{\{v_M\leq e_M\frac{4K+4M}{M}\}}.
    \end{equation*}    
\end{lemma}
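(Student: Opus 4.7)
The plan is to follow the template of Lemma~\ref{lemma:var_true_regimeI_PL}: pass from labeled to unlabeled decorated union graphs, bound the number of labelings by $n^{v_M}/\aut(\dot{U}_M)$, and absorb the weight $w(\dot{U}_M)$ against $\aut(\dot{U}_M)$. Writing $\widetilde{\mathcal{U}}_M(v_M,e_M)$ for the unlabeled version of $\mathcal{U}_M(v_M,e_M)$, we obtain
\begin{equation*}
P_M(v_M,e_M) \;\leq\; \sum_{\widetilde{\dot{U}}_M \in \widetilde{\mathcal{U}}_M(v_M,e_M)} \frac{w(\widetilde{\dot{U}}_M)}{\aut(\widetilde{\dot{U}}_M)}\, n^{v_M},
\end{equation*}
so the task reduces to bounding (i) the cardinality of $\widetilde{\mathcal{U}}_M(v_M,e_M)$ and (ii) the ratio $w/\aut$ uniformly on it.

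First I would analyze the structural constraints on $\dot{U}_M$. By construction every connected component $\dot{\mathcal{G}}(i,a)$ with $a \in \mathcal{N}_M$ is a tree containing at least $M$ edges whose decoration has size $\geq 3$. Since $3$- and $4$-decorated edges are counted with total multiplicity $2e_M$ across $\dot{U}_M$ (matching the definition of $e_M$), the number of such components rooted at $i$ or $j$ is at most $2e_M/M$. Each component lives inside the union of at most four chandelier branches (one per $S_1,S_2,T_1,T_2$), each of which is a wire of length $M$ followed by a bulb with $K$ edges, and the component is itself a subtree of this union. Consequently a component has at most $4(K+M)$ vertices, which immediately gives the indicator $\mathbf{1}_{\{v_M \leq e_M(4K+4M)/M\}}$.

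Next I would count unlabeled structures. An unlabeled $\widetilde{\dot{U}}_M$ is determined by choosing a collection of at most $2e_M/M$ rooted decorated trees (the components), each with at most $4(K+M)$ vertices, together with a decoration on each edge specifying a subset of $\{S_1,S_2,T_1,T_2\}$. Otter's estimate combined with Lemmas~\ref{lemma:unlabeled_rooted_deg}–\ref{lemma:unlabeled_rooted_aut} bounds the number of unlabeled rooted trees of a given size by $\beta^{\text{size}}$, and at each vertex we must further specify which of the four chandeliers pass through it; this can be absorbed into a multiplicative factor per vertex. Together these give a bound of the shape $(C\beta)^{v_M}$ on $|\widetilde{\mathcal{U}}_M(v_M,e_M)|$ with an absolute constant; the $11$ in the lemma statement is the chosen constant after accounting for the decoration choices, mirroring the bookkeeping in Section~7 of~\cite{mao2022chandelier}.

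The main obstacle, and the step that distinguishes this bound from the $\dot{U}_L$ case, is controlling $w(\widetilde{\dot{U}}_M)/\aut(\widetilde{\dot{U}}_M)$. Inside $\dot{U}_M$ a bulb can appear with decoration of multiplicity strictly greater than two, so the naive inequality $w \leq \aut$ used for $\dot{U}_L$ no longer suffices; instead I would peel off the bulbs one at a time and use $\aut(\mathcal{B}) \leq R$ to bound each surplus factor of $\aut(\mathcal{B})^{1/2}$ by $R^{1/2}$. Since each component contains at most four bulbs (one per chandelier branch) and contributes at least $M$ edges to $2e_M$, the total number of bulbs appearing across $\dot{U}_M$ is at most $8 e_M/M$, each contributing at most $R^{1/2}$, which yields the exponent $R^{4e_M/M}$ in the statement. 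Finally, combining $(C\beta n)^{v_M}$ from the labeled count with $R^{4e_M/M}$ from the weight analysis and the component-size constraint gives the claimed bound; the $\sigma_{\eff}$ and $\gamma_2$ factors are carried through from the $g_{\dot{U}}(\sigma_+,\sigma_-)$ factorization and match the grouping used later in~\eqref{eq:bound_g_usigma_condH}.
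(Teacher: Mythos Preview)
Your approach is essentially the same as the paper's and is correct in outline, but you are working harder than necessary in one place and there is a small arithmetic slip in another.

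The paper does not try to control the ratio $w(\dot{U}_M)/\aut(\dot{U}_M)$ at all. It simply uses the crude bound $|H(\dot{U}_M)|\le n^{v_M}$ (throwing away the $1/\aut$ factor) and bounds the weight directly via the trivial inequality~\eqref{eq:trivial_bound_on_weights_truepair}: since each of the four chandeliers contributes at most $|\mathcal{N}_M|$ bulbs to $\dot U_M$ and each bulb has $\aut\le R$, one gets $w(\dot{U}_M)\le R^{2|\mathcal{N}_M|}\le R^{4e_M/M}$. Your bulb-peeling argument reaches the same exponent but the detour through $\aut(\dot{U}_M)$ buys nothing here, because you never actually exploit a lower bound on $\aut(\dot{U}_M)$.

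On the indicator: your per-component bound of $4(K+M)$ vertices is too loose by a factor of two. In Regime~I every edge of $\dot U$ is at least $2$-decorated, so although up to four branches (one from each of $S_1,S_2,T_1,T_2$) may pass through a component, the total edge multiplicity $4(K+M)$ translates into at most $2(K+M)$ \emph{distinct} edges. Combined with $|\mathcal{N}_M|\le 2e_M/M$ this gives exactly $v_M\le e_M(4K+4M)/M$; with your bound you would only get $v_M\le e_M(8K+8M)/M$, which does not match the stated indicator.

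Finally, the $\sigma_{\eff}^{2v_M+2e_M}\gamma_2^{e_M}$ factors in the statement are not produced by the paper's proof either; the paper concludes with $(11\beta n)^{v_M}R^{4e_M/M}\mathbf{1}_{\{\cdot\}}$ and handles the $\sigma_{\eff}$ and $\gamma_2$ factors separately in the main argument via~\eqref{eq:bound_g_usigma_condH}, so your remark that they are ``carried through'' is an accurate description of what is happening.
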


\begin{proof}
    We have $|\mathcal{N}_M|M\leq 2e_M$ because each connected component $\dot{\mathcal{G}}(i,a)$ contains at least $M$ edges that are 3 or $4$-decorated and $\mathcal{N}_M$ is the number of neighbors $a$ in this set. This gives the constraint on $v_M\leq \frac{2e_M}{M}(2K+2M)$ and from the definition of $w(\cdot)$, $w(\dot{U}_M)\leq R^{2|\mathcal{N}_M|}\leq R^{\frac{4e_M}{M}}$. Define $\mathcal{\wt{U}}_M(v_M,e_M)$ as the unlabeled union graph class and $H(\dot{U}_M)$ as the set of labeled $\dot{U}_M$ for $\dot{U}_M \in \mathcal{\wt{U}}_M(v_M,e_M)$.
    \begin{align}
        P_M(v_M,e_M) &= \sum_{\dot{U}_M\in \mathcal{U}_M(v_M, e_M)} w(\dot{U}_M)
        \leq R^{\frac{4e_M}{M}} \sum_{\dot{U}_M\in \wt{\mathcal{U}}_M(v_M, e_M)} |H(\dot{U}_M)|.
    \end{align}
    We have $|H(\dot{U}_M)| \leq (n)^{v_M}$.
    There are at most $\beta^{v_M}$ rooted unlabeled undercoated trees, where $\beta=\frac{1}{(1+o(1))\alpha}$ \cite{otter1948number} and at most $11^{v_M}$ decorations for each tree. Thus, $|\mathcal{\wt{U}}_{M}(v_M,e_M)|\leq (11\beta)^{v_M}$.  Finally, we obtain that 
    \begin{equation*}
        P_M(v_M,e_M) \leq (11\beta n)^{v_M} R^{\frac{4e_M}{M}}  \mathbf{1}_{\{v_M\leq (2K+2M)\frac{2e_M}{M}\}}. \qedhere 
    \end{equation*}
\end{proof}

\begin{lemma}[Upper bound of $P_N(v_N,e_N, k)$ (True pairs: $j=\pi_{*}(i)$, $sD_{+}(a,b) > 1$)] \label{lemma:var_true_regimeI_PN}
    \begin{equation*}
        P_N(v_N,e_N, k) \leq n^{v_N} (11\beta)^{v_N} (11R^4(v_N+1)^2)^{k+1} \mathbf{1}_{\{v_N \leq 2(K+M)(2k+2)\}}.
    \end{equation*}
\end{lemma}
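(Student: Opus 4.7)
The plan is to bound $P_N(v_N,e_N,k) = \sum_{\dot{U}_N \in \mathcal{U}_N(v_N,e_N)} w(\dot{U}_N)$ by separating each summand into three multiplicative pieces: the automorphism weight $w(\dot{U}_N)$, the number of distinct unlabeled decorated structures of $\dot{U}_N$ with $v_N$ non-root vertices and excess $k$, and the labeling factor $n^{v_N}$ for embedding these vertices into $[n]$. The first piece is handled by a bulb-automorphism bound together with an excess argument for $|\mathcal{N}_N|$, the second by Otter's enumeration combined with a chord-counting argument, and the last is trivial.

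For the weight, I observe that every bulb of any chandelier in $\{S_1, S_2, T_1, T_2\}$ that appears inside $\dot{U}_N$ must sit at the end of a wire emanating from some root-neighbor $a \in \mathcal{N}_N$, and each chandelier contains at most one such bulb per $a$. Since every bulb has at most $R$ automorphisms, the definition~\eqref{eq:definition_weight_general} gives
\[
w(\dot{U}_N) = \prod_{S \in \{S_1, S_2, T_1, T_2\}} w_S(\dot{U}_N) \le R^{2|\mathcal{N}_N|}.
\]
To translate $|\mathcal{N}_N|$ into the excess $k$, I use the identity that for $m$ connected graphs sharing exactly one vertex the excess of the union equals the sum of the excesses plus $m-1$. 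By the very definition of $\mathcal{N}_N$, each component $\dot{\mathcal{G}}(i,a)$ with $a \in \mathcal{N}_N$ has excess at least zero, and these components overlap only in the root vertices; hence $|\mathcal{N}_N| \le 2(k+1)$, which yields $w(\dot{U}_N) \le R^{4(k+1)}$.

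For the structural count, I regard an unlabeled $\dot{U}_N$ with excess $k$ as an unlabeled rooted tree on $v_N + 1$ vertices together with $k+1$ additional chords. Otter's bound gives at most $\beta^{v_N}$ unlabeled rooted trees. Each of the $v_N$ tree edges carries one of $11$ non-empty admissible decorations (subsets of $\{S_1, S_2, T_1, T_2\}$ of sizes $2$, $3$, or $4$), producing the factor $(11\beta)^{v_N}$. Each of the $k+1$ chords contributes at most $\binom{v_N+1}{2} \le (v_N+1)^2$ placements together with $11$ decorations. Multiplying by the weight bound $R^{4(k+1)}$ and the labeling factor $n^{v_N}$ yields
\[
P_N(v_N, e_N, k) \le n^{v_N} (11\beta)^{v_N} \bigl(11 R^4 (v_N+1)^2\bigr)^{k+1}.
\]
The indicator $\mathbf{1}_{\{v_N \le 2(K+M)(2k+2)\}}$ follows because $\dot{U}_N$ is the union of at most $|\mathcal{N}_N| \le 2(k+1)$ branches, and each branch has at most $K + M$ non-root vertices per graph in $\{G_1, G_2\}$, hence at most $2(K+M)$ in the union.

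The main obstacle is the weight-to-excess step: one must carefully justify that the components $\dot{\mathcal{G}}(i,a)$ indexed by distinct $a \in \mathcal{N}_N$ share only the root vertices and that each such component contributes at least one surplus edge, so that the excesses add up cleanly. Once this structural fact is in hand, the remaining enumeration and labeling steps closely mirror those in the proofs of Lemmas~\ref{lemma:var_true_regimeI_PL} and~\ref{lemma:var_true_regimeI_PM}, with only bookkeeping of decoration and weight constants left to verify.
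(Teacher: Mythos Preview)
Your proposal is correct and follows essentially the same route as the paper's proof: bound $w(\dot{U}_N)\le R^{2|\mathcal{N}_N|}\le R^{4(k+1)}$ via $|\mathcal{N}_N|\le 2(k+1)$, enumerate unlabeled decorated structures as a rooted tree plus $k+1$ decorated chords, and multiply by the labeling factor $n^{v_N}$. The only cosmetic difference is that the paper imports the bound $|\mathcal{N}_N|\le 2(k+1)$ from Lemma~2 of~\cite{mao2022chandelier} (with an informal ``each tangle raises the excess by one'' explanation), whereas you sketch it via the additivity of excess over components sharing a single vertex; your identification of this step as the one place requiring care is spot on.
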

\begin{proof}
    From Lemma 2 in \cite{mao2022chandelier}, we know that $|\mathcal{N}_N| \leq 2k+2$ and thus $w(\dot{U}_N) \leq R^{|\mathcal{N}_N|} \leq R^{4(K+1)}$. 
    Briefly, this is because the excess is $k$ and whenever two branches tangle with each other, the excess of this graph increases by one. To maximally involving branches in this $k+1$ times of branch tangles, we never re-tangle a branch after it has tangled with the other branch. In this way, we see that there are at most $2(k+1)$ branches in $\dot{U}_N$.
    This immediately gives the condition of $v_N \leq (2k+2)(M+K)$. Define $\mathcal{\wt{U}}_N(v_N,e_N,k)$ as the unlabeled union graph class and $H(\dot{U}_N)$ as the set of labeled $\dot{U}_N$ for $\dot{U}_N \in \mathcal{\wt{U}}_N(v_N,e_N,k)$.
    \begin{equation*}
        P_N(v_N,e_N,k) = \sum_{\dot{U}_N\in \mathcal{U}_N(v_N, e_N,k)} w(\dot{U}_N)
        \leq R^{4(k+1)} \sum_{\dot{U}_N\in \wt{\mathcal{U}}_N(v_N, e_N,k)} |H(\dot{U}_N)|.
    \end{equation*}

    $\dot{U}_L$ consists of a tree with $v_N$ vertices and additional $k+1$ edges connecting vertices on the tree. What's more, each edge can be associated with at most $11$ possible decoration sets.
    \begin{equation*}
        |\mathcal{\wt U}_N(v_N, e_N, k)| \leq \beta^{v_N} \binom{\binom{v_N+1}{2}}{k+1} 11^{v_N+k+1} \leq (11\beta)^{v_N} (11(v_N+1)^2)^{(k+1)}.
    \end{equation*}
    Therefore,
    \begin{equation*}
        P_N(v_N,e_N, k) \leq n^{v_N} (11\beta)^{v_N} (11R^4(v_N+1)^2)^{k+1} \mathbf{1}_{\{v_N \leq 2(K+M)(2k+2)\}}. \qedhere
    \end{equation*}
\end{proof}

\section{Proof of Proposition~\ref{prop:var_fake_regimeI}}
\label{sec:prop3}

In this section, we show the Proposition~\ref{prop:var_fake_regimeI} to complete the analysis on variance of similarity score. In this case, as $j\neq \pi_{*}(i)$, the decorated union graph structure is different from the case when $j=\pi_{*}(i)$ because the roots of $S_1$ and $S_2$ are different on the complete graph.

\subsection{Graph partition}
The definitions in Section~\ref{sec:prop2} still apply, except that we need to re-define the union graph partition.
\begin{itemize}
    \item \textbf{(Union graph partition)} 
    We first decompose $\dot{U}$ into three edge-disjoint subgraphs. 

    Specifically, for any neighbor $a$ of $i$ in $\dot{U}$, consider the graph $\dot{U}$ with the edge $(i,a)$ removed and let $\dot{\mathcal{C}}(i,a)$ be the connected component that contains $a$. Denote $\dot{\mathcal{G}}(i,a)$ as the union of $\dot{\mathcal{C}}(i,a)$ and the edge $(i,a)$. Let
    \begin{align*}
        \mathcal{N}_T(i) &= \{a:D_{U}((i,a)) \cap \{S_1, T_1\} \neq \emptyset, \dot{\mathcal{G}}(i,a) \text{ is a tree not containing } j\}, \\
        \mathcal{N}_N(i) &= \{a:D_{U}((i,a)) \cap \{S_1, T_1\} \neq \emptyset \} \setminus \mathcal{N}_T(i).
    \end{align*}
    Symmetrically, let
    \begin{align*}
        \mathcal{N}_T(j) &= \{a:D_{U}((j,a)) \cap \{S_2, T_2\} \neq \emptyset, \dot{\mathcal{G}}(j,a) \text{ is a tree not containing } i\}, \\
        \mathcal{N}_N(j) &= \{a:D_{U}((j,a)) \cap \{S_2, T_2\} \neq \emptyset \} \setminus \mathcal{N}_T(j).
    \end{align*}

    Next, we further decompose $\dot{U}_T(i)$ into two edge-disjoint subtrees and similarly for $\dot{U}_T(j)$. In particular, define
    \begin{align*}
        \mathcal{N}_M(i)&=\{a\in \mathcal{N}_T(i): |\{e \in E(\dot{\mathcal{G}}(i,a)) : |D_e| \geq 3\}|\geq M \}, \\
        \mathcal{N}_L(i)&=\mathcal{N}_T(i) \setminus \mathcal{N}_M(i).
    \end{align*}
    
    Then, we decompose $\dot{U}$ according to the node set $\mathcal{N}$ into two tree-parts on root $i$ (resp. $j$):
    \begin{align*}
        \dot{U}_L(i) := \bigcup_{a \in \mathcal{N}_L(i)} \dot{\mathcal{G}}(i,a), \quad
        \dot{U}_M(i) := \bigcup_{a \in \mathcal{N}_M(i)} \dot{\mathcal{G}}(i,a).
    \end{align*}
    and the non-tree part
    \begin{equation}\label{eq:definition_UN_fakepair}
        \dot{U}_N := \dot{U} \setminus (\dot{U}_L(i) \cup \dot{U}_M(i) \cup \dot{U}_L(j) \cup \dot{U}_M(j)).
    \end{equation}
    
    For convenience, we denote 
    \begin{equation}\label{eq:definition_UL_UM_fakepair}
        \dot{U}_L:=\dot{U}_L(i) \cup \dot{U}_L(j), \quad \dot{U}_M:=\dot{U}_M(i) \cup \dot{U}_M(j). 
    \end{equation}
\end{itemize}

We again assign weights to each of these decorated union graph partitions.
\begin{align}
    \label{eq:definition_UL_weight_fakepair}
    &w(\dot{U}_L):=w_{S_1}(\dot{U}_L(i))w_{T_1}(\dot{U}_L(i))w_{S_2}(\dot{U}_L(j))w_{T_2}(\dot{U}_L(j)),\\
    \label{eq:definition_UM_weight_fakepair}
    &w(\dot{U}_M):=w_{S_1}(\dot{U}_M(i))w_{T_1}(\dot{U}_M(i))w_{S_2}(\dot{U}_M(j))w_{T_2}(\dot{U}_M(j)),\\
    \label{eq:definition_UN_weight_fakepair}
    &w(\dot{U}_N):=w_{S_1}(\dot{U}\setminus \dot{U}_{T}(i)) w_{T_1}(\dot{U}\setminus \dot{U}_{T}(i)) w_{S_2}(\dot{U}\setminus \dot{U}_{T}(j)) w_{T_2}(\dot{U}\setminus \dot{U}_{T}(j)).
\end{align}

\subsection{Proof of the Proposition}

\begin{proof}[Proof of Proposition~\ref{prop:var_fake_regimeI}]
    In the case of $j\neq \pi_{*}(i)$, the excess $k$ of the union graph starts from $-2$. This is because the minimum number of edges of a decorated union graph for $j\neq \pi_*(i)$ with $v$ vertices except for $i$ and $j$ is $v$, when $S_1,T_1$ and $S_2,T_2$ form two disjoint trees rooted at $i$ and $j$ respectively. Therefore, $k \geq v-v(\dot{U})=-2$.

    The proof of Proposition~\ref{prop:var_true_regimeI} in Section~\ref{sec:prop2} holds for $j\neq \pi_{*}(i)$ until~\eqref{eq:117}, which holds with placing $k+1$ with $k+2$. This is because the decorated union graph can be viewed as two disjoint trees with $v+2$ vertices plus $k+2$ edges connecting vertices on these two trees. We have,
    \begin{align*}
        \nonumber \Var[\Phi_{ij}\indH]
        &\leq (1+o(1))\sum_{k\geq -2} \sum_{v=0}^{2N-k-2} (sp\wedge sq)^{-2N+v+k+2} \sum_{\dot{U}\in \mathcal{W}_{ij}(v,k)} \rho^{e(K_{11})} \aut(H) \aut(I) \\
        &\quad \times \sigma_{\eff}^{4N} \gamma_2^{e(K_{12})+e(K_{21})+e(K_{22})} 2^{k+2}.
    \end{align*}
    
    We define $P_{ij}(v,k):=\sum_{\dot{U}\in \mathcal{W}_{ij}(v,k)} \rho^{e(K_{11})} \aut(H) \aut(I) \gamma_2^{e(K_{12})+e(K_{21})+e(K_{22})}$. 
    \begin{equation*}
        \Var[\Phi_{ij}\indH]
        \leq (1+o(1))\sum_{k\geq -2} \sum_{v=0}^{2N-k-2} (sp\wedge sq)^{-2N+v+k+2} \sigma_{\eff}^{4N} 2^{k+2} P_{ij}(v,k)
    \end{equation*}
    
    \paragraph{Case a: $k=-2$.}
    We first consider the special case when $k=-2$. When $k=-2$, there are two disjoint trees in the decorated union graph. Then, it must be the case of $S_1=T_1$, $S_2=T_2$, $v=2N$, and $H=I$. Therefore, $e(K_{12})=e(K_{21})=e(K_{22})=0$. We have,
    \begin{equation*}
        P_{ij}(2N,-2) 
        \leq \sum_{H\in \mathcal{T}} \aut(H)^2 |S_1(i):S_1\cong H| |S_2(j):S_2\cong H|=|\mathcal{T}|n^{2N}.
    \end{equation*}
    Under this parameterization, $-2N+v+k+2$ is also zero, we have
    \begin{equation*}
        \Var[\Phi_{ij}\indH]
        \leq (1+o(1))\left( |\mathcal{T}|n^{2N}\sigma_{\eff}^{4N} + \sum_{k>-2} \sum_{v=0}^{2N-k-2} (sp\wedge sq)^{-2N+v+k+2} \sigma_{\eff}^{4N} 2^{k+2} P_{ij}(v,k) \right).
    \end{equation*}    
    It turns out that the summation over $k>-2$ would have the same order as the first special case.

    \paragraph{Case b: $k>-2$.} 

    We enumerate through three parts of the union graph. Recall that $e(K_{12})+e(K_{21})+e(K_{22}) \leq
     2(e_L+e_M+e_N)=4N-2(v+k+2)$, therefore
    \begin{align} \label{eq:154}
        P_{ij}(v,k) 
        &\nonumber \leq \sum_{e_M}\sum_{v_L,v_M,v_N} \sum_{\dot{U}\in \mathcal{W}(v,k)} w(\dot{U}_L) 
        w(\dot{U}_M) w(\dot{U}_N) \gamma_2^{e(K_{12})+e(K_{21})+e(K_{22})} \\
        &\leq \sum_{e_M} \gamma_2^{4N-2(v+k+2)} \sum_{v_L,v_M,v_N} P_L(v_L) P_M(v_M,e_M) P_N(v_N,k), 
    \end{align}
    where $P_M(v_M,e_M)$ is defined as before, while we let $P_L(v_L)$ and $P_N(v_N,k)$ have no constraints on the number of $3$ and $4$-decorated edges for $\dot{U}_L$ and $\dot{U}_N$.

    We can plug those upper bounds from Lemma \ref{lemma:var_fake_regimeI_PL}, Lemma \ref{lemma:var_fake_regimeI_PM}, and Lemma \ref{lemma:var_fake_regimeI_PN} into \eqref{eq:154}:
    \begin{align*}
        P_{ij}(v,k) 
        &\nonumber \leq \sum_{e_M}  R^{2\frac{e_M}{M}} n^v |\mathcal{T}| 4^L L^{2L\wedge (4K+2)} (6\beta)^{4(K+M)-2} \mathbf{1}_{\{e_M\leq 2N-(v+k+2)\}} \\
        &\nonumber \quad \times \sum_{v_L+v_M+v_N=v} v_M (11\beta)^{v_M} \beta(11\beta)^{v_N} (22R^4(v_N+2)^2)^{k+2} \sigma_{\eff}^{4N} \gamma_2^{4N-2(v+k+2)} \\
        &\quad \times \mathbf{1}_{\{v_M \leq 2(K+M)\frac{2e_M}{M}\}} \mathbf{1}_{\{v_N \leq 4(K+M)(k+2)\}}.
    \end{align*}
    Note that $v_M,v_N \leq v \leq 2N-k-2 \leq 2N+1$ as $k \geq -1$. Therefore $v_M(v_N+2)^{2(k+2)}\leq (2N+1)^{3(k+2)}$.
    Applying $e_M \leq 2N-(v+k+2)$, $v_N \leq 4(K+M)(k+2)$ and $v_M \leq 2(K+M)\frac{e_M}{M}$ we can get the following upper bound:
    \begin{align*}
        \frac{\Var[\Phi_{ij}\indH]}{\EE[\Phi_{i\pi_{*}(i)}\indH]^2} 
        &\nonumber \leq (1+o(1)) \frac{1}{|\mathcal{T}|\rho^{2N}} +
        (1+o(1)) \frac{1}{|\mathcal{T}|\rho^{2N}}
        \Bigg\{ 4^L L^{2L\wedge(4K+2)} (6\beta)^{4K+4M} \\
        &\nonumber \quad \times \sum_{k\geq -1} \left( \frac{(11\beta)^{4(K+M)} 22 R^4 (2N+1)^3}{n} \right)^{k+2}\\
        &\quad \times \sum_{v=0}^{2N-k-2} \left( \gamma_2^2 \frac{R^{\frac{2}{M}} (11\beta)^{\frac{4(K+M)}{M}}}{ns(p\wedge q)} \right)^{2N-(v+k+2)}
        \Bigg\}.
    \end{align*}
    From the first condition in~\eqref{eq:conditions_fake_pair_regimeI}, because $\gamma_2 < 2$, we have
    \begin{equation*}
        \sum_{v=0}^{2N-k-2} \left( \gamma_2^2 \frac{R^{\frac{2}{M}} (11\beta)^{\frac{4(K+M)}{M}}}{ns(p\wedge q)} \right)^{2N-(v+k+2)} \leq 2.
    \end{equation*}
    From the second condition in~\eqref{eq:conditions_fake_pair_regimeI}, we have $\frac{(11\beta)^{4(K+M)} 22 R^4 (2N+1)^3}{n} \leq \frac{1}{2}$, therefore,
    \begin{equation*}
        \sum_{k\geq -1} \left( \frac{(11\beta)^{4(K+M)} 22 R^4 (2N+1)^3}{n} \right)^{k+2} \leq 2\left( \frac{(11\beta)^{4(K+M)} 22 R^4 (2N+1)^3}{n} \right).
    \end{equation*}
    From the second condition in~\eqref{eq:conditions_fake_pair_regimeI}, we know that
    \begin{equation*}
        4^L L^{2L\wedge(4K+2)} (6\beta)^{4K+4M} 2^2 \left( \frac{(11\beta)^{4(K+M)} 22 R^4 (2N+1)^3}{n} \right) \leq \frac{1}{2}.
    \end{equation*}
    In summary,
    \begin{equation*}
        \frac{\Var[\Phi_{ij}\indH]}{\EE[\Phi_{i\pi_{*}(i)}\indH]^2} 
        = O(\frac{1}{|\mathcal{T}|\rho^{2N}}). \qedhere
    \end{equation*}
\end{proof}

\subsection{Proof of auxiliary Lemmas}

The following Lemma \ref{lemma:var_fake_regimeI_PL}, Lemma \ref{lemma:var_fake_regimeI_PM}, and Lemma \ref{lemma:var_fake_regimeI_PN}, have been shown by Mao et al.~\cite{mao2022chandelier}. We briefly re-state those results for completeness.

\begin{lemma}[Upper bound of $P_L(v_L)$ (Fake pairs: $j\neq\pi_{*}(i), sD_{+}(a,b) > 1$)] \label{lemma:var_fake_regimeI_PL}
    \begin{equation*}
        P_L(v_L) \leq n^{v_L} |\mathcal{T}| 4^L L^{2L\wedge (4K+2)} (6\beta)^{4(K+M)-2}.
    \end{equation*}
\end{lemma}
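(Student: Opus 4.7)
My plan is to follow the strategy of Lemma~7 in Mao et al.~\cite{mao2022chandelier}, adapted to the ``two-root'' setting $\dot U_L=\dot U_L(i)\cup \dot U_L(j)$ that arises when $j\neq\pi_*(i)$. The idea is to split the labelled sum into a sum over unlabelled decorated shapes times the number of labellings, and then enumerate the unlabelled shapes combinatorially. Writing $\wt{\mathcal U}_L(v_L)$ for the family of unlabelled decorated shapes with $v_L$ non-root vertices (and the two roots $i,j$ marked), I would decompose
\begin{equation*}
P_L(v_L)=\sum_{\wt U_L\in \wt{\mathcal U}_L(v_L)} w(\wt U_L)\,|H(\wt U_L)|,
\end{equation*}
and use $|H(\wt U_L)|\le n^{v_L}/\aut(\wt U_L)$. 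Because every edge of $\dot U_L$ is at least $2$-decorated and every bulb appearing in $\dot U_L(i)$ (resp. $\dot U_L(j)$) is shared between $S_1$ and $T_1$ (resp. $S_2$ and $T_2$), each bulb contributes $\sqrt{\aut(\mathcal B)}\cdot\sqrt{\aut(\mathcal B)}=\aut(\mathcal B)$ to $w(\wt U_L)$, which after accounting for cross-bulb symmetries gives $w(\wt U_L)\le \aut(\wt U_L)$. This reduces matters to bounding $|\wt{\mathcal U}_L(v_L)|$.

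For the enumeration I would proceed in stages. First, fix the chandelier type $H\in\mathcal T$ isomorphic to $S_1$: this contributes a factor $|\mathcal T|$ and pins down the $L$ branches of $K+M$ edges apiece on the $i$ side. Second, specify how each of the $L$ branches of $T_1$ aligns with a branch of $S_1$: each $T_1$-branch either pairs with an $S_1$-branch (forming a joined tree in $\dot U_L(i)$ up to a split point occurring within the first $M$ edges, by the $\mathcal N_L(i)$ constraint) or forms a new branch, giving up to $4^L$ binary overlap/non-overlap choices and up to $L^L$ sibling assignments; the cap $2L\wedge(4K+2)$ reflects the fact that once a branch pair is identified, the available split positions cannot exceed the $4K+2$ vertices along the pair. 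Third, make the symmetric choices for $S_2,T_2$ at root $j$, doubling the branch-pair exponent. Fourth, enumerate the at most $4(K+M)-2$ residual branch edges using Otter's bound $\beta^{K+M}$ on the number of unlabelled rooted trees together with at most $6$ admissible decoration sets per edge (nonempty subsets of $\{S_1,S_2,T_1,T_2\}$ of size $\ge 2$ compatible with the tree structure of $\dot U_L(i)$ and $\dot U_L(j)$). The $-2$ in the exponent accounts for the two anchor edges incident to $i$ and $j$ whose decorations are already forced by the choice of the base chandelier.

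The main obstacle is ensuring that edges of $(S_2,T_2)$ which appear inside $\dot U_L(i)$ (and vice versa) are counted correctly without double-counting, and that these cross-decorations are consistent with both trees $\dot U_L(i)$ and $\dot U_L(j)$ remaining acyclic while every edge is at least $2$-decorated. This is the combinatorial content that produces both the exponent $4(K+M)-2$ and the cap $L\wedge(4K+2)$, and it is where the two-root setting genuinely differs from the one-root setting of Section~\ref{sec:prop2}. Once the bookkeeping is settled, multiplying the resulting unlabelled count by the $n^{v_L}$ labelling factor delivers $P_L(v_L)\le n^{v_L}|\mathcal T|\,4^L L^{2L\wedge(4K+2)}(6\beta)^{4(K+M)-2}$.
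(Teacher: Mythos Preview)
Your approach matches the paper's: decompose $P_L(v_L)$ over unlabelled decorated shapes, use $|H(\dot U_L)|\le n^{v_L}/\aut(\dot U_L)$, argue $w(\dot U_L)\le \aut(\dot U_L)$, and then bound $|\wt{\mathcal U}_L(v_L)|$. The paper's proof is in fact much shorter than your sketch---it simply cites Lemma~10 and Claim~5-(v) of Mao et al.~\cite{mao2022chandelier} for the two key ingredients $|\wt{\mathcal U}_L(v_L)|\le |\mathcal T|\,4^L L^{2L\wedge(4K+2)}(6\beta)^{4(K+M)-2}$ and $w(\dot U_L)\le \aut(\dot U_L(i))\aut(\dot U_L(j))=\aut(\dot U_L)$, whereas you attempt to reconstruct those results.

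A couple of cautions about your reconstruction. First, your justification of $w(\dot U_L)\le\aut(\dot U_L)$ (``each bulb contributes $\sqrt{\aut(\mathcal B)}\cdot\sqrt{\aut(\mathcal B)}$'') implicitly assumes each bulb in $\dot U_L(i)$ is shared by exactly $S_1$ and $T_1$; the correct reason is the \emph{definition} of the weight in the fake-pair case, $w(\dot U_L)=w_{S_1}(\dot U_L(i))w_{T_1}(\dot U_L(i))w_{S_2}(\dot U_L(j))w_{T_2}(\dot U_L(j))$, which only counts $S_1,T_1$-bulbs on the $i$-side and $S_2,T_2$-bulbs on the $j$-side, and this is what makes the inequality go through (Claim~5-(v) in~\cite{mao2022chandelier}). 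Second, your explanation of the constant $6$ as ``subsets of size $\ge 2$ compatible with the tree structure'' is heuristic---there are $11$ such subsets in general, and the reduction to $6$ (and the exponent $4(K+M)-2$ rather than $v_L$) comes from a more careful combinatorial argument in Lemma~10 of~\cite{mao2022chandelier} that exploits the specific $\dot U_L$ structure; your sketch does not pin this down.
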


\begin{proof}
    Firstly, we introduce unlabeled union graph sets $\mathcal{\wt{U}}(v_L,e_L)$ and the set of labeled isomorphic members as $H(\dot{U}_L)$, for $\dot{U}_L\in \mathcal{U}(v_L,e_L,l)$. From definition~\eqref{eq:definition_PL_regimeI},
    \begin{equation*}
        P_L(v_L) 
        = \sum_{\dot{U}_L\in \mathcal{U}_L(v_L)} w(\dot{U}_L) = \sum_{\dot{U}_L\in \wt{\mathcal{U}}_L(v_L)} w(\dot{U}_L) |H(\dot{U}_L)|.
    \end{equation*}
    As we have repeatedly seen, $|H(\dot{U}_L)|\leq \frac{n^{v_L}}{\aut(\dot{U}_L)}$. From the Lemma 10 and Claim 5-(v) in~\cite{mao2022chandelier} we have the following takeaways: (1) $|\wt{\mathcal{U}}_L(v_L)|\leq |\mathcal{T}| 4^L L^{2L\wedge (4K+2)} (6\beta)^{4(K+M)-2}$; (2) $w(\dot{U}_L) \leq \aut(\dot{U}_L(i)) \aut(\dot{U}_L(j))=\aut(\dot{U}_L)$. Putting pieces together, we complete the proof.
\end{proof}

\begin{lemma}[Upper bound of $P_M(v_M,e_M)$ (Fake pairs: $j\neq\pi_{*}(i)$)] \label{lemma:var_fake_regimeI_PM}
    \begin{equation*}
        P_M(v_M,e_M) \leq v_M R^{\frac{2e_M}{M}} n^{v_M} (11\beta)^{v_M} \mathbf{1}_{\{e_M \leq 2N-(v+k+2)\}} \mathbf{1}_{\{v_M \leq 2(K+M)\frac{2e_M}{M}\}}.
    \end{equation*}    
\end{lemma}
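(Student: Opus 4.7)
\textbf{Proof proposal for Lemma \ref{lemma:var_fake_regimeI_PM}.}

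The plan is to mirror the proof of Lemma \ref{lemma:var_true_regimeI_PM}, adjusting only for the fact that when $j \neq \pi_{*}(i)$ the decorated subgraph $\dot{U}_M = \dot{U}_M(i) \cup \dot{U}_M(j)$ consists of two disjoint rooted decorated trees --- one hanging off $i$ and one off $j$ --- rather than a single tree hanging off a common root. First I will rewrite
\[
P_M(v_M, e_M) = \sum_{\dot{U}_M \in \mathcal{U}_M(v_M, e_M)} w(\dot{U}_M)
\]
as a sum over unlabeled decorated structures $\widetilde{\mathcal{U}}_M(v_M, e_M)$ together with a labeling factor $|H(\dot{U}_M)| \leq n^{v_M}/\aut(\dot{U}_M)$, exactly as in the true-pair case.

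Next I will bound the weight $w(\dot{U}_M)$ using the fake-pair definition \eqref{eq:definition_UM_weight_fakepair}. The key observation is that $\dot{U}_M(i)$ is, by construction, a tree through $i$ that avoids $j$, so any bulb entirely contained in $\dot{U}_M(i)$ can only come from chandeliers rooted at $i$, namely $S_1$ and $T_1$; symmetrically for $\dot{U}_M(j)$ with $S_2, T_2$. Consequently only two of the four chandelier-weights contribute on each side, giving $w(\dot{U}_M) \leq R^{|\mathcal{N}_M(i)|}\, R^{|\mathcal{N}_M(j)|} = R^{|\mathcal{N}_M|}$. Since every component $\dot{\mathcal{G}}(\cdot,a)$ counted in $\mathcal{N}_M$ contains at least $M$ edges of decoration $\geq 3$, double-counting across the two sides yields $M|\mathcal{N}_M| \leq 2 e_M$, hence $w(\dot{U}_M) \leq R^{2e_M/M}$ and also $v_M \leq 2(K+M)|\mathcal{N}_M| \leq 4(K+M) e_M / M$, recovering the second indicator.

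Finally I will bound $|\widetilde{\mathcal{U}}_M(v_M, e_M)|$ by enumerating how the $v_M$ non-root vertices split between the two sides. For each split $v_M = v_M(i) + v_M(j)$, Otter's bound gives at most $\beta^{v_M(i)} \cdot \beta^{v_M(j)} = \beta^{v_M}$ unlabeled rooted-tree shapes, and each edge admits at most $11$ decoration subsets (the nonempty subsets of $\{S_1,S_2,T_1,T_2\}$ of size $\geq 2$). Summing over the at most $v_M+1$ splits yields $(v_M + 1)(11\beta)^{v_M}$, which combines with $n^{v_M}$ and $R^{2e_M/M}$ to give the stated inequality (the factor $v_M$ replacing $v_M+1$ for $v_M \geq 1$; the degenerate $v_M = 0$ case is trivial). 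The main (mild) technical point is justifying that a bulb of $S_1$ or $T_1$ cannot cross from the $i$-side into the $j$-side without passing through the non-tree part $\dot{U}_N$ extracted by \eqref{eq:definition_UN_fakepair}; once this is in hand the decorative counting and the $R^{|\mathcal{N}_M|}$ weight bound follow as in the true-pair argument.
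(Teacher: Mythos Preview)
Your approach matches the paper's: bound $w(\dot{U}_M)\le R^{|\mathcal{N}_M(i)|+|\mathcal{N}_M(j)|}\le R^{2e_M/M}$, then count unlabeled two-rooted decorated forests by splitting $v_M$ between the $i$- and $j$-sides and label with $n^{v_M}$.

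Two clarifications. First, your ``key observation'' that bulbs contained in $\dot{U}_M(i)$ can only come from $S_1,T_1$ is false as a structural statement --- a branch of $S_2$ may pass through $i$ and land its bulb inside $\dot{U}_M(i)$ --- but this is irrelevant: the fake-pair weight \eqref{eq:definition_UM_weight_fakepair} \emph{by definition} applies only $w_{S_1},w_{T_1}$ to $\dot{U}_M(i)$ and $w_{S_2},w_{T_2}$ to $\dot{U}_M(j)$, and any stray $S_2$-bulb inside $\dot{U}_M(i)$ is picked up by $w(\dot{U}_N)$ through \eqref{eq:definition_UN_weight_fakepair}. So your closing ``mild technical point'' is a non-issue; there is nothing to justify beyond reading the definition. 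Second, you omit the first indicator $\mathbf{1}_{\{e_M\le 2N-(v+k+2)\}}$; this is not a property of $\dot{U}_M$ alone but follows from the global edge count $2(v+k+2)+2(e_L+e_M+e_N)=4N$, which gives $e_M\le e_L+e_M+e_N=2N-(v+k+2)$.
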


\begin{proof}
    Define $\wt{\mathcal{U}}_N(v_M,e_M)$ as the collection of unlabeled decorated union graphs and $|H(\dot{U}_M)|$ be the set of labeled isomorphic members for each $\dot{U}_M \in \wt{\mathcal{U}}_N(v_M,e_M)$.
    
    In $\dot{U}_M$, there are $2e_M$ edges that are $3$ or $4$-decorated. Since for each branch connecting to the root, there are at least $M$ edges on it being at least $3$-decorated, $|\mathcal{N}_M(i)|+|\mathcal{N}_M(j)| \leq 2\frac{e_M}{M}$. Thus, this directly gives $w(\dot{U}_M) \leq R^{\frac{1}{2}\times2(|\mathcal{N}_M(i)|+|\mathcal{N}_M(j)|)} \leq R^{2\frac{e_M}{M}}$ and
    \begin{equation*}
        P_M(v_M, e_M) =\sum_{\dot{U}_M\in \mathcal{U}_M(v_M, e_M)} w(\dot{U}_M) 
        \leq R^{2\frac{e_M}{M}} \sum_{\dot{U}_M \in \wt{\mathcal{U}}_M(v_M, e_M)} |H(\dot{U}_M)|.
    \end{equation*}

    Since $\dot{U}_M(i)$ and $\dot{U}_M(j)$ are two vertex-disjoint trees with $v_M$ edges. There are at most $v_M \beta^{v_M}$ unlabeled non-decorated because we can allocate $v_M$ edges to $2$ trees in at most $v_M$ ways and under each way the number of rooted unlabeled trees are bounded by $\beta^{v_M}$. There are at most $11^{v_M}$ ways of decorating each edge. Therefore, $|\mathcal{\wt{U}}_{M}(v_M,e_M)|\leq v_M (11\beta)^{v_M}$. In addition, the number of labeled isomorphic members $|H(\dot{U}_M)|\leq n^{v_M}$ and the number of $3$-decorated edges plus two times of $4$-decorated edges, $e_M$, is upper bounded by $4N-2(v+k+2)$. Putting things together,
    \begin{equation*}
        P_M(v_M,e_M) \leq v_M R^{\frac{2e_M}{M}} n^{v_M} (11\beta)^{v_M} \mathbf{1}_{\{e_M \leq 2N-(v+k+2)\}}.
    \end{equation*}
    Lastly, because each connected component $\dot{\mathcal{C}}(i,a)$ contains at most $2(K+M)-1$ edges, for all $a \in \mathcal{N}_T(i)$ (same holds for $a\in \mathcal{N}_T(j)$). This is because there are at most four wires, each from $S_1,T_1,S_2,T_2$ go through vertex $a$ and every edge is at least $2$-decorated. Therefore, $v_M \leq (K+M)\frac{2e_M}{M}$.
\end{proof}

\begin{lemma}[Upper bound of $P_N(v_N,k)$ (Fake pairs: $j\neq\pi_{*}(i)$)] \label{lemma:var_fake_regimeI_PN}
    \begin{equation*}
        P_N(v_N,k) \leq n^{v_N} \beta(11\beta)^{v_N} (11R^4(v_N+2)^2)^{k+2} \mathbf{1}_{\{v_N \leq 4(K+M)(k+2)\}}.
    \end{equation*}    
\end{lemma}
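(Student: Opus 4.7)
}

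The plan is to mimic the structure of the proof of Lemma~\ref{lemma:var_true_regimeI_PN} (true-pair, non-tree part), adjusting for the fact that in the fake-pair setting the decorated union graph $\dot{U}$ has two roots $i$ and $j$ and so the baseline of ``two disjoint trees'' replaces the baseline of ``single rooted tree.'' Concretely, I will (i) bound the weight $w(\dot U_N)$ via a bound on the total number of branches attached to the two roots in $\dot U_N$, (ii) bound the number of unlabeled decorated graphs of the required shape, (iii) bound the number of labelings by $n^{v_N}$, and (iv) derive the vertex-count constraint $v_N \le 4(K+M)(k+2)$.

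First, I will count branches. Let $\dot U_N$ have $v_N$ non-root vertices and excess $k$ (where $k\ge -1$ here, since the $k=-2$ case has $\dot U_N$ trivial and is already treated separately in the proof of Proposition~\ref{prop:var_fake_regimeI}). Because the two-rooted baseline is the forest of two disjoint trees (excess $-2$), promoting the union graph to excess $k$ costs $k+2$ additional edges. Arguing as in the true case (Lemma~2 of~\cite{mao2022chandelier}), each added edge tangles at most two previously untangled branches at roots $i$ or $j$, so $|\mathcal{N}_N(i)|+|\mathcal{N}_N(j)|\le 2(k+2)$. Combined with the definitions~\eqref{eq:definition_UN_weight_fakepair} of $w(\dot U_N)$ and the $\aut(\mathcal B)\le R$ bound on each bulb, this yields $w(\dot U_N)\le R^{4(k+2)}$, since each branch contributes at most $\sqrt R$ to each of the (up to) two wire/bulb sub-weights containing it on each side.

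Next, I will count unlabeled decorated $\dot U_N$. The shape of $\dot U_N$ is obtained from an unlabeled rooted forest with two roots and $v_N$ extra vertices, plus $k+2$ additional edges. The two-rooted forest contributes an extra factor of order $\beta$ compared with the one-rooted count in the true case, because the distribution of edges across the two rooted trees gives $O(v_N)$ choices but this is subsumed in a single factor of $\beta$ using the bound $|\mathcal{J}(v_N,\infty,\infty)|\le \beta^{v_N}$ applied to each tree and a simple convolution. Adding $k+2$ extra edges among the at most $\binom{v_N+2}{2}$ vertex pairs gives the factor $\binom{\binom{v_N+2}{2}}{k+2}\le (v_N+2)^{2(k+2)}$. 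Finally each of the $v_N+k+2$ edges admits at most $11$ possible decoration sets, contributing $11^{v_N+k+2}$. Putting the pieces together,
\[
\bigl|\wt{\mathcal U}_N(v_N,e_N,k)\bigr|
\le \beta\,(11\beta)^{v_N}\,\bigl(11(v_N+2)^2\bigr)^{k+2}.
\]

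Then, as in the true-pair argument, $|H(\dot U_N)|\le n^{v_N}/\aut(\dot U_N)$, and the factor $1/\aut(\dot U_N)$ is absorbed by the weight bound. Summing over $\dot U_N$ and using the weight bound gives
\[
P_N(v_N,k)\ \le\ R^{4(k+2)}\sum_{\dot U_N\in\wt{\mathcal U}_N(v_N,e_N,k)}|H(\dot U_N)|
\ \le\ n^{v_N}\,\beta\,(11\beta)^{v_N}\,\bigl(11R^4(v_N+2)^2\bigr)^{k+2}.
\]
For the vertex-count constraint, each of the at most $2(k+2)$ branches attached to a root is contained in (parts of) at most four different chandeliers of length $K+M$, and at least $2$-decorated edges are mandated within $\dot U_N$, so every branch has at most $2(K+M)$ vertices; this gives $v_N\le 2(k+2)\cdot 2(K+M)=4(K+M)(k+2)$, producing the indicator.

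The main obstacle I anticipate is the clean derivation of the two ``cosmetic'' extras that distinguish this bound from the true-pair one: the extra $\beta$ coming from two rooted trees instead of one, and the replacement of $k+1$ by $k+2$ in every exponent. Both require a careful version of the branch-tangling argument of Lemma~2 of~\cite{mao2022chandelier} adapted to the two-rooted baseline; this is where I would invest the most care to make sure no decorated graph configuration is missed or overcounted.
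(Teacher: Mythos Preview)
Your approach is essentially the same as the paper's: bound the weight via the number of root-neighbors in $\mathcal N_N(i)\cup\mathcal N_N(j)$, count unlabeled decorated graphs as a (two-rooted) forest plus extra edges plus decorations, bound labelings by $n^{v_N}$, and derive the vertex-count indicator.

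The one concrete discrepancy is the branch-count. You claim $|\mathcal N_N(i)|+|\mathcal N_N(j)|\le 2(k+2)$ by porting the true-pair Lemma~2 argument directly. The paper instead uses $|\mathcal N_N(i)|+|\mathcal N_N(j)|\le 4(k+2)$, citing Lemma~9 (not Lemma~2) of \cite{mao2022chandelier}; the intuition given is that in the two-root setting each of the $k+2$ tangling events can involve up to four (not two) branches. With the paper's $4(k+2)$, the weight bound $w(\dot U_N)\le R^{|\mathcal N_N(i)|+|\mathcal N_N(j)|}\le R^{4(k+2)}$ and the vertex bound $v_N\le (K+M)\bigl(|\mathcal N_N(i)|+|\mathcal N_N(j)|\bigr)\le 4(K+M)(k+2)$ follow immediately, without the ad hoc ``$\sqrt R$ on each side'' and ``$2(K+M)$ vertices per branch'' factors you insert to force the right exponents. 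Those insertions are exactly where your derivation is internally inconsistent: from your own $2(k+2)$ you would get $R^{2(k+2)}$, not $R^{4(k+2)}$. Replacing $2(k+2)$ by $4(k+2)$ (and pointing to the correct lemma in \cite{mao2022chandelier}) fixes this and makes the rest of your write-up match the paper's proof. Your enumeration of unlabeled decorated $\dot U_N$ via a two-rooted forest plus $k+2$ extra edges is one of the two routes the paper mentions (the paper actually uses a single spanning tree with $k+1$ extra edges and then loosens, but both are fine).
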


\begin{proof}
    From the Lemma 9 in~\cite{mao2022chandelier} we know that $|\mathcal{N}_N(i)|+|\mathcal{N}_N(j)|\leq 4(k+2)$. The intuition behind this bound is that $\dot{U}_N$ can be viewed as a bunch of branches (wires plus bulbs) coming from two different roots $i,j$ tangled together.  Whenever two branches intersect with each other, the excess grow by one. Since the excess of decorated union graph is $k$ and the starting point is two separate trees (with excess $-2$), there must be $k+2$ times of intersection between different branches. Each time of intersection involves at most four branches.
    
    Thus $w(\dot{U}_N) \leq R^{|\mathcal{N}_N(i)|+|\mathcal{N}_N(j)|} \leq R^{4(k+2)}$. 
    As usual, we define $\wt{\mathcal{U}}_N(v_N,k)$ as the collection of unlabeled decorated union graphs and $|H(\dot{U}_L)|$ be the set of labeled isomorphic members for each $\dot{U}_L \in \wt{\mathcal{U}}_N(v_N,k)$.
    \begin{equation*}
        P_N(v_N,k) = \sum_{\dot{U}_N\in \mathcal{U}_N(v_N,k)} w(\dot{U}_N)
        \leq R^{4(k+2)} \sum_{\dot{U}_N\in \wt{\mathcal{U}}_N(v_N,k)} |H(\dot{U}_L)|.
    \end{equation*}

    By $|H(\dot{U}_N)|\leq \binom{n}{v_N}\frac{v_N!}{\aut(\dot{U}_N)}\leq n^{v_N}$, we have:
    \begin{equation*}
        P_N(v_N,k) \leq R^{4(k+2)} |\mathcal{\wt U}_N(v_N, k)|.
    \end{equation*}
    In this part, the total number of unlabeled non-decorated graphs $\dot{U}_N$ with $v_N+2$ vertices and excess $k$ is bounded by $\beta^{v_N+1}\binom{\binom{v_N+2}{2}}{k+1} \leq \beta^{v_N+1} (v_N+2)^{2(k+1)}$. This is because for a unlabeled connected graph with $v_N+2$ vertices, we can construct a spanning tree with $v_N+1$ vertices first and than add additional $k+1$ edges connecting some of the $v_N+2$ vertices. We can also bound the number of unlabeled $\dot{U}_N$ as $v_N\beta^{v_N} \binom{\binom{v_N+2}{2}}{k+2}$, which is constructing two trees rooted at $i$ and $j$, with a total of $v_N$ ways of vertex number allocation, and then adding an additional $k+2$ edges. The latter bound is looser.
    Also, there are at most $11^{v_N+k+2}$ ways of decoration. Thus,
    \begin{equation*}
        |\mathcal{\wt U}_N(v_N, k)| \leq \beta^{v_N+1} (v_N+2)^{2(k+1)} 11^{v_N+k+2} \leq \beta^{v_N+1} (v_N+2)^{2(k+2)} 11^{v_N+k+2}.
    \end{equation*}
    Lastly, $v_N$ is upper bounded by $(K+M)(|\mathcal{N}_N(i)|+|\mathcal{N}_N(j)|)$ by Claim 4 in~\cite{mao2022chandelier}.
\end{proof}


\section{Proof of Proposition~\ref{prop:mean_regimeII}}
\label{sec:prop4}

In this section, we prove Proposition~\ref{prop:mean_regimeII}, which extends the Proposition~\ref{prop:mean_regimeI} to the case when we can only perform almost exact community recovery with a single graph. We denote $\whbsigma:=(\whbsigma_A, \whbsigma_B)$, which is the combination of the community label estimates for both graphs.

\begin{proof}[Proof of Proposition~\ref{prop:mean_regimeII}]
    By definition of the similarity score and $\calH$,
    \begin{align*}
        \EE[\Phi_{ij}^{\whbsigma}\indH] 
        &= \sum_{H\in \mathcal{T}}\aut(H) \sum_{S(i)\cong H}\sum_{T(j)\cong H} \EE[\overline{A}_S^{\whbsigma_A} \overline{B}_T^{\whbsigma_B} \indH] \\
        &= (1+o(1)) \sum_{H\in \mathcal{T}}\aut(H) \sum_{S(i)\cong H}\sum_{T(j)\cong H} \EE\left[ \EE[\overline{A}_S^{\whbsigma_A} \overline{B}_T^{\whbsigma_B} \condsig, \whbsigma] \indH \right].
    \end{align*}
    
    Define $\mathcal{C}$ as the edge collection of the intersection graph of $S$ and $T$: $\mathcal{C} := E(S)\cap E(T)$.
    \begin{multline*}
        \begin{aligned}
            \EE[\Phi_{ij}^{\whbsigma} \indH] 
            &\sim \sum_{H\in \mathcal{T}}\aut(H) \sum_{S(i)\cong H}\sum_{T(j)\cong H} \EE \left[ \EE[\prod_{e\in \mathcal{C}}\overline{A}_{e}^{\whbsigma_A} \overline{B}_e^{\whbsigma_B} \prod_{e'\in E(S)\setminus \mathcal{C}}\overline{A}_{e'}^{\whbsigma_A} \prod_{e''\in E(T)\setminus \mathcal{C}}\overline{B}_{e'}^{\whbsigma_B} \condsig, \whbsigma] \indH \right] \\
            &\sim \sum_{H\in \mathcal{T}}\aut(H) \sum_{S(i)\cong H}\sum_{T(j)\cong H} \EE \bigg[ \prod_{e\in \mathcal{C}}\EE[\overline{A}_{e}^{\whbsigma_A} \overline{B}_e^{\whbsigma_B} \condsig, \whbsigma] \prod_{e'\in E(S)\setminus \mathcal{C}}\EE[\overline{A}_{e'}^{\whbsigma_A} \condsig, \whbsigma] \\
            &\quad \times \prod_{e''\in E(T)\setminus \mathcal{C}}\EE[\overline{B}_{e'}^{\whbsigma_B} \condsig, \whbsigma] \indH \bigg].
        \end{aligned}
    \end{multline*}
    We perform case studies for $\EE[\Phi_{ij}^{\whbsigma} \indH]$ based on the structure of $S,T$ as a union graph and then later sum each case up.
    
    \paragraph{(a) $S=T$, that is, all edges appear in pairs (this case is only possible for $i=j$).} There are $(1+o(1))n^N/\aut(H)$ labeled union graphs of $S_1$ and $T_1$ satisfying this condition. In this case, every edge is $2$-decorated and it no longer matters whether $\whbsigma$ gives a correct output or not, as we can show the following upper bound.
    \begin{align*}
        \EE[\Phi_{ij}^{\whbsigma} \indH]^{(A)} &= (1+o(1)) \sum_{H\in \mathcal{T}}\aut(H) \frac{n^N}{\aut(H)} \EE \left[ \prod_{e\in \mathcal{C}}\EE[\overline{A}_{e}^{\whbsigma_A} \overline{B}_e^{\whbsigma_B} \condsig, \whbsigma] \indH \right] \\
        &\leq (1+o(1))|\mathcal{T}|n^N \sum_{N_{+}} \PP(\zeta = N_{+}) \sum_{N'} \PP(\eta_{c+} = N_{c+}, \eta_{c-} = N_{c-}|\zeta = N_{+}) \\
        &\quad\times (\rho_{+} \sigma_{+}^2)^{N_{c+}} (\rho_{-} \sigma_{-}^2)^{N_{c-}} (\rho_{+} \sigma_{+}^2 + \Delta^2)^{N_{+}-N_{c+}} (\rho_{-} \sigma_{-}^2 + \Delta^2)^{N-N_{+}-N_{c-}} \\
        &\leq (1+o(1))|\mathcal{T}| n^N \sum_{N_{+}=0}^N {\binom{N}{N-1}} \frac{1}{2^N} (\rho \sigma_{+}^2 + \Delta^2)^{N_{+}} (\rho \sigma_{-}^2 + \Delta^2)^{N-N_{+}} \\
        & = (1+o(1))|\mathcal{T}| n^N (\rho \sigma_{\eff}^2 + \Delta^2)^{N},
    \end{align*}
    where $\zeta$ is the number of in-community edges out of the $N$ edges in $S$, 
    $\eta_{c+}$ is the number of in-community edges that are centralized incorrectly, 
    $\eta_{c-}$ is the number of cross-community edges that are centralized incorrectly, and $\Delta:=|p-q|$. 
    (i) The first equality holds by definition and counting cases. 
    (ii) The second inequality holds because there are $N_{c+}$($N_{c-}$) in(cross)-community edges centralized correctly, each of which contributes the same as $\EE[\overline{A}_{e} \overline{B}_e \condsig]=(1+
    o(1))\rho_{+} \sigma_{+}^2$($\rho_{-} \sigma_{-}^2$) (Lemma~\ref{lemma:cross_moments_regimeI}). 
    For the remaining edges, $N_{+}-N_{c+}$ ($N-N_{+}-N_{c-}$) of them has
    $\EE[\overline{A}_{e}^{\whbsigma_A} \overline{B}_e^{\whbsigma_B} \condsig, \whbsigma] = (1+o(1))(\rho_{+} \sigma_{+}^2) + \EE[\overline{A}_{e}^{\whbsigma_A} \condsig, \whbsigma] \EE[\overline{B}_e^{\whbsigma_B} \condsig, \whbsigma]\leq (1+o(1))(\rho_{+} \sigma_{+}^2 + \Delta^2)$.
    (iii) The third inequality holds because $\Delta^2>0$, Lemma \ref{lemma:choose_incomm_edges} gives the distribution of $\zeta$, and $\rho=(1+\Theta(\frac{\log n}{n}))\rho_{+},\rho=(1+\Theta(\frac{\log n}{n}))\rho_{-}$. (iv) The last equality holds from the binomial theorem.

    Observe that $\Delta^2=(1+\Theta(\frac{\log n}{n}))(\rho \sigma_{\eff}^2)$. Assume that $N=O(\log n)$,
    \begin{equation*}
        \EE[\Phi_{ij}^{\whbsigma} \indH]^{(A)} = (1+o(1))|\mathcal{T}|n^N(\rho \sigma_{\eff})^N.
    \end{equation*}
    
    \paragraph{(b) $S$ and $T$ have no common edges, that is, $E(S)\cap E(T)=\emptyset$.} We can use the trivial bound on the labeled $S$ and $T$ as $\frac{n^{2N}}{\aut(H)^2}$.
    \begin{align}
        \nonumber \EE[\Phi_{ij}^{\whbsigma} \indH]^{(B)}  
        &\sim \sum_{H\in \mathcal{T}}\aut(H) \sum_{S(i)\cong H}\sum_{T(j)\cong H: \mathcal{C} = \emptyset} \EE \left[ \prod_{e'\in E(S)}\EE[\overline{A}_{e'}^{\whbsigma_A} \condsig, \whbsigma ] \prod_{e''\in E(T)}\EE[\overline{B}_{e'}^{\whbsigma_B} \condsig, \whbsigma] \indH \right] \\
        &\lesssim |\mathcal{T}| \frac{n^{2N}}{\aut(H)} \PP(E) \Delta^{2N},
        \label{equa184:regimeII_firstmoment_B}
    \end{align}
    where $E$ denotes the event $\{ \whbsigma: \prod_{e\in E(S)} \EE[\overline{A}_{e}^{\whbsigma_A}] \prod_{e\in E(T)} \EE[\overline{B}_e^{\whbsigma_B}] = \Theta(\Delta^{e(S)+e(T)}) \} \cap \calH$, that is, every edge is centralized correctly and $\calH$ happens. 
    This inequality is true because conditioned on $\calH$ and $\whbsigma$ being correct, $\EE[\overline{A}_{e}^{\whbsigma_A} \condsig, \whbsigma] \indH=o(n^{-D_+(a,b,s,\eps)})$, the upper bound of the probability that one vertex on this edge being labeled incorrectly.

    We further denote $E'$ as $\{ \whbsigma_A: \prod_{e\in E(S)} \EE[\overline{A}_{e}^{\whbsigma_A}] = \Theta(\Delta^{e(S)}) \} \cap \calH$.
    It is obvious that $\PP(E) \leq \PP(E')$.
    Then, we need to upper bound the probability of $\whbsigma_A$ giving incorrect centralization for all edges in~$S\cup T$.
    
    We observe that \emph{there are only two situations}, that is, no vertices in $S$ that has a neighbor with the same label correctness as itself. See Figure~\ref{fig:staggered_labeling_trees} for an illustration. We denote those two possible outcomes constraint on $S\cup T$ as $\whbsigma_1$ and $\whbsigma_2$. $\PP(E') = \PP(\{\whbsigma_A=\whbsigma_1\} \cap \calH)+\PP(\{\whbsigma_A=\whbsigma_2\} \cap \calH)$.

    By the label correctness, we separate $V(S)$ into two disjoint sets: $V(S)_c^{\whbsigma}$ for the correctly labeled vertices and $V(S)_{ic}^{\whbsigma}$ for the incorrectly labeled vertices. The deterministic relationship between $\whbsigma_1$ and $\whbsigma_2$ is: $V(S)_{ic}^{\whbsigma_1}=V(S)_{c}^{\whbsigma_2}$. without loss of generality, we assume $|V(S)_{c}^{\whbsigma_1}| > |V(S)_{ic}^{\whbsigma_1}|$. Observe that event $\{\whbsigma_A=\whbsigma_1\}$ (resp. $\{\whbsigma_A=\whbsigma_2\}$) is equivalent as saying the set of vertices on odd (resp. even) levels are labeled incorrectly (falling in the bad vertex set $I_\eps$ as defined in Section~\ref{sec:preliminaries}).

    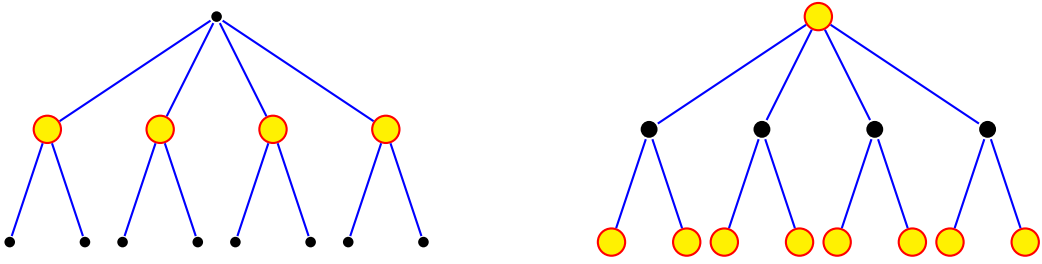
\begin{figure}
    \centering
    \begin{tikzpicture}[grow=down, level distance=1.5cm, line width=1.5pt,
      level 1/.style={sibling distance=15mm},
      level 2/.style={sibling distance=10mm},
      level 3/.style={sibling distance=8mm}]
    
      \node[dot] {}
        child {node[circle, fill=yellow, draw=red, thick] {} edge from parent [draw=blue, thick]
            child {node[dot] {} edge from parent [draw=blue, thick]}
            child {node[dot] {} edge from parent [draw=blue, thick]}}
        child {node[circle, fill=yellow, draw=red, thick] {} edge from parent [draw=blue, thick]
            child {node[dot] {} edge from parent [draw=blue, thick]}
            child {node[dot] {} edge from parent [draw=blue, thick]}}
        child {node[circle, fill=yellow, draw=red, thick] {} edge from parent [draw=blue, thick]
            child {node[dot] {} edge from parent [draw=blue, thick]}
            child {node[dot] {} edge from parent [draw=blue, thick]}}
        child {node[circle, fill=yellow, draw=red, thick] {} edge from parent [draw=blue, thick]
            child {node[dot] {} edge from parent [draw=blue, thick]}
            child {node[dot] {} edge from parent [draw=blue, thick]}};
    
      \begin{scope}[xshift=8cm]
      \node[circle, fill=yellow, draw=red, thick] {}
        child {node[dot] {s} edge from parent [draw=blue, thick]
            child {node[circle, fill=yellow, draw=red, thick] {} edge from parent [draw=blue, thick]}
            child {node[circle, fill=yellow, draw=red, thick] {} edge from parent [draw=blue, thick]}}
        child {node[dot] {s} edge from parent [draw=blue, thick]
            child {node[circle, fill=yellow, draw=red, thick] {} edge from parent [draw=blue, thick]}
            child {node[circle, fill=yellow, draw=red, thick] {} edge from parent [draw=blue, thick]}}
        child {node[dot] {s} edge from parent [draw=blue, thick]
            child {node[circle, fill=yellow, draw=red, thick] {} edge from parent [draw=blue, thick]}
            child {node[circle, fill=yellow, draw=red, thick] {} edge from parent [draw=blue, thick]}}
        child {node[dot] {s} edge from parent [draw=blue, thick]
            child {node[circle, fill=yellow, draw=red, thick] {} edge from parent [draw=blue, thick]}
            child {node[circle, fill=yellow, draw=red, thick] {} edge from parent [draw=blue, thick]}};
      \end{scope}
    
    \end{tikzpicture} 
    \caption{\textit{Left}: One possible labeling such that all edges are centralized incorrectly, with the colored vertices indicating those that are labeled incorrectly and the black vertices indicating those that are labeled correctly.
    \textit{Right}: Another possible labeling such that all edges are centralized incorrectly.} \label{fig:staggered_labeling_trees}
    \end{figure}   

    Denote $p_{a,b,s,\eps,\delta,V}:=n^{-V\times D_{+}(a,b,s,\eps)}+n^{-\eps\delta(1-o(1))\log n}$. We can apply Lemma~\ref{lemma:bad_vertices_joint_distribution},
    \begin{align*}
        &\PP(\{\whbsigma_A=\whbsigma_1\} \cap \calH) \leq (1-p_{a,b,s,\eps,\delta,|V(S)_{c}^{\whbsigma_1}|}) (p_{a,b,s,\eps,\delta,|V(S)_{ic}^{\whbsigma_1}|}), \\
        &\PP(\{\whbsigma_A=\whbsigma_2\} \cap \calH) \leq (1-p_{a,b,s,\eps,\delta,|V(S)_{ic}^{\whbsigma_1}|}) (p_{a,b,s,\eps,\delta,|V(S)_{c}^{\whbsigma_1}|}), \\
        &\PP(E') \leq 2 (1-p_{a,b,s,\eps,\delta,|V(S)_{c}^{\whbsigma_1}|}) (p_{a,b,s,\eps,\delta,|V(S)_{ic}^{\whbsigma_1}|}).
    \end{align*}
    Consider each chandelier has $L$ branches and each has a $M$-wire (assume that $M=\Theta(\frac{\log n}{\log log n})$ as in condition~\eqref{eq:LKMRD_simplifed_assumption}). 
    Even if we don't know the structure of bulbs, we have a coarse lower bound $|V(S)_{ic}^{\whbsigma_1}|=\Omega(\frac{\log n}{\log \log n})$, because at least half of the vertices on wires should be labeled incorrectly. 
    
    For some constant $c_1>0$ such that $|V(S)_{ic}^{\whbsigma_1}|\geq c_1 \frac{\log n}{\log \log n}$,
    \begin{equation*}
        \PP(E) \leq \PP(E') 
        \leq O(p_{a,b,s,\eps,\delta,|V(S)_{ic}^{\whbsigma_1}|})
        \leq O(n^{-D_{+}(a,b,s,\eps)c_1\frac{\log n}{\log \log n}}).
    \end{equation*}
    By substituting $\PP(E)$ and $\Delta^{2N}$ into (\ref{equa184:regimeII_firstmoment_B}), we have
    \begin{equation*}
        \EE[\Phi_{ij}^{\whbsigma} \indH]^{(B)} 
        \leq O\left(|\mathcal{T}| \frac{n^{2N}}{\aut(H)} n^{-[D_+(a,b,s,\eps)]c_1\frac{\log n}{\log \log n}} (\rho \sigma_{\eff}^2)^{2N} (\nu)^{2N}\right).
    \end{equation*}
    Recall that $\mu = |\mathcal{T}|n^N(\rho\sigma_{\eff}^2)^N$. 
    Denote $\nu^2=\frac{\Delta^2}{(\rho\sigma_{\eff}^2)^2}(= (1+o(1))\frac{2(a-b)}{\rho(a+b)})$. For some constant $c_2>0$ such that $\rho  \sigma_{\eff}^2 \leq c_2 \frac{\log n}{n}$,
    \begin{align*}
        \EE[\Phi_{ij}^{\whbsigma} \indH]^{(B)}
        &\leq O\left(\mu c_2^{N} \nu^{2N} (\frac{\log n}{n})^N n^N n^{-D_+(a,b,s,\eps) c_1\frac{\log n}{\log \log n}} \right)\\
        &= O\left(\mu c_2^{N} \nu^{2N} \frac{(\log n)^N}{n^{c_1 D_+(a,b,s,\eps) \frac{\log n}{\log \log n}}} \right).
    \end{align*}
    For some constant $c_3>0$ such that $N=c_3 \log n$, as in assumption~\eqref{eq:LKMRD_simplifed_assumption},
    \begin{align*}
        &= O\left(\mu (\frac{(c_2 \nu^2 \log n)^{c_3 \log \log n}}{n^{c_1 D_+(a,b,s,\eps) }})^{\frac{\log n}{\log \log n}} \right)\\
        &= o(\mu/n^2).
    \end{align*}
        
    \paragraph{(c) $S$ and $T$ have some common edges, that is, $E(S)\cap E(T)\neq \emptyset$.} There are at most $[n^{2N}-n^N (n-N)^N]/\aut(H)^2=o(n^N)n^N/\aut(H)^2$ cases in the enumeration of $S$ and $T$.
    \begin{equation*}
        \EE[\Phi_{ij}^{\whbsigma} \indH]^{(C)} \lesssim \sum_{H\in \mathcal{T}}\aut(H) \sum_{M=1}^{N-1} \sum_{S(i)\cong H,T(j)\cong H: C \neq \emptyset, X_d=M} (\rho \sigma_{\eff}^2+\Delta^2)^{N-M} \Delta^{2M} \PP(E''),
    \end{equation*}
    where we denote by $E''$ the event 
    \[
    \{\whbsigma: \prod_{e'\in E(S)\setminus \mathcal{C}} \EE[\overline{A}_{e'}^{\whbsigma_A}] \prod_{e''\in E(T)\setminus \mathcal{C}} \EE[\overline{B}_{e'}^{\whbsigma_B}] =\Theta(\Delta)^{|E(S)\setminus \mathcal{C}|+|E(T)\setminus \mathcal{C}|} \} \cap \calH.
    \]
    Let $X_d$ denote the number of different edges between $S$ and $T$ under the true permutation. The structures of $S$ and $T$ such that $\{X_d=M\}$ happens, $\sum_{S(i)\cong H}\sum_{T(j)\cong H} \mathbf{1}_{\{X_d=M\}}$, is upper bounded by $\frac{n^N}{\aut(H)}\times \frac{n^M}{\aut(H)}$ as changing $M$ edges from $S$ to $T$ allows changing at most $M$ vertices for a tree. Then, we have
    \begin{align}\label{eq:187}
        \nonumber \EE[\Phi_{ij}^{\whbsigma} \indH]^{(C)} &\lesssim |\mathcal{T}| n^N \sum_{M=1}^{N-1} (\rho \sigma_{\eff}^2+\Delta^2)^{N-M} n^M \Delta^{2M} \PP(E'')\\
        &\nonumber \lesssim |\mathcal{T}| n^N \sum_{M=1}^{N-1} (\rho \sigma_{\eff}^2)^{N-M} n^M(\rho \sigma_{\eff}^2)^{2M}(\frac{2(a-b)}{\rho(a+b)})^{2M} \PP(E'')\\
        &= (1+o(1))\mu \sum_{M=1}^{N-1} (n\rho \sigma_{\eff}^2)^M \nu^{2M} \PP(E'').
    \end{align}
    We observe the following:
    \begin{equation*}
        \PP(E'') \leq n^{-D_{+}(a,b,s,\eps) (2M/2D)}.
    \end{equation*}
    This inequality holds because there are $2M$ edges centralized incorrectly (required from event $E''$) in $S$ and $T$ and each incorrect labeling of a vertex can lead to incorrect centralization on at most $D$ edges. Therefore, at least $2M/2D$ vertices should be wrongly labeled for $E''$ to happen.
        
    Substitute $\PP(E'')$ into (\ref{eq:187}). Assuming that $D=o(\frac{\log n}{\log \log n})$~\eqref{eq:LKMRD_simplifed_assumption},
    \begin{equation*}
        \EE[\Phi_{ij}^{\whbsigma} \indH]^{(C)}\lesssim \mu \sum_{M=1}^{N-1} (\frac{n\rho \sigma_{\eff}^2 \nu^2}{n^{\frac{1}{D}D_{+}(a,b,s,\eps)}})^M \lesssim \mu \sum_{M=1}^{N-1} (\frac{\nu^2 c_1 \log n}{n^{\frac{1}{D}D_{+}(a,b,s,\eps)}})^M = o(\mu).
    \end{equation*}


    In summary,
    \begin{equation*}
    \EE[\Phi_{ij}^{\whbsigma} \indH] =  
    \begin{cases} 
    \EE[\Phi_{ij}^{\whbsigma} \indH]^{(A)} + \EE[\Phi_{ij}^{\whbsigma} \indH]^{(B)} + \EE[\Phi_{ij}^{\whbsigma} \indH]^{(C)} = (1+o(1))\mu & \text{if } j=\pi_*(i), \\
    \EE[\Phi_{ij}^{\whbsigma} \indH]^{(B)} + \EE[\Phi_{ij}^{\whbsigma} \indH]^{(C)} = o(\mu)  & \text{if } j\neq \pi_*(i).
    \end{cases}
    \end{equation*}
\end{proof}

\begin{remark} {(Denser regime)}
    If we are not restricting the sparse regime $p=\frac{a\log n}{n}$ and $q=\frac{b\log n}{n}$, we are interested in what general $p,q$ conditions are for Lemma \ref{prop:mean_regimeII} to hold. Assume that $p\vee q=O(n^{-c(n)})$, then the geometric series $\sum_{M=1}^{N-1} \frac{n\rho \sigma_{\eff}^2 \nu^2}{n^{\frac{1}{D} D_{+}(a,b,s,\eps)}}$ converges if and only if $c(n)>1-\frac{D_{+}(a,b,s,\eps)}{D}$. 
\end{remark}

\section{Proof of Proposition~\ref{prop:var_true_regimeII}}
\label{sec:prop5}

\subsection{Proof of the Proposition}

In this section, we analyze the second moment of similarity score. We expect the variance of $\Phi_{ij}$ to be infinitely small in comparison with the squared expectation of true pair's similarity score.

\begin{proof}[Proof of Proposition~\ref{prop:var_true_regimeII}]
\normalsize
Recall that $S_1,T_1$ are rooted on $i$ and $S_2,T_2$ are rooted on $j$.
    \begin{align} \label{eq:V1_of_regimeII}
        \nonumber \Var[\Phi_{ij}^{\whbsigma} \indH]
        &=\sum_{H,I\in \mathcal{T}} \aut(H)\aut(I) \sum_{S_1,S_2\cong H, T_1,T_2\cong I} \operatorname{Cov}(\overline{A}_{S_1}^{\whbsigma_A}\overline{B}_{S_2}^{\whbsigma_B}\indH, \overline{A}_{T_1}^{\whbsigma_A}\overline{B}_{T_2}^{\whbsigma_B}\indH) \\
        &= \underbrace{\sum_{H,I\in \mathcal{T}} \aut(H)\aut(I) \sum_{S_1,S_2\cong H, T_1,T_2\cong I} \EE [\overline{A}_{S_1}^{\whbsigma_A} \overline{B}_{S_2}^{\whbsigma_B} \overline{A}_{T_1}^{\whbsigma_A} \overline{B}_{T_2}^{\whbsigma_B} \indH]}_{V_1} - \\
        \nonumber &\quad \underbrace{\sum_{H,I\in \mathcal{T}} \aut(H)\aut(I) \sum_{S_1,S_2\cong H, T_1,T_2\cong I} \EE[\overline{A}_{S_1}^{\whbsigma_A} \overline{B}_{S_2}^{\whbsigma_B}\indH ]\EE[\overline{A}_{T_1}^{\whbsigma_A} \overline{B}_{T_2}^{\whbsigma_B}\indH ]}_{V_2}
    \end{align}

    \paragraph{(a) Analyzing $V_2$.} We first give the upper bound of the latter part. When analyzing with correct centralization, we ignore this part as it is non-negative in the correct centralization case. However, in this case, it is possible to be negative because there can be odd number of edges occurring once and also being incorrectly centralized. 

    After factorizing $V_2$,
    \begin{align}
        \nonumber V_2&=\left( \sum_{H\in \mathcal{T}} \aut(H) \sum_{S_1,S_2\cong H} \EE[\overline{A}_{S_1}^{\whbsigma_A} \overline{B}_{S_2}^{\whbsigma_B}\indH ] \right) \left(\sum_{I\in \mathcal{T}} \aut(I) \sum_{T_1,T_2\cong I} \EE[\overline{A}_{T_1}^{\whbsigma_A} \overline{B}_{T_2}^{\whbsigma_B}\indH ]\right),
    \end{align}
    we can see that $V_2\geq 0$, and thus we have $\Var[\Phi_{ij}^{\whbsigma} \indH] \leq V_1$ for $j=\pi_{*}(i)$.

    \paragraph{(b) Analyzing $V_{11}$.} 
    The main challenge here is that we do not have the condition that every edge occurs at least twice in the union graph as in the analysis of Regime I. However, we can put union graphs into two categories based on whether every edge is at least $2$-decorated or not. We keep the notation of $\mathcal{W}_{ij}$ as the collection of decorated union graphs $\dot{U}$ that are at least $2$-decorated and we decompose~\eqref{eq:V1_of_regimeII} as follows
    \begin{align*}
        V_1 &= \sum_{H,I\in \mathcal{T}} \aut(H)\aut(I) \sum_{S_1,S_2\cong H, T_1,T_2\cong I} \EE [\overline{A}_{S_1}^{\whbsigma_A} \overline{B}_{S_2}^{\whbsigma_B} \overline{A}_{T_1}^{\whbsigma_A} \overline{B}_{T_2}^{\whbsigma_B}\indH]\\
        &= \underbrace{\sum_{\dot{U}\in \mathcal{W}_{ij}} (\aut(S_1)\aut(S_2)\aut(T_1)\aut(T_2))^{\frac{1}{2}} \EE [\overline{A}_{S_1}^{\whbsigma_A} \overline{B}_{S_2}^{\whbsigma_B} \overline{A}_{T_1}^{\whbsigma_A} \overline{B}_{T_2}^{\whbsigma_B}\indH] }_{V_{11}} + \\
        & \quad + \underbrace{\sum_{\dot{U} \notin \mathcal{W}_{ij}} (\aut(S_1)\aut(S_2)\aut(T_1)\aut(T_2))^{\frac{1}{2}} \EE [\overline{A}_{S_1}^{\whbsigma_A} \overline{B}_{S_2}^{\whbsigma_B} \overline{A}_{T_1}^{\whbsigma_A} \overline{B}_{T_2}^{\whbsigma_B}\indH]}_{V_{12}}.
    \end{align*}
    We first show that $V_{11}/\mu^2=O\left ( \frac{L^2}{\rho^2 ns(p\wedge q)}+\frac{L^2}{\rho^{2(K+M)} |\mathcal{J}|}\right )$.

    The incorrect centralization affects the upper bound of moments as the following:
    \begin{align}
        \nonumber \EE[\overline{A}_{S_1}^{\whbsigma_A} \overline{B}_{S_2}^{\whbsigma_B} \overline{A}_{T_1}^{\whbsigma_A} \overline{B}_{T_2}^{\whbsigma_B} \indH] 
        \nonumber &= \EE \left [ g_{\dot{U}}(\sigma_{+}, \sigma_{-}) \prod_{2\leq \ell+m \leq 4} \prod_{(u,v)\in K_{\ell m}} \sigma_{c(u,v)}^{-(\ell+m)}\EE[\overline{A}_{uv}^{\whbsigma_A, \ell} \overline{B}_{uv}^{\whbsigma_B, m} \condsig, \whbsigma] \indH \right ]\\
        \nonumber &= \EE \left [ g_{\dot{U}}(\sigma_{+}, \sigma_{-}) \prod_{2\leq \ell+m \leq 4} \beta_{l,m}^{e(K_{\ell m})} \indH \right] \\
        \nonumber &\leq \rho^{e(K_{11})} (sp\wedge sq)^{-2N+e(U)} \sigma_{\eff}^{4N} (1+O(\frac{\log n}{n}))^{e(U)} \\
        &\leq \rho^{e(K_{11})} (sp\wedge sq)^{-2N+e(U)} \sigma_{\eff}^{4N} (1+O(\frac{\log n}{n}))^{4N}. \label{eq:connection_of_gamma}
    \end{align}

    The first two equalities follow from definitions. The third inequality holds because of the upper bounds of $\beta_{\ell, m}^{e(K_{\ell m})}$ from Lemma \ref{lemma:cross_moments_regimeII}  holds for all $\whbsigma$.

    Then, the structures of the union graph, alone with the assigned weights are bounded the same as in Section~\ref{sec:prop2}.
    This implies that $\frac{V_{11}}{\EE[\Phi_{i\pi_{*}(i)}\indH]^2}\leq O\left ( \frac{L^2}{\rho^2 ns(p\wedge q)}+\frac{L^2}{\rho^{2(K+M)} |\mathcal{J}|}\right )$ under the same condition as~\eqref{eq:conditions_true_pair_regimeI}. 

    \paragraph{(c) Analyzing $V_{12}$.}
    
    To conclude $\frac{\Var[\Phi_{ij}^{\whbsigma} \indH]}{\EE[\Phi_{i\pi_{*}(i)}\indH]^2} \leq \frac{V_{11}+V_{12}+V_{2}}{\EE[\Phi_{i\pi_{*}(i)}\indH]^2} = O\left ( \frac{L^2}{\rho^2 ns(p\wedge q)}+\frac{L^2}{\rho^{2(K+M)} |\mathcal{J}|}\right )$, it remains to show $V_{12} = o\left ( \frac{L^2}{\rho^2 ns(p\wedge q)}+\frac{L^2}{\rho^{2(K+M)} |\mathcal{J}|}\right )$. 
    Lemma~\ref{lemma:V12_bound_true_pair} gives a even stronger result, because as assumed in Proposition~\ref{prop:var_true_regimeII}, $L=o(\log n)
    )$, giving $\frac{L^2}{\rho^2 ns(p\wedge q)}+\frac{L^2}{\rho^{2(K+M)} |\mathcal{J}|} \gg n^{-\eps'}$ for all $\eps' > 0$.
    \end{proof}

\begin{lemma}\label{lemma:V12_bound_true_pair}
    Under the same conditions as Proposition~\ref{prop:var_true_regimeII},
    for some $\eps'>0$,
    \begin{equation*}
        \frac{V_{12}}{\EE[\Phi_{i\pi_{*}(i)}\indH]^2} 
        = o(n^{-\eps'}).
    \end{equation*}
\end{lemma}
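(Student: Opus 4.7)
The plan is to stratify $V_{12}$ by $z \geq 1$, the number of $1$-decorated edges in $\dot U \notin \mathcal W_{ij}$, and to show that for every $z$ the total contribution is crushed by the probability that the community-label estimator $\whbsigma = (\whbsigma_A, \whbsigma_B)$ simultaneously mis-centralizes all $z$ such edges. First I would fix a decorated union graph $\dot U$ with $z$ singleton edges and, using the observation (via Lemma~\ref{lemma:cross_moments_regimeII}) that $\EE[\overline A_e^{\whbsigma_A} \mid \bsigma_*, \whbsigma] = 0$ on every correctly centralized $e$, factor
\[
\bigl|\EE[\overline A_{S_1}^{\whbsigma_A} \overline B_{S_2}^{\whbsigma_B} \overline A_{T_1}^{\whbsigma_A} \overline B_{T_2}^{\whbsigma_B} \indH]\bigr| \leq \PP\bigl(\mathcal E(\dot U) \cap \calH\bigr) \cdot \sup_{\whbsigma \in \mathcal E(\dot U)} \bigl|\EE[\,\cdot\, \mid \bsigma_*, \whbsigma]\bigr|,
\]
where $\mathcal E(\dot U)$ is the event that every singleton edge of $\dot U$ is wrongly centralized.

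For the probability factor, since each bad vertex in $I_\eps(G_1) \cup I_\eps(G_2)$ can disrupt at most $D$ incident edges, realising $\mathcal E(\dot U)$ forces a specific subset $S \subseteq V(\dot U)$ of size at least $\lceil z/D \rceil$ to lie in the bad set. Applying Lemma~\ref{lemma:bad_vertices_joint_distribution} to $S$ (using $|V(\dot U)| = O(\log n)$) yields $\PP(\mathcal E(\dot U) \cap \calH) \leq n^{-(z/D) D_+(a,b,s,\eps) + o(1)}$. For the conditional moment factor, Lemma~\ref{lemma:cross_moments_regimeII} shows each singleton edge contributes a normalized factor of $\Theta(\sqrt\Delta\, \sigma_{c(e)})$, and edges of decoration at least $2$ contribute exactly as in~\eqref{eq:connection_of_gamma}. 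Since $\Delta = \Theta(\rho \sigma_\eff^2)$, every singleton edge effectively costs a factor of $\sigma_\eff$ beyond a ``virtual'' $2$-decoration.

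The structural sum is handled by adapting the $\dot U_L, \dot U_M, \dot U_N$ decomposition of Section~\ref{sec:prop2} to the backbone of edges with decoration at least $2$, and treating each singleton edge (together with its attached tree component) as a ``twig'' glued to this backbone. Counting twigs crudely, using Corollary~\ref{corollary:automorphism_inequality} to absorb the automorphism mismatch caused by twig attachment, and reusing the $(ns(p\wedge q))^{-1}$-per-excess-edge bookkeeping from the proof of the $V_{11}$ bound, will yield
\[
\frac{V_{12}}{\EE[\Phi_{i\pi_*(i)}\indH]^2} \leq \sum_{z \geq 1} C_\star^{\,z} \cdot n^{-(z/D) D_+(a,b,s,\eps) + o(1)},
\]
where $C_\star = \poly(\log n)$ under the choices of $L, K, M, R, D$ in~\eqref{eq:LKMRD_simplifed_assumption}. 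The hypothesis $sD_+(a,b)/D \geq (\log\log n)^2/\log n$ from~\eqref{eq:conditions_true_pair_regimeII} then forces $n^{-D_+(a,b,s,\eps)/D} \leq e^{-(\log\log n)^2}$, which defeats any polylog factor, so the geometric series collapses and its sum is $o(n^{-\eps'})$ for some $\eps' > 0$.

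The hard part will be controlling the structural count cleanly. With singleton edges allowed, the union graph can span up to roughly $4N$ distinct vertices rather than the $2N+1$ that sufficed in Regime I; a naive enumeration of labellings thus threatens an $n^{4N}$ blow-up against only $\mu^2 \asymp n^{2N}$. The delicate accounting is to argue that each singleton edge either (i) introduces at most one new vertex, in which case the extra factor of $n$ is exactly cancelled by the corresponding $n^{-D_+(a,b,s,\eps)/D}$ probability saving, or (ii) introduces no new vertex, in which case the excess of $\dot U$ rises and the extra edge is absorbed into a $(ns(p\wedge q))^{-1}$ factor exactly as in the $\dot U_N$ analysis of Proposition~\ref{prop:var_true_regimeI}. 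Verifying that these savings compound correctly across all $z$ twigs—and in particular that twigs sharing a common bad vertex are counted with the correct weight $w(\cdot)$—is the technical heart of the argument.
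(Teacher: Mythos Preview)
Your high-level plan is right and matches the paper's: stratify by the number $z$ of $1$-decorated edges, use Lemma~\ref{lemma:bad_vertices_joint_distribution} to get $\PP(\mathcal{E}(\dot U)\cap\calH)\le n^{-(z/D)D_+(a,b,s,\eps)}$, feed in the moment bounds of Lemma~\ref{lemma:cross_moments_regimeII}, control the automorphism weights via Corollary~\ref{corollary:automorphism_inequality}, and finish with a geometric series that the hypothesis $sD_+(a,b)/D\ge(\log\log n)^2/\log n$ collapses. Two points deserve correction, one organizational and one substantive.

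\textbf{Organization.} The paper does \emph{not} strip out a $2^+$-decorated backbone and then glue twigs. It applies the same $\dot U_L,\dot U_M,\dot U_N$ partition of Section~\ref{sec:prop2} to the \emph{full} decorated union graph (now allowing $1$-decorated edges), defines $P_L(v_L,z,\ell)$, $P_M(v_M)$, $P_N(v_N,k)$ accordingly, and proves a dedicated combinatorial bound (Lemma~\ref{lemma:var_true_regimeII_PL}) for $P_L$ via the weight--automorphism comparison $w(\dot U_L)\le\aut(\dot U_L)(2K)^z$ (Claim~\ref{claim:weight_aut_ineq_true_pair}, which is where Corollary~\ref{corollary:automorphism_inequality} enters) together with an ``effective non-isomorphic bulbs'' count (Claim~\ref{claim:effective_non_iso_bound}). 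Your backbone route is not wrong in principle, but note that once the $1$-decorated edges are removed the remaining graph need not be a union of four \emph{complete} chandeliers, so you cannot simply invoke the $V_{11}$ structural bounds on it; you would need a separate enumeration for partial chandeliers. The paper sidesteps this by keeping the $1$-decorated edges inside the partition.

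\textbf{The accounting in case (i) is wrong as stated.} The saving $n^{-D_+(a,b,s,\eps)/D}$ is only $e^{-(\log\log n)^2}$ under~\eqref{eq:conditions_true_pair_regimeII}; it cannot by itself cancel a full factor of $n$. What actually makes the bookkeeping close is the edge-count constraint: since $z$ singleton edges free up exactly $z$ of the $4N$ decoration slots, the union graph can have at most $2N+z/2$ edges, hence at most $z/2$ extra vertices beyond the Regime-I baseline. Each such extra edge contributes $n$ from the label and $(sp\wedge sq)\approx \log n/n$ from the moment factor $(sp\wedge sq)^{v+k+1-2N}$ in~\eqref{eq:cross_moment_bound_true_regimeII}, for a net $(\log n)^{z/2}$ rather than $n^{z}$. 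It is \emph{this} polylog residue that the probability saving $n^{-zD_+(a,b,s,\eps)/D}$ kills. The paper implements this by splitting the sum at $v=2N-k-1$ and, in the range $v>2N-k-1$, invoking the constraint $z\ge 2(v+k+1-2N)$ to pair the $(ns(p\wedge q))^{-1}$ and $n^{-D_+(a,b,s,\eps)/D}$ factors. If you follow your stated heuristic literally you will find an apparent $n^{z(1-D_+/D)}$ blow-up, so make sure your write-up tracks the $z\ge 2(e-2N)$ relation explicitly.
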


\begin{proof}
    For the expectation inside $V_1$ part, it can be separated as eight sets $K_{\ell m}$:
    \begin{equation*}
        \EE [\overline{A}_{S_1}^{\whbsigma_A} \overline{B}_{S_2}^{\whbsigma_B} \overline{A}_{T_1}^{\whbsigma_A} \overline{B}_{T_2}^{\whbsigma_B} \indH] 
        = \EE \left[ g_{\dot{U}} g_{\dot{U}}^{-1} \EE[ \prod_{\ell\in [2], m\in [2], \ell+m\geq 1}\prod_{e\in K_{\ell m}} \overline{A}_{e}^{\whbsigma_A, \ell} \overline{B}_{e}^{\whbsigma_B, m} \condsig, \whbsigma] \indH \right],
    \end{equation*}
    where $g_{\dot{U}}$ is the abbreviation for  $g_{\dot{U}}(\sigma_{+}, \sigma_{-})$.
    
    Conditioned on $\bsigma$ and $\whbsigma$, approximately centered edges are still independent with each other,
    \begin{equation*}
        \EE [\overline{A}_{S_1}^{\whbsigma_A} \overline{B}_{S_2}^{\whbsigma_B} \overline{A}_{T_1}^{\whbsigma_A} \overline{B}_{T_2}^{\whbsigma_B} \indH] 
        = \EE \left[ g_{\dot{U}} g_{\dot{U}}^{-1} \prod_{\ell\in [2], m\in [2], \ell+m\geq 1}\prod_{e\in K_{\ell m}} \EE[  \overline{A}_{e}^{\whbsigma_A, \ell} \overline{B}_{e}^{\whbsigma_B, m} \condsig, \whbsigma] \indH \right],
    \end{equation*}
    
    Lemma~\ref{lemma:cross_moments_regimeII} gives the upper bound over $\eta_{\ell,m}:=\sigma_{c(e)}^{-(\ell+m)} \EE[\overline{A}_{e}^{\whbsigma_A, \ell} \overline{B}_{e}^{\whbsigma_B, m} \condsig, \whbsigma]$ and thus by enumerating through the products, we have
    \begin{multline*}
        g_{\dot{U}}^{-1} \prod_{\ell\in [2], m\in [2], \ell+m\geq 1}\prod_{e\in K_{\ell m}} \EE[\overline{A}_{e}^{\whbsigma_A, \ell} \overline{B}_{e}^{\whbsigma_B, m} \condsig, \whbsigma] \\
        \begin{aligned}            
            &\leq (1+\Theta(\frac{\log n}{n}))^{4N} \rho^{e(K_{11})} (sp\wedge sq)^{-\frac{1}{2}(e(K_{12})+e(K_{21}))-e(K_{22})}
            \PP(\mathcal{E}) \Delta^{z/2}, \\
            &\leq (1+o(1)) \rho^{e(K_{11})} (sp\wedge sq)^{v+k+1-2N}
            \PP(\mathcal{E}) (\frac{|a-b|}{a\wedge b})^{z/2},
        \end{aligned}        
    \end{multline*}
    where $\mathcal{E}:=\{\text{Every 1-decorated edges are centralized incorrectly}\} \cap \calH$ and $z$ is the number of $1$-decorated edges, namely, $z:=e(K_{01})+e(K_{10})$. 
    The second inequality holds because $-\frac{1}{2}(e(K_{12})+e(K_{21}))-e(K_{22}) = (v+k+1)-2N-z/2$.

    According to Lemma~\ref{lemma:bad_vertices_joint_distribution} and condition~\eqref{eq:conditions_true_pair_regimeII}, we have that for any $\eps > 0$, 
    \begin{equation*}
        \PP(\mathcal{E}) \leq n^{-D_{+}(a,b,s,\eps)\frac{z}{D}}.
    \end{equation*}
    Putting things together,
    \begin{equation*}
        \EE[\overline{A}_{S_1}^{\whbsigma_A} \overline{B}_{S_2}^{\whbsigma_B} \overline{A}_{T_1}^{\whbsigma_A} \overline{B}_{T_2}^{\whbsigma_B} \indH]
        \leq (1+o(1)) n^{-D_{+}(a,b,s,\eps)\frac{z}{D}} (sp\wedge sq)^{v+k+1-2N} \rho^{e(K_{11})} \EE [g_{\dot{U}}(\sigma_{+}, \sigma_{-}) \condH ].
    \end{equation*}
    It remains to calculate $\EE [g_{\dot{U}}(\sigma_{+}, \sigma_{-}) \condH]$.

    Similar as in the calculation in Section~\ref{sec:prop2}, where we only consider $\dot{U}$ being at least $2$-decorated. Here, we need to generalize it into the case when $\dot{U}$ has $1$-decorated edges. 
    $\dot{U}$ can be decomposed into a tree with an additional set of edges connecting vertices on the tree. We assume that there are $A_i$ $i$-decorated edges on the tree and $B_i$ $i$-decorated edges in the additional edge set of size $e(\dot{U})-v(\dot{U}) + 1$, for $i\in [4]$.
    We apply the Corollary~\ref{corollary:incomm_edges_independence} and have that $\PP(X^{(1,a)}=a_1, X^{(2,a)}=a_2, X^{(3,a)}=a_3, X^{(4,a)}=a_4, X^{(1,b)}=b_1, X^{(2,b)}=b_2, X^{(3,b)}=b_3, X^{(4,b)}=b_4 \condH) \leq (1+o(1)) \binom{A_1}{a_1} \binom{A_2}{a_2} \binom{A_3}{a_3} \binom{A_4}{a_4} \frac{1}{2^{A_1+A_2+A_3+A_4}},$
    where $X^{(i,a)}$ is the number of $i$-decorated in-community edges on the tree-part of $\dot{U}$ and $X^{(i,b)}$ is the number of $i$-decorated in-community edges among the additional edge set. 

    $A_i$ and $B_i$ are fixed but summed up to $d_i$ for each $\dot{U}$.
    $a_i$ ($b_i$) takes possible values from $0$ to $A_i$ ($B_i$), for $i\in [4]$. The number of $i$-decorated in-community edges is $a_i+b_i$, and the number of $i$-decorated cross-community edges is $d_i-a_i-b_i=(A_i-a_i)+(B_i-b_i)$.
    
    \begin{multline*}
    \EE\left[g_{\dot{U}}(\sigma_{+}, \sigma_{-}) \condH \right] \\ 
    \begin{aligned}
        \nonumber &\leq (1+o(1)) \sum_{a_1}^{A_1} \sum_{a_2}^{A_2} \sum_{a_3}^{A_3} \sum_{a_4}^{A_4} \sum_{b_1}^{B_1} \sum_{b_2}^{B_2} \sum_{b_3}^{B_3} \sum_{b_4}^{B_4} \sigma_{+}^{(a_1+b_1)} \sigma_{+}^{2(a_2+b_2)}\sigma_{-}^{2(d_2-a_2-b_2)} \sigma_{+}^{3(a_3+b_3)}\sigma_{-}^{3(d_3-a_3-b_3)} \\
        \nonumber &\quad \times \sigma_{+}^{4(a_4+b_4)}\sigma_{-}^{4(d_4-a_4-b_4)} \binom{A_1}{a_1} \binom{A_2}{a_2} \binom{A_3}{a_3} \binom{A_4}{a_4} \frac{1}{2^{A_1+A_2+A_3+A_4}} \\
        \nonumber &= (1+o(1)) \left(\frac{\sigma_{+}+\sigma_{-}}{2}\right)^{A_1} \left(\frac{\sigma_{+}^2+\sigma_{-}^2}{2}\right)^{A_2} \left(\frac{\sigma_{+}^3+\sigma_{-}^3}{2}\right)^{A_3} \left(\frac{\sigma_{+}^4+\sigma_{-}^4}{2}\right)^{A_4} \\
        \nonumber &\quad \times \sum_{b_1=0}^{B_1} \sigma_{+}^{b_1} \sigma_{-}^{(B_1-b_1)} \sum_{b_2=0}^{B_2}\sigma_{+}^{2b_2}\sigma_{-}^{2(B_2-b_2)} \sum_{b_3=0}^{B_3}\sigma_{+}^{3b_3}\sigma_{-}^{3(B_3-b_3)} \sum_{b_4=0}^{B_4}\sigma_{+}^{4b_4}\sigma_{-}^{4(B_4-b_4)} \\
        &\leq  (1+o(1)) \left(\frac{\sigma_{+}+\sigma_{-}}{2}\right)^{d_1} \left(\frac{\sigma_{+}^2+\sigma_{-}^2}{2}\right)^{d_2} \left(\frac{\sigma_{+}^3+\sigma_{-}^3}{2}\right)^{d_3} \left(\frac{\sigma_{+}^4+\sigma_{-}^4}{2}\right)^{d_4} 2^{k+1},
    \end{aligned}
    \end{multline*}
    where the last inequality holds because the upper bound holds with multiplying a binomial coefficient $\binom{B_1}{b_1}\binom{B_2}{b_2}\binom{B_3}{b_3}\binom{B_4}{b_4}$ and that $\sum_{i}B_i=k+1$. 
    By the definition of $\gamma_2$ and $\gamma_1$,
    \begin{equation*}
        \EE\left[g_{\dot{U}} (\sigma_{+}, \sigma_{-}) \condH \right] \leq (1+o(1)) \left(\frac{\sigma_{+}+\sigma_{-}}{2}\right)^{d_1} \sigma_{\eff}^{2d_2+3d_3+4d_4} \gamma_1^{d_3} \gamma_2^{d_4} 2^{k+1}.
    \end{equation*}
    
    Define $\gamma_0:=(\frac{\sigma_{+}+\sigma_{-}}{2})/\sigma_{\eff}$ and we can see that $\gamma_0<1$. So, we can upper bound $(\frac{\sigma_{+}+\sigma_{-}}{2})^{d_1}$ as $\sigma_{\eff}^{d_1}$. Because $\gamma_1 < \gamma_2$ and $d_3+d_4 =e(K_{12})+e(K_{21})+e(K_{22})$,
    \begin{equation*}
        \EE\left[g_{\dot{U}} (\sigma_{+}, \sigma_{-}) \condH \right] \leq (1+o(1)) \sigma_{\eff}^{4N} \gamma_{2}^{e(K_{12})+e(K_{21})+e(K_{22})}2^{k+1}.
    \end{equation*}
    
    In summary,
    \begin{align}\label{eq:cross_moment_bound_true_regimeII}
        \nonumber \EE [\overline{A}_{S_1} \overline{B}_{S_2} \overline{A}_{T_1} \overline{B}_{T_2}\indH]
        &\leq (1+o(1)) n^{-D_{+}(a,b,s,\eps)\frac{z}{D}} (sp\wedge sq)^{v+k+1-2N} \rho^{e(K_{11})} \\
        &\quad \times \sigma_{\eff}^{4N} \gamma_{2}^{e(K_{12})+e(K_{21})+e(K_{22})} 2^{k+1}.
    \end{align}
    
    After plugging the upper bound on cross-moments to the ratio, it remains to bound the number of different union graph structures and their corresponding weights.
    In parallel to $\mathcal{W}_{ij}$, we define $\mathcal{S}_{ij}$ as the collection of decorated union graphs that have at least one $1$-decorated edge. $\mathcal{S}_{ij}(v,k)$ denotes those with $v+1+\mathbf{1}_{j\neq \pi_{*}{i}}$ vertices and excess $k$.
    \begin{multline*}
        \frac{V_{12}}{\EE[\Phi_{i\pi_{*}(i)}\indH]^2} \\
        \leq \frac{\sum_{v+k+1=0}^{4N}  \sum_{\dot{U}\in \mathcal{S}_{ij}(v,k)} \aut(H) \aut(I)
        \rho^{e(K_{11})}
        (sp\wedge sq)^{v+k+1-2N} \gamma_{2}^{e(K_{12})+e(K_{21})+e(K_{22})} 2^{k+1}}
        {(1+o(1)) n^{2N} \rho^{2N} |\mathcal{T}|^2 n^{zD_+(a,b,s,\eps)/D}}
    \end{multline*}
    
    We define the $(\dot{U}_{L}, \dot{U}_M, \dot{U}_N)$ partition of decorated union graph as~\eqref{eq:definition_ULUMUN_truepair} in Section~\ref{sec:prop2}.
    We define $\mathcal{U}_L(v_L,z,\ell)$ as the collection of $\dot{U}_L$ that has $v_L$ vertices, $\ell$ edges belonging to set $e(K_{11})$, and no more than $z$ $1$-decorated edges. We define $\mathcal{U}_M(v_M)$ as the collection of $\dot{U}_M$ that has $v_M$ vertices. We define $\mathcal{U}_{N}(v_N,k)$ as the collection of $\dot{U}_N$ that has $v_N$ vertices and excess $k$. 
    We also keep the notation of $\wt{\mathcal{U}}$ as the corresponding unlabeled decorated union graph sets.
    In addition,
    \begin{align}
        \label{eq:definition_PL_regimeII}
        &P_{L}(v_L,z,\ell):= \sum_{\dot{U}_L \in \mathcal{U}_L(v_L,z,\ell)} w(\dot{U}_L), \\
        \label{eq:definition_PM_regimeII}
        &P_{M}(v_M):= \sum_{\dot{U}_L \in \mathcal{U}_L(v_M)} w(\dot{U}_M), \\
        \label{eq:definition_PN_regimeII}
        &P_{N}(v_N,k):= \sum_{\dot{U}_L \in \mathcal{U}_L(v_N,k)} w(\dot{U}_N). 
    \end{align}
    
    From the above partition,
    \begin{multline*}
        \frac{V_{12}}{\EE[\Phi_{i\pi_{*}(i)}\indH]^2} \\
        \leq \frac{\sum_{v=N}^{4N} \sum_{k+1=0}^{4N-v} (sp\wedge sq)^{v+k+1-2N} 2^{k+1} 
        \sum_{z} 
        \sum_{v_L,v_M,v_N} \sum_{\ell} \rho^{\ell}  P_L(v_L,z,\ell) P_M(v_M) P_L(v_N,k)}{(1+o(1)) n^{2N} \rho^{2N} |\mathcal{T}|^2 n^{zD_+(a,b,s,\eps)/D} \gamma_2^{2(v+k+2)-4N}}.
    \end{multline*}
    
    We show upper bounds for $P_L(v_L,z,\ell)$ in Lemma~\ref{lemma:var_true_regimeII_PL}, which is
    \begin{equation*}
        \sum_{\ell=0}^{2N} \rho^{\ell} P_L(v_L, z, \ell) 
        \leq (4N)^{2z+1} n^{v_L} L(4LM)^{6L} |\mathcal{T}|^2 \rho^{2N} \rho^{-\frac{z}{2}}.
    \end{equation*}
     The upper bounds for $P_M$ and $P_N$ trivially follows from Lemma~\ref{lemma:var_true_regimeI_PM} and Lemma~\ref{lemma:var_true_regimeI_PN}, with a replacement of $11$ to $15$ as the possible decorations of each vertex increase by $4$ for $1$-decoration and a different maximum value of $e_M$, parameterized by $v,k,z$.
    \begin{align*}
        P_M(v_M) &\leq R^{\frac{2e_M}{M}} n^{v_M} (15\beta)^{(K+M)\frac{2e_M}{M}} \mathbf{1}_{\{e_M \leq 2N-(v+k+2)+z/2\}} \\
        P_N(v_N,k) &\leq n^{v_N} (15\beta)^{v_N} (15R^4(v_N+1)^2)^{k+1} \mathbf{1}_{\{v_N \leq 2(K+M)(2k+2)\}}.
    \end{align*}

    Putting all the pieces together, we have
    \begin{align*}
        \frac{V_{12}}{\EE[\Phi_{i\pi_{*}(i)}\indH]^2} 
        &\leq \sum_{k\geq -1} \left( \frac{30R^4(2N+1)^2 (15\beta)^{4(K+M)}}{n} \right)^{k+1} \\
        &\quad \times \sum_{v=N}^{4N-k-1} \left( \frac{R^{\frac{2}{M}} (15\beta)^{2\frac{K+M}{M}}}{ns(p\wedge q)} \gamma_2^2 \right)^{2N-v-k-1} \\
        &\quad \times \sum_{z=1\vee(2(v+k+1)-4N)}^{4N} \left(\frac{ (4N)^2 R^{\frac{1}{M}} (15\beta)^{\frac{K+M}{M}}}{\sqrt{\rho}  n^{\frac{D_+(a,b,s,\eps)}{D}}}\right)^{z} \times 2NL(4LM)^{6L}.
    \end{align*}
    From condition~\eqref{eq:conditions_true_pair_regimeII}, $2NL(4LM)^{6L} \leq \log^3 n$.  
    Also, since with~\eqref{eq:conditions_true_pair_regimeII} and $\gamma_2 < 2$, 
    \begin{equation*}
        \frac{R^{\frac{2}{M}} (15\beta)^{2\frac{K+M}{M}}}{ns(p\wedge q)} \gamma_2^2 \leq \frac{1}{2},
        \quad \frac{15R^4(2N+1)^2 (15\beta)^{4(K+M)}}{n} \leq \frac{1}{2}.
    \end{equation*}
    For $v \leq 2N-k-1$, we have
    \begin{equation*}
        \sum_{v=N}^{2N-k-1} \left( \frac{R^{\frac{2}{M}} (15\beta)^{2\frac{K+M}{M}}}{ns(p\wedge q)} \gamma_2^2 \right)^{2N-v-k-1} \leq 2.
    \end{equation*}
    For the first summation, we always have
    \begin{equation*}
        \sum_{k\geq -1} \left( \frac{30R^4(2N+1)^2 (15\beta)^{4(K+M)}}{n} \right)^{k+1} \leq 2.
    \end{equation*}
    For the last summation with those additional terms, from condition~\eqref{eq:conditions_true_pair_regimeII}, we have
    \begin{equation*}
        \sum_{z=1}^{4N} \left(\frac{ (4N)^2 R^{\frac{1}{M}} (15\beta)^{\frac{K+M}{M}}}{\sqrt{\rho}  n^{\frac{D_+(a,b,s,\eps)}{D}}}\right)^{z} \times 2NL(4LM)^{6L}  = o(n^{-\eps'}),
    \end{equation*}
    for some $\eps'>0$ because $n^{D_+(a,b,s,\eps)/D}$ is the only term being polynomial.
    
    If $v>2N-k-1$, then we know that $z>2(v+k+1-2N)$. The summation over $v$ and $z$ together is upper bounded by
    \begin{equation*}
        \sum_{v=2N-k}^{4N-k-1} \left( 
        \frac{R^{\frac{2}{M}} (15\beta)^{2\frac{K+M}{M}}}{ns(p\wedge q)} \gamma_2^2
        \times \frac{ (4N)^2 R^{\frac{1}{M}} (15\beta)^{\frac{K+M}{M}}}{\sqrt{\rho}  n^{\frac{D_+(a,b,s,\eps)}{D}}} \right)^{v+k+1-2N} 2NL(4LM)^{6L} = o(n^{-\eps'}),
    \end{equation*}
    because, again $n^{D_+(a,b,s,\eps)/D}$ is the only term being polynomial.
    
    In summary, we have $\frac{V_{12}}{\EE[\Phi_{i\pi_{*}(i)}\indH]^2} = o(n^{-\eps'})$ for some $\eps'>0$, which completes the proof.
\end{proof}

\subsection{Proof of auxiliary Lemmas}
    
\begin{lemma} \label{lemma:var_true_regimeII_PL}
    For true pairs,
    \begin{equation*}
        \sum_{\ell=0}^{2N} \rho^{\ell} P_L(v_L, z, \ell) \leq (4N)^{2z+1} n^{v_L} L(4LM)^{6L} |\mathcal{T}|^2 \rho^{2N} \rho^{-\frac{z}{2}}.
    \end{equation*}
\end{lemma}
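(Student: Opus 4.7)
The plan is to extend the argument behind Lemma~\ref{lemma:var_true_regimeI_PL} to accommodate the 1-decorated edges that Regime~II makes permissible. First, I would group labeled decorated union graphs into isomorphism classes and write
\[
P_L(v_L, z, \ell) \;=\; \sum_{\dot U_L \in \wt{\mathcal U}_L(v_L, z, \ell)} w(\dot U_L)\,|H(\dot U_L)|,
\]
where $\wt{\mathcal U}_L(v_L, z, \ell)$ denotes the unlabeled counterpart and $H(\dot U_L)$ the set of labelings isomorphic to $\dot U_L$. Using the standard injection bound $|H(\dot U_L)| \leq n^{v_L}/\aut(\dot U_L)$ together with $w(\dot U_L) \leq \aut(\dot U_L)$ (with the slack from Corollary~\ref{corollary:automorphism_inequality} absorbed into the $L(4LM)^{6L}$ factor), this reduces the problem to counting unlabeled structures and then summing $\rho^\ell$ over them.

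Next, I would bound $|\wt{\mathcal U}_L(v_L, z, \ell)|$ along two independent axes. The structural axis counts the possible underlying tree shapes of $\dot U_L$: since $\dot U_L(i)$ is assembled from the branches of $S_1, T_1, S_2, T_2$, each with at most $L$ branches consisting of an $M$-wire and a bulb of $K$ edges, the same combinatorial argument that drives Lemma~7 of~\cite{mao2022chandelier} contributes a factor of $L(4LM)^{6L}$ for the ways these branches can overlap. The decoration axis records where the $z$ 1-decorated edges sit and which of $\{S_1, T_1, S_2, T_2\}$ each one is attached to; choosing their positions contributes at most $\binom{4N}{z}\leq (4N)^z$ and tagging their chandelier of origin contributes a further bounded factor, all of which I will absorb into $(4N)^{z+1}$. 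Finally, I would sum over $\ell$ against $\rho^\ell$ via a geometric series argument analogous to Lemma~4 of~\cite{mao2022chandelier}; the fact that bulbs within the same chandelier are non-isomorphic is what keeps this sum at the $|\mathcal{T}|^2 \rho^{2N}$ scale.

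The crucial new bookkeeping is the $\rho^{-z/2}$ degradation, which I would track using the cross-moment estimates of Lemma~\ref{lemma:cross_moments_regimeII}. In Regime~I every edge present in $\dot U_L$ was at least 2-decorated and therefore contributed at least one factor of $\rho$ via $\beta_{1,1} \lesssim \rho$; in Regime~II a 1-decorated edge that is incorrectly centralized contributes only $\eta_{1,0} = \Theta(\sqrt\Delta) \asymp \rho^{1/2}$, so swapping $z$ of the paired contributions for 1-decorated ones costs a factor of $\rho^{-z/2}$ relative to the Regime~I baseline. Putting these pieces together produces
\[
\sum_{\ell=0}^{2N} \rho^\ell P_L(v_L, z, \ell) \;\leq\; (4N)^{2z+1} n^{v_L} L(4LM)^{6L} |\mathcal{T}|^2 \rho^{2N} \rho^{-z/2},
\]
with the extra $(4N)^{z}$ beyond the positional count coming from slack in the $\ell$-sum once the positions of the 1-decorated edges are fixed. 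The main obstacle I expect is verifying that the $\ell$-sum still telescopes to the $|\mathcal{T}|^2\rho^{2N}$ scale despite the combinatorial explosion caused by the 1-decorated edges: this requires a case analysis separating 1-decorated edges that lie on the $M$-wire (where they can form long mono-decorated paths and require reuse of the divisibility constraint $K+M \mid e_L+v_L$ from the Regime~I analysis) from those that lie inside bulbs (where the non-isomorphic bulb constraint forces the $|\mathcal{J}|^{-1}$-type savings to survive).
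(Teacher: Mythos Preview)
Your proposal has the right high-level decomposition (pass to unlabeled classes, bound $|H(\dot U_L)|\leq n^{v_L}/\aut(\dot U_L)$, then count), but two of its load-bearing steps are wrong and the key combinatorial device is missing.

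First, the inequality $w(\dot U_L)\leq \aut(\dot U_L)$ that you invoke is false in Regime~II. That bound held in Regime~I only because every bulb in $\dot U_L$ was exactly $2$-decorated and hence coincided with a bulb of the union tree. Once $1$-decorated edges are allowed, two partially overlapping bulbs $\mathcal B_i,\mathcal B_j$ contribute $\sqrt{\aut(\mathcal B_i)\aut(\mathcal B_j)}$ to $w(\dot U_L)$, and this can exceed $\aut(\mathcal B_i\cup\mathcal B_j)$; the paper handles this via Corollary~\ref{corollary:automorphism_inequality}, obtaining $w(\dot U_L)\leq \aut(\dot U_L)(2K)^{z}$ (Claim~\ref{claim:weight_aut_ineq_true_pair}). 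The $(2K)^z$ correction is $z$-dependent and cannot be absorbed into the $z$-independent factor $L(4LM)^{6L}$ as you suggest; it accounts for one of the two $(4N)^z$ factors in the final bound.

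Second, your explanation of the $\rho^{-z/2}$ loss is incorrect. The quantity $P_L(v_L,z,\ell)$ is purely combinatorial; there are no cross-moments in its definition, and Lemma~\ref{lemma:cross_moments_regimeII} plays no role here (those estimates were already consumed upstream in deriving~\eqref{eq:cross_moment_bound_true_regimeII}). The $\rho^{-z/2}$ arises because the presence of $z$ singly-decorated edges weakens the lower bound on $\ell=e(K_{11}\cap\dot U_L)$. The paper pairs up bulbs and introduces the parameter $t_1$ counting pairs whose decorations lie in $\{S_1,S_2\}$ or $\{T_1,T_2\}$; every $2$-decorated edge in such a pair is in $K_{11}$, so $\ell\geq t_1K - z/2$, whence $\rho^\ell\leq \rho^{t_1K}\rho^{-z/2}$. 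The sum over $\ell$ is then controlled not by a geometric series in $\ell$ but by the structural constraint $a\leq L+t_1/2$ on the number $a$ of ``effective non-isomorphic'' bulbs (Claim~\ref{claim:effective_non_iso_bound}), which yields $|\mathcal J|^a\rho^{t_1K}\leq |\mathcal T|\,(|\mathcal J|\rho^{2K})^{t_1/2}$ and sums to $L|\mathcal T|^2\rho^{2N}$. Your proposal has no analogue of the $t_1$ parameter or the effective-bulb bookkeeping, and without them the route from ``count unlabeled structures'' to the target $|\mathcal T|^2\rho^{2N}$ scale does not go through.
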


\begin{proof}
    We define the unlabeled union graph sets corresponding to $\mathcal{U}(v_L,z,\ell)$ as $\mathcal{\wt{U}}(v_L,z,\ell)$. From definition~\eqref{eq:definition_PL_regimeI},

    From the definition~\eqref{eq:definition_PL_regimeII} and Claim~\ref{claim:weight_aut_ineq_true_pair},
    \begin{equation} \label{eq:sum_PL_with_rho_regimeII}
        \sum_{\ell=0}^{2N} \rho^{\ell} P_L(v_L,z, \ell) 
        \leq \sum_{\ell=0}^{2N} \rho^{\ell} \sum_{\dot{U}_L \in \wt{\mathcal{U}}_L(v_L,z,\ell)} \frac{n^{v_L} w(\dot{U}_L)}{\aut(\dot{U}_L)}
        \leq \sum_{\ell=0}^{2N} n^{v_L} \rho^{\ell} |\wt{\mathcal{U}}_L(v_L,z, \ell)| (2K)^{z}.
    \end{equation}

    Recall that $e_L=\frac{1}{2}(e(K_{12} \cup K_{22} \cap \dot{U}_L)+e(K_{21} \cup K_{22} \cap \dot{U}_L))$. 
    The total number of edges on $S_1,T_1,S_2$ and $T_2$ involved in $\dot{U}_L$ 
    is $(2(v_L+e_L)-z)$ and thus the total number of bulbs on $S_1,T_1,S_2$ and $T_2$ involved in $\dot{U}_{L}$ is $b:= \frac{2(v_L+e_L)-z}{K+M} < 4L$.
    Those $b$ bulbs can be partly or fully overlapped (namely, \emph{tangled}) with another stay on their own. 
    From the definition of $\dot{U}_L$~\eqref{eq:definition_ULUMUN_truepair}, it is impossible to have three or more bulbs tangling with each other.
    If two bulbs are tangling with each other, we put them into a pair.
    If a bunch of bulbs are all not tangling with any other bulbs, we pair them up arbitrarily.
    We denote $t_1$ as the number of pairs of bulbs that have decorations being a subset of $\{S_1,S_2\}$ or $\{T_1,T_2\}$. 
    For all $2$-decorated edges among these pairs, they are in $K_{11}$. 
    Since $\dot{U}_L$ has at most $z$ $1$-decorated edges, we have
    \begin{equation} \label{eq:auxiliary_ineq_1_truepair}
        \ell \geq t_1 K-z/2.
    \end{equation}
    
    Next, we introduce three types of bulbs. 
    The first type is called \emph{effective non-isomorphic} bulbs, which is a selection of bulbs that are not isomorphic to each other and always pair with a bulb that are not of the same type.
    The selection is not unique and we take the largest possible set of bulbs satisfying those rules as the set of effective non-isomorphic bulbs. 
    Fixed the effective bulb set, for bulbs that are isomorphic to those effective non-isomorphic bulbs, we name them as \textit{shadow effective bulbs}. 
    For the remaining bulbs, we name them as \textit{non-effective bulbs}.
    We have the following Claim~\ref{claim:effective_non_iso_bound}:
    \begin{equation} \label{eq:auxiliary_ineq_2_truepair}
        \frac{t_1}{2} \leq a \leq L+\frac{t_1}{2}.
    \end{equation}

    From definition, there is at most one non-effective bulb and at most one effective non-isomorphic bulb in each pair of bulbs, while two shadow bulbs can pair up.

    We call those effective non-isomorphic bulbs as effective because when enumerating through the chandelier structures, we let them having the priority of taking any possible structure from $\mathcal{J}$ and serving the base of that pair.
    Shadow effective bulbs are named so because they mirror the structure of effective non-isomorphic bulbs and thus will not increase the union graph richness too much.
    For non-effective bulbs, we let them take any possible structures with the constrain that there are at most $z$ $1$-decorated edges in $\dot{U}_L$.

    We denote the number of effective non-isomorphic bulbs as $a$.
    For all pairs, we upper bound the number of different non-isomorphic tangled bulbs as following combinatorial factor
    \begin{equation*}
        \binom{|\mathcal{J}|}{a} \binom{b}{a} \binom{2N}{z/2} (4N)^{\frac{z}{2}},
    \end{equation*}
    where $\binom{|\mathcal{J}|}{a}$ comes from the structure of $a$ effective non-isomorphic bulbs, 
    $\binom{b}{a}$ is the upper bound of choosing $a$ effective non-isomorphic bulbs from $b$ bulbs,
    $\binom{2N}{z/2}$ is the upper bound on the selection of which edges on effective non-isomorphic bulbs and shadow effective bulbs are overlapped as there are at most $2K$ ($<2N$) $2$-decorated edges on bulbs if all bulbs are perfectly overlapping with one another, 
    and $(4N)^{\frac{z}{2}}$ bounds the placement of those remaining $1$-decorated edges from the non-effective bulbs as each of them has at most $4N$ possible vertices to attach to on the union graph.

    Since wires cannot tangle with bulbs (otherwise, it is not a tree), we bound the possible structures separately. 
    There can be at most $4$ wires tangling with each other, from top to bottom. 
    We apply a very loose bound even without using this fact, which is $(b-1)!M^{b-1}$. 
    This is because there are at most $b$ wires and we determine the structure of wires on the union graph one by one. When the $t$-th wire comes in, it can determine which of the $t-1$ wires to tangle with and the length of overlap, from $0$ to $M$.

    Putting together~\eqref{eq:auxiliary_ineq_1_truepair} and~\eqref{eq:auxiliary_ineq_2_truepair} with the combinatorial observations, we have
    \begin{align}
        \nonumber \rho^{\ell}|\wt{\mathcal{U}}_L(v_L,z,\ell)| 
        &\leq \sum_{a=0}^{2L} \sum_{b=2a}^{4L} \sum_{t_1=0}^{a} \rho^{t_1K-\frac{z}{2}} \mathbf{1}_{\{a\leq L+\frac{t_1}{2}\}} |\mathcal{J}|^a \binom{b}{a} (2N)^{3z/2} (b-1)!M \\
        \nonumber &\leq |\mathcal{T}| (4N)^{z} \rho^{-\frac{z}{2}} \sum_{t_1=0}^{2L} (|\mathcal{J}|^{\frac{t_1}{2}} \rho^{t_1K}) \sum_{a=0}^{2L} \sum_{b=2a}^{4L} \binom{b}{a} (b-1)! M^{b-1} \\
        \nonumber &\leq (4LM)^{6L} |\mathcal{T}| (4N)^{z} \rho^{-\frac{z}{2}} \sum_{t_1=0}^{2L}(|\mathcal{J}|\rho^{2K})^{\frac{t_1}{2}} \\
        &\leq L(4LM)^{6L} |\mathcal{T}|^2 \rho^{2N} (4N)^{z} \rho^{-\frac{z}{2}}. \label{eq:bound_on_rho_U_regimeII}
    \end{align}
    In the second inequality, we loose the upper bound of $t_1$ from $a$ to $2L$ and change the order of summation.
    In the third inequality, we bound the summation over $a$ and $b$. 
    Lastly, $\sum_{t_1=0}^{2L}(|\mathcal{J}|\rho^{2K})^{\frac{t_1}{2}} \leq L (|\mathcal{J}|\rho^{2K})^{L}=L |\mathcal{T}|\rho^{2N}.$
    
    Plugging~\eqref{eq:bound_on_rho_U_regimeII} back to ~\eqref{eq:sum_PL_with_rho_regimeII}, after a summation over $\ell$, we complete the proof.
\end{proof}

\begin{claim}
    \label{claim:weight_aut_ineq_true_pair}
    Assume that $j = \pi_{*}(i)$.
    For any arbitrary $\dot{U}_L \in \mathcal{U}_{L}(v_L,z,\ell)$, we have 
    \begin{equation*}
        w(\dot{U}_L) \leq \aut(\dot{U}_L) (2K)^{z}.
    \end{equation*}
\end{claim}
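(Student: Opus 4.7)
My approach is to prove the claim by reducing it to the "pairing-consistent" setting (where $z=0$ and the stronger inequality $w(\dot{U}_L)\le\aut(\dot{U}_L)$ holds), and then absorbing the effect of each $1$-decorated edge into the factor $2K$.

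The first step is a reformulation of $w(\dot{U}_L)$ in terms of automorphism counts of trimmed chandeliers. For each chandelier $C\in\{S_1,S_2,T_1,T_2\}$, let $\hat C$ denote the subgraph of $C$ consisting of the branches whose bulbs are fully contained in $U_L$. Since the bulbs in any chandelier in $\mathcal{T}$ are pairwise non-isomorphic, we have $\aut(\hat C)=\prod_{\mathcal{B}\in\mathcal{K}(C),\,\mathcal{B}\subset U_L}\aut(\mathcal{B})$, so $w_C(\dot{U}_L)=\sqrt{\aut(\hat C)}$ and
\[
w(\dot{U}_L)=\sqrt{\aut(\hat S_1)\aut(\hat S_2)\aut(\hat T_1)\aut(\hat T_2)}.
\]
The claim therefore becomes a four-way geometric-mean inequality between rooted-tree automorphism counts and the decorated automorphism count of their union.

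For the base case $z=0$, every edge of $\dot{U}_L$ has $|D_U(e)|\ge 2$. Because $\dot{U}_L$ sits in the $L$-part of the decomposition (each branch carries fewer than $M$ edges of decoration $\ge 3$) and the parameter regime~\eqref{eq:LKMRD_simplifed_assumption} forces $K\gg M$, no bulb can be fully $3$+-decorated. Hence every bulb-position in $U_L$ is exactly $2$-decorated, occupied by exactly two of the four chandeliers, and at each such position the weight contribution $\aut(\mathcal{B})^{1/2+1/2}=\aut(\mathcal{B})$ is already accounted for by the bulb's internal automorphisms in $\aut(\dot{U}_L)$. Taking the product over positions gives $w(\dot{U}_L)\le\aut(\dot{U}_L)$, which is the claim for $z=0$. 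For general $z\ge 1$ I would apply Corollary~\ref{corollary:automorphism_inequality} iteratively: once to bound $\sqrt{\aut(\hat S_1)\aut(\hat S_2)}$ and $\sqrt{\aut(\hat T_1)\aut(\hat T_2)}$ in terms of the automorphism counts of their unions, and once more to combine. Each application yields a correction factor $(2|V^{\max}|)^{d}$, where $|V^{\max}|$ is the maximum branch size (at most $K+M\le 2K$) and $d$ is the size of the symmetric difference, which is supported on edges present in some but not all of the paired subgraphs. Summing over the three applications, the exponents combine to at most $z$ because each $1$-decorated edge of $\dot{U}_L$ contributes to exactly one symmetric difference at some stage, producing the factor $(2K)^z$.

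The main obstacle will be the final relation between $\aut(\hat S_1\cup\hat S_2\cup\hat T_1\cup\hat T_2)$—an \emph{undecorated} automorphism count, which is what Corollary~\ref{corollary:automorphism_inequality} produces—and $\aut(\dot{U}_L)$, the \emph{decorated} automorphism count appearing on the right of the target inequality. In principle decorated automorphisms form a subgroup of undecorated ones, so the comparison could go the wrong way. The resolution is to check that in $\dot{U}_L$ the only source of decoration-induced loss is already captured by the $(2K)^z$ factor: $1$-decorated edges inside a bulb restrict the bulb's automorphism group by a factor bounded by the local degree (at most $K$), while $1$-decorated edges on wires break only inter-branch swap symmetries, whose loss is again bounded by the local bulb-plus-wire size $\le 2K$. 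Carefully bookkeeping these contributions—so that the same factor $2K$ is not double-counted between Corollary~\ref{corollary:automorphism_inequality} and the decoration constraint—is the crux of the argument and will likely require a per-edge induction on $z$ that simultaneously tracks how each $1$-decorated edge affects both sides of the inequality.
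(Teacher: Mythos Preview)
Your approach has a genuine gap in how you apply Corollary~\ref{corollary:automorphism_inequality}. You propose to apply it at the level of whole (trimmed) chandeliers $\hat S_1,\hat S_2,\hat T_1,\hat T_2$, but this fails on two counts.

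First, the base of the correction factor in Corollary~\ref{corollary:automorphism_inequality} is $2\max\{|V_1|,|V_2|\}$, where $|V_i|$ is the size of the \emph{full} rooted tree to which you apply the corollary. For $\hat S_1,\hat S_2$ this can be as large as $N+1$, not $K+M$; your claim that ``$|V^{\max}|$ is the maximum branch size (at most $K+M\le 2K$)'' is simply wrong at the chandelier level. Second, and more seriously, the exponent $d=|E(\hat S_1)\triangle E(\hat S_2)|$ is \emph{not} controlled by the number $z$ of $1$-decorated edges. A branch in $\dot U_L$ that is $\{S_1,T_1\}$-decorated throughout (which is perfectly consistent with $z=0$) contributes all of its $K+M$ edges to $E(\hat S_1)\triangle E(\hat S_2)$ and to $E(\hat T_1)\triangle E(\hat T_2)$ simultaneously. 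So even for $z=0$ your exponent $d_1+d_2$ can be of order $N$, and the iterative scheme produces a factor like $(2N)^{\Theta(N)}$ rather than $(2K)^z$. No fixed global pairing of the four chandeliers avoids this: whichever two you pair, there exist $2$-decorated edges (e.g.\ in $K_{20}$, $K_{02}$, or $K_{11}$) that fall in the symmetric difference.

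The paper's proof sidesteps both problems by working \emph{bulb-by-bulb}. The key structural fact (which your write-up never invokes) is that in $\dot U_L$ at most two bulbs can overlap at any location---this is what the definition of $\dot U_L$ via ``fewer than $M$ edges of decoration $\ge 3$'' buys. Hence the bulbs pair up locally (with possibly different pairings, $\{S_1,T_1\}$, $\{S_1,S_2\}$, etc., at different locations), and one applies Corollary~\ref{corollary:automorphism_inequality} to each pair of overlapping bulbs separately. Now the trees in each application have size $\le K+1$, so the base is $2K$; and the symmetric difference of a pair of overlapping bulbs consists precisely of the $1$-decorated bulb edges there, so summing the exponents over all pairs gives at most $z$. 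Fully overlapping pairs contribute $\sqrt{\aut(\mathcal B_i)\aut(\mathcal B_j)}=\aut(\mathcal B_i\cup\mathcal B_j)$ with no correction, and fully $1$-decorated bulbs contribute $\sqrt{\aut(\mathcal B)}\le\aut(\mathcal B)$. Multiplying over all bulbs and using that bulb automorphisms are a lower bound for $\aut(\dot U_L)$ yields the claim directly.
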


\begin{proof}
    Denote all bulbs contained in $\dot{U}_L$ as $\mathcal{B}_1,\mathcal{B}_2,\ldots,\mathcal{B}_w$. (1) Some of them can be fully overlapped to form a $2$-decorated bulbs in the union graph. (2) Some of them can be partly overlapped. (3) And the remaining of them are fully $1$-decorated. There cannot be three or more bulbs overlapping with each other thanks to the definition of $\dot{U}_L$~\eqref{eq:definition_UL_UM_fakepair}.

    Each bulb $\mathcal{B}_i$ contributes to the $w(\dot{U}_L)$ by $\aut(\mathcal{B}_i)$ independently from definition~\eqref{eq:definition_UL_weight_fakepair} and~\eqref{eq:definition_weight_general}. 
    For any vertex on the bulbs, it will not be at the same orbit as any vertex on the wire, so studying the automorphism of the overlapped bulbs gives a lower bound on the automorphism of the whole decorated graph. 
    Since each bulb occurs in at most one overlapped bulb, to prove the claim, it suffices to examine the relationship between weights and automorphism for each of the three cases aforementioned.
    
    For $i,j \in [w]$, if bulbs $\mathcal{B}_i$ is partly overlapping with $\mathcal{B}_j$. 
    From Corollary~\ref{corollary:automorphism_inequality}, $\sqrt{\aut(\mathcal{B}_i)\aut(\mathcal{B}_j)} \leq \aut(\mathcal{B}_i\cup \mathcal{B}_j) (2K)^{|E(\mathcal{B}_i)\triangle E(\mathcal{B}_j)|}$. 
    If $\mathcal{B}_i$ is fully overlapping with $\mathcal{B}_j$, then $\sqrt{\aut(\mathcal{B}_i)\aut(\mathcal{B}_j)} = \aut(\mathcal{B}_i\cup \mathcal{B}_j)$. 
    If $\mathcal{B}_i$ is fully $1$-decorated, then $\sqrt{\aut(\mathcal{B}_i)} \leq \aut(\mathcal{B}_i)$. 

    Denote $I_1$ and $I_2$ as the collections of index pairs that the corresponding bulbs fall in case (1) or (2). Denote $I_3$ as the collection of indices corresponding to the bulbs falling in case (3). Therefore,
    \begin{align}
        \nonumber w(\dot{U}_L) &\leq 
        \prod_{(i,j)\in I_1} \aut(\mathcal{B}_i\cup \mathcal{B}_j) (2K)^{|E(\mathcal{B}_i)\triangle E(\mathcal{B}_j)|} 
        \prod_{(i,j)\in I_2} \aut(\mathcal{B}_i\cup \mathcal{B}_j)
        \prod_{i\in I_3} \aut(\mathcal{B}_i) \\
        &\leq \aut(\dot{U}_L) (2K)^{\sum_{(i,j)\in I} |E(\mathcal{B}_i) \triangle E(\mathcal{B}_j)|}.\label{eq:helper_w_aut_ineq_truepair}
    \end{align}
    
    The union graph has at least $2\times \sum_{(i,j)\in I} |E(\mathcal{B}_i) \triangle E(\mathcal{B}_j)|$ $1$-decorated edges and we know that $\dot{U}_L$ has at most $z$ $1$-decorated edges. 
    Therefore, $\sum_{(i,j)\in I} |E(\mathcal{B}_i) \triangle E(\mathcal{B}_j)| \leq z$. Substituting this into~\eqref{eq:helper_w_aut_ineq_truepair} completes the proof.
\end{proof}

\begin{claim}\label{claim:effective_non_iso_bound}
    Let $a$ be the number of effective non-isomorphic bulbs and $t_1$ be the number of pairs of tangled bulbs that are decorated by a subset of either $\{S_1,S_2\}$ or $\{T_1,T_2\}$. We have
    \begin{equation}
        \frac{t_1}{2} \leq a \leq L+\frac{t_1}{2}.
    \end{equation}    
\end{claim}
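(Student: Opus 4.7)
\textbf{Proof plan for Claim~\ref{claim:effective_non_iso_bound}.}

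The plan is to establish the two inequalities independently, using a careful bookkeeping of how the $b$ bulbs in $\dot{U}_{L}$ distribute across the four chandeliers $S_{1},T_{1},S_{2},T_{2}$ and across the $b/2$ pairs. Throughout, I will exploit the fact that the $L$ bulbs within any single chandelier are pairwise non-isomorphic (by the definition of chandelier) and that, by the construction of the three bulb types, each pair contains at most one effective non-isomorphic bulb and at most one non-effective bulb.

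For the lower bound $t_{1}/2 \leq a$, the approach will be based on the maximality of the effective set. I will argue that for every $t_{1}$-pair $(B,B')$, at least one of the following holds: (i) one of $B,B'$ already lies in the effective set, or (ii) both $B$ and $B'$ are shadow (i.e., isomorphic to some existing effective bulb). Case (ii) follows because otherwise one could enlarge the effective set by including a bulb from the pair, contradicting maximality. Combining cases (i) and (ii) over all $t_{1}$ pairs, I will then show via a matching-style counting that $t_{1}$ pairs must be ``covered'' by at least $t_{1}/2$ effective representatives: each effective bulb can be iso-class-matched to shadow bulbs in at most two $t_{1}$-pairs (since a $t_{1}$-pair involves bulbs from exactly two chandeliers among $\{S_{1},S_{2}\}$ or $\{T_{1},T_{2}\}$, and within each chandelier all bulbs are non-isomorphic, so the shadow pattern is sharply constrained).

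For the upper bound $a \leq L + t_{1}/2$, the plan is to split the effective bulbs into two groups according to the type of pair they participate in. Let $a^{(t_{1})}$ count effective bulbs lying in $t_{1}$-pairs and $a^{(\mathrm{rest})} := a - a^{(t_{1})}$ count the remaining effective bulbs. Since each pair contains at most one effective bulb, $a^{(t_{1})} \leq t_{1}$. The harder bound is $a^{(t_{1})}/2 + a^{(\mathrm{rest})} \leq L + t_{1}/2$, i.e., $a^{(\mathrm{rest})} \leq L + t_{1}/2 - a^{(t_{1})}/2$. To obtain this, I will argue that effective bulbs in non-$t_{1}$ pairs are constrained to originate from chandeliers lying on ``one side'' (either the $i$-rooted pair $\{S_{1},T_{1}\}$ or the $j$-rooted pair $\{S_{2},T_{2}\}$), and that the pairwise non-isomorphism requirement together with the within-chandelier non-isomorphism forces their count to be bounded by $L$ in aggregate. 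A non-$t_{1}$ pair necessarily involves a bulb with decoration intersecting both sides of the partition $\{S_{1},S_{2}\}$ vs.\ $\{T_{1},T_{2}\}$, and tracking how effective representatives can be drawn from such pairs while remaining pairwise non-isomorphic to the $t_{1}$-contributors gives the $L$ term.

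The main obstacle I expect is the precise accounting in the upper-bound step: the definition of ``effective non-isomorphic bulbs'' allows the selection to draw from any of the four chandeliers, and bulbs from different chandeliers can share isomorphism classes, so it takes careful case analysis by pair-type (cross pairs $(S_{1},T_{2})$, $(T_{1},S_{2})$ versus same-side pairs $(S_{1},T_{1})$, $(S_{2},T_{2})$ versus $t_{1}$-pairs) to ensure no double counting. Handling pairs that are ``arbitrarily formed'' among fully $1$-decorated bulbs requires extra care, since their contribution to $a$ must ultimately be absorbed into the $L$ term rather than the $t_{1}/2$ term.
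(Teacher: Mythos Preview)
Your overall strategy (separating the two inequalities and using that bulbs within a single chandelier are pairwise non-isomorphic) is aligned with the paper's. However, there is a concrete error in your lower-bound argument and your upper-bound plan diverges from the paper's cleaner route.

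\textbf{Lower bound.} Your key claim, that ``each effective bulb can be iso-class-matched to shadow bulbs in at most two $t_{1}$-pairs,'' is not correct. An isomorphism class $C$ can have up to four representatives (one in each of $S_{1},S_{2},T_{1},T_{2}$), and nothing prevents all four from sitting in distinct $t_{1}$-pairs; your parenthetical justification that ``a $t_{1}$-pair involves bulbs from exactly two chandeliers'' is also false for the arbitrarily-paired (non-tangled) bulbs. With ``at most four'' in place of ``at most two,'' your matching-style count only yields $a \geq t_{1}/4$, which is too weak. The paper's one-line argument is more direct: each isomorphism class appears at most once per chandelier, hence at most twice within the $\{S_{1},S_{2}\}$ side and at most twice within the $\{T_{1},T_{2}\}$ side, and the $t_{1}$ pairs are split between these two sides; this two-set structure is what forces the factor $t_{1}/2$ rather than $t_{1}/4$.

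\textbf{Upper bound.} Your decomposition $a = a^{(t_{1})} + a^{(\mathrm{rest})}$ is reasonable, but the plan to bound $a^{(\mathrm{rest})}$ by arguing that non-$t_{1}$ effective bulbs ``originate from one side'' is vague and does not make clear why the cap is exactly $L$. The paper instead introduces $t'$, the number of distinct isomorphism classes among the $t_{1}$-pair bulbs, and argues that (WLOG) at least $t'/2$ of the $L$ bulb types of $H$ are already accounted for by these; the remaining $\{S_{1},S_{2}\}$ bulbs (at most $L - t'/2$ types) cannot pair with each other---any such pairing would itself be a $t_{1}$-pair---so each pairs with a $\{T_{1},T_{2}\}$ bulb, contributing at most $L - t'/2$ further effective bulbs. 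This gives $a \leq t' + (L - t'/2) \leq L + t_{1}/2$ directly. Your approach can likely be made to work, but the paper's auxiliary quantity $t'$ and the observation that any $\{S_{1},S_{2}\}$--$\{S_{1},S_{2}\}$ pairing is automatically a $t_{1}$-pair are the missing ingredients that make the count go through cleanly.
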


\begin{proof}
    For an arbitrary bulb, it occurs at most one time in $S_1,S_2,T_1,T_2$ each and they are paired up into two sets, $a \geq \frac{t_1}{2}$.
    Assume that there are $t'$ non-isomorphic bulbs among the $t_1$ pairs, they can all be assigned as effective non-isomorphic bulbs. Then, without loss of generality, $S_1,S_2$ have at most $L-\frac{t'}{2}$ bulbs unspecified.
    By assumption, they cannot pair up with each other, so every one bulb from $S_1,S_2$ remaining will pair up with another bulb from $T_1,T_2$. Therefore, the remaining bulbs have at most $L-\frac{t'}{2}$ effective non-isomorphic bulbs.

    Together, we have $a \leq t' + (L-\frac{t'}{2}) \leq L+\frac{t_1}{2}$, as $t' \leq t_1$.
\end{proof}

\section{Proof of Proposition~\ref{prop:var_fake_regimeII}}
\label{sec:prop6}

\subsection{Proof of the Proposition}

\begin{proof}[Proof of Proposition~\ref{prop:var_fake_regimeII}]
    Recall that $S_1$ and $T_1$ are rooted on $i$, $S_2$ and $T_2$ are rooted on $j$. The minimum value of $k$ is $-2$ when the union graph consists of two disconnected trees, $S_1 \cup T_1$ and $S_2 \cup T_2$. We use the same notation for different parts of the variance as in Section~\ref{sec:prop5}.

    \begin{align}
        \Var[\Phi_{ij}^{\whbsigma} \indH] 
        \nonumber &=  \sum_{H,I\in \mathcal{T}} \aut(H)\aut(I) \sum_{S_1(i),S_2(i)\cong H, T_1(j),T_2(j)\cong I} \operatorname{Cov}(\overline{A}_{S_1}^{\whbsigma_A}\overline{B}_{S_2}^{\whbsigma_B}\indH, \overline{A}_{T_1}^{\whbsigma_A}\overline{B}_{T_2}^{\whbsigma_B}\indH) \\ 
        \nonumber &= \underbrace{\sum_{U\in \mathcal{W}_{ij}} (\aut(S_1)\aut(S_2)\aut(T_1)\aut(T_2))^{\frac{1}{2}} \EE [\overline{A}_{S_1}^{\whbsigma_A} \overline{B}_{S_2}^{\whbsigma_B} \overline{A}_{T_1}^{\whbsigma_A} \overline{B}_{T_2}^{\whbsigma_B} \indH]  }_{V_{11}} \\
        \nonumber &+ \underbrace{\sum_{U\notin \mathcal{W}_{ij}} (\aut(S_1)\aut(S_2)\aut(T_1)\aut(T_2))^{\frac{1}{2}} \EE [\overline{A}_{S_1}^{\whbsigma_A} \overline{B}_{S_2}^{\whbsigma_B} \overline{A}_{T_1}^{\whbsigma_A} \overline{B}_{T_2}^{\whbsigma_B} \indH]  }_{V_{12}} - \\
        &\quad \underbrace{\sum_{H,I\in \mathcal{T}} \aut(H)\aut(I) \sum_{S_1,S_2\cong H, T_1,T_2\cong I} \EE[\overline{A}_{S_1}^{\whbsigma_A} \overline{B}_{S_2}^{\whbsigma_B} \indH]\EE[\overline{A}_{T_1}^{\whbsigma_A} \overline{B}_{T_2}^{\whbsigma_B} \indH]}_{V_2}.
    \end{align}

    Since $V_2 \geq 0$, it suffices to bound the first two summations.
    The same argument in Section~\ref{sec:prop5} to bound the $V_{11}$ for true pairs works for this case with an additional fluctuation coming from using Lemma~\ref{lemma:cross_moments_regimeII} to bound the cross moments rather than Lemma~\ref{lemma:cross_moments_regimeI}. We have 
    $$
    V_{11}/\mu^2 \leq (1+\Theta(\frac{\log n}{n}))^{4N} \frac{\Var[\Phi_{ij}^{\whbsigma} \indH]}{\mu^2}|_{sD_+(a,b)>1}=O(\frac{1}{|\mathcal{T}|\rho^{2N}})
    $$ 
    under conditions~\eqref{eq:conditions_fake_pair_regimeI}.
    The additional fluctuation comes from the cross moment bounds.
    
    Lemma~\ref{lemma:V12_bound_fake_pair} shows that $V_{12}/\mu^2=o(\frac{1}{|\mathcal{T}|\rho^{2N}})$. In summary, we have $\frac{\Var[\Phi_{ij}^{\whbsigma} \indH]}{\EE[\Phi_{i\pi_{*}(i)\indH}]^2}=O(\frac{1}{|\mathcal{T}|\rho^{2N}})$.
\end{proof}

\begin{lemma} \label{lemma:V12_bound_fake_pair}
    Under the same conditions as Proposition~\ref{prop:var_fake_regimeII},
    \begin{equation*}
        \frac{V_{12}}{\mu^2}=o(\frac{1}{|\mathcal{T}|\rho^{2N}}).
    \end{equation*}
\end{lemma}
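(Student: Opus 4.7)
\textbf{Proof proposal for Lemma~\ref{lemma:V12_bound_fake_pair}.}

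The plan is to mirror the proof of Lemma~\ref{lemma:V12_bound_true_pair}, but using the fake-pair geometry introduced in Section~\ref{sec:prop3} (where $S_1,T_1$ are rooted at $i$ and $S_2,T_2$ at $j$, and the excess $k$ now ranges down to $-2$). First I would obtain the cross-moment bound exactly as in~\eqref{eq:cross_moment_bound_true_regimeII}: conditioning on $\bsigma_*$ and $\whbsigma$, the edges become independent, and Lemma~\ref{lemma:cross_moments_regimeII} controls each factor. The key input is that each $1$-decorated edge contributes only through $\eta_{1,0}$ or $\eta_{0,1}$, which vanish unless that edge is miscentered; by Lemma~\ref{lemma:bad_vertices_joint_distribution}, miscentering $z$ edges requires miswomclassifying $\lceil z/D\rceil$ vertices, which happens with probability at most $n^{-zD_+(a,b,s,\eps)/D}$ under the hypothesis $sD_+(a,b,s,\eps)/D \geq (\log\log n)^2/\log n$. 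This yields
\begin{equation*}
\EE\bigl[\overline{A}_{S_1}^{\whbsigma_A}\overline{B}_{S_2}^{\whbsigma_B}\overline{A}_{T_1}^{\whbsigma_A}\overline{B}_{T_2}^{\whbsigma_B}\indH\bigr] \lesssim n^{-zD_+(a,b,s,\eps)/D}\,(sp\wedge sq)^{v+k+2-2N}\,\rho^{e(K_{11})}\,\sigma_{\eff}^{4N}\,\gamma_2^{e(K_{12})+e(K_{21})+e(K_{22})}\,2^{k+2}.
\end{equation*}

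Next I would apply the fake-pair decomposition $\dot{U}=\dot{U}_L\cup\dot{U}_M\cup\dot{U}_N$ from~\eqref{eq:definition_UL_UM_fakepair}--\eqref{eq:definition_UN_fakepair}, with weights as in~\eqref{eq:definition_UL_weight_fakepair}--\eqref{eq:definition_UN_weight_fakepair}. Let $P_L(v_L,z,\ell)$ denote the fake-pair analogue of~\eqref{eq:definition_PL_regimeI} restricted to graphs with at most $z$ one-decorated edges and $\ell$ edges in $K_{11}$. The auxiliary $P_M$ and $P_N$ bounds of Lemmas~\ref{lemma:var_fake_regimeI_PM}--\ref{lemma:var_fake_regimeI_PN} carry over with the replacement $11 \mapsto 15$ (to account for the four additional $1$-decoration symbols) and with $e_M \leq 2N-(v+k+2)+z/2$. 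The critical new ingredient is a fake-pair version of Lemma~\ref{lemma:var_true_regimeII_PL}: I would prove
\begin{equation*}
\sum_{\ell\geq 0}\rho^{\ell}P_L(v_L,z,\ell) \lesssim (4N)^{2z+1}\,n^{v_L}\,|\mathcal{T}|\,4^L L^{2L\wedge(4K+2)}(6\beta)^{4(K+M)-2}\,\rho^{-z/2},
\end{equation*}
by first establishing $w(\dot{U}_L)\leq \aut(\dot{U}_L(i))\aut(\dot{U}_L(j))(2K)^z$ (an obvious adaptation of Claim~\ref{claim:weight_aut_ineq_true_pair} applied separately to the two trees rooted at $i$ and $j$, using Corollary~\ref{corollary:automorphism_inequality}), and then enumerating unlabeled decorated tree structures as in Lemma~\ref{lemma:var_fake_regimeI_PL}, with an extra $(4N)^{2z+1}\rho^{-z/2}$ for placing the $z$ additional $1$-decorated edges and relaxing the constraints on overlap between bulbs.

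Finally, I would substitute these bounds into $V_{12}/\mu^2$, summing first over $e_L, e_M$, then $v, k$, and finally $z$, in complete parallel with the display at the end of the proof of Lemma~\ref{lemma:V12_bound_true_pair}, but now using the fake-pair summation conventions from the proof of Proposition~\ref{prop:var_fake_regimeI}. The first three conditions in~\eqref{eq:conditions_fake_pair_regimeII} ensure the geometric sums over $k$, $e_M$, and $v$ each contribute $O(1)$, while the last condition absorbs the $4^L L^{2L\wedge(4K+2)}(6\beta)^{4(K+M)}$ combinatorial factor. The remaining sum over $z \geq 1$ is driven by the factor $n^{-zD_+(a,b,s,\eps)/D}$, which dominates the polynomial-in-$z$ terms $(4N)^{2z+1}\rho^{-z/2}$ thanks to the condition $sD_+(a,b,s,\eps)/D \geq (\log\log n)^2/\log n$; this sum is $o(n^{-\eps'})$ for some $\eps'>0$, yielding $V_{12}/\mu^2 = o(1/(|\mathcal{T}|\rho^{2N}))$.

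The main obstacle will be establishing the fake-pair $P_L$ bound cleanly: the bulbs from the two disjoint tree components at $i$ and $j$ can pair up in richer ways than in the true-pair case, so the careful counting of ``effective non-isomorphic,'' ``shadow,'' and ``non-effective'' bulbs (Claim~\ref{claim:effective_non_iso_bound}) must be redone to account for pairings that cross between the $i$-rooted and $j$-rooted components. I expect the bookkeeping to produce the same $\rho^{2N}|\mathcal{T}|^2\rho^{-z/2}$ scaling as in~\eqref{eq:bound_on_rho_U_regimeII}, but verifying this requires a careful case analysis of how bulbs from $S_1,T_1$ versus $S_2,T_2$ can tangle while keeping $\dot{U}_L(i)$ and $\dot{U}_L(j)$ trees.
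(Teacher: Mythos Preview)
Your overall strategy matches the paper's: the cross-moment bound with the $n^{-zD_+(a,b,s,\eps)/D}$ penalty, the fake-pair decomposition from Section~\ref{sec:prop3}, the $11\mapsto 15$ replacement in $P_M$ and $P_N$, and geometric summation. Your stated $P_L$ bound (with a single factor of $|\mathcal{T}|$) is also the right target. However, your final paragraph contains a genuine conceptual error that contradicts your own bound and, if followed, would derail the argument.

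You write that you ``expect the bookkeeping to produce the same $\rho^{2N}|\mathcal{T}|^2\rho^{-z/2}$ scaling as in~\eqref{eq:bound_on_rho_U_regimeII}.'' This is wrong and is exactly where the fake-pair geometry differs from the true-pair geometry. In the true-pair case, $2$-decorated edges in $\dot{U}_L$ typically lie in $K_{11}$ (one copy from $S_1$, one from $S_2$), so $\ell$ can be of order $N$ and $\sum_\ell \rho^\ell$ genuinely contributes $\rho^{2N}|\mathcal{T}|$. In the fake-pair case, $\dot{U}_L(i)$ and $\dot{U}_L(j)$ are vertex-disjoint trees, and $2$-decorated edges there lie in $K_{20}$ or $K_{02}$ (one copy from $S_1$, one from $T_1$, or likewise at $j$); edges in $K_{11}$ can only arise from what the paper calls \emph{invader} branches---a branch rooted at $j$ whose bulb ends up in $\dot{U}_L(i)$---of which there are at most four. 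So $e(K_{11})$ is small, tracking $\ell$ buys nothing, and the paper simply bounds $\rho^{e(K_{11})}\le 1$. The resulting $P_L(v_L,z)\lesssim n^{v_L}|\mathcal{T}|\cdot(\text{polylog})^z$ is what, after dividing by $\mu^2$, leaves the target $1/(|\mathcal{T}|\rho^{2N})$.

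Two smaller points. First, the paper treats $k=-2$ separately with a direct count over $(t_1,t_2)=$ (edge differences between $S_1,T_1$ and $S_2,T_2$); your uniform treatment also works, but you must check that the general bounds do not degenerate when $\dot{U}_N$ is empty. Second, your proposed proof of the $P_L$ bound---multiply Lemma~\ref{lemma:var_fake_regimeI_PL} by $(4N)^{2z}$---is too glib: in Regime~I the branches of $\dot{U}_L(i)$ satisfy $S_1=T_1$ exactly, whereas with $1$-decorated edges the bulbs can differ and there can be fully $1$-decorated branches and up to four invaders. The paper's Lemma~\ref{lemma:var_fake_regimeII_PL} handles this via the resident/invader classification, yielding an effective-bulb count of $L+\lfloor z/(K+M)\rfloor+4$ and extra factors $(2\beta)^{4(K+M)}(4L)!$ and $\beta^{Kz/(K+M)}$. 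These are still polylogarithmic and are dominated by $n^{-zD_+(a,b,s,\eps)/D}$, so your conclusion survives, but the enumeration does not reduce to Lemma~\ref{lemma:var_fake_regimeI_PL} plus a correction.
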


\begin{proof}
    First, we apply the upper bound on $\EE [\overline{A}_{S_1}^{\whbsigma_A} \overline{B}_{S_2}^{\whbsigma_B} \overline{A}_{T_1}^{\whbsigma_A} \overline{B}_{T_2}^{\whbsigma_B} \indH]$ as derived in~\eqref{eq:cross_moment_bound_true_regimeII}, with replacing $k+1$ to $k+2$ everywhere as from definition $v$ is the number of vertices except for $i$ and $j$. When $j\neq \pi_*(i)$, the minimum value of $j$ starts from $-2$ when the union graph consists of two disjoint trees.
    \begin{align}\label{eq:cross_moment_bound_fake_regimeII}
        \nonumber \EE [\overline{A}_{S_1}^{\whbsigma_A} \overline{B}_{S_2}^{\whbsigma_B} \overline{A}_{T_1}^{\whbsigma_A} \overline{B}_{T_2}^{\whbsigma_B} \indH]
        &\leq (1+o(1)) n^{-D_{+}(a,b,s,\eps)\frac{z}{D}} (sp\wedge sq)^{v+k+2-2N} \rho^{e(K_{11})} \\
        &\quad \times \sigma_{\eff}^{4N} \gamma_{2}^{e(K_{12})+e(K_{21})+e(K_{22})} 2^{k+2}.
    \end{align}    
    
    After plugging the upper bound on cross-moments to the ratio, it remains to bound the number of different union graph structures and their corresponding weights. 
    \begin{multline*}
        \frac{V_{12}}{\EE[\Phi_{i\pi_{*}(i)}\indH]^2} \\
        \leq \frac{\sum_{v+k+2=0}^{4N}  \sum_{\dot{U}\in \mathcal{S}_{ij}(v,k)} \aut(H) \aut(I)
        \rho^{e(K_{11})}
        (sp\wedge sq)^{v+k+2-2N} \gamma_{2}^{e(K_{12})+e(K_{21})+e(K_{22})} 2^{k+2}}
        {(1+o(1)) n^{2N} \rho^{2N} |\mathcal{T}|^2 n^{zD_+(a,b,s,\eps)/D}}.
    \end{multline*}

    \paragraph{(a) Case $k=-2$.}

    We first consider the special case when $k=-2$. When $k=-2$, there are two disjoint trees in the decorated union graph and all edges are decorated by a subset of either $\{S_1,T_1\}$ or $\{S_2,T_2\}$. Then, $e(K_{11})=e(K_{12})=e(K_{21})=e(K_{22})=0$. 
    Also $v\geq 2N$,
    \begin{equation*}
        V_{12} = \sigma_{\eff}^{4N} \sum_{\dot{U} \in \mathcal{S}_{ij}(v\geq 2N)} \aut(H)\aut(I)  n^{-\frac{zD_+(a,b,s,\eps)}{D}}
        \leq 2 \sigma_{\eff}^{4N} F_{ij},
    \end{equation*}
    where 
    \[
        F_{ij}:=\sum_{H\in \mathcal{T}} \sum_{I \in \mathcal{T}:\aut(I) \leq \aut(H)} \aut(H)^2 \sum_{\dot{U}\in \mathcal{S}_{ij}(v\geq 2N,H,I)} n^{-\frac{zD_+(a,b,s,\eps)}{D}},
    \]
    and $\mathcal{S}_{ij}(v\geq 2N,H,I)$ is the collection of decorated union graphs that have at least $2N+2$ vertices, excess $-2$, at least $1$ edge $1$-decorated, and that $S_1,S_2 \cong H, T_1,T_2 \cong I$.
    
    For a specific decorated graph $\dot{U}$, we denote $t_1$ (resp. $t_2$) as the number of different edges vetween $S_1$ and $T_2$ (resp. $S_2$ and $T_2$). There are $z=2(t_1+t_2)$ $1$-decorated edges and the remaining edges are in set $K_{02}$ or $K_{20}$. We write out $F_{ij}$ under the summation over $t_1$ and $t_2$.
    \begin{align*}
        F_{ij} &= \sum_{H \in \mathcal{T}}\sum_{S_1,S_2 \cong H} \sum_{\exists I \in \mathcal{T}, \aut(H) > \aut(I), T_1,T_2 \cong I} \sum_{t_1=0}^{N}\sum_{t_2=0}^N n^{-\frac{zD_+(a,b,s,\eps)}{D}} \mathbf{1}_{\{t_1+t_2 \geq 1\}} \\
        &= \sum_{t_1+t_2\geq 1} \sum_{H \in \mathcal{T}} |\mathcal{S}(v\geq 2N, H, t_1, t_2)| n^{-\frac{zD_+(a,b,s,\eps)}{D}},
    \end{align*}
    where $\mathcal{S}(v\geq 2N, H, t_1, t_2)$ collects all the possible decorated union graph that have $S_1, S_2 \cong H$, $T_1, T_2 \cong I$ for some $I \in \mathcal{T}$ such that $\aut(I) < \aut(H)$, and $S_1$ (resp. $S_2$) differ in $t_1$ (resp. $t_2$) edges with $T_1$ (resp. $T_2$).

    Next, we bound $|\mathcal{S}(v\geq 2N, H, t_1, t_2)|$ by the following way: First, enumerate through all $S_1, S_2 \cong H$ on the complete graph with all possible structure, this gives $\frac{n^{2N}}{\aut(H)^2}$. Then, we choose which edges that are overlapped with $T_1$ on $S_1$ (resp. overlapped with $T_2$ on $S_2$). This is at most $\binom{N}{t_1}\binom{N}{t_2} \leq N^{t_1+t_2}$. After this, we draw $t_1+t_2$ new vertices for $T_1$ and $T_2$ and allow them arbitrarily connecting edges among those $N$ vertices on its chandelier, which is upper bounded by $\binom{N}{2}^{t_1+t_2}$ (ignoring the constraint that $T_1\cong T_2 \cong I$ for some chandelier $I$ having less automorphism number than $H$). Altogether,
    \begin{equation*}
        |\mathcal{S}_{ij}(v\geq 2N,H,I)| \leq \frac{n^{2N}}{\aut(H)^2} \left( \frac{N^3}{n^{2D_+(a,b,s,\eps)/D}} \right)^{t_1+t_2}.
    \end{equation*}
    Therefore,
    \begin{equation*}
        F_{ij} =n^{2N}|\mathcal{T}| \sum_{t_1+t_2\geq 1} \left( \frac{N^3}{n^{2D_+(a,b,s,\eps)/D}} \right)^{t_1+t_2}.
    \end{equation*}
    As assumed in Proposition~\ref{prop:var_fake_regimeII}, $N=\Theta(\log n)$ and $D=o(\frac{\log n}{\log \log n})$. Therefore,
    \begin{equation*}
        V_{12}/\mu^2 = o(\frac{1}{|\mathcal{T}|\rho^{2N}}).
    \end{equation*}

    \paragraph{(b) Case $k>-2$.}

    In general, we define $\dot{U}_L, \dot{U}_M$, and $\dot{U}_N$ partition the same as~\eqref{eq:definition_UN_fakepair} and~\eqref{eq:definition_UL_UM_fakepair}. 
    We also define the weights of each part the same as~\eqref{eq:definition_UL_weight_fakepair},~\eqref{eq:definition_UM_weight_fakepair}, and~\eqref{eq:definition_UN_weight_fakepair}.
    We define $P_L(v_L,z)=\sum_{\dot{U} \in \mathcal{U}_L(v_L,z)}w(\dot{U})$. The definition of $P_M(v_M)$ and $P_N(v_N,k)$ follow. All union graph class should not have more than $z$ $1$-decorated edges, but specifically we only need this constraint for $\dot{U}_L$.

    Note that $e(K_{12})+e(K_{21})+e(K_{22}) \leq 4N-2(v+k+1)+z$,
    \begin{align*}
        V_{12}|_{k>-2} 
        &\leq \sigma_{\eff}^{4N} \sum_{v=N}^{4N} \sum_{k+2=1}^{4N-v}  \sum_{z=1}^{4N} \gamma_2^{4N-2(v+k+2)} 2^{k+2} (sp\wedge sq)^{v-2N+k+2} \\
        &\quad\times \sum_{v_L,v_M,v_N} P_L(v_L,z) P_M(v_M) P_L(v_N,k) \left(\frac{\gamma_2}{n^{D_+(a,b,s,\eps)/D}}\right)^{z}.
    \end{align*}

    We show the upper bound for $\dot{U}_L$ part in Lemma~\ref{lemma:var_fake_regimeII_PL}. The upper bound for $P_M$ and $P_N$ trivially follows from Lemma~\ref{lemma:var_fake_regimeI_PM} and Lemma~\ref{lemma:var_fake_regimeI_PN} as they do not use any assumption on edges are all at least $2$-decorated, except for the number $11$, the possible ways of decoration. So, we change $11$ to $15$ and then every thing follows. When $z\neq 0$, $e_M$ as defined before in an arbitrary $\dot{U}_M$ has maximum value $2N-(v+k+2)-z/2$ (same holds for $e_N, e_L$ but we do not need to use them in our bound). In summary,
    \begin{align*}
        P_L(v_L,z) & \leq n^{v_L} |\mathcal{T}| \beta^{4K} (4LM)^{4L} (4L)! (2\beta)^{4(K+M)} \left( (4N)^2 \beta^{\frac{K}{K+M}} \right)^z \\
        P_M(v_M) & \leq R^{\frac{2e_M}{M}} n^{v_M} (15\beta)^{(K+M)\frac{2e_M}{M}} \mathbf{1}_{\{e_M \leq 2N-(v+k+2)+z/2\}} \\
        &\leq n^{v_M} \left( R^{\frac{2}{M}} (15\beta)^{\frac{2(K+M)}{M}} \right)^{2N-v-k-2} \left( R^{\frac{1}{M}}(15\beta)^{\frac{K+M}{M}} \right)^z\\
        P_N(v_N,k) & \leq n^{v_N} \beta(15\beta)^{(K+M)2(k+2)} (15R^4(v_N+2)^2)^{k+2}\\
        &\leq n^{v_N} \beta \left( (15\beta)^{2(K+M)} 15R^4 (4N+1)^2 \right)^{k+2}.
    \end{align*}
    The last inequality holds because $v_N \leq v \leq 4N-(k+2)$ and $k \geq -1$.

    Putting every pieces together,
    \begin{align*}
        \frac{V_{12}|_{k>-2}}{\mu^2} 
        &\leq \sum_{k\geq -1} \left( \frac{(15\beta)^{2(K+M)} 30R^4 (4n+1)^2}{n}  \right)^{k+2}
        \sum_{v=N}^{4N-k-2} \left( \frac{\gamma_2^2 (15\beta)^{2\frac{K+M}{M}} R^\frac{2}{M}}{ns(p\wedge q)} \right)^{2N-v-k-2} \\
        &\quad \times \sum_{z=1 \vee (2N-v-k-2)}^{4N} \left(\frac{\gamma_2 (4N)^2 R^{\frac{1}{M}} (15\beta)^{\frac{K+M}{M}} \beta^{\frac{K}{K+M}} }{n^{\frac{D_+(a,b,s,\eps)}{D}}}\right)^z \\
        &\quad \times 
        \frac{\beta^{4K+1} (2\beta)^{4(K+M)} (4LM)^{4L} (4L)!}{|\mathcal{T}|\rho^{2N}}.
    \end{align*}
    From the second condition in~\eqref{eq:conditions_fake_pair_regimeII}, we first look at the summation of $v$ from $N$ to $2N-k-2$,
    \begin{equation*}
        \sum_{z=1}^{4N} \left(\frac{\gamma_2 (4N)^2 R^{\frac{1}{M}} (15\beta)^{\frac{K+M}{M}} \beta^{\frac{K}{K+M}} }{n^{\frac{D_+(a,b,s,\eps)}{D}}}\right)^z = o(1).
    \end{equation*}
    From the first condition in~\eqref{eq:conditions_fake_pair_regimeII},
    \begin{equation*}
        \sum_{v=N}^{2N-k-2} \left( \frac{\gamma_2^2 (15\beta)^{2\frac{K+M}{M}} R^\frac{2}{M}}{ns(p\wedge q)} \right)^{2N-v-k-2} \leq 2.
    \end{equation*}
    When $v>2N-k-2$, the power $2N-v-k-2<0$. Observe that $z \geq 2(v+k+2)-4N=z-e(K_{12})-e(K_{21})-2e(K_{22})$, we have the product of two summations upper bounded by
    \begin{equation*}
        \sum_{v=2N-k-1}^{4N-k-2} \left( \frac{ns(p\wedge q)}{\gamma_2^2 (15\beta)^{2\frac{K+M}{M}} R^\frac{2}{M}} 
        \times \frac{\gamma_2 (4N)^2 R^{\frac{1}{M}} (15\beta)^{\frac{K+M}{M}} \beta^{\frac{K}{K+M}} }{n^{\frac{D_+(a,b,s,\eps)}{D}}} \right)^{v+k+2-2N}.
    \end{equation*}
    This is clearly $o(1)$ because $N=\Theta(\log n)$ and from the first and second condition~\eqref{eq:conditions_fake_pair_regimeII}, $n^{D_+(a,b,s,\eps)/D}$ is the only term being $\log^{\omega(1)} n$.

    From the third condition in~\eqref{eq:conditions_fake_pair_regimeII},
    \begin{multline*}
        \sum_{k\geq -1} \left( \frac{(15\beta)^{2(K+M)} 30R^4 (4n+1)^2}{n}  \right)^{k+2} \beta^{4K+1} (2\beta)^{4(K+M)} (4LM)^{4L} (4L)! \\
        \quad \leq 2\left( \frac{(15\beta)^{2(K+M)} 30R^4 (4n+1)^2}{n}  \right) \beta^{4K+1} (2\beta)^{4(K+M)} (4LM)^{4L} (4L)! \leq 1.
    \end{multline*}
    Therefore, $V_{12}/\mu^2=o(\frac{1}{|\mathcal{T}|\rho^{2N}})$.
\end{proof}

\subsection{Proof of auxiliary Lemmas}

\begin{lemma} \label{lemma:var_fake_regimeII_PL}
    For $j\neq \pi_{*}(i)$,
    \begin{equation*}
        P_L(v_L,z) \leq n^{v_L} |\mathcal{T}| \beta^{4K} (4LM)^{4L} (4L)! (2\beta)^{4(K+M)} \left( (4N)^2 \beta^{\frac{K}{K+M}} \right)^z.
    \end{equation*}
\end{lemma}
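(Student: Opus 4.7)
My plan is to mirror the proof template of Lemma~\ref{lemma:var_true_regimeII_PL}, adapted to the fake-pair decomposition of Section~\ref{sec:prop3} in which $\dot{U}_L = \dot{U}_L(i) \cup \dot{U}_L(j)$ is the vertex-disjoint union of two rooted trees, with the edges of $\dot{U}_L(i)$ decorated by subsets of $\{S_1,T_1\}$ and those of $\dot{U}_L(j)$ by subsets of $\{S_2,T_2\}$ (since by definition $\dot{U}_L(i)$ is a tree not containing $j$, it cannot share an edge with $S_2$ or $T_2$ without violating the connectedness of those chandeliers). First I would pass from labeled to unlabeled graphs via $|H(\dot{U}_L)| \leq n^{v_L}/\aut(\dot{U}_L)$ and reduce to
\[
    P_L(v_L,z) \leq n^{v_L} \sum_{\dot{U}_L \in \wt{\mathcal{U}}_L(v_L,z)} \frac{w(\dot{U}_L)}{\aut(\dot{U}_L)}.
\]

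Next, I would establish a fake-pair analogue of Claim~\ref{claim:weight_aut_ineq_true_pair}, namely $w(\dot{U}_L) \leq \aut(\dot{U}_L) (2K)^{z}$. The argument is in fact simpler: in each half $\dot{U}_L(i)$ the only possible tangling is between one bulb of $S_1$ and one bulb of $T_1$, and symmetrically in $\dot{U}_L(j)$. Applying Corollary~\ref{corollary:automorphism_inequality} to each tangled pair contributes a factor of $(2K)^{d}$, where $d$ is the symmetric-difference edge count of the two bulbs; summing over all such pairs and both halves, the total symmetric-difference count is bounded by $z$ because every such edge is $1$-decorated in $\dot{U}_L$.

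The main combinatorial work is in bounding $|\wt{\mathcal{U}}_L(v_L,z)|$, which I would do by counting the structures of $\dot{U}_L(i)$ and $\dot{U}_L(j)$ separately and then taking a product. On each side let $b^{(\cdot)} \leq 2L$ be the number of bulbs and $t^{(\cdot)}$ the number of $S_1$--$T_1$ (resp.\ $S_2$--$T_2$) tangled pairs; use the same split into \emph{effective non-isomorphic}, \emph{shadow effective}, and \emph{non-effective} bulbs as in the proof of Lemma~\ref{lemma:var_true_regimeII_PL}, with Claim~\ref{claim:effective_non_iso_bound} bounding the number of effective non-isomorphic bulbs by $L + t^{(\cdot)}/2$. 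Choosing the effective non-isomorphic structures from $\mathcal{J}$ contributes a factor of $|\mathcal{T}| = |\mathcal{J}|^{L}$ on one side and is absorbed into the $\beta^{4K}$ slack on the other; the wires, branch-orderings, and shadow-effective placements across both halves account for the $(4LM)^{4L}(4L)!\,(2\beta)^{4(K+M)}$ factor. Non-effective bulbs are the source of the per-$z$ cost: each $1$-decorated edge sits at one of at most $(4N)^2$ positions in the ambient tree, and the $\beta^{K}$ structural freedom of an unpaired partial bulb is amortized over its $K+M$ associated edges, which gives exactly $\beta^{K/(K+M)}$ per $1$-decorated edge.

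The hard part will be carefully bookkeeping the factor distribution so that the total matches exactly $n^{v_L}|\mathcal{T}| \beta^{4K}(2\beta)^{4(K+M)}(4LM)^{4L}(4L)!\bigl((4N)^2 \beta^{K/(K+M)}\bigr)^{z}$; in particular, the amortization of $\beta^{K/(K+M)}$ per $1$-decorated edge is delicate because the structural freedom granted by an unpaired bulb is a collective $\beta^K$ freedom rather than a per-edge contribution, so the allocation must be made consistently across the two halves and simultaneously compatible with the constraint $a^{(\cdot)} \leq L + t^{(\cdot)}/2$ that controls the effective-non-isomorphic count.
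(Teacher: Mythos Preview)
Your plan has a genuine gap: the claim that $\dot{U}_L(i)$ can only carry decorations from $\{S_1,T_1\}$ is false. The argument you give---that an $S_2$-edge in $\dot{U}_L(i)$ would force $j\in\dot{U}_L(i)$ by connectedness of $S_2$---breaks when the chandelier $S_2$ happens to pass through the vertex $i$ itself as a non-root vertex. In that situation the $S_2$-path from $j$ to the offending edge enters $\dot{\mathcal G}(i,a)$ through the very edge $(i,a)$ that was removed when defining $\dot{\mathcal C}(i,a)$, so $j$ stays outside $\dot{\mathcal C}(i,a)$ while the tail of the $S_2$-branch (possibly its entire bulb) sits inside $\dot{U}_L(i)$. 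The paper calls such branches \emph{invaders} and shows there are at most four of them (at most one branch of each of $S_2,T_2$ can pass through $i$, and symmetrically for $S_1,T_1$ through $j$). The factors $\beta^{4K}$ and $(2\beta)^{4(K+M)}$ in the target bound are exactly the cost of these invaders: the ``$+4$'' in the effective non-isomorphic bulb count $L+\lfloor z/(K+M)\rfloor+4$ gives $|\mathcal J|^4\le\beta^{4K}$, and the at most $4(K+M-1)$ invading edges contribute $(2\beta)^{4(K+M-1)}$ for their shapes and their attachment choices among the $\le 4L$ resident branches (this is also where the $(4L)!$ enters). Your attribution of these factors to ``wires, branch-orderings, and shadow-effective placements'' and to ``absorbing the second side into slack'' does not match anything real.

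Relatedly, counting effective non-isomorphic bulbs separately on each side via Claim~\ref{claim:effective_non_iso_bound} is the wrong transplant: that claim is calibrated to the true-pair setting where the $\rho^\ell$ weighting ties into $t_1$, and here there is no $\ell$-sum. The paper instead bounds the effective non-isomorphic bulbs \emph{globally across both sides} by exploiting $S_1\cong S_2$ and $T_1\cong T_2$ (so the bulb inventories at $i$ and $j$ coincide), landing at $L+\lfloor z/(K+M)\rfloor+4$ total; without that global argument your per-side count would overshoot by a factor of order $|\mathcal T|$.
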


\begin{proof}    
    We define the unlabeled union graph sets corresponding to $\mathcal{U}(v_L,z)$ as $\mathcal{\wt{U}}(v_L,z)$.
    From the definition~\eqref{eq:definition_PL_regimeII} and Claim~\ref{claim:weight_aut_ineq_fake_pair},
    \begin{equation}\label{eq:bound_on_PL_with_|UL|}
        P_L(v_L,z) \leq \sum_{\dot{U}_L \in \wt{\mathcal{U}}_L(v_L,z)} \frac{n^{v_L} w(\dot{U}_L)}{\aut(\dot{U}_L)} 
        \leq (2K)^{z} n^{v_L} |\wt{\mathcal{U}}_L(v_L,z)|.
    \end{equation}
    
    Recall that $\dot{U}_{L}$ consists of two disjoint trees, one rooted at $i$ and the other one rooted at $j$. We consider the branches of chandeliers without overlapping with each other.

    Then, we specify two categories of branches. A branch is called an invader if it is rooted at $i$ (resp. $j$) but in $\dot{U}_L(j)$ (resp. $\dot{U}_L(i)$). A branch that is not an invader is called a residence. We observe that there are at most $L+\floor{\frac{z}{K+M}}+4$ effective non-isomorphic bulbs among residents (defined in Section~\ref{sec:prop5}, the proof of Proposition~\ref{prop:var_true_regimeII}) because of the followings:
    1) If branches are perfectly matched and overlapped, there are at most $L$ pairs of them rooted at $i$ and another $L$ pairs rooted at $j$. 
    We define effective non-isomorphic bulbs the same as in Lemma~\ref{lemma:var_true_regimeII_PL}.
    Here we have $L$ effective non-isomorphic bulbs because $S_1 \cong S_2, T_1 \cong T_2$.
    2) $\floor{\frac{z}{K+M}}$ is the maximum number of fully $1$-decorated branches in allowed $\dot{U}_L$, and 
    3) There are at most $4$ invading branches, each of which can at most fully overlapping with one resident bulb, due to the fact that there cannot be two branches on the same chandelier passing through the same vertex. 
    
    The remaining is to bound $|\wt{U}_L(v_L,z)|$. We observe that there are at most $4(K+M-1)$ edges from invading branches, which might be attaching to at most $4(K+M-1)$ resident branches. This is because an invader rooted at $j$ may only have its bulb overlapping with $\dot{U}_L(i)$. In this case, one invader can overlap with multiple resident branches in $\dot{U}_L(i)$.
    \begin{equation}\label{eq:bound_on_|UL|_fake_regimeII}
        |\wt{U}_L(v_L,z)| \leq \binom{|\mathcal{J}|}{L+\floor{\frac{z}{K+M}}+4} \binom{2N}{z/2} (4N)^{\frac{z}{2}} (4LM)^{4L} (2\beta)^{4(K+M-1)} (4L)!,
    \end{equation}
    where $\binom{|\mathcal{J}|}{L+\floor{\frac{z}{K+M}}+4}$ is the structures of all resident branches, 
    $\binom{2N}{z/2}$ is the upper bound of choosing which edges to be not overlapped on bulbs assume starting from perfect overlapped bulbs, 
    $(4N)^{\frac{z}{2}}$ is the bound for placing the remaining $z/2$ $1$-decorated vertices, 
    $(4LM)^{4L}$ is a trivial bound on how resident branches have their wires tangling with each other,
    $(\beta)^{4(K+M-1)}$ is the structure of invading edges,
    and lastly $2^{(K+M-1)}(4L)!$ bounds the different interactions between invading edges and the resident branches.
    To understand the quantity $2^{(K+M-1)}(4L)!$, this comes from the fact that each invading edge connected to the root can choose one out of at most $4L$ resident branch to attach, and that the following invading edges can choose to stay overlapping with the current resident branch or leave.

    Plugging~\eqref{eq:bound_on_|UL|_fake_regimeII} back into~\eqref{eq:bound_on_PL_with_|UL|} with basic binomial bounds and $|\mathcal{J}| \leq \beta^{K}$, we complete the proof.
\end{proof}

\begin{claim}\label{claim:weight_aut_ineq_fake_pair}
    Assume that $j\neq \pi_{*}(i)$. For any $\dot{U}_L \in \mathcal{U}_L(v_L,z)$,
    \begin{equation}
        w(\dot{U}_L)\leq \aut(\dot{U}_L) (2K)^{z}.
    \end{equation}
\end{claim}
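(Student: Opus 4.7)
The plan is to mirror the proof of Claim~\ref{claim:weight_aut_ineq_true_pair} while exploiting the two-component structure of $\dot{U}_L$ that is forced by $j \neq \pi_{*}(i)$. Since $\dot{U}_L = \dot{U}_L(i) \sqcup \dot{U}_L(j)$ is a vertex-disjoint union of two rooted trees, automorphisms factor as $\aut(\dot{U}_L) = \aut(\dot{U}_L(i))\aut(\dot{U}_L(j))$, and by~\eqref{eq:definition_UL_weight_fakepair} the weight factors as $[w_{S_1}(\dot{U}_L(i))w_{T_1}(\dot{U}_L(i))] \cdot [w_{S_2}(\dot{U}_L(j))w_{T_2}(\dot{U}_L(j))]$. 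Setting $z = z_i + z_j$ where $z_\ell$ counts the $1$-decorated edges of $\dot{U}_L$ lying in $\dot{U}_L(\ell)$, the claim reduces to proving the symmetric statement $w_{S_1}(\dot{U}_L(i))w_{T_1}(\dot{U}_L(i)) \leq \aut(\dot{U}_L(i))(2K)^{z_i}$ and its mirror on the $j$-side.

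Next I would carry out the $i$-side bound, on which only $S_1$ and $T_1$ contribute bulbs to the weight. Because the $L$ branches of a single chandelier share only its root, the $S_1$-bulbs contained in $\dot{U}_L(i)$ are pairwise vertex-disjoint, and likewise for the $T_1$-bulbs; hence each cluster of overlapping bulbs relevant to the weight consists of at most one $S_1$-bulb together with at most one $T_1$-bulb, which is exactly the pairwise-clustering structure exploited in Claim~\ref{claim:weight_aut_ineq_true_pair}. I would classify these clusters into the three familiar cases (full overlap, partial overlap, solitary) and apply Corollary~\ref{corollary:automorphism_inequality} to each partial-overlap pair to produce the factor $(2K)^{|E(\mathcal{B}_{S_1}) \triangle E(\mathcal{B}_{T_1})|}$; multiplying over clusters and using that cluster-internal permutations extend to rooted automorphisms of $\dot{U}_L(i)$ yields $w_{S_1}(\dot{U}_L(i))w_{T_1}(\dot{U}_L(i)) \leq \aut(\dot{U}_L(i))(2K)^{D_i}$, where $D_i := \sum_{\text{partial pairs}} |E(\mathcal{B}_{S_1}) \triangle E(\mathcal{B}_{T_1})|$.

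The main obstacle is certifying that $D_i \leq z_i$. Each symmetric-difference edge is a singleton under the $\{S_1, T_1\}$-restricted decoration and would be a $1$-decorated edge of $\dot{U}_L(i)$ outright, except that an ``invader'' edge from $S_2$ or $T_2$ could promote it to being $2$-decorated globally. I plan to handle this by enlarging the clusters to absorb any invader bulbs overlapping an $(S_1, T_1)$-cluster and invoking an iterated form of Corollary~\ref{corollary:automorphism_inequality}: the additional symmetric-difference edges introduced this way come decorated by a single invader chandelier, so they are $1$-decorated in $\dot{U}_L(i)$ in their own right and contribute only to $z_i$, while each absorbed invader bulb supplies no weight factor and only enlarges $\aut(\dot{U}_L(i))$. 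Combining with the mirror argument on the $j$-side then gives $w(\dot{U}_L) \leq \aut(\dot{U}_L)(2K)^{z_i+z_j} = \aut(\dot{U}_L)(2K)^z$, completing the proof.
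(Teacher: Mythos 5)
Your plan follows the same overall strategy as the paper: decompose $\dot{U}_L$ into the $i$-rooted and $j$-rooted trees, cluster overlapping bulbs, apply Corollary~\ref{corollary:automorphism_inequality} to each cluster, and then argue that the total exponent is controlled by $z$. The per-side reduction $\aut(\dot{U}_L)=\aut(\dot{U}_L(i))\aut(\dot{U}_L(j))$, $w(\dot{U}_L)=[w_{S_1}(\dot{U}_L(i))w_{T_1}(\dot{U}_L(i))][w_{S_2}(\dot{U}_L(j))w_{T_2}(\dot{U}_L(j))]$, $z=z_i+z_j$ is a clean way to set things up and is consistent with how the paper implicitly argues. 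The pairwise-clustering observation (at most one $S_1$-bulb and one $T_1$-bulb per cluster, since branches of one chandelier share only the root) is also correct and is used in the paper.

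The gap is in the last paragraph, where the total exponent from clustering is bounded by $z_i$. You assert that ``the additional symmetric-difference edges introduced this way come decorated by a single invader chandelier, so they are $1$-decorated in $\dot{U}_L(i)$.'' This is not generally true: an edge in an invader bulb from $S_2$ could also lie in an invader bulb from $T_2$ (both rooted at $j$, both with wires through $i$), making it $2$-decorated by $\{S_2,T_2\}$; similarly, an invader bulb can overlap a resident \emph{wire} from $S_1$ or $T_1$. Such edges are not counted by $z_i$ yet they still get added when you iterate Lemma~\ref{lemma:automorphism_bound_one_edge} to pass from $\mathcal{B}_1\cup\mathcal{B}_2$ to the enlarged cluster union, and each added edge can \emph{decrease} $\aut(\cdot)$ by a factor up to $|V|-1$. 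The phrase ``supplies no weight factor and only enlarges $\aut(\dot{U}_L(i))$'' therefore obscures the real issue: absorbing an invader makes the cluster union's automorphism count potentially \emph{smaller}, and that loss must be charged to $(2K)^{z_i}$, which requires the added edges to be $1$-decorated. To close this you need the more careful bookkeeping the paper gestures at: in particular, the paper's observation that a resident bulb and an invader bulb can never \emph{fully} overlap (since the invader's bulb root is strictly closer to $i$ than the resident's, by at least the wire length from $j$ to $i$), which forces a nonempty symmetric difference and lets one charge the exponent to genuinely $1$-decorated edges once the decorations of each category of edge are tracked carefully. As written, the step ``$D_i \leq z_i$'' is asserted rather than proved.
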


\begin{proof}
    Denote all bulbs contained in $\dot{U}_L(i)$ (resp.,~$j$) and are attached to wires rooted at $i$ (resp.,~$j$) as $\mathcal{B}_1,\mathcal{B}_2,\ldots,\mathcal{B}_w$.
    Denote all bulbs contained in $\dot{U}_L(i)$ (resp.,~$j$) and are attached to wires rooted at $j$ (resp.,~$i$) as $T_1,T_2,\ldots,T_m$.
    For those branches rooted at $i$ (resp.,~$j$) but connect to $j$ (resp.,~$i$), although they can have their bulbs in $\dot{U}_L(j)$ (resp.,~$\dot{U}_L(i)$), they contribute to the weight of non-tree part $\dot{U}_N$ from definitions~\eqref{eq:definition_UL_weight_fakepair} and~\eqref{eq:definition_UN_weight_fakepair}.

    For an arbitrary bulb $\mathcal{B}_t$ in $\dot{U}_L(i) \cup \dot{U}_L(j)$, we discuss the following three cases. 
    Without loss of generality, we assume that $\mathcal{B}_t$ is attached to a wire rooted at $i$.

    Firstly, if there exists another bulb $\mathcal{B}_r$ such that $\mathcal{B}_r$ and $\mathcal{B}_t$ be two bulbs with wires rooted both at $i$ and overlapping with each other. Then, from Corollary~\ref{corollary:automorphism_inequality}, we have that $\sqrt{\aut(\mathcal{B}_r) \aut(\mathcal{B}_t)} \leq \aut(\mathcal{B}_r \cup \mathcal{B}_t) (2K)^{\frac{E(\mathcal{B}_r) \triangle E(\mathcal{B}_t)}{2}}$, because each bulb has size $K$ and the difference between two edge set is at most $K$.
    Secondly, if $\mathcal{B}_r$ is full $1$-decorated, then it contributes to $\sqrt{\aut{(\mathcal{B}_r)}}$ to $w(\dot{U}_L)$ and $\aut(\mathcal{B}_t)$ to $\aut(\dot{U}_L)$.
    Thirdly, assume that there is another bulb $\mathcal{B}_{t}$ attached to a wire rooted at $j$ partly overlapping with $\mathcal{B}_t$, then $\sqrt{\aut(\mathcal{B}_t) \aut(T_r)} \leq \aut(\mathcal{B}_t \cup T_r) (2K)^{|E(\mathcal{B}_4) \triangle E(T_r)|}$ from Corollary~\ref{corollary:automorphism_inequality}. Note that in this case, full overlap is not possible because this invader branch spends at least one edge connecting from $j$ to $i$.
    The third case can be considered as a generalized version of the first case.



    By a product over all overlapping bulbs, we have $w(\dot{U}_L) \leq \aut(\dot{U}_L)(2K)^{z}$ because the total number of edges in the difference sets of overlapping bulbs are upper bounded by $z$, the number of $1$-decorated edges, and the automorphism number of $\dot{U}_L$ is greater than the product of automorphism number of all bulbs.
\end{proof}


\section*{Acknowledgements}

We thank Julia Gaudio, Anirudh Sridhar, Yihong Wu, and Jiaming Xu for helpful discussions. We also thank anonymous reviewers for constructive feedback. 
S.C. was supported in part by the Institute for Data, Econometrics, Algorithms, and Learning (IDEAL), funded through the National Science Foundation TRIPODS Phase II program (NSF grant ECCS~2216970). 


{
\bibliographystyle{plain}
\bibliography{helpers/reference}
}



\end{document}